\documentclass[11pt,letterpaper]{article}

\usepackage{setspace}
%\setstretch{0.965}

%\newcounter{savfert}
%% 0 with .tex
%% 1 with images
%\setcounter{savfert}{0}
%
%\usepackage{ifthen}\newcommand{\makefast}[2]{\ifthenelse{\value{savfert}=1}{#2}{#1}}

\usepackage{bm,aliascnt,amsthm,amsmath,amsfonts,rotate}

\usepackage[]{datetime}
\usepackage{todonotes}
\usepackage{subcaption}
\usepackage{enumerate}
\usepackage{cite}
\usepackage{enumitem,xspace}   

\usepackage[letterpaper, top=.8in, bottom=.8in, left=1in, right=1in, includefoot]{geometry}

\usepackage[pdftex,
backref=page,
plainpages = false,
pdfpagelabels,
hyperfootnotes=false,
pdfpagemode=FullScreen,
bookmarks=true,
bookmarksopen = true,
bookmarksnumbered = true,
breaklinks = true,
hyperfigures,
%linktocpage,
pagebackref,
urlcolor = magenta,
urlcolor = black!30!green,
anchorcolor = green,
hyperindex = true,
colorlinks = true,
linkcolor = black!30!blue,
citecolor = black!30!green
]{hyperref}

%

% Numbered functions
\newcounter{func}
\newcommand{\newfun}[1]{f_{\refstepcounter{func}\label{#1}\thefunc}}
%declare a new numbered function with label given by the argument
\newcommand{\funref}[1]{\hyperref[#1]{f_{\ref*{#1}}}} % print a
% numbered function given its argument

% Numbered functions
\newcounter{con}
\newcommand{\newcon}[1]{c_{\refstepcounter{con}\label{#1}\thecon}}
%declare a new numbered function with label given by the argument
\newcommand{\conref}[1]{\hyperref[#1]{c_{\ref*{#1}}}} % print a
% numbered function given its argument

\usepackage[polutonikogreek,english,]{babel}
\usepackage[utf8]{inputenc}

\usepackage[T1]{fontenc} 
\usepackage{alphabeta}

\newtheorem{observation}{Observation}
\newtheorem{proposition}{Proposition}
\newtheorem{lemma}{Lemma}

\newtheorem{claim}{Claim}
\newtheorem{theorem}{Theorem}

\usepackage{color}
\definecolor{MidnightBlack}{rgb}{0.1,0.1,.34}
\definecolor{MidnightBlue}{rgb}{0.1,0.1,0.44}
\definecolor{Black}{rgb}{0,0, 0}
\definecolor{Blue}{rgb}{0, 0 ,1}
\definecolor{Red}{rgb}{1, 0 ,0}
\definecolor{White}{rgb}{1, 1, 1}
\definecolor{Grey}{rgb}{.6, .6, .6}
\definecolor{Mygreen}{rgb}{.0, .7, .0}
\definecolor{Yellow}{rgb}{.55,.55,0}
\definecolor{Mustard}{rgb}{1.0, 0.86, 0.35}
\definecolor{applegreen}{rgb}{0.55, 0.71, 0.0}
\definecolor{darkturquoise}{rgb}{0.0, 0.81, 0.82}
\definecolor{celestialblue}{rgb}{0.29, 0.59, 0.82}
\definecolor{green_yellow}{rgb}{0.68, 1.0, 0.18}
\definecolor{crimsonglory}{rgb}{0.75, 0.0, 0.2}
\definecolor{darkmagenta}{rgb}{0.30, 0.0, 0.30}
\definecolor{internationalorange}{rgb}{1.0, 0.31, 0.0}
\definecolor{darkorange}{rgb}{1.0, 0.55, 0.0}
\definecolor{ao}{rgb}{0.0, 0.5, 0.0}
\definecolor{awesome}{rgb}{1.0, 0.13, 0.32}
\definecolor{darkblue}{rgb}{0,0,0.545}
\definecolor{gold}{rgb}{1,0.843,0}

\newcommand{\blue}[1]{{\color{Blue}#1}}

\newcommand{\green}[1]{{\color{Mygreen}#1}}

\newcommand{\darkorange}[1]{{\color{darkorange}#1}}

\newcommand{\hh}{\end{document}}

\usepackage{float}
\usepackage{tikz,pgf}
\usetikzlibrary{backgrounds}
\usetikzlibrary{calc,3d}
\usetikzlibrary{intersections,decorations.pathmorphing,shapes,decorations.pathreplacing,fit,fadings}

\usetikzlibrary{arrows.meta}

\makeatletter

% These should behave almost exactly like ipe arrows.  They disable correcting
% for the miter length and line width.  This is important for visual consistency
% with ipe, since ipe arrows get much larger when the line width is increased.
% They also use the line join and cap styles from the main path.  These are very
% simple arrows: there is no harpoon version, and the convex hull computation is
% sloppy.

\pgfdeclarearrow{
  name = ipe _linear,
  defaults = {
    length = +1bp,
    width  = +.666bp,
    line width = +0pt 1,
  },
  setup code = {
    % Control points
    \pgfarrowssetbackend{0pt}
    \pgfarrowssetvisualbackend{
      \pgfarrowlength\advance\pgf@x by-.5\pgfarrowlinewidth}
    \pgfarrowssetlineend{\pgfarrowlength}
    \ifpgfarrowreversed
      \pgfarrowssetlineend{\pgfarrowlength\advance\pgf@x by-.5\pgfarrowlinewidth}
    \fi
    \pgfarrowssettipend{\pgfarrowlength}
    % Convex hull
    \pgfarrowshullpoint{\pgfarrowlength}{0pt}
    \pgfarrowsupperhullpoint{0pt}{.5\pgfarrowwidth}
    % The following are needed in the code:
    \pgfarrowssavethe\pgfarrowlinewidth
    \pgfarrowssavethe\pgfarrowlength
    \pgfarrowssavethe\pgfarrowwidth
  },
  drawing code = {
    \pgfsetdash{}{+0pt}
    \ifdim\pgfarrowlinewidth=\pgflinewidth\else\pgfsetlinewidth{+\pgfarrowlinewidth}\fi
    \pgfpathmoveto{\pgfqpoint{0pt}{.5\pgfarrowwidth}}
    \pgfpathlineto{\pgfqpoint{\pgfarrowlength}{0pt}}
    \pgfpathlineto{\pgfqpoint{0pt}{-.5\pgfarrowwidth}}
    \pgfusepathqstroke
  },
  parameters = {
    \the\pgfarrowlinewidth,%
    \the\pgfarrowlength,%
    \the\pgfarrowwidth,%
  },
}

\pgfdeclarearrow{
  name = ipe _pointed,
  defaults = {
    length = +1bp,
    width  = +.666bp,
    inset  = +.2bp,
    line width = +0pt 1,
  },
  setup code = {
    % Control points
    \pgfarrowssetbackend{0pt}
    \pgfarrowssetvisualbackend{\pgfarrowinset}
    \pgfarrowssetlineend{\pgfarrowinset}
    \ifpgfarrowreversed
      \pgfarrowssetlineend{\pgfarrowlength}
    \fi
    \pgfarrowssettipend{\pgfarrowlength}
    % Convex hull
    \pgfarrowshullpoint{\pgfarrowlength}{0pt}
    \pgfarrowsupperhullpoint{0pt}{.5\pgfarrowwidth}
    \pgfarrowshullpoint{\pgfarrowinset}{0pt}
    % The following are needed in the code:
    \pgfarrowssavethe\pgfarrowinset
    \pgfarrowssavethe\pgfarrowlinewidth
    \pgfarrowssavethe\pgfarrowlength
    \pgfarrowssavethe\pgfarrowwidth
  },
  drawing code = {
    \pgfsetdash{}{+0pt}
    \ifdim\pgfarrowlinewidth=\pgflinewidth\else\pgfsetlinewidth{+\pgfarrowlinewidth}\fi
    \pgfpathmoveto{\pgfqpoint{\pgfarrowlength}{0pt}}
    \pgfpathlineto{\pgfqpoint{0pt}{.5\pgfarrowwidth}}
    \pgfpathlineto{\pgfqpoint{\pgfarrowinset}{0pt}}
    \pgfpathlineto{\pgfqpoint{0pt}{-.5\pgfarrowwidth}}
    \pgfpathclose
    \ifpgfarrowopen
      \pgfusepathqstroke
    \else
      \ifdim\pgfarrowlinewidth>0pt\pgfusepathqfillstroke\else\pgfusepathqfill\fi
    \fi
  },
  parameters = {
    \the\pgfarrowlinewidth,%
    \the\pgfarrowlength,%
    \the\pgfarrowwidth,%
    \the\pgfarrowinset,%
    \ifpgfarrowopen o\fi%
  },
}

% For correcting minipage width in stretched nodes
\newdimen\ipeminipagewidth

\tikzstyle{ipe import} = [
  % General ipe defaults
  x=1bp, y=1bp,
%
  % Nodes
  ipe node stretch/.store in=\ipenodestretch,
  ipe stretch normal/.style={ipe node stretch=1},
  ipe stretch normal,
  ipe node/.style={
    anchor=base west, inner sep=0, outer sep=0, scale=\ipenodestretch
  },
%
  % Use a special key for the mark scale, so that the default can be overriden.
  % (This doesn't happen with the scale= key; those accumulate.)
  ipe mark scale/.store in=\ipemarkscale,
  ipe mark tiny/.style={ipe mark scale=1.5},
  ipe mark small/.style={ipe mark scale=2},
  ipe mark normal/.style={ipe mark scale=3},
  ipe mark large/.style={ipe mark scale=5},
  ipe mark normal, % Set default
  ipe circle/.pic={
    \draw[line width=0.2*\ipemarkscale]
      (0,0) circle[radius=0.5*\ipemarkscale];
    \coordinate () at (0,0);
  },
  ipe disk/.pic={
    \fill (0,0) circle[radius=0.6*\ipemarkscale];
    \coordinate () at (0,0);
  },
  ipe fdisk/.pic={
    \filldraw[line width=0.2*\ipemarkscale]
      (0,0) circle[radius=0.5*\ipemarkscale];
    \coordinate () at (0,0);
  },
  ipe box/.pic={
    \draw[line width=0.2*\ipemarkscale, line join=miter]
      (-.5*\ipemarkscale,-.5*\ipemarkscale) rectangle
      ( .5*\ipemarkscale, .5*\ipemarkscale);
    \coordinate () at (0,0);
  },
  ipe square/.pic={
    \fill
      (-.6*\ipemarkscale,-.6*\ipemarkscale) rectangle
      ( .6*\ipemarkscale, .6*\ipemarkscale);
    \coordinate () at (0,0);
  },
  ipe fsquare/.pic={
    \filldraw[line width=0.2*\ipemarkscale, line join=miter]
      (-.5*\ipemarkscale,-.5*\ipemarkscale) rectangle
      ( .5*\ipemarkscale, .5*\ipemarkscale);
    \coordinate () at (0,0);
  },
  ipe cross/.pic={
    \draw[line width=0.2*\ipemarkscale, line cap=butt]
      (-.5*\ipemarkscale,-.5*\ipemarkscale) --
      ( .5*\ipemarkscale, .5*\ipemarkscale)
      (-.5*\ipemarkscale, .5*\ipemarkscale) --
      ( .5*\ipemarkscale,-.5*\ipemarkscale);
    \coordinate () at (0,0);
  },
%
  % Arrow sizes (for TikZ arrows)
  /pgf/arrow keys/.cd,
  ipe arrow normal/.style={scale=1},
  ipe arrow tiny/.style={scale=.4},
  ipe arrow small/.style={scale=.7},
  ipe arrow large/.style={scale=1.4},
  ipe arrow normal,
  /tikz/.cd,
%
  % Approximations to ipe arrows
  % Put in a style to allow to reset default scale when "ipe arrow normal" is
  % changed.  I think this is the only way, since all the parameters to arrows
  % are expanded when the tip is declared.
  ipe arrows/.style={
    ipe normal/.tip={
      ipe _pointed[length=1bp, width=.666bp, inset=0bp,
                   quick, ipe arrow normal]},
    ipe pointed/.tip={
      ipe _pointed[length=1bp, width=.666bp, inset=0.2bp,
                   quick, ipe arrow normal]},
    ipe linear/.tip={
      ipe _linear[length = 1bp, width=.666bp,
                  ipe arrow normal, quick]},
    ipe fnormal/.tip={ipe normal[fill=white]},
    ipe fpointed/.tip={ipe pointed[fill=white]},
    ipe double/.tip={ipe normal[] ipe normal},
    ipe fdouble/.tip={ipe fnormal[] ipe fnormal},
    % These should maybe use [bend], but that often looks bad unless it's on an
    % actual arc.
    ipe arc/.tip={ipe normal},
    ipe farc/.tip={ipe fnormal},
    ipe ptarc/.tip={ipe pointed},
    ipe fptarc/.tip={ipe fpointed},
  },
  ipe arrows, % Set default sizes
]

% I'm not sure how to do this in a .style, since the #args get confused.
\tikzset{
  rgb color/.code args={#1=#2}{%
    \definecolor{tempcolor-#1}{rgb}{#2}%
    \tikzset{#1=tempcolor-#1}%
  },
}

\makeatother

\tikzstyle{ipe stylesheet} = [
  ipe import,
  even odd rule,
  line join=round,
  line cap=butt,
  ipe pen normal/.style={line width=0.4},
  ipe pen heavier/.style={line width=0.8},
  ipe pen fat/.style={line width=1.2},
  ipe pen ultrafat/.style={line width=2},
  ipe pen normal,
  ipe mark normal/.style={ipe mark scale=3},
  ipe mark large/.style={ipe mark scale=5},
  ipe mark small/.style={ipe mark scale=2},
  ipe mark tiny/.style={ipe mark scale=1.1},
  ipe mark normal,
  /pgf/arrow keys/.cd,
  ipe arrow normal/.style={scale=7},
  ipe arrow large/.style={scale=10},
  ipe arrow small/.style={scale=5},
  ipe arrow tiny/.style={scale=3},
  ipe arrow normal,
  /tikz/.cd,
  ipe arrows, % update arrows
  <->/.tip = ipe normal,
  ipe dash normal/.style={dash pattern=},
  ipe dash dotted/.style={dash pattern=on 1bp off 3bp},
  ipe dash dashed/.style={dash pattern=on 4bp off 4bp},
  ipe dash dash dotted/.style={dash pattern=on 4bp off 2bp on 1bp off 2bp},
  ipe dash dash dot dotted/.style={dash pattern=on 4bp off 2bp on 1bp off 2bp on 1bp off 2bp},
  ipe dash normal,
  ipe node/.append style={font=\normalsize},
  ipe stretch normal/.style={ipe node stretch=1},
  ipe stretch normal,
  ipe opacity 10/.style={opacity=0.1},
  ipe opacity 30/.style={opacity=0.3},
  ipe opacity 50/.style={opacity=0.5},
  ipe opacity 75/.style={opacity=0.75},
  ipe opacity opaque/.style={opacity=1},
  ipe opacity opaque,
]
\definecolor{red}{rgb}{1,0,0}
\definecolor{blue}{rgb}{0,0,1}
\definecolor{green}{rgb}{0,1,0}
\definecolor{yellow}{rgb}{1,1,0}
\definecolor{orange}{rgb}{1,0.647,0}
\definecolor{gold}{rgb}{1,0.843,0}
\definecolor{purple}{rgb}{0.627,0.125,0.941}
\definecolor{gray}{rgb}{0.745,0.745,0.745}
\definecolor{brown}{rgb}{0.647,0.165,0.165}
\definecolor{navy}{rgb}{0,0,0.502}
\definecolor{pink}{rgb}{1,0.753,0.796}
\definecolor{seagreen}{rgb}{0.18,0.545,0.341}
\definecolor{turquoise}{rgb}{0.251,0.878,0.816}
\definecolor{violet}{rgb}{0.933,0.51,0.933}
\definecolor{darkblue}{rgb}{0,0,0.545}
\definecolor{darkcyan}{rgb}{0,0.545,0.545}
\definecolor{darkgray}{rgb}{0.663,0.663,0.663}
\definecolor{darkgreen}{rgb}{0,0.392,0}
\definecolor{darkmagenta}{rgb}{0.545,0,0.545}
\definecolor{darkorange}{rgb}{1,0.549,0}
\definecolor{darkred}{rgb}{0.545,0,0}
\definecolor{lightblue}{rgb}{0.678,0.847,0.902}
\definecolor{lightcyan}{rgb}{0.878,1,1}
\definecolor{lightgray}{rgb}{0.827,0.827,0.827}
\definecolor{lightgreen}{rgb}{0.565,0.933,0.565}
\definecolor{lightyellow}{rgb}{1,1,0.878}
\definecolor{black}{rgb}{0,0,0}
\definecolor{white}{rgb}{1,1,1}

\tikzset{red node/.style={draw=red, circle, fill = red, minimum size = 4pt, inner sep = 0pt}}
\tikzset{yellow node/.style={draw=yellow, circle, fill = yellow, minimum size = 4pt, inner sep = 0pt}}
\tikzset{blue node/.style={draw=celestialblue, circle, fill =celestialblue, minimum size = 4pt, inner sep = 0pt}}
\tikzset{triangle/.style = { regular polygon, regular polygon sides=3, rotate=180}}
\tikzset{small red/.style={draw=red, triangle, fill = red, minimum size = 2pt, inner sep = 0pt}}

\tikzset{black node/.style={draw, circle, fill = black, minimum size = 3pt, inner sep = 0pt}}
\tikzset{small black node/.style={draw, circle, fill = black, minimum size = 3pt, inner sep = 0pt}}
\tikzset{model node/.style={draw=celestialblue, circle, fill = celestialblue, minimum size = 5pt, inner sep = 0pt}}
\tikzset{model node small/.style={draw=celestialblue, circle, fill = celestialblue, minimum size = 3pt, inner sep = 0pt}}
\tikzset{rep node/.style={draw=red, circle, fill = red, minimum size = 3pt, inner sep = 0pt}}
\tikzset{track node 1/.style={draw, circle, fill = black, minimum size = 2pt, inner sep = 0pt}}
\tikzset{track node 2/.style={draw=black!30!white, circle, fill = black!30!white, minimum size = 2pt, inner sep = 0pt}}
\tikzset{track node 3/.style={draw=black!10!white, circle, fill = black!10!white, minimum size = 2pt, inner sep = 0pt}}
\tikzfading[name=fade out, inner color=transparent!0, outer
color=transparent!100]
\tikzfading[name=fade in, inner color=transparent!100, outer
color=transparent!0]
\tikzfading[name=middle, top color=transparent!80, bottom
color=transparent!80, middle color=transparent!0]
\tikzset{terminal/.style={circle, draw=black, fill=black!50,
                        inner sep=0pt, minimum width=4pt}}
\tikzset{root terminal/.style={circle,thick, draw=black, fill=orange!90!yellow,
                        inner sep=1pt, minimum width=6pt}}
\tikzset{simple/.style={circle, draw=black, fill=black,
                        inner sep=0pt, minimum width=4pt}}
\tikzset{treenode/.style={circle, draw=black, fill=black,
                        inner sep=0pt, minimum width=4pt}}
\tikzset{
diagonal fill/.style 2 args={fill=#2, path picture={
\fill[#1, sharp corners] (path picture bounding box.south west) -|
                         (path picture bounding box.north east) -- cycle;}},
reversed diagonal fill/.style 2 args={fill=#2, path picture={
\fill[#1, sharp corners] (path picture bounding box.north west) |- 
                         (path picture bounding box.south east) -- cycle;}}
}

\tikzstyle{ipe stylesheet} = [
  ipe import,
  even odd rule,
  line join=round,
  line cap=butt,
  ipe pen normal/.style={line width=0.4},
  ipe pen heavier/.style={line width=0.8},
  ipe pen fat/.style={line width=1.2},
  ipe pen ultrafat/.style={line width=2},
  ipe pen normal,
  ipe mark normal/.style={ipe mark scale=3},
  ipe mark large/.style={ipe mark scale=5},
  ipe mark small/.style={ipe mark scale=2},
  ipe mark tiny/.style={ipe mark scale=1.1},
  ipe mark normal,
  /pgf/arrow keys/.cd,
  ipe arrow normal/.style={scale=7},
  ipe arrow large/.style={scale=10},
  ipe arrow small/.style={scale=5},
  ipe arrow tiny/.style={scale=3},
  ipe arrow normal,
  /tikz/.cd,
  ipe arrows, % update arrows
  <->/.tip = ipe normal,
  ipe dash normal/.style={dash pattern=},
  ipe dash dotted/.style={dash pattern=on 1bp off 3bp},
  ipe dash dashed/.style={dash pattern=on 4bp off 4bp},
  ipe dash dash dotted/.style={dash pattern=on 4bp off 2bp on 1bp off 2bp},
  ipe dash dash dot dotted/.style={dash pattern=on 4bp off 2bp on 1bp off 2bp on 1bp off 2bp},
  ipe dash normal,
  ipe node/.append style={font=\normalsize},
  ipe stretch normal/.style={ipe node stretch=1},
  ipe stretch normal,
  ipe opacity 10/.style={opacity=0.1},
  ipe opacity 30/.style={opacity=0.3},
  ipe opacity 50/.style={opacity=0.5},
  ipe opacity 75/.style={opacity=0.75},
  ipe opacity opaque/.style={opacity=1},
  ipe opacity opaque,
]

\newcommand{\Mod}{{\rm Mod}}
\newcommand{\MSOL}{\mbox{\sf MSOL}}
\newcommand{\FOL}{{\sf FOL}}
\newcommand{\DP}{{\sf dp}}

\newcommand{\FOLDP}{\FOL[\tau{\normalfont+}\DP]}

\newcommand{\labels}[1]{\label{#1}}
\newcommand{\hw}{{\sf hw}}

\newcommand{\bd}{{\sf bd}}
\newcommand{\yes}{{\sf yes}}
\newcommand{\no}{{\sf no}}
\newcommand{\tw}{{\sf tw}}
\newcommand{\ann}{{\sf ann}}
\newcommand{\inter}{{\sf int}}
\newcommand{\remove}[1]{}

\newcommand{\bigmid}{\;\big|\;}
\newcommand{\cupall}{\pmb{\pmb{\bigcup}}}

\newcommand{\injec}{\mu}
\setlength\marginparwidth{2cm}

\newcommand{\prem}{\preceq_{\sf m}}

\newcommand{\NP}{{\sf NP}\xspace}
\newcommand{\FPT}{{\sf FPT}\xspace}

\newcommand{\frR}{{\mathfrak{R}}}

\newcommand{\excl}{{\sf Excl}}
\newcommand{\imprint}{{\sf imp}}
\newcommand{\Models}{{\sf Pairings}}

\newcommand{\injection}{{\bf inj}}
\newcommand{\repact}{\mathcal{L}}

\RequirePackage{stmaryrd}
\usepackage{textcomp}
\DeclareUnicodeCharacter{2286}{\subseteq}
\DeclareUnicodeCharacter{2192}{\ifmmode\to\else\textrightarrow\fi}
\DeclareUnicodeCharacter{2203}{\ensuremath\exists}
\DeclareUnicodeCharacter{183}{\cdot}
\DeclareUnicodeCharacter{2200}{\forall}
\DeclareUnicodeCharacter{2264}{\leq}
\DeclareUnicodeCharacter{2265}{\geq}
\DeclareUnicodeCharacter{8614}{\mathbin{\mapsto}}
\DeclareUnicodeCharacter{8656}{\Leftarrow}
\DeclareUnicodeCharacter{8657}{\Uparrow}
\DeclareUnicodeCharacter{8658}{\Rightarrow}
\DeclareUnicodeCharacter{8659}{\Downarrow}
\DeclareUnicodeCharacter{8669}{\rightsquigarrow}
\newcommand{\eqdef}{\stackrel{{\scriptsize\rm def}}{=}}
\DeclareUnicodeCharacter{8797}{\eqdef}
\DeclareUnicodeCharacter{8870}{\vdash}
\DeclareUnicodeCharacter{8873}{\Vdash}
\DeclareUnicodeCharacter{22A7}{\models}
\DeclareUnicodeCharacter{9121}{\lceil}
\DeclareUnicodeCharacter{9123}{\lfloor}
\DeclareUnicodeCharacter{9124}{\rceil}
\DeclareUnicodeCharacter{2208}{\in}
\DeclareUnicodeCharacter{9126}{\rfloor}
\DeclareUnicodeCharacter{9655}{\triangleright}
\DeclareUnicodeCharacter{9665}{\triangleleft}
\DeclareUnicodeCharacter{9671}{\diamond}
\DeclareUnicodeCharacter{9675}{\circ}
\DeclareUnicodeCharacter{10178}{\bot}
\DeclareUnicodeCharacter{10214}{} % needs stmaryrd
\DeclareUnicodeCharacter{10215}{} % needs stmaryrd
\DeclareUnicodeCharacter{10229}{\longleftarrow}
\DeclareUnicodeCharacter{10230}{\longrightarrow}
\DeclareUnicodeCharacter{10231}{\longleftrightarrow}
\DeclareUnicodeCharacter{10232}{\Longleftarrow}
\DeclareUnicodeCharacter{10233}{\Longrightarrow}
\DeclareUnicodeCharacter{10234}{\Longleftrightarrow}
\DeclareUnicodeCharacter{10236}{\longmapsto}
\DeclareUnicodeCharacter{10238}{\Longmapsto} % needs stmaryrd
\DeclareUnicodeCharacter{10503}{\Mapsto}    % needs stmaryrd
\DeclareUnicodeCharacter{10971}{\mathrel{\not\hspace{-0.2em}\cap}}
\DeclareUnicodeCharacter{65294}{\ldotp}
\DeclareUnicodeCharacter{65372}{\mid}

\DeclareFontEncoding{LS1}{}{}
\DeclareFontSubstitution{LS1}{stix}{m}{n}
\DeclareSymbolFont{symbolsstix}{LS1}{stixscr}{m}{n}
\SetSymbolFont{symbolsstix}{bold}{LS1}{stixscr}{b}{n}
\DeclareMathSymbol{\mathvisiblespace}{0}{symbolsstix}{"B6}
\newcommand{\mathspace}{\mathvisiblespace}

\usepackage{titling}
\thanksmarkseries{arabic}
\newcommand*\samethanks[1][\value{footnote}]{\footnotemark[#1]}

\begin{document}

\title{Model-Checking for First-Order Logic with Disjoint Paths Predicates in Proper Minor-Closed Graph Classes\thanks{An extended abstract of this paper appeared in the \emph{Proceedings of the 34th Annual ACM-SIAM Symposium on Discrete Algorithms (SODA 2023).}}}

\author{\bigskip
Petr A. Golovach\thanks{Department of Informatics, University of Bergen, Norway. Supported by the Research Council of Norway  via the project BWCA (314528).\
Email:  \texttt{petr.golovach@uib.no}.}\and Giannos Stamoulis\thanks{LIRMM, Univ Montpellier, CNRS, Montpellier, France. {Supported}  by the ANR projects DEMOGRAPH (ANR-16-CE40-0028), ESIGMA (ANR-17-CE23-0010), and the French-German Collaboration ANR/DFG Project UTMA (ANR-20-CE92-0027). The third author was also supported by the  \emph{Ministère de l'Europe et des Affaires étrangères}
(MEAE) and the \emph{Ministère de l'Enseignement supérieur et de la
Recherche} (MESR), via the Franco-Norwegian project PHC Aurora projet N° 51260WL (2024).\  Emails:
\texttt{giannos.stamoulis@lirmm.fr}, \texttt{sedthilk@thilikos.info}.}
\and
Dimitrios  M. Thilikos\samethanks[3]}
\date{}

\pagenumbering{Alph}
\maketitle

\begin{abstract}
\noindent The \emph{disjoint paths logic}, \textsf{FOL+DP}, is an extension of First-Order Logic (\textsf{FOL}) with the extra atomic predicate $\textsf{dp}_k(x_1,y_1,\ldots,x_k,y_k),$ expressing the existence of internally vertex-disjoint paths between $x_i$ and $y_i,$ for $i\in\{1,\ldots, k\}$. This logic can express a wide variety of problems that escape the expressibility potential of \textsf{FOL}. We prove that for every proper minor-closed graph class, model-checking for \textsf{FOL+DP} can be done in quadratic time. We also introduce an extension of \textsf{FOL+DP}, namely the \emph{scattered disjoint paths logic}, \textsf{FOL+SDP}, where we further consider the atomic predicate $s\textsf{-sdp}_k(x_1,y_1,\ldots,x_k,y_k),$ demanding that the disjoint paths are within distance bigger than some fixed value $s$. Using the same technique we prove that model-checking for \textsf{FOL+SDP} can be done in quadratic time on classes of graphs with bounded Euler genus.
\end{abstract}
\thispagestyle{empty}

\noindent {\bf Keywords:} Algorithmic meta-theorems, Model-checking, First-order logic, Disjoint paths, Hadwiger number, Graph minors, Irrelevant vertex technique.

\newpage\thispagestyle{empty}

\tableofcontents\thispagestyle{empty}

\newpage
\pagenumbering{arabic}
\newpage

\setcounter{page}{1}
\section{Introduction}

Logic plays a fundamental role in algorithmic research. It provides a universal language for 
formally describing computational problems and is important 
for the investigation of their computational complexity.  
In many cases, the accumulation of knowledge on algorithm design revealed that several algorithmic techniques have conceptual similarities that result from some common logical description of the problems where they apply. These similarities become effective when the inputs of the corresponding problems have certain structural characteristics. In some cases, this empirical evidence has been materialized in the so called Algorithmic Meta-Theorems (AMTs), a term introduced by Martin Grohe in \cite{Grohe07logi}. Such theorems typically provide two types of conditions, a logical one and a combinatorial one, such that every problem that is expressible by the logical condition can be solved efficiently when its inputs are restricted by the combinatorial condition. The importance of AMTs resides to the fact that they 
are able to unify wide families of computational problems (and also the algorithmic solutions for them) under a single 
model-theoretic/combinatorial framework (see~\cite{Grohe07logi,Kreutzer09algo,GroheK11meth}).

\subsection{AMTs for \MSOL\ and \FOL}

Probably, the most prototypical AMT is  known as {\sl Courcelle's Theorem} proved in~\cite{Courcelle90them} (see also~\cite{BoriePT92auto,ArnborgLS91easy} and~\cite{DreierR21appro}),
asserting that every problem on graphs that is expressible by a sentence $\varphi$ in Monadic Second Order Logic (\MSOL)
can be solved in time\footnote{Let ${\bf t}=(x_{1},\ldots,x_{l})\in \mathbb{N}^l$ and $f,g: \mathbb{N}
\rightarrow \mathbb{N}.$
We adopt the notation $f(n)=\mathcal{O}_{\textbf{t}}(g(n))$ in order to denote that there exists a computable
function $\ell:\mathbb{N}^{l} \rightarrow \mathbb{N}$ such that  $f(n)=\mathcal{O}(\ell({\bf t})\cdot g(n)).$}
$\mathcal{O}_{|\varphi|,k}(n)$, when restricted to graphs of treewith at most $k$.
Clearly, when $\varphi$ and $k$ are fixed, this readily implies a linear-time algorithm. However, 
we prefer to display the dependencies on $\varphi$ and $k$, under the $\mathcal{O}_{|\varphi|,k}$ notation, so as to make clear that this theorem
provides a linear-time parameterized\footnote{More generally, we  say that a computational problem, parameterized by $k$, is {\sf FPT} (Fixed Parameter Tractable)
when it admits an algorithm running in  $\mathcal{O}_{k}(n^{O(1)})$ time. We  assume that the reader is familiar 
with the basic concepts of parameterized algorithms and parameterized complexity classes -- see \cite{cygan2015parameterized,FlumG06para,Niedermeier06invi}.} algorithm, for every problem expressible in \MSOL, when it is parameterized by treewidth. 

Clearly, the logical/combinatorial compromise of Courcelle's theorem is not the only possible one. In fact,
each AMT constitutes a different compromise between the logical and the combinatorial condition
and a considerable amount of research in the theory of algorithms has been  dedicated to the 
conception of alternative such compromises.  Also, for particular logics,  research has been dedicated to the identification of their combinatorial horizon, i.e., the most general combinatorial conditions that can 
accompany them in an AMT. For instance, for \MSOL,  the meta-algorithmic horizon is, under certain assumptions,  delimited by the graph classes of bounded treewidth (see~\cite{GanianHLORS14lower,KreutzerT10lower,BojanczykP16defi}).

\paragraph{Meta-algorithmics of \FOL.}
Given that the meta-algorithmic limits of \MSOL\ are fairly well understood, research on AMTs has been largely oriented to the meta-algorithmics of  First-Order Logic (\FOL).
The two most powerful results in this direction concern two different types of combinatorial conditions. The first was given by Grohe, Kreutzer, and Siebertz in \cite{GroheKS17dec} and is the graph class property of being {\sl nowhere dense}. The second was given by  Bonnet,   Kim,  Thomassé, and  Watrigant in~\cite{BonnetKTW22twinI}
and is the graph class property of having {\sl bounded twin-width}. We should stress here that 
the notion of being nowhere dense originates from a long line of research on graph sparsity, initiated by Nešetřil and Ossona de Mendez in \cite{NesetrilM05theg} (see also \cite{DvorakKT13,KreutzerD09param,NesetrilM11aonno,NesetrilM12spar,KreutzerRS19polyn,PilipczukST18onthe}). On the other side, twin-width is a recently introduced graph parameter, defined in terms of sequences of vertex identifications (see \cite{BonnetKTW22twinI,BonnetGMSTT22twinIV,BonnetKRT22twinVI,BonnetCKKLT22twinVIII,BonnetGTT22twinVII,BonnetGKTW21twinII,BonnetG0TW21twinIII,Bonnet0RTW21twinkernels,BonnetNMST21twinperm,BergeBD21deci,GajarskyPT21stab,GanianPSSS22weig,BonnetKW22redu,PilipczukS22grap,JacobP22bound,DreierGJMR22twin,PilipczukSZ22comp,KratschNS22ontr,SchidlerS22asat,PetterssonS22boun,HlinenyP22twin,BalabanH21twin} for a sample of the vibrant current research on the algorithmic and combinatorial properties of twin-width).
Seminal results on the above two combinatorial conditions indicate that, under certain assumptions, they approach the combinatorial horizon of \FOL\ (see \cite{DvorakKT13} for nowhere density and \cite{BonnetGMSTT22twinIV} for  bounded twin-width). Research on the meta-algorithmics of \FOL\ is nowadays quite active and has moved to several directions such 
as the study of \FOL-interpretability \cite{Bonnet22model,PilipczukOS22transd,NesetrilMS22stru,NesetrilMPRS21rankw,NesetrilRMS20linea,GajarskyKNMPST20first} or the enhancement of \FOL\ with  counting/numerical predicates~\cite{KuskeS17first,KuskeS18gaifm,DreierR21appro,GroheS18fisrt} (see also \cite{HeuvelKPQRS17model,Grange21succe,EickmeyerEH17succi,GroheS00local} for other extensions).

\paragraph{AMTs between \FOL\ and \MSOL.}
A challenging direction is the introduction of new logics whose expressive 
power is between \FOL\ and \MSOL\  that can lead to AMTs under combinatorial conditions 
that are less general than those applicable for \FOL\ and more general than those  applicable for \MSOL.
Two approaches that have been initiated in this direction are the following. The first direction is the introduction of
compound logics that can express problems whose description combines both \FOL\ and \MSOL\ queries. 
Such a compound logic has been  recently introduced in \cite{FominGSST21acomp} and yielded AMTs that  are applicable on a wide family of graph
modification problems. The second direction is the extension of \FOL\ with additional 
predicates that are not expressible in \FOL. An important step in this direction was the introduction of 
the separator logic \FOL{\sf +conn}. This extension of \FOL\ was introduced independently 
by  Schirrmacher, Siebertz, and Vigny in \cite{SchirrmacherSV22first} and 
by   Bojańczyk in \cite{Bojanczyk21separ} (under the name {\sl separator logic}), who considered, for every $k\geq 1$, the general predicate ${\sf conn}_{k}({\sf x},{\sf y},{\sf z}_1,...,{\sf z}_k)$, that evaluates to true on a graph $G$ if (the valuations of) ${\sf x}$ and ${\sf y}$ are joined in $G$ by a path  that avoids (the valuations) of the variables $\{{\sf z}_1,...,{\sf z}_k\}$.
According to the recent meta-algorithmic results of Pilipczuk, Schirrmacher, Siebertz, Toruńczyk,  and Vigny in \cite{PilipczukSSTV22algo}, 
every problem on graphs that is expressible by some formula $\varphi\in$ \FOL{\sf +conn} can be solved in time $\mathcal{O}_{|\varphi|,r}(n^3),$ where $r={\sf hj}(G)$ is the Hajós number\footnote{The \emph{Hajós number} is the maximum $h$ for which $G$ contains a subivision of $K_{h}$ as a subgraph.} of $G$. Notice that this AMT implies the existence of parameterized algorithms on graph classes with bounded Hajós number for  problems
whose definition uses connectivity queries and therefore are not expressible in \FOL. The most indicative 
example of a (meta-) problem displaying the expressibility power of  \FOL{\sf +conn}  is  {\sc Elimination Distance to $\varphi$}, 
asking whether the elimination distance 
of a graph $G$ from some model of $\varphi\in\FOL$ is at most $k$.
This problem is {\sf W[2]}-hard (when parameterized by $k$) even for simple instantiations of $\varphi$ \cite{FominGT21param}, however, it admits a time $\mathcal{O}_{|\varphi|,k,r}(n^3)$ algorithm when restricted to   graph classes with Hajós number at most $r$, because of the results in \cite{PilipczukSSTV22algo}.
For more examples of the expressibility power of  \FOL{\sf +conn}, see \cite{SchirrmacherSV22first}.
Also, in \cite{PilipczukSSTV22algo} it was proved that the tractability horizon of \FOL{\sf +conn} is, under certain assumptions, delimited by   graph classes of bounded Hajós number.

\subsection{AMTs for \FOL{\sf +}{\sf DP} and extensions}

As a next step towards a more expressive logic, Schirrmacher, Siebertz, and Vigny~\cite{SchirrmacherSV22first}
defined the, more expressive, \emph{disjoint-paths logic} \FOL{\sf +}{\sf DP}\ by adding, for every $k$,  the atomic predicates {\sf dp}$_{k}({\sf x}_1,{\sf y}_1,\ldots,{\sf x}_k,{\sf y}_k)$ that evaluate to true if there are internally vertex-disjoint paths between (the valuations of) ${\sf x}_i$ and ${\sf y}_i$, for all $i\in\{1,\ldots,k\}$.
In the same paper they defined\footnote{In the definition of \FOL{\sf +}{\sf DP} we insist that the paths are {\sl disjoint} while in~\cite{SchirrmacherSV22first} paths are required to be {\sl internally disjoint} (certainly,  these two variants define equivalent formulas). We insist on the ``complete disjointness''  as this permits us to see  \FOL{\sf +}{\sf DP} as a special case of  the more general \FOL{\sf +}{\sf SDP}  that we introduce in this paper.}   the logical hierarchy  
 \FOL{\sf +}\DP$_k$ that uses predicates for at most $k$ disjoint paths and proved that these fragments of \FOL{\sf +}{\sf DP}\ define a {\sl strict} descriptive complexity hierarchy.

  The challenging open question is whether an AMT exists for this 
logic, under some suitable combinatorial restriction that is more general than the one of having bounded treewidth (that is where Courcelle's theorem is applicable).

\paragraph{Our results.}
Given a graph $G$, we define the \emph{Hadwiger number} of $G$, denoted by $\mathsf{hw}(G)$, as the maximum $r$ for which $G$ contains $K_{r}$ as a minor.\footnote{Given two graphs $G$ and $H$, $H$ is a \emph{minor} of $G$ if $G$ contains a contraction of $H$ as a subgraph.}
Our main result is the following AMT.

\begin{theorem}
\label{@definitionen}
Every problem on graphs that is expressible by some formula $\varphi$ in \FOL{\sf +}{\sf DP} can be solved by an algorithm running in time $\mathcal{O}_{|\varphi|,r}(n^2)$, where $r$ is the Hadwiger number of $G$.
\end{theorem}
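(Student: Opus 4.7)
The plan is to combine an irrelevant-vertex reduction (in the spirit of Robertson and Seymour) with a base-case algorithm on graphs of bounded treewidth, so that the overall algorithm iteratively strips away vertices that do not affect the truth of the given formula until a bounded-treewidth instance is reached.

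\textbf{Base case on bounded treewidth.} The first step is to show that on graphs of bounded treewidth, \FOL{\sf +}{\sf DP} model checking can be solved in time $\mathcal{O}_{|\varphi|,\tw(G)}(n)$. The observation is that every atom $\textsf{dp}_k(x_1, y_1, \ldots, x_k, y_k)$ is expressible in \MSOL\ by existentially quantifying $k$ pairwise vertex-disjoint connected vertex sets with prescribed endpoints. Substituting each \textsf{dp}-atom of $\varphi$ by its \MSOL\ translation produces an \MSOL\ sentence of size $\mathcal{O}(|\varphi|)$, and Courcelle's theorem then gives the desired bound.

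\textbf{Irrelevant-vertex reduction.} The crux is a uniform irrelevance lemma: there is a computable function $g$ such that whenever $\hw(G) \leq r$ and $\tw(G) \geq g(|\varphi|, r)$, one can find in linear time a vertex $v \in V(G)$ with $G \models \varphi$ iff $(G - v) \models \varphi$. To prove this, I would appeal to the excluded-minor structure theorem (equivalently, the grid-minor theorem tailored to minor-closed classes) to locate a large flat wall $W$ in $G$, and take $v$ to be a vertex buried deep in the interior of $W$. The original Robertson--Seymour argument then guarantees irrelevance of $v$ for a fixed disjoint-paths instance whose terminals lie outside a small neighbourhood; the upgrade required here is that the same $v$ must be irrelevant for \emph{every} valuation of the free variables of $\varphi$ and for \emph{every} \textsf{dp}-subformula, simultaneously. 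This is achieved by taking the wall size to depend on $|\varphi|$ (specifically on the number and arity of \textsf{dp}-atoms, together with the quantifier rank) so that, regardless of which vertices are chosen as terminals by any valuation, the wall leaves enough room to re-route all disjoint-paths witnesses around $v$.

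\textbf{Algorithm and complexity.} Given the lemma, the algorithm iterates: while $\tw(G) \geq g(|\varphi|, r)$, locate and delete an irrelevant vertex; once $\tw(G) < g(|\varphi|, r)$, invoke the bounded-treewidth base case. At most $n$ iterations occur, each implementable in $\mathcal{O}_{|\varphi|,r}(n)$ time via known procedures for detecting flat walls in bounded-Hadwiger-number graphs, yielding the overall $\mathcal{O}_{|\varphi|, r}(n^2)$ bound.

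\textbf{Main obstacle.} The principal difficulty lies in the uniform irrelevance lemma. The Robertson--Seymour irrelevant-vertex theorem is stated for a fixed $k$-disjoint-paths instance with prescribed terminals, whereas here the terminals arise from valuations of first-order variables ranging over all of $V(G)$, and the number and nesting of \textsf{dp}-atoms under quantifiers is controlled only by $|\varphi|$. The key technical content is thus to strengthen the classical irrelevant-vertex argument into a statement that depends only on the \emph{number} of terminals and not on their location, and to show that a single deep-wall vertex works for every \textsf{dp}-atom appearing in $\varphi$ at once; this, together with standard locality manipulations for the first-order fragment, is what enables the formula-level recursion to go through.
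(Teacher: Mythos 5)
Your overall scheme (Courcelle base case on bounded treewidth plus iterated deletion of irrelevant vertices found inside a flat wall, giving a quadratic algorithm) matches the skeleton of the paper's proof, but the step you isolate as ``the uniform irrelevance lemma'' is precisely where the proposal breaks down, and your suggested fix does not work. The classical Robertson--Seymour rerouting argument requires the terminals of the linkage to lie \emph{outside} the insulated region; in the model-checking setting the quantified first-order variables range over all of $V(G)$, so a valuation may interpret a variable at the candidate vertex $v$ itself, or deep inside the compass of the wall. For such valuations no rerouting is possible (e.g.\ $v$ may be the unique witness of an existential quantifier, or its deletion may change adjacency, color or equality atoms of the quantifier-free part), and making the wall size depend on $|\varphi|$ does not help, since the problem is not the \emph{number} of terminals but the fact that their \emph{locations} are themselves quantified, under arbitrary $\forall/\exists$ alternation. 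There is also a second omission: for bounded Hadwiger number the Flat Wall Theorem only yields a flat wall in $G\setminus A$ for an apex set $A$ of bounded size, and the disjoint-paths witnesses may pass through $A$; your argument never addresses apices.

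What the paper does to bridge exactly this gap is the bulk of its technical work: it passes to an \emph{annotated} enhanced sentence $\varphi_{{\sf R},{\bf c}}$ (with an annotation set ${\sf R}$ restricting all quantifiers and constants ${\bf c}$ interpreted by the apex set, handled via apex-projections and the Backwards Translation Theorem), proves that two annotated graphs with the same recursive pattern structure satisfy the same such sentences under arbitrary quantifier alternation (\autoref{lemma_reducing}), and then uses the Linkage Combing Lemma inside a flat railed annulus, together with signatures and representatives computable by Courcelle's theorem on the bounded-treewidth compass, to \emph{shrink the annotation} so that a large bidimensional area contains no vertices over which the quantifiers range (\autoref{corr_new_ext_rep}). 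Only after this annotation-reduction does the Unique Linkage Theorem apply to declare a vertex of that area problem-irrelevant (\autoref{lem_removingirr}), because by then every valuation consistent with the annotation places its terminals outside the insulated region. Your proposal acknowledges the obstacle but offers only the hope that the classical argument can be made ``independent of the terminals' location''; that is not the right strengthening, and without a mechanism for restricting where the quantified variables may be interpreted (and for neutralizing the apices), the claimed equivalence $G\models\varphi \iff G-v\models\varphi$ is unsubstantiated.
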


Some indicative (meta) problems whose standard parameterizations (i.e., those defined by the parameter $k$ in their inputs) are automatically classified in \FPT
because of \autoref{@definitionen} are
{\sc Minor Containment},
{\sc Topological Minor Containment},
{\sc Cyclability},
{\sc Unordered Linkability},
{\sc Ordered Linkability},
{\sc $\mathcal{F}$-Minor-Deletion},
{\sc $\mathcal{F}$-Topological Minor-Deletion},
{\sc $\mathcal{F}$-Contraction Deletion} (for bounded genus graphs),
{\sc Annotated $\mathcal{F}$-$\preceq$-Deletion},
{\sc Subset $\mathcal{F}$-$\preceq$-Deletion},
{\sc $\varphi$-Deletion},
{\sc $\varphi$-Amalgamation},
{\sc $\mathcal{L}$-$\varphi$-Replacement},
{\sc $\varphi$-Elimination distance},
{\sc $\varphi$-Reconfiguration}.
For the definitions, the complexity, and the \FOL{\sf +}{\sf DP}-expressibility 
of all these problems we refer the reader to \autoref{sec_problemswesolve}.
\smallskip

Our next step towards a more expressive logic, is to extend \FOL{\sf +}{\sf DP}\ by considering, for every $k\geq 1$, 
a more general form of predicate {\sf $s$-dp}$_{k}({\sf x}_1,{\sf y}_1,\ldots,{\sf x}_k,{\sf y}_k)$ where we now demand that the disjoint paths in question are pairwise $s$-scattered, i.e., there are no two vertices of two distinct
paths that are within distance at most  $s$.  We call the new logic  \FOL{\sf +}{\sf SDP}.
As {\sf $0$-dp}$_{k}({\sf x}_1,{\sf y}_1,\ldots,{\sf x}_k,{\sf y}_k)=\text{\sf dp}_{k}({\sf x}_1,{\sf y}_1,\ldots,{\sf x}_k,{\sf y}_k)$, we readily have that 
 \FOL{\sf +}{\sf SDP} is an extension of  \FOL{\sf +}{\sf DP}.
Our second result is the following  AMT.
 
 \begin{theorem}
\label{@geographical}
Every problem on graphs that is expressible by some formula $\varphi$ in \FOL{\sf +}{\sf SDP} can be solved by an algorithm running in time $\mathcal{O}_{|\varphi|,k}(n^2)$, where $k$ is the Euler genus of $G$.
\end{theorem}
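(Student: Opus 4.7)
\textbf{Proof proposal for \autoref{@geographical}.}
The plan is to follow the overall shape of the proof of \autoref{@definitionen} but upgrade the core irrelevant-vertex step so that it is sensitive to the scattering parameter $s$; this is precisely where we will need the stronger combinatorial hypothesis of bounded Euler genus. We proceed by induction on $φ$: quantifiers are eliminated by branching over at most $|V(G)|$ instantiations, Boolean connectives are routine, and non-{\sf sdp} atomic predicates are evaluated directly in \FOL. This reduces the task to deciding, in $\mathcal{O}_{|φ|,k}(n)$ calls, atomic predicates of the form $s\textsf{-sdp}_{k}(u_1,v_1,\ldots,u_k,v_k)$, where the terminal tuple $T$ has length $2k$ and $G$ has Euler genus at most $g$.

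The main technical ingredient is a distance-aware irrelevant-vertex lemma: there is a computable function $h(g,k,s)$ such that whenever $|V(G)|>h(g,k,s)$, one can compute in polynomial time a vertex $w\notin T$ for which $G\models s\textsf{-sdp}_{k}(u_1,v_1,\ldots,u_k,v_k)$ if and only if $G-w\models s\textsf{-sdp}_{k}(u_1,v_1,\ldots,u_k,v_k)$, simultaneously for every pairing of the terminals. We locate $w$ via the flat-wall machinery tailored for surface-embedded graphs: either the treewidth of $G$ is bounded by some $f(g,k,s)$, in which case a Courcelle-type evaluation finishes the job, or $G$ contains a flat wall $W$ of height $\Theta(k+s)$ whose interior is disjoint from $T$. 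We then choose $w$ to be a vertex deeply buried in the center of $W$; iterating this contraction produces, in total time $\mathcal{O}_{g,k,s}(n^2)$, a kernel of bounded size on which the predicate is checked by brute force, matching the $\mathcal{O}_{|φ|,k}(n^2)$ bound in the statement.

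The principal obstacle is to argue that $w$ is irrelevant in the \emph{scattering} sense: removing $w$ must preserve not only the existence of $k$ vertex-disjoint paths between the paired terminals but also the $s$-scatteredness of those paths. The key observation is that in a bounded-genus embedding a sufficiently thick flat annulus around $w$, say of concentric thickness exceeding $10(k+s)$, admits a \emph{local} rerouting of any path traversing $w$ through a single brick of $W$. Two properties of the surface embedding make this rerouting safe: (i) the bricks are pairwise disjoint, so the detours of distinct scattered paths do not interfere, preserving disjointness; and (ii) shortest-path distances measured in $G-w$ exceed those in $G$ by at most a constant depending only on the brick size, so every $s$-scattered linkage in $G$ is converted into an $s$-scattered linkage in $G-w$ by the local modification. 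This distance control is exactly what a surface embedding provides and what would fail at the vortices and apices of the general Graph Minors structure theorem, which is why \autoref{@geographical} is stated for graphs of bounded Euler genus rather than for arbitrary proper minor-closed classes.
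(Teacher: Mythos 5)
There is a genuine gap, and it is at the very first step. You reduce model-checking to evaluating atomic $s\textsf{-sdp}_k$ predicates by ``branching over at most $|V(G)|$ instantiations'' per quantifier, and then claim this leaves only $\mathcal{O}_{|φ|,k}(n)$ atomic calls. It does not: a sentence of quantifier rank $r$ branched this way produces $n^{\Theta(r)}$ instantiations, i.e.\ an \XP-time procedure, which cannot yield the $\mathcal{O}_{|φ|,k}(n^2)$ bound of the statement (plain \FOL\ model-checking would already be easy if this worked). The entire difficulty of \autoref{@geographical}, exactly as for \autoref{@definitionen}, is to handle arbitrary quantifier alternation without enumerating valuations. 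The paper does this by passing to an annotated sentence $φ_{\sf R}$, encoding satisfaction via tree assignments and (recursive) patterns (\autoref{obs_grleafs}, \autoref{lemma_reducing}), defining signatures relative to a flat/disk-embedded railed annulus, computing \emph{representatives} of the annotation by Courcelle's theorem inside a bounded-treewidth compass, and only then declaring a vertex irrelevant --- irrelevant for the \emph{whole sentence} under all valuations drawn from the reduced annotation set, not for one atomic predicate with fixed terminals. Your scheme has no counterpart of this annotation-reduction step, and without it the quantifier elimination is fatal to the running time.

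The second, smaller issue is the scattering argument itself. Even for a single atomic predicate, irrelevance of a central wall vertex $w$ for $s$-scattered linkages is not obtained by ``local rerouting through one brick'' plus the observation that distances in $G-w$ exceed those in $G$ by a bounded amount. Deleting $w$ only increases distances, so that direction is not the problem; the problem is that the detour replacing the portion of a path through $w$ must stay at distance greater than $s$ from \emph{every other path of the linkage}, and adjacent bricks of a wall do not guarantee this once $s\geq 1$. The paper instead invokes the Scattered Linkage Combing Lemma (\autoref{proposition_combinginducedlinkages}, from~\cite{GolovachST22comb}), valid only for surface-embedded graphs, combined with the leveling argument (\autoref{lem_levelingpaths}) to reroute whole $s$-scattered linkages through prescribed rails of an annulus; this is where bounded Euler genus enters, rather than through a distance-distortion bound on $G-w$. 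Your intuition for why genus (and not just excluded minors) is needed is in the right direction, but the rerouting step as written would not go through.
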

 
Some indicative problems whose standard parameterizations are automatically classified in \FPT
because of \autoref{@geographical} are
 {\sc Induced Minor},
 {\sc Induced Topological Minor Containment},
 {\sc Contraction Containment},
 {\sc Induced Unordered Linkability},
 {\sc Induced Ordered Linkability},
 {\sc $\mathcal{F}$-Induced Minor Deletion},
 {\sc $\mathcal{F}$-Induced Topological Minor Deletion},
{\sc $\varphi$-Deletion},
{\sc $\varphi$-Amalgamation},
{\sc $\mathcal{L}$-$\varphi$-Replacement},
{\sc $\varphi$-Elimination distance},
{\sc $\varphi$-Reconfi\-guration}.
For the definitions, the complexity, and the \FOL{\sf +}{\sf SDP}-expressibility 
of all these problems we refer the reader to \autoref{sec_problemswesolve}.

% 
%\fbox{
%\begin{minipage}{15cm}
%\noindent{\sc Hitting Induced Minors}\\
%\noindent{\sl Input}: a graph $G$ and a $k$.\\
%\noindent{\sl Question}: does $G$ contain a set $S$ of $k$ vertices such that $G-S$ does not contain $H$ as an induced minor? 
%\end{minipage}
%}
%\medskip
%
%
%\medskip
%\fbox{
%\begin{minipage}{15cm}
%\noindent{\sc Induced Cyclability}\\
%\noindent{\sl Input}: a graph $G,$ a set $R\subseteq V(G),$ and an integer $k$.\\
%\noindent{\sl Question}: do every $k$ vertices of $R$ belong in an induced  cycle of $G$?
%\end{minipage}
%}
%\medskip

\subsection{The irrelevant vertex technique}

In Volume XIII of their Graphs Minors series, Robertson and Seymour introduced the celebrated \emph{irrelevant vertex technique}  in order design a time $\mathcal{O}_{k}(n^3)$ algorithm for 
the following problem \cite{RobertsonS95GMXIII}.
\smallskip

\fbox{
\begin{minipage}{15cm}
\noindent{\sc Disjoint Paths}\\
\noindent{\sl Input}: a graph $G$ and pairs $(s_{1},t_{1}),\ldots,(s_{k},t_{k})$ of vertices of $G$.\\
\noindent{\sl Question}: Are there pairwise vertex-disjoint paths between $s_i$ and $t_i$, for $i\in\{1,\ldots,k\}$?
\end{minipage}
}
\medskip

Notice that the description of the above problem does not fit in \FOL. It demands the existence of $k$ pairwise disjoint sets of vertices each inducing a {\sl connected} graph containing the terminals $s_{i}$ and $t_{i}$. While connectivity is expressible in \MSOL\ it 
is known that 
it cannot be expressed in \FOL\ (see e.g., \cite{EbbinghausF95finit,Libkin04eleme}).\medskip

The general idea behind  the irrelevant vertex technique is  that if some part of the input graph is ``sufficiently insulated'' from the rest of the graph, then it may be {\sl irrelevant} in the sense that the solution can be ``reconfigured'' away from it.
When this idea applies, then this irrelevant part can be safely deleted and produce an equivalent, and simpler, instance of the problem.
 
The original application of the irrelevant vertex technique for 
the \textsc{Disjoint Paths} problem have had two phases: 

\begin{itemize}
\item {\bf 1st phase}:   when the input graph $G$ contains a big (as a function of $k$) clique minor

\item {\bf 2nd phase}:   when $G$ minor-excludes a clique, that is $G$ has small Hadwiger number.
\end{itemize}

\autoref{@definitionen} and \autoref{@geographical} deal with the applicability of the 2nd phase.
We next give a brief outline of how this phase was applied for  the \textsc{Disjoint Paths} problem
and, in particular, in the ``non-trivial'' situation where $G$ has ``big'' 
treewidth.
To deal with this situation, Robertson and Seymour proved in \cite{RobertsonS95GMXIII}
the so called {\sl Flat Wall Theorem}, asserting that if a graph has small Hadwiger number and big treewidth,  then after the removal from $G$ of ``few'' vertices, called {\sl apex vertices}, the resulting graph contains a big wall that is ``flat''. Intuitively, by the term ``flat wall'' we refer to a wall $W$ whose perimeter $P$ contains a separator $S$ of $G$ where the inner part of the wall
is inside one of the connected components $C$ of $G\setminus V(S)$ and where 
no two disjoint $(s_{i},t_{i})$-paths, $i\in\{1,2\}$, exist in the graph $G[V(P)\cup V(C)]$, called the \emph{compass} of $W$,  
where  $s_{1},s_{2},t_{1},$ and $t_{2}$ are vertices of $S$, appearing in this ordering in $P$.
This flatness property implies that every set of homocentric cycles around the central part of the wall can act as a system of separators ``insulating'' the two central vertices of $W$   from the part of the graph that lies outside the compass of the wall.
With this structural result at hand, Robertson and Seymour 
proved that every set of $k$ disjoint paths that may certify 
a \yes-instance of the \textsc{Disjoint Paths} problem can be rerouted 
away from its central vertices, that is it they be declared irrelevant 
and be safely discarded from $G$. The proof of this rerouting 
argument is quite technical and was given  in Volumes XXI and XXII of the Graph Minors series (this result is now known as the Unique Linkage Theorem -- see~\cite{KawarabayashiW2010asho,AdlerKKLST17irrel,Mazoit13asin,GolovachST20hitti,GolovachST22comb} for later proofs and improvements).
Moreover, Kawarabayashi, Kobayashi, and Reed proved in \cite{KawarabayashiKR12thedis} that the flat wall $W$ (and therefore its central vertices as well)  can be found in linear time.
Given now that a simpler equivalent instance of the \textsc{Disjoint Paths} problem
is found, we may repeat the above procedure a linear number of times until  the treewidth is ``small'' so that the problem can be solved by a dynamic programming algorithm (which exists because of Courcelle's theorem).
This, taken into account the improvement of \cite{KawarabayashiKR12thedis}, takes a total of $\mathcal{O}_{k}(n^2)$ time.

\paragraph{The potential of irrelevant vertex technique.}
To adapt the above arguments for other problems has been 
a challenging enterprise for graph algorithm designers during the last 20 years.
For a indicative (while not exhaustive) list of papers that made use of this technique, see~\cite{FominGT19modif,JansenK021verte,DLindermayrSV20elimi,KawarabayashiMR08asim,MarxS07obta,KobayashiK09algo,GroheKMW11find,KawarabayashiK10impr,Kawarabayashi07half,KawarabayashiR09hadw,KawarabayashiLR10reco,GolovachKPT12indu,FominLST12line,TakehiroKPT09para,Thilikos12grap,KawarabayashiKM10link,KaminskiN12find,KaminskiT12cont,FominLRS11sube,HeggernesHLP13obta,KawarabayashiR10oddc,AdlerGK08comp,CattellDDFL00onco,Kawarabayashi09plan,FominLP0Z20hittin,BasteST20acomp,SauST20anftp,SauST21amor,SauST21kapiI,SauST21kapiII,FominGSST21acomp,FominGST20analgo,GolovachST20hitti}. 
Typically, for each problem, the challenge is to give an algorithm that is  
able to detect, in polynomial time, some vertex that can be declared irrelevant 
and then prove that this vertex is  indeed irrelevant in the sense that  
discarding it from the input graph creates an equivalent instance. 
In some cases, apart from declaring a vertex irrelevant, 
``annotated versions'' of problems have been considered 
and vertices may also be declared {\sl annotation-irrelevant} in the sense that 
they can be safely excluded from the set of annotated vertices.
This extended concept of irrelevancy was used in \cite{GolovachKMT17thep} for the {\sc Cyclability} problem  and 
in~\cite{GolovachST20hitti}
for the  \textsc{$\mathcal{F}$-Topological Minor-Deletion} problem (see also the 
meta-algorithmic results in \cite{FominGSST21acomp,FominGST20analgo,FominGT19modif}).
In our proof of  \autoref{@definitionen} and  \autoref{@geographical} 
we largely make use of the annotation technology. In fact we consider an annotated set for the 
variables quantified in the \FOL{\sf +}{\sf DP} formula.

Most of the problems that are amenable to the application of the irrelevant vertex technique have a common denominator: they are not \FOL-expressible and they deal with graph classes with unbounded treewidth, that go beyond the combinatorial applicability of \MSOL. Thus, they escape the logical/combinatorial
conditions of the known meta-algorithmic technology of \FOL\ and \MSOL.

The proof of \autoref{@definitionen} and  \autoref{@geographical} is abstracting the  irrelevant vertex technology of all the aforementioned problems into two AMTs. \autoref{@definitionen}
(resp. \autoref{@geographical}) essentially indicates that, for graphs of bounded Hadwiger number (Euler genus),  
the descriptive potential of the \textsc{Disjoint (Induced) Paths} problem can be ``embedded᾽᾽ inside  \FOL\ in the form of the predicates {\sf dp}$_{k}({\sf x}_1,{\sf y}_1,\ldots,{\sf x}_k,{\sf y}_k)$ ({\sf $s$-dp}$_{k}({\sf x}_1,{\sf y}_1,\ldots,{\sf x}_k,{\sf y}_k)$).

\paragraph{What to do with a clique.}
Clearly \autoref{@definitionen} and  \autoref{@geographical}  concern the 2nd phase of the irrelevant vertex technique  where the Hadwiger number is bounded. At this point we wish to mention that the applicability of the 1st phase, concerning the question ``{\sl what to do with a clique\,}'', may  vary depending on the problem in question. We distinguish three main categories of parameterized problems. 
\begin{itemize}

\item[A.] The first category contains standard parameterizations  of  problems,  such as {\sc Minor Containment}, {\sc $\mathcal{F}$-Minor Amalgamation}, {\sc $\mathcal{F}$-Minor Deletion}, or {\sc $\mathcal{F}$-Minor Local Replacement},  where \autoref{@definitionen}       
applies and, moreover, 
big enough Hadwiger number immediately certifies a \yes- or a \no-instance. Given that checking whether 
a clique $K_{r}$ is a minor of a graph can be done in time $\mathcal{O}_{r}(n^2)$ \cite{KawarabayashiKR12thedis},  for such  problems,
just {\sf FO+DP}-expressibility is enough for implying that a problem is \FPT, even in general graphs.

\item[B.]  The second category of parameterized problems contains those where the 1st phase is non-applicable in general, in the sense that they are already intractable. For instance,  the standard parameterization of 
  {\sc Cyclability} or {\sc Subset Linkability}, where \autoref{@definitionen} applies, is {\sf co-W[1]}-hard  and problems, where \autoref{@geographical} applies,  
  such as {\sc Induced Disjoint Paths}, {\sc Contraction Containment}, and  {\sc Induced Minor Containment}   are {\sf NP}-complete even for fixed values of the parameter (see \cite{KawarabayashiK08thei}, \cite{BrouwerV87contra,LevinPW08,LevinPW08a}, and \cite{LevequeLMT09dete,FellowsKMP95thec} respectively).

\item[C.]
The third category concerns standard parameterizations of problems, such as 
 {\sc Disjoint Paths} \cite{RobertsonS95GMXIII}, {\sc Topological Minor Containment} \cite{GroheKMW11find} , {\sc $\mathcal{F}$-Topological Minor Deletion} \cite{FominLP0Z20hittin}, {\sc $\mathcal{F}$-TM Elimination Distance} \cite{AgrawalKLPRSZ22dele}, and {\sc $\mathcal{F}$-TM-Treewidth} \cite{AgrawalKLPRSZ22dele},  where extra algorithmic machinery is employed 
for dealing with the 1st phase (typically related with the recursive understanding technique \cite{ChitnisCHPP16desig,GroheKMW11find,CaiCC06,LokshtanovRSZ18redu}).  
To our knowledge, there is no  general treatment 
 of the 1st phase, further than the results of \cite{RobertsonS95GMXIII,GroheKMW11find,FominLP0Z20hittin,AgrawalKLPRSZ22dele}.
 To investigate the meta-algorithmic conditions that may unify them, even for some combinatorial condition that is more general that having bounded Hadwiger number,  is an interesting open challenge.
  \end{itemize}
  
  For more on the classification of the problems that are treated by  \autoref{@definitionen} and  \autoref{@geographical} according to the above three categories, see \autoref{sec_problemswesolve}.

\paragraph{Organization of the paper.}
In~\autoref{sec_overview}, we provide an overview of our proof.
In~\autoref{sec_preliminaries}, we provide some basic definitions that will be used throughout the paper.
In~\autoref{sec_alternative}, we present a way to translate model-checking to (recursive) folio containment.
Then, in~\autoref{sec_reroutinginannuli}, we give some additional definitions and results for dealing with collections of paths inside (partially planar) graphs.
In~\autoref{sec_annotated}, we present a trick to transform a disjoint paths query to one that can deal with the presence of some \emph{apex} vertices that can ``spoil'' flatness and in~\autoref{sec_signaturesexchengability} we present the combinatorial result that supports the correctness of the main subroutine of the algorithm of~\autoref{@definitionen} and~\autoref{@geographical}.
Next, in~\autoref{sec_proofoftheorem}, we present the proof of~\autoref{@definitionen} and in~\autoref{sec_logicscattered} we present the proof of~\autoref{@geographical}.
We conclude the paper with~\autoref{sec_conclusion}.
In~\autoref{sec_problemswesolve} we present a list of problems expressible in
\FOL{\sf +}{\sf DP} and in  \FOL{\sf +}{\sf SDP} and in~\autoref{sec_flatwalls} we present the flat wall framework that we use in this paper, which was introduced in~\cite{SauST21amor}. \autoref{@ressemblerois} contains missing complexity proofs of problems mentioned in~\autoref{sec_conclusion} and~\autoref{sec_problemswesolve}.

\section{Overview of the proof}

\label{sec_overview}
 In this section we summarize the main ideas involved in the proof of~\autoref{@definitionen} and \autoref{@geographical}.
We describe our approach for graphs.
However, our results are proven for {\sl colored graphs} (i.e., graphs equipped with a sequence of subsets of their vertex set).
% and since we deal with graphs excluding some minor, using a result from~\cite{BonnetNOST21twin} (see~\autoref{@transducing_str_from_gr}),
%our results can be extended to general structures (\autoref{thm_generalstructuresdisjoint} and~\autoref{thm_generalstructuresscattered}).

\subsection{General scheme of the algorithm}

For our algorithms we follow the typical motif of the irrelevant vertex technique:
if the treewidth of the input graph $G$ is upper-bounded by an appropriately chosen function, depending only on the sentence $\varphi\in\FOL{\sf +DP}$ and $\hw(G)$,
then because of Courcelle's Theorem~\cite{Courcelle90them,Courcelle92,Courcelle97} we can check whether $G$ satisfies $\varphi$ in linear time, using the fact that $\FOL{\sf +DP}$ is a fragment of $\MSOL$ (see~\autoref{subsec_dplogic}).
Otherwise, if the treewidth is ``large enough'', we identify an irrelevant vertex in linear time, that is a vertex whose removal does not affect satisfiability of $\varphi$.
This highly non-trivial procedure of finding an irrelevant vertex is our main goal and, in what follows, we describe in an intuitive level the main ingredients of this approach.

\paragraph{Irrelevant vertices for model-checking.}
Applying the irrelevant vertex technique in a model-checking setting demands
the restriction of the part of the graph that is used to interpret variables and predicates of the sentence.
To show that a vertex is irrelevant,
one has to prove that whether the given sentence is satisfied or not does not depend on the presence of this
vertex inside the graph.
The ``building blocks'' of our sentences are either first-order variables (quantified by universal or existential quantifiers),
or interpretations of relation symbols of the given vocabulary
(in our case, these are edges and colors in the vertices),
or disjoint-path predicates between some of the already quantified first-order variables.
Towards building an irrelevant vertex argument, an essential step is to
{\sl reduce the scope} of the quantification of the variables, while preserving the satisfiability status of the sentence.
The fact that a vertex can be discarded from the scope of a quantifier,
permits to declare it {\sl annotation-irrelevant} with respect to this quantifier.
Moreover, even if some vertex is annotation-irrelevant with respect to all quantifiers,
this vertex could still be important for the existence (or not) of disjoint paths between vertices (that are picked inside the annotation).
Such a vertex can be removed from the graph, therefore be declared {\sl problem-irrelevant}, only if we can guarantee
that disjoint paths can be safely rerouted away from it.
The pursue of such a problem-irrelevant vertex is executed inside a ``big enough'' bidimensional area of vertices that are annotation-irrelevant for {\sl all} quantifiers.
While this does not deviate from the ``typical'' rerouting arguments of the irrelevant vertex technique,
dealing with the restriction of the annotation is the most demanding part.
In the rest of this section, we aim to demonstrate a way to tackle this problem.

\subsection{Translating model-checking to (recursive) folio containment}
Our main tool in order to create equivalent (annotated) instances
is to express model-checking in graph-theoretic terms.
Throughout this section, we assume that all sentences are given in prenex-normal form,
i.e., $\varphi = Q_1 {\sf x}_1\ldots Q_r {\sf x}_r \psi({\sf x}_1,\ldots, {\sf x}_r)$,
where $Q_1,\ldots,Q_r\in\{\forall,\exists\}$,
${\sf x}_1,\ldots,{\sf x}_r$ are first-order variables,
and $\psi({\sf x}_1,\ldots, {\sf x}_r)$ is a quantifier-free formula
with ${\sf x}_1,\ldots, {\sf x}_r$ as free variables.

\paragraph{Assigning annotated graphs to rooted trees.}
Our first step (\autoref{sec_alternative}) is to interpret the satisfaction of a sentence $\varphi$ from a graph in terms of
the existence of a subtree inside a tree where the graph is embedded, such that the bifurcations of this subtree
correspond to the quantifiers of $\varphi$ and the vertices collected in each root-to-leaf path evaluate to true the 
quantifer-free ``tail'' of $\varphi$.
These trees, known as \emph{game trees}, appear with different names in the literature, like \emph{evaluation trees}~\cite{GajarskyGK20diff} or \emph{morphism trees}~\cite{BonnetKTW22twinI} (see also~\cite{gradel2003automata}).
These trees follow the recursive structure of a set $\mathsf{sig}^r(G,R_1,\ldots,R_r)$, where $G$ is a graph and $R_1,\ldots,R_r\subseteq V(G)$.
This set, called \emph{signature} of $(G,R_1,\ldots,R_r)$, is defined as the recursive collection of the different types of tuples $v_1,\ldots,v_i$ $i\leq r$ of vertices of $V(G)$, where each $v_j$ belongs to $R_j$ for every $j\leq i$ (see the definition in~\autoref{subsec_treeembed}).
We can construct a rooted tree $(T,t_0)$ following the recursive structure (of depth $r$) of this set and this naturally gives a mapping of each node of the tree to the vertex of $G$ of a particular type (when considering the tuple of all ancestors of it).
We call this mapping an \emph{assignement} of $(G,R_1,\ldots,R_r)$ to $(T,t_0)$.
Intuitively, every root-to-leaf path is mapped to a tuple of (possibly repeating) vertices $v_1,\ldots,v_r$,
where $v_i\in R_i$, for every $i\in\{1,\ldots,r\}$ (see~\autoref{fig_overview_assignment}).
Also, we define a \emph{$\varphi$-spanning} subtree of a rooted tree
to be the (sub)tree $T'$ with the same root that is obtained following the quantifiers of $\varphi$, i.e., if $Q_i=\exists$, then every node of $T'$ of depth $i$ has only one child in $T'$, while if $Q_i=\forall$, every node of $T'$ of depth $i$ bifurcates (in $T'$) to all its children in $T$ (see~\autoref{fig_overview_assignment} for an example).
In this setting, an (annotated) formula $\varphi = Q_1 {\sf x}_1\in {\sf R}_1\ldots Q_r {\sf x}_r\in{\sf R}_r\ \psi({\sf x}_1,\ldots, {\sf x}_r)$ is satisfied from a tuple $(G,R_1,\ldots,R_r)$ iff there is an assignment of $(G,R_1,\ldots,R_r)$ to a rooted tree and a $\varphi$-spanning subtree of this tree such that for every root-to-leaf path of the  $\varphi$-spanning subtree, the formula $\psi$ is satisfied when interpreting its free variables as the vertices collected in this path (\autoref{obs_treeformula}).

\paragraph{Patterns and pattern-coloring.}
A crucial observation that is central to our approach is the following:
for every quantifier-free formula $\psi({\sf x}_1,\ldots, {\sf x}_r)\in\FOL{\sf +DP}$, given a graph $G$ and a tuple of vertices $(v_1,\ldots,v_r)$ of $G$, the question whether $\psi$ is satisfied when ${\sf x}_1,\ldots, {\sf x}_r$ are interpreted as $v_1,\ldots,v_r$ boils down to checking whether edges and/or disjoint paths between particular pairs from $v_1,\ldots,v_r$ (indicated by the atomic formulas of $\psi$) are present in $G$.
For this reason,
we define the \emph{pattern} of a {\sl boundaried} graph $(G,v_1,\ldots,v_r)$ to be an encoding of the edges and the disjoint paths between all pairs in $v_1,\ldots,v_r$ in $G$.
Note that, knowing the pattern of a boundaried graph, we can determine which quantifier-free formulas in $\FOL{\sf +DP}$
are evaluated to true in this graph and which do not.
Therefore, when considering a graph assigned in a rooted tree, we can ``color'' each leaf of the tree by the pattern of the corresponding boundaried graph $(G,v_1,\ldots,v_r)$, where $v_1,\ldots,v_r$ are the vertices collected in the corresponding root-to-leaf path.
This, we call the \emph{pattern-coloring} of the leafs.
Also, we introduce encoding in terms of patterns for formulas, i.e., we define the pattern of every clause of a quantifier-free formula, that encodes the presence (or not) of predicates or their nagation.
This allows to restate \autoref{obs_treeformula} and formulate model-chacking as the search of a $\varphi$-spanning tree with particular colors in its leafs, inside a leaf-colored tree (to which the annotated graph is assigned) where colors are given by the pattern-coloring (\autoref{obs_grleafs}). See~\autoref{fig_overview_assignment}.

\begin{figure}[ht]
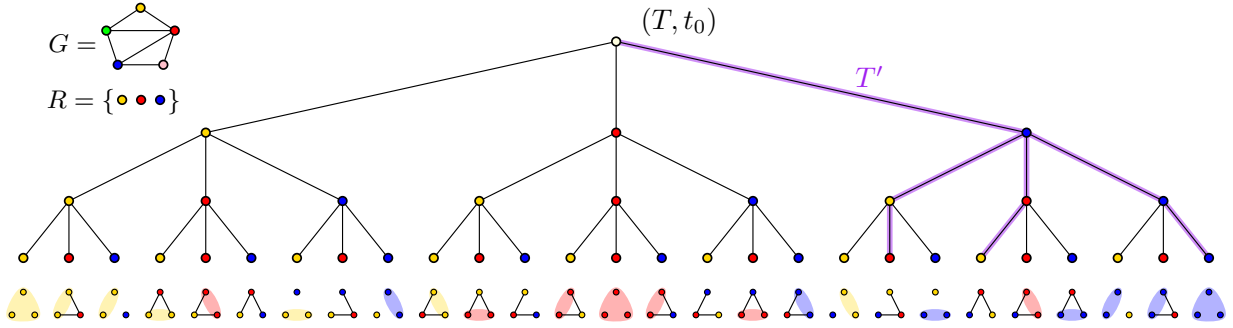

\centering
\resizebox{\textwidth}{!}{
% [inline block 0: 1 envs, 21547 chars -> data_tex | \begin{tikzpicture}[ipe stylesheet]   \draw[purple!50!white, ipe pen ultrafat]...]

}\caption{An example of an assignment of an annotated graph $(G,R,R,R)$ (depicted on the upper left part of the figure)
to a rooted tree $(T,t_0)$ (in the center of the figure).
We use colors for each vertex of $G$ in order to distinguish them, without using indices and we use colors in the nodes of $T$ to show where each vertex of $G$ is mapped to. 
The small graphs below each leaf of $T$ encode the pattern-coloring, i.e., are the (boundaried) graphs induced by the vertices collected in each root-to-leaf path of $T$ and the linear orderings of their vertex sets (from left to right) follow the ordering of the corresponding path.
Vertices that are picked twice in the same path are drawn inside a same-colored bag (respecting the ordering).
The subtree $T'$ is a $\varphi$-spanning subtree of $(T,t_0)$ for the \FOL-sentence
$\varphi = \exists {\sf x}_1\in {\sf R}_1\ \forall {\sf x}_2\in{\sf R}_2\ \exists {\sf x}_3\in{\sf R}_3\  ({\sf x}_1 = {\sf x}_2 \vee {\sf E}({\sf x}_2,{\sf x}_3))$,
that certifies that $(G,R,R,R)\models \varphi$.}
\label{fig_overview_assignment}
\end{figure}

\paragraph{Same (recursive) patterns imply satisfaction of the same formulas.}
Our intuition is that for every graph $G$, the information encoded by the patterns, organised in a recursive
tree structure as described above, is sufficient to evaluate {\sl every} sentence in $\FOL{\sf +DP}$ of a given number of quantifiers.
In~\autoref{subsec_equiv}, we formalize this intuition by defining equivalence of leaf-colored trees
and proving that two annotated graphs $(G,R_1,\ldots,R_r),(G',R_1',\ldots,R_r')$ that have the same recursive patterns for depth $r$ (i.e., $\mathsf{sig}^r(G,R_1,\ldots,R_r)=\mathsf{sig}^r(G',R_1',\ldots,R_r')$) satisfy the same sentences in $\FOL{\sf +DP}$
(\autoref{lemma_reducing}).
The proof of~\autoref{lemma_reducing} is an induction on the numbers of quantifiers of a sentence.
%We stress that annotated graphs with the same recursive patterns satisfy the same annotated sentences, {\sl no matter the quantifier alternation}, which is a noticable fact on its own.

We use~\autoref{lemma_reducing} in the course of the proof of~\autoref{thm_main} in the following way:
As long as we can find sets $R_1',\ldots,R_r'\subseteq V(G)$ such that $R_i'\subseteq R_i$ and $(G,R_1,\ldots,R_r)$ and $(G,R_1',\ldots,R_r')$ have the same signature,
we can safely reduce the former instance to the latter and report progress.
Finding the sets $R_1',\ldots,R_r'$ is not straightforward and in the next subsection we explain how to deal with this situation.
Moreover, in our reduction, we will find a bidimensional annotation-irrelevant area and return a problem-irrelevant vertex $v$, i.e., a vertex $v$ such that
$(G,R_1,\ldots,R_r)$ and $(G\setminus v,R_1',\ldots,R_r')$ 
are also equivalent.
Finally, we reduce to an instance $(G',R_1',\ldots,R_r')$ with the same $r$-signature as the input graph (i.e., $\mathsf{sig}^r(G,V(G),\ldots,V(G))=\mathsf{sig}^r(G',R_1',\ldots,R_r')$,
where the treewidth of $G'$ depends only on $r$ and the minor we exclude.
Then, following the tree-like assignment  given by the signature, we can evaluate any  $\FOL{\sf +DP}$ sentence of quantifier rank at most $r$.
This latter is done by applying Courcelle's theorem.

\subsection{Combinatorial trick to compute folios}
\label{subsec_overviewrep}

Our main difficulty is to compute {\sl recursive} folios of the given annotated graph and obtain a different annotation that gives the same recursive patterns.
To tackle this, we need some further tools to handle disjoint paths.
From a high-level point of view, our approach considers expressing ``partial'' patterns inside a bounded treewidth part of the given graph.
Using Courcelle's Theorem, we can compute partial patterns and the boundary (tuples of) vertices that give rise to these patterns.

\paragraph{Flat walls with bounded treewidth compasses.}
In the introduction,
we already sketched a definition of flat walls,
originating in the work of Robertson and Seymour~\cite{RobertsonS95GMXIII}.
An alternative intuition for a flat wall is to see it
as a structure made up of (not necessarily planar) pieces, called \emph{flaps}, that are glued together with boundaries of size at most three,
in a way that follows the bidimensional structure of a wall.
% While such a structure may not be planar, it enjoys topological properties similar to planar graphs.
% For example, two paths connecting the perimeter of this structure, that are not routed entirely inside a flap,
% cannot ``cross'' and in this sense,
% two vertices in ``distant'' flaps should also be ``distant'' inside the whole graph.
% Therefore, while ``locally non-planar'', flat walls inherit a property of cyclic separators from planar graphs.
In this article, we use the framework recently introduced in~\cite{SauST21amor} that provides a more accurate view on some previously defined notions on flat walls, particularly in~\cite{KawarabayashiTW18anew} (see~\autoref{sec_flatwalls} for formal definitions).
In the course of our main algorithm, we use a variant of the Flat Wall Theorem
proved in~\cite{KawarabayashiTW18anew,SauST21amor}
(see~\autoref{prop_flatwallbdtw}) that provides a guarantee that the {\sl compass} of the flat wall has bounded treewidth.
This permits us, because of Courcelle's Theorem, to answer in linear time \MSOL-queries inside the compass.

In fact, for our arguments, instead of working with a flat wall, we work with a \emph{flat railed annulus} (contained inside the flat wall).
This is a structure similar to a flat wall, whose ``underlying'' structure is not a wall but a sequence of nested cycles transversed by some disjoint paths, called {\sl rails}. See~\autoref{fig_flatannulus}.

\begin{figure}[ht]
\centering
\scalebox{0.5}{\begin{tikzpicture}

\begin{scope}[on background layer]
    
\draw[very thick,black,fill=black!10!white] (0,0) circle (7cm);
\draw[very thick,black,fill=red!25!white] (0,0) circle (5.8cm);
\draw[very thick,black,fill=cyan!30!white] (0,0) circle (4.6cm);
\draw[very thick,black,fill=blue!30!white] (0,0) circle (3.4cm);
\draw[very thick,black,fill=white] circle (2.2cm);

\foreach \x in {7,6.8,...,4.8} \draw (0,0) circle (\x cm);
\foreach \x in {4.6,4.4,...,2.2} \draw (0,0) circle (\x cm);
\foreach \x in {0,10,..., 350} \draw (\x:7) -- (\x:2.2);

\draw[red, line width=1.5pt] plot [smooth, tension=1.2] coordinates { (0,-1) (-160:1) (-120:1.6) (-140:3)  (-140:4)};
\draw[red, line width=1.5pt] plot [smooth, tension=1.2] coordinates {(-140:4)
(-145:5.5)  (-160:6.5) (200:8)};

\draw[red, line width=1.5pt] plot [smooth, tension=1.2] coordinates { (0.5,1) (-50:1) (-80:3) (-100:1.8) (-125:3)  (-130:4)};
\draw[red, line width=1.5pt] plot [smooth, tension=1.2] coordinates {(-130:4)
(-125:5.5)  (-110:6.5) (220:8.5)};

\draw[red, line width=1.5pt] plot [smooth, tension=1.2] coordinates { (160:8.3) (170:7) (160:5.5) (190:4.8)  (-160:4)};
\draw[red, line width=1.5pt] plot [smooth, tension=1.2] coordinates {(-160:4)
(190:3)  (160:2.5) (140:1.5)};

\draw[red, line width=1.5pt] plot [smooth, tension=1.2] coordinates { (-30:8.3) (-30:5.5) (-40:7.5) (-60:6) (-60:8)};

\node[terminal,black] (s1) at (0,-1) {};
\node[terminal,black] (t1) at (200:8) {};
\node[terminal,black] (s2) at (0.5,1) {};
\node[terminal,black] (t2) at (220:8.5) {};
\node[terminal,black] (s3) at (160:8.3) {};
\node[terminal,black] (t3) at (140:1.5) {};
\node[terminal,black] (s4) at (-30:8.3) {};
\node[terminal,black] (t4) at (-60:8) {};

\node[root terminal] (r1) at (-160:4) {};
\node[root terminal] (r2) at (-150:4) {};
\node[root terminal] (r3) at (-140:4) {};
\node[root terminal] (r4) at (-130:4) {};
\node[root terminal] (r5) at (-120:4) {};
    
\draw[ultra thick,yellow] (40:4.6) circle (0.2cm);
\node (o1) at ($(40:4.6)+(135:0.2)$) {};
\node (o2) at ($(40:4.6)+(-60:0.2)$) {};
\end{scope}

\begin{scope}[yshift=3cm,xshift=9cm,scale=0.8]

\draw[ultra thick,yellow] ($(0,2.8)+(135:2.6)$) -- (o1.center)
($(0,2.8)+(-90:2.6)$) -- (o2.center);

\draw (0,2.8) circle (2.6cm);
\clip (0,2.8) circle (2.6cm);

\begin{scope}

    %inner cycle left part
    \path (-1.2,2.7)  arc (40:77:5)
    node[terminal,pos=0.1] (P1) {}
    node[terminal,pos=0.3] (P0) {}
    (90:5)--(90:5);
    \draw [thick,black,path fading=north, fading angle =40] (-1.2,2.7)  arc (40:64:5)  (100:4)--(100:4);
    
    %inner cycle right part
    \path (-1.2,2.7) arc (40:-15:5)
    node[terminal,pos=0.2] (P2) {}
    node[terminal,pos=0.4] (P3) {}
    node[pos=1] (A) {}
    (-15:5)--(-15:5);
    \draw [thick,black,path fading=south] (-1.2,2.7) arc (40:5:5)    (7:5)--(7:5);
    
    %middle cycle left part
    \path (0.2,3)  arc (40:82:6.5)
    node[terminal,pos=0.1] (P6) {}
    node[terminal,pos=0.25] (P5) {}
    node[terminal,pos=0.45] (P4) {}
    (85:7)--(85:7);
    \draw [thick,black!50!white] (0.2,3)  arc (40:59:6.5)  (85:7)--(85:7);
    \draw [thick,black,path fading=west] (0.2,3)  arc (40:66:6.5)  (85:7)--(85:7);

    %middle cycle right part
    \path (0.2,3) arc (40:-5:6.5)
    node[terminal,pos=0.1] (P7) {}
    node[terminal,pos=0.3] (P8) {}
    node[terminal,pos=0.4] (P9) {}
    node[pos=1] (B) {}
    (-5:7)--(-5:7);
    \draw [thick,black,path fading=south] (0.2,3) arc (40:13:6.5)
    (15:7)--(15:7);
    
    %outer cycle left part
    \path (1.7,3)  arc (40:85:8.5)
    node[pos=1] (D) {}
    node[terminal,pos=0.4] (P10) {}
    node[terminal,pos=0.25] (P11) {}
    node[terminal,pos=0.05] (P12) {}
    (90:9)--(90:9);
    \draw [thick,black,path fading=west] (1.7,3)  arc (40:69:8.5)   (80:9)--(80:9);
    
    %outer cycle right part
    \path (1.7,3) arc  (40:5:8.5)
    node[terminal,pos=0.15] (P13) {}
    node[terminal,pos=0.35] (P14) {}
    node[terminal,pos=0.45] (P15) {}
    node[pos=1] (C) {}
    (5:9)--(5:9);
    \draw [thick,black,path fading=south] (1.7,3) arc  (40:25:8.5)   (25:9)--(25:9);
    
    \foreach \x in {0,...,3} \node[terminal,blue,circle] () at (P\x) {};
    \foreach \x in {4,...,9} \node[terminal,blue,circle] () at (P\x) {};
    \foreach \x in {10,...,15} \node[terminal,red,circle] () at (P\x) {};

    \node[] (A1) at ($(A)+(-45:1)$) {$A_{\bar{w}0}$};
    \node[] (A2) at ($(B)+(-45:1)$) {$A_{\bar{w}1}$};

%Nodes A1
    \node[terminal,blue,circle](Q9) at (0.4,.6) {};
    \node[terminal,blue,circle](Q8) at (0.8,1.3) {};
    \node[terminal,blue,circle](Q7) at (-0.1,1.6) {};
    \node[terminal,blue,circle](Q6) at (0.2,1.8) {};
    \node[terminal,blue,circle](Q5) at (-.5,2.4) {};
    \node[terminal,blue,circle](Q4) at (-0.3,2.8) {};        
    \node[terminal,blue,circle](Q3) at (-1,2.8) {};    
    \node[terminal,blue,circle](Q2) at (-0.8,3.4) {};
    \node[terminal,blue,circle](Q1) at (-1.2,3.8) {};
    
    \node[terminal,blue,circle](ex2) at (-1.9,4) {};

%Nodes A2
    \node[terminal,red,circle](B8) at (2,1.6) {};
    \node[terminal,blue,circle](B7) at (1.2,2.2) {};
    \node[terminal,blue,circle](B6) at (1.4,2.4) {};
    \node[terminal,red,circle](B5) at (0.5,3.3) {};
    \node[terminal,red,circle](B4) at (0.7,3.6) {};
    \node[terminal,red,circle](B3) at (0.3,3.7) {};    
    \node[terminal,blue,circle](B2) at (-.65,4.25) {};
    \node[terminal,blue,circle](B1) at (-.85,4.45) {};

%Nodes out
    \node[terminal, red,circle] (C7) at (0.7,5) {};
    \node[terminal, red,circle] (C6) at (0.4,4.6) {};
    \node[terminal, red,circle] (C5) at (1.3,4.5) {};
    \node[terminal, red,circle] (C4) at (1.2,4.25) {};
    \node[terminal, red,circle] (C3) at (1.45,3.95) {};
    \node[terminal, red,circle] (C2) at (2.1,3.6) {};
    \node[terminal, red,circle] (C1) at (2,3) {};

%Nodes in
    \node[terminal, blue,circle] (A5) at (-2,2.8) {};
    \node[terminal, blue,circle] (A4) at (-1.8,2.3) {};
    \node[terminal, blue,circle] (A3) at (-1.2,1.9) {};
    \node[terminal, blue,circle] (A2) at (-1.5,1.5) {};
    \node[terminal, blue,circle] (A1) at (-1,1) {};

\begin{scope}[on background layer]
\fill[white] (0,2.8) circle (2.6cm);
% \clip (C)  arc (5:185:8.5);
% \draw[draw=white,fill=white] (0,2.8)  circle (2.7cm);
% \draw[fill=red!50!white](0,2.8)  circle (2.6cm);
%     \begin{scope}
%         \clip (B)  arc (-5:185:6.5);
%         \draw[draw=white,fill=white](0,2.8)  circle (2.7cm);
%         \draw[fill=cyan!50!white] (0,2.8)  circle (2.6cm);

%         \clip (A)  arc (-15:195:5);
%         \draw[draw=white,fill=white](0,2.8)  circle (2.7cm);
%         \draw[draw=white,fill=cyan!30!white](0,2.8)  circle (2.6cm);
%     \end{scope}
\clip (0,2.8) circle (2.6cm);

    \begin{scope}
        \fill[cyan] (P9.center) to [bend left= 20] ($(P9)+(-70:1)$) to [bend left= 20] (P9.center);
        
        \fill[cyan] (P3.center) to [bend right= 10] ($(P3)+(-90:1)$) to ($(P3)+(-25:2)$) to [bend right=10] (P3.center);
        
        \fill[cyan] (A1.center) to [bend left= 20] ($(A1)+(-80:2)$) to ($(A1)+(-100:1)$) to [bend right=10] (A1.center);
        
        \fill[cyan] (A1.center) to [bend left= 10] ($(A1)+(-170:1)$) to ($(A1)+(-120:1)$) to [bend right=10] (A1.center);
        
        \fill[cyan] (A4.center) to [bend left= 10] ($(A4)+(-160:1)$) to ($(A4)+(-120:1)$) to [bend right=10] (A4.center);
        
        \fill[cyan] (A5.center) to [bend right= 10] ($(A5)+(-170:1)$) to ($(A5)+(-140:1)$) to [bend right=10] (A5.center);
  
        \fill[cyan] (P0.center) to [bend right= 10] ($(P0)+(140:1)$) to [bend right= 20]
        (A5.center) to [bend right=30] (P0.center);
        
        \fill[cyan] (ex2.center) to [bend left= 20] ($(ex2)+(150:1)$) to [bend left=30] (ex2.center);
        
        \fill[cyan] (P4.center) to [bend left= 20] ($(P4)+(150:1)$) to [bend left=30] (P4.center);
        
        \fill[red] (P10.center) to [bend left= 10] ($(P10)+(170:2)$) to ($(P10)+(120:1)$) to [bend right=10] (P10.center);
    
        \fill[red] (P10.center) to [bend left= 20] ($(P10)+(85:1)$) to [bend left =20](C7.center) to [bend right=30] (P10.center);

        \fill[red] (C7.center) to [bend right= 20] ($(C7)+(45:1)$) to [bend right =20]($(C7)+(70:1)$) to [bend right=15] (C7.center);
        
        \fill[red] (C5.center) to [bend right= 20] ($(C5)+(45:1)$) to [bend right =20]($(C5)+(70:1)$) to [bend right=15] (C5.center);
        
        \fill[red] (C2.center) to [bend right= 20] ($(C2)+(45:1)$) to [bend right =20]($(C2)+(90:2)$) to [bend right=15] (C2.center);
        
        \fill[red] (C2.center) to [bend right= 20] ($(C2)+(-40:1)$) to [bend right =20]($(C2)+(-10:1)$) to [bend right=15] (C2.center);
        
        \fill[red] (P13.center) to [bend right= 20] ($(P13)+(-60:1)$) to [bend right =20]($(P13)+(30:1)$) to [bend right=15] (P13.center);
        
         \fill[red] (B8.center) to [bend right= 20] ($(B8)+(-100:1)$) to [bend right =20]($(B8)+(-10:1)$) to [bend right=15] (B8.center);

        \fill[white,path fading=fade in] (0,2.8) circle (3cm);
    \end{scope}
            % \draw[thick,black] (P13.center) to ($(P13)+(10:1)$);

\fill[cyan!70!white,opacity=0.5] (P1.center) to [bend right= 20] (P2.center) to [bend right= 20] (Q4.center) to [bend right=20] (P1.center);

\fill[cyan!70!white,opacity=0.5] (P0.center) to [bend right= 30] (P4.center) to [bend right= 20] (P6.center) to [bend left=20] (P0.center);
   
\fill[cyan!70!white,opacity=0.5] (P5.center) to [bend right= 20] (P6.center) to [bend right=20] (P5.center);

\fill[cyan!70!white,opacity=0.5] (P1.center) to [bend left= 20] (A4.center) to [bend left= 20] (A5.center) to [bend left=20] (P1.center);

    \fill[cyan!70!white,opacity=0.5] (P1.center) to [bend right= 20] (P0.center) to [bend right=20] (P1.center);
    
    \fill[cyan!70!white,opacity=0.5] (P6.center) to [bend left= 20] (P7.center) to [bend left= 20] (Q4.center) to [bend left=40] (P6.center);
    
    \fill[cyan!70!white,opacity=0.5] (P2.center) to [bend left= 20] (P7.center) to [bend left= 20] (P2.center);

    \fill[cyan!70!white,opacity=0.5] (A4.center) to [bend left= 20] (P2.center) to [bend left= 20] (A1.center) to [bend left=40] (A4.center);
    
    \fill[cyan!70!white,opacity=0.5] (P3.center) to [bend left= 10] (P2.center) to [bend left= 20] (P8.center) to [bend right=20] (P3.center);
    
    \fill[cyan!70!white,opacity=0.5] (P3.center) to [bend right= 5] (P8.center) to [bend left= 40] (P9.center) to [bend left=20] (P3.center);
 
    \fill[cyan!70!white,opacity=0.5] (A1.center) to [bend right= 20] (P3.center) to [bend right=20] (A1.center);
    
   \fill[cyan!70!white,opacity=0.5] (P0) to [bend left= 20] (ex2) to [bend left =20](P4.center) to [bend left=15] (P0);
   
   \fill[cyan!70!white,opacity=0.5] (P4.center) to [bend left= 20] (P10.center) to [bend left =20](P5.center) to [bend left=15] (P4.center);
   
   \fill[red!50!white,opacity=0.5] (P10.center) to [bend right= 10] (C7.center) to [bend left =20](P11.center) to [bend left=15] (P10.center);
   
    \fill[red!50!white,opacity=0.5] (P5.center) to [bend right= 10] (P11.center) to [bend right=15] (P5.center);
   
   \fill[red!50!white,opacity=0.5] (P11.center) to [bend left= 20] (C5.center) to [bend left =20](C2.center) to [bend left=10] (P11.center);
   
   \fill[red!50!white,opacity=0.5] (P11.center) to [bend right= 20] (P6.center) to [bend right =20](P12.center) to [bend right=20] (P11.center);
   
   \fill[red!50!white,opacity=0.5] (P12.center) to [bend left= 20] (C2.center) to [bend left =20](P13.center) to [bend left=15] (P12.center);
   
   \fill[cyan!70!white,opacity=0.5] (P13.center) to [bend right= 20] (P7.center) to [bend right =20](P8.center) to [bend right=15] (P13.center);
   
   \fill[red!50!white,opacity=0.5] (P13.center) to [bend left= 20] (B8.center) to  [bend left=15] (P13.center);
   
   \fill[red!50!white,opacity=0.5] (P8.center) to [bend left= 20] (B8.center) to [bend left=15] (P8.center);
   
   \fill[red!50!white,opacity=0.5] (P7.center) to [bend left= 20] (P12.center) to [bend left=15] (P7.center);
\end{scope}

%Flaps in A1
    \draw[black,path fading= west] (ex2) -- ($(ex2)+(145:0.4)$);

    \draw[black] (P4)--(Q1) (P6) -- (Q4) (P7)--(Q4) (P7)--(P2) (P8)--(Q6) (P8)--(Q7) (P8)--(Q8) (P9)--(Q8);
    \draw[ultra thick,black] (P2) -- (Q3) (Q3) -- (Q4) (Q4) -- (P6);
    
    \draw[black] (P0) -- (Q2) (Q2)-- (P6);
    \draw[black] (P3) -- (P2) (P2)-- (Q6) (Q6)-- (P8);

    \draw[black]  (P0)--(Q1) (P1)--(Q3) (P2)--(Q5) (P2)--(Q6) (P2)--(Q7) (P3)--(Q7) (P3)--(Q8) (Q3) to [bend left = 10] (P2) (P1) to [bend right = 10] (Q5); 
    
    \draw[black] (ex2) -- (P0) (ex2) -- (P4) (P0) -- (P4) (Q1)--(Q2) (Q3)--(Q4)--(Q5)--(Q3) (P3) -- (Q9);
    
    %patch for left fading
    %\fill[white] (-2.4,4.6) -- (-2.9,3.8) -- (-3,4.4);

%Flaps in A2
    \draw[black] (B1) -- (B2) (B1) -- (P4) (B1) -- (P5) (B1) -- (P10) (B2) -- (P10) (B2) -- (P5) (B2) -- (P4) (P5) -- (P10) -- (P4);
    
    \draw[black] (P5) -- (P11);
    
    \draw[ultra thick,black] (P6) -- (B3) (B3) -- (P11);
    \draw[black] (P8) -- (P7) (P7)--(B6)-- (P13);
    \draw[black] (P6) -- (B4) (P6) -- (B5) (B5) -- (P12) (B3) --(B4) (B4) -- (B5) (B3) -- (B5) (P11) -- (B4) (P12) -- (B4) (B5) -- (P11);
    
    \draw[black] (P7) -- (P12);
    
    \draw[black] (P7) -- (B6) (P7) -- (B7) (P8) -- (B7) (B7)-- (P13) (P13) -- (B6) (P13) -- (P8) (B6) -- (B7);
    
    \draw[black] (P8) -- (B8) (P13) -- (B8);

%

%Flaps out
\draw[ultra thick,black] (P11) -- (C4) (C4) -- (C5);
\draw[] (P10) -- (C6);
\draw[] (C6) -- (P11) (C6) -- (C7);
\draw[] (P11) -- (C5) (P11) -- (C4) (P11) -- (C3) (P11) -- (C2) (C5) -- (C2) (C5) -- (C3) (C5) -- (C4) (C4) -- (C3) (C3) -- (C2) (C4) to [bend left = 5] (C2);
\draw (P12) -- (C2) (P12) -- (C1) (P13) -- (C1) (C1) -- (C2);

%Flaps in
    \draw[black] (A1)  -- (P3);
    \draw[black] (A1) -- (P2) (A1) -- (A2) (A1) -- (A3) (A1) -- (A4) (A2) -- (P2) (A2) -- (A3) (A2) -- (A4) (A3) -- (A4) (A4) -- (P2);
     \draw[ultra thick,black]      (A3) -- (P2)      (A3) -- (A4);
    \draw[black] (A4) -- (A5)  -- (P1) -- (A4);
    \draw[black] (A5) -- (P0);

%Faded edges
\draw[thick,black, path fading=east] (Q9) -- ($(Q9)+(-30:0.6)$);
\draw[thick,black, path fading=south] (Q9) -- ($(Q9)+(-100:0.5)$);
\draw[thick,black, path fading=south] (B8) -- ($(B8)+(-70:0.4)$);
\draw[thick,black, path fading=east] (B8) -- ($(B8)+(-30:0.3)$);
\draw[thick,black, path fading=east] (P13) -- ($(P13)+(10:0.5)$);
\draw[thick,black, path fading=east] (C2) -- ($(C2)+(-30:0.55)$);
\draw[thick,black, path fading=north] (C2) -- ($(C2)+(80:0.7)$);
\draw[ultra thick,black, path fading=north] (C5) -- ($(C5)+(60:1)$);
\draw[thick,black, path fading=north] (C7) -- ($(C7)+(50:1)$);
\draw[thick,black, path fading=north] (C7) -- ($(C7)+(145:1)$);
\draw[thick,black, path fading=north] (P10) -- ($(P10)+(75:1)$);
\draw[thick,black, path fading=west] (P10) -- ($(P10)+(170:2)$);
\draw[thick,black, path fading=south] (A1) -- ($(A1)+(-75:1)$);
\draw[thick,black, path fading=west] (A1) -- ($(A1)+(-140:1)$);
\draw[ultra thick,black, path fading=west] (A4) -- ($(A4)+(-130:1)$);
\draw[thick,black, path fading=west] (A5) -- ($(A5)+(-160:1)$);
\draw[thick,black, path fading=north] (A5) -- ($(A5)+(120:1.5)$);

\end{scope}
\end{scope}

\end{tikzpicture}}
\caption{An illustration of a (flat) railed annulus and a linkage (depicted in red) that is combed through some prescribed vertices of the railed annulus  (depicted in orange).}
\label{fig_flatannulus}
\end{figure}
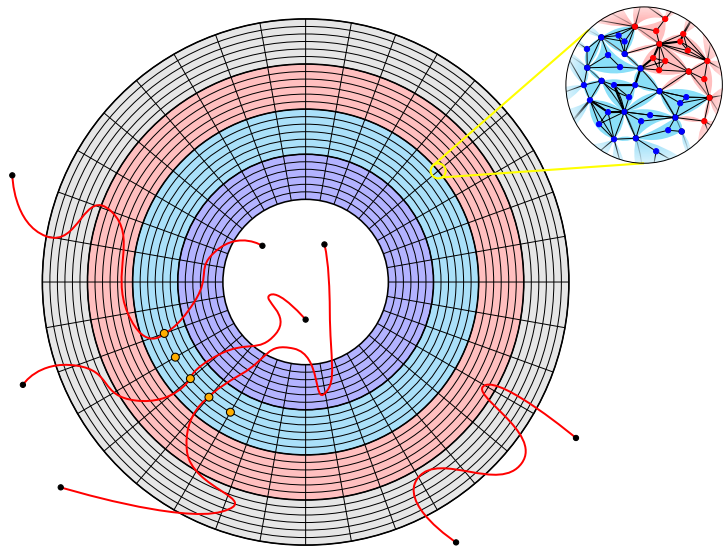

\paragraph{Routing linkages through railed annuli.}
Recall that, the pattern of a boundaried graph $(G,v_1,\ldots, v_r)$, apart from the edges between $v_1,\ldots, v_r$, also
encodes the existence or not of disjoint paths between pairs of the boundary vertices $v_1,\ldots, v_r$.
We are interested in the ways such a pattern may cross a cycle $C$ of the railed annulus.
To define a notion of ``partial'' pattern (i.e., the one ``cropped'' by a ``central-enough'' cycle $C$),
we have to deal with the possibe ways that  $v_1,\ldots, v_r$ can be connected through disjoint paths crossing $C$ and this can be seen as a question about the variety of \emph{linkages}, i.e., collections of disjoint paths, that can be routed between boundary vertices and the way they may cross $C$.
Clearly, a linkage may transverse $C$ in a quite entagled way,
so that the partial pattern of $(G,v_1,\ldots, v_r)$ cropped by $C$
may have an unbounded number of additional boundary vertices on $C$.
In order to deal with this situation, we make use of a special result for handling linkages,
that is the Linkage Combing Lemma (\autoref{prop_combinglemma}) proved in~\cite{GolovachST22comb} (which has appeared before in~\cite{GolovachST20hitti}).
This result is applied on partially annulus-embedded graphs, i.e., graphs that contain some subgraph $K$ that is embedded in an annulus $\Delta$ and $\Delta$ separates the two other parts of the graph obtained after removing the vertices inside $\Delta$.
The Linkage Combing Lemma intuitively says that in the presence of an annulus-embedded railed annulus $\mathcal{A}$ inside 
a partially annulus-embedded graph $G$, every linkage $L$ of $G$ whose terminals are outside the annulus
can be ``combed'', 
i.e., it can be routed throught the cycle $C$ of $\mathcal{A}$ in a predefined number of vertices.
This allows us to {\sl represent} every linkage encoded in the partial pattern
by a ``combed'' linkage with a few predefined additional terminals on $C$.

%\paragraph{Stamps and traces.}
%As mentioned in the previous paragraph and in the beginning of this subsection,
%we aim to define a notion of partial patterns of a graph and, for this, the Linkage Combing Lemma is crucial.
%Given that our graph contains a big enough flat wall, it also contains a big enough flat railed annulus (see~\autoref{label_simultaneously}).
%Then, given such a big enough flat railed annulus, each vertex of the graph has a position relative to a sequence of (sub)annuli. This is position is encoded by its \emph{stamp}
%For every recursively selected sequence of nested annuli (see~\autoref{fig_flatannulus}), the relative position of each vertex is encoded by its \emph{stamp}, and for a tuple of vertices, its \emph{trace} encodes the recursive sequence of stamps of these vertices.
%In other words, the trace of a tuple of vertices indicates the recursively refined (sub)annuli that these vertices avoid.
%Therefore, given a sufficiently big flat railed annulus,
%for each tuple of vertices, one can find a still big enough flat railed annulus whose compass these vertices avoid.
%The reason we seek such structure is explained in the next paragraph.

\paragraph{Applicability of the Linkage Combing Lemma.}
To encode the recursive patterns of a graph, we have to keep the pattern of the graph for every choice of the boundary tuple. Therefore, for each tuple of vertices, we have to encode all linkages between pairs of vertices in the tuple.
The \emph{trace} of each tuple indicates which (sub)annulus of the flat railed annulus remains ``terminal-free'' (see~\autoref{fig_flatannulus}).
This railed annulus is flat but we can consider its \emph{leveling}
(see~\autoref{subsec_levelings}), that is a {\sl planar representation} of it, obtained by contracting the interior of the flaps to single vertices.
We observe that for every linkage whose terminals are outside the compass of the flat railed annulus, there is an equivalent linkage in the graph obtained after replacing the flat railed annulus with its leveling (see~\autoref{lem_levelingpaths}).
Therefore, the question for linkages between pairs of boundary vertices
% (that avoid the compass of a particular flat railed annulus, depending on the trace of the tuple of boundary vertices ),
  can be translated to an equivalent one in the partially annulus-embedded graph obtained after considering the leveling (this graph is different for every different trace).
This allows for the application of the Linkage Combing Lemma.

\paragraph{Partial patterns.}
Using the Linkage Combing Lemma we can assume that, for each tuple of vertices, every linkage between these vertices has an appropriate part of the initial (flat) railed annulus where it is combed.
This implies that the pattern of a boundaried graph can be ``separated'' to two parts,
the one on the ``inner'' part and one on the ``outer'' part of the annulus corresponding to the trace of the boundary tuple.
Therefore, we can encode partial patterns by considering boundaried graphs whose boundary vertices are the vertices of the rails where we comb the linkages 
and some vertices that are on the ``inner side'' of this annulus.
Vertices can also be on the ``outer side'' of this annulus and thus we have to suitably encode their absence from the tuple (see~\autoref{subsec_combing_in_levelings}).

\paragraph{Dealing with apices.}
Having sketched how to route linkages (that correspond to disjoint path queries of $\FOL{\sf +DP}$) in flat walls,
it remains to discuss how the presence of apices, that are few vertices that can have neighbors inside the wall in a completely uncontrolled way.
The high-level idea here is to define a way to syntactically interpret these (few) apices by adding some extra colors in the structure for the neighborood of each apex vertex, remove the edges between apices and the rest of the graph and translate a disjoint path query in the original graph to some disjoint path queries in the new colored graph, encoding possible ways that the original paths could enter/exit the apex set.
This idea originates in~\cite{FlumG01fixe} for \FOL\ (without annotation) and an analogous version it was given in~\cite{FominGSST21acomp}.

%In each repetitive step of the algorithm,
%as long as the treewidth of the graph is big enough, the algorithm of~\autoref{prop_flatwallbdtw} finds an apex set $A$ and a flat wall of $G\setminus A$.
%For a tuple ${\bf a}$ of the vertices in $A$,
%the algorithm transforms $(G,R)$ to a new structure 
%$(G',R')$ with the same signature,
%where $G'$ is a (proper) subgraph of $G$ (i.e., problem-irrelevant vertices have been discarded) and $R'\subseteq R$ (annotation-irrelevant vertices have been excluded from the annotation set).
%Equivalence of the above instances is explained in~\autoref{subsec_overviewrep}.
%Since after each execution of this procedure the obtained graph has less vertices, after a linear number of repetitions, we will produce an equivalent instance of bounded treewidth.

\paragraph{Apex-projections of sentences.}
To define this translation, we first need to define the ``projection'' of a sentence with respect to an apex-tuple ${\bf a}$
(see~\autoref{subsec_apices}).
In fact, first we define the ``projection'' of the graph $G$ with respect to an apex-tuple ${\bf a}$, by removing all edges between the vertices in ${\bf a}$ and the rest of the vertices of the graph and ``coloring'' the neighborhood of each $a\in {\bf a}$ by a specific color.
This transformation of the graph allows us to ``isolate'' the apex-tuple and encode its adjacencies with the rest of the graph using colors.
In terms of sentences, we define the \emph{apex-projection} $\varphi^l$ of a sentence $\varphi\in\FOL{\sf +DP}$, where $l = |{\bf a}|$,
as a sentence that uses these new colors in order to interpret the original sentence $\varphi$ in the ``projected'' graph.
We stress that the apex-projected sentence $\varphi^l$ has larger quantifier rank, i.e., we quantify more variables, since
one has to guess which (colored) vertices are the entry and exit points of the paths using vertices in the apex-tuple (see~\autoref{obs_quantifierrank}).
%Then, in $\varphi^l$, we introduce an annotation to each of its first-order variables.
%However, this new annotated problem (expressed by the annotated version of $\varphi^l$) is an annotated problem on a (projected) graph together with some apex-tuple. If we consider finding irrelevant vertices for this problem,
%then the following issue arises:
%Each apex set will create a new ``projection'' of the graph, and therefore new color classes in the corresponding structure.
%So, even if in each repetition step of the main algorithm we can reduce 
%the annotation of the current ``projected'' graph, adding new colors for each apex set that we find, will arbitrarily increase the number of colors in the graph.
%For this reason, one has to ``backwards-translate'' the ``projected'' problem to the original problem on graphs
%(without ``forgetting'' neither the annotated set nor the apex-tuple).
%The framework of transductions and in particular the {\sl Backwards Translation Theorem}~\cite{CourcelleE12grap,BojanczykP16defi} given in~\autoref{subsec_trans} provides the necessary formalism to define this ``backwards-translation'' and, that way, define the \emph{enhanced version} $\varphi_{{\sf R},{\bf c}}$ of $\varphi$ that is satisfied in annotated graphs of the form
%$(G,R,{\bf a})$ (see~\autoref{subsec_annotate}).
%Indeed, this guarantees that, each time we reduce the annotation and we remove irrelevant vertices, we obtain an equivalent instance with respect to the sentence $\varphi_{{\sf R},{\bf c}}$.

\paragraph{Partial signatures.}
The next step is to build ``meta-collections'' of partial patterns.
For this, in~\autoref{subsec_signatures},
we define the notion of \emph{partial signatures}.
This is a recursive definition, that in the base case is the pattern of a boundaried graph (whose boundary is a tuple of vertices together with the ``combing-points'' of the railed annulus corresponding to the trace of the given tuple) and every recursive step asks for all possible partial signatures that can be obtained after fixing another (boundary) vertex (inside the annotated set) or its absence.
Intuitively, given an annotated graph $(G,R_1,\ldots,R_r)$ that contains a sufficiently big flat railed annulus,
the partial signature of $G$ encodes the recursive collection of (partial) patterns in the ``inner'' part of the railed annulus.

\paragraph{Exchangability of graphs with the same partial signature.}
After defining partial signatures, we prove that in the presence of a railed annulus flatness pair inside an (annotated) graph $(H,\hat{R}_1,\ldots,\hat{R}_r)$,
two (annotated) graphs $(G,R_1,\ldots,R_r)$, 
$(G',R_1',\ldots,R_r')$
that yield the same partial signature when ``glued in the inner part'' of $(H,\hat{R}_1,\ldots,\hat{R}_r)$,
have the same (global) signature when we additionally glue another (annotated) colored graph 
$(F,R^\star_1,\ldots,R^\star_r)$
in the outer part of $(H,\hat{R}_1,\ldots,\hat{R}_r)$ (see~\autoref{lem_equirep}).
The main idea of the proof of~\autoref{lem_equirep} is to show that equivalence of partial signatures implies equivalence of (global) signatures.
Here, we have to demand equality of partial signatures of ``larger depth'' than the one that we aim for, in order to be able to deal with the extra quantifiers
obtained after apex-projecting the quantifier-free formulas given by the pattern. This translation (to the apex-projection) is essential, in order to obtain a flat structure (without edges to the apiecs) where our rerouting arguments for the linkages can work.
In particular, proving equivalency of signatures 
boils down to formalizing
all arguments mentioned in the above paragraph
that use the Linkage Combing Lemma, using a double inductive
argument, in order to deal with the recursive nature of the definitions.
The main importance of ~\autoref{lem_equirep} is that it guarantees exchangeability (in terms of signatures) of graphs with the same partial signature
and allows to shift our pursue of graphs of same signature to pursue of graphs of same partial signature.

\paragraph{Computing representatives.}
Since the partial signature of an annotated graph depends only on adjacency and linkage questions inside the ``inner'' part of a given flat railed annulus of the given graph, we can express the partial signature of
the annotated graph in \MSOL.
Also, as we mentioned before, we can ask the part of the graph, where we want to compute partial signature, to have bounded treewidth.
Therefore,
we can define equivalence between vertices of the same partial signatures and using Courcelle's Theorem we can compute \emph{representatives} of these equivalence relations (see~\autoref{subsec_sigrep}).
Therefore, given an annotated graph $(G,R_1,\ldots,R_r)$, an apex-tuple ${\bf a}$ of $G$, and a ``big enough'' flat railed annulus of $G\setminus V({\bf a})$, where $V({\bf a})$ is the set of vertices in ${\bf a}$, that ``crops'' a bounded treewidth graph,
we can compute sets $R_1',\ldots,R_r'$ such that $R_i'\subseteq R_i$ and the size of the intersection of $R_i'$ with the ``left cropped part'' of the railed annulus 
is upper-bounded by a function of $k$, and moreover $(G,R_1,\ldots,R_r)$ and $(G,R_1',\ldots,R_r')$ have the same partial signature (see~\autoref{lem_repre}).
Therefore, in the presence of a big enough flat wall (of bounded treewidth) inside an annotated graph, we can
bound the number of annotated vertices in the inner part of the railed annulus, which, in turn, permits us to declare annotation-irrelevant the vertices of some ``central'' part of the flat wall that contains the railed annulus (\autoref{lem_new_ext_rep}).

\paragraph{Wrapping-up the proof of~\autoref{thm_main}.}
As mentioned in the beginning of this section,
the most demanding part of our proof is reducing the annotation set, i.e., to show~\autoref{lem_new_ext_rep}.
Then, given that the sets $R_1,\ldots,R_r$, in which we build the signatures, leave some big enough bidimensional area ``annotation-irrelevant'', it remains to compute
a vertex inside this area whose removal does not affect the existence (or not) of any linkage between the interpretations of the variables in the definition of signature. This is done using known irrelevant vertex technique arguments (see~\autoref{subsec_reduce} and in particular~\autoref{lem_removingirr}).
Therefore, we can find a vertex that is irrelevant for the signature (\autoref{corol_redu}).
This produces an equivalent instance $(G',R_1',\ldots,R_r')$ as
required for the application of the main procedure of our algorithm.

%
%The algorithm of~\autoref{thm_main} then works in the following steps.
%1) We consider the enhanced version $\varphi_{{\sf R},{\bf c}}$ of a given sentence $\varphi\in\FOL{\sf +DP}$.
%2) We run the algorithm of~\autoref{prop_flatwallbdtw} and we either obtain a tree decomposition of the graph whose width is a function of $|\varphi|$ and $\hw(G)$ or we obtain an apex set $A$ of $G$, a flat wall of $G\setminus A$, 
%and a tree decomposition of its compass of bounded width.
%3) In the first possible output, we use Courcelle's Theorem to solve the problem.
%4) In the second possible output, we use the algorithm of \autoref{corol_redu} to reduce the annotation and detect an irrelevant vertex and repeat step 2).

\paragraph{From disjoint to scattered paths.}
To deal with $\FOL{\sf +SDP}$ and prove \autoref{thm_induced},
we need a Linkage Combing Lemma for scattered linkages.
Such kind of a result is known when we restrict ourselves to graphs of bounded Euler genus and is proved in~\cite{GolovachST22comb} (\autoref{proposition_combinginducedlinkages}; see also~\cite{KawarabayashiK12alin,Mazoit13asin}).
Then, using this result, we follow the same approach as in the proof of~\autoref{thm_main}, and
we prove tractability of model-checking for $\FOL{\sf +SDP}$ for graphs of embedded in some fixed surface (\autoref{thm_induced}). We refer the reader to~\autoref{sec_logicscattered} for more details.

\section{Preliminaries}
\label{sec_preliminaries}
In this section we present some basic definitions and we state our main result (\autoref{thm_main}).
In~\autoref{subsec_basic}, we start with some definitions on graphs and in~\autoref{subsec_log} we define first-order and monadic second-order logic.
Then, in~\autoref{subsec_dplogic} we define the extension of \FOL\ with disjoint paths predicates and we state our main result (\autoref{thm_main}).
\subsection{Basic definitions}\label{subsec_basic}

\paragraph{Integers, sets, and tuples.}
We denote by $\mathbb{N}$ the set of non-negative integers.
Given two integers $p$ and
$q,$ the set $[p,q]$ refers to the set of every integer $r$ such that $p \leq r \leq q.$
For an integer $p\geq 1,$ we set $[p]=[1,p]$ and $\mathbb{N}_{\geq p}=\mathbb{N}\setminus [0,p-1].$
Given a non-negative integer $x,$
we denote by ${\sf odd}(x)$ the minimum odd number that is not smaller than $x.$
For a set $S,$ we denote by $2^{S}$ the set of all subsets of $S$ and, given an integer $r\in[|S|],$
we denote by $\binom{S}{r}$ the set of all subsets of $S$ of size $r.$
Given two sets $A,B$ and a function $f: A\to B,$
for a subset $X\subseteq A$ we use $f(X)$ to denote the set $\{f(x)\mid x\in X\}.$
Given a set $X$ and an $r\in\mathbb{N}$, we use $X^r$ to denote the product $X\times\cdots\times X$ of $r$ copies of $X$.

Let $\mathcal{S}$ be a collection of objects where the operations $\cup$ and $\cap$ are defined.
Given two tuples ${\bf x} = (x_1, \ldots, x_l)$ and ${\bf y} = (y_1, \ldots, y_l),$ where $x_i, y_i \in \mathcal{S},$ we denote ${\bf x}\cup{\bf y} = (x_1 \cup y_1, \ldots, x_l \cup y_l)$ and ${\bf x}\cap{\bf y} = (x_1 \cap y_1, \ldots, x_l \cap y_l).$
Also, we denote $\cupall \mathcal{S} = \bigcup_{X\in \mathcal{S}}X.$

\paragraph{Basic concepts on graphs.}
All graphs considered in this paper are undirected, finite, and without loops or multiple edges.
We use standard graph-theoretic notation and we refer the reader to
\cite{Diestel10grap} for any undefined terminology.
%Unless stated otherwise, we denote by $n$ and $m$
%the number of vertices and edges, respectively, of the graph under consideration.
Let $G$ be a graph.
Given a vertex $v\in V(G),$ we denote by $N_{G}(v)$ the set of vertices of $G$ that are adjacent to $v$ in $G.$
For $S \subseteq V(G),$ we set $G[S]=(S,E\cap\binom{S}{2} )$ and use the shortcut $G \setminus S$ to denote $G[V(G) \setminus S].$
The \emph{length} of a path $P$ of $G$ is the number of edges of $P$.
%A path $P$ is a \emph{$v_1 v_r$-path} if $V(P) = \{v_1, \ldots, v_r\}$ for distinct $v_1, \ldots, v_r$ and $E(P) = \{\{v_1, v_2\}, \{v_2, v_3\}, \ldots, \{v_{r-1}, v_r\}\}.$ We denote this by $P= v_1, \ldots, v_r.$ 

\paragraph{Minors.}
The \emph{contraction} of an edge $e = \{u,v\}$ of a simple graph $G$ results in a simple graph $G'$
obtained from $G \setminus \{u,v\}$ by adding a new vertex $uv$ adjacent to all the vertices
in the set $N_G(u) \cup N_G(v)\setminus \{u,v\}.$
A graph $G'$ is a \emph{minor} of a graph $G,$ denoted by $G'\prem G,$
if $G'$ can be obtained from $G$ by a sequence of vertex removals, edge removals, and edge contractions.
Given a finite collection of graphs $\mathcal{F}$ and a graph $G,$ we use the notation $\mathcal{F}\prem G$ to denote that some graph in $\mathcal{F}$ is a minor of $G.$
Given a set of graphs $\mathcal{F},$ we denote by $\excl(\mathcal{F})$ the set containing every graph that excludes all graphs in $\mathcal{F}$ as minors.
A graph class $\mathcal{G}$ is \emph{minor-closed} if every minor of a graph in $\mathcal{G}$ is also a member of $\mathcal{G}.$
The \emph{Hadwiger number} of a graph $G,$ denoted by $\hw(G),$  is the minimum $k$ where $G\in \excl(\{K_{k}\})$ and $K_{k}$ is the complete graph on $k$ vertices.
A minor-closed graph class is called \emph{proper} if it is not the class of all graphs.
%
%\gstam{Added the definition of PROPER minor-closed.}

\paragraph{Treewidth.}
A \emph{tree decomposition} of a graph~$G$
is a pair~$(T,\chi)$ where $T$ is a tree and $\chi: V(T)\to 2^{V(G)}$
such that
\begin{itemize}
	\item $\bigcup_{t \in V(T)} \chi(t) = V(G),$
	\item for every edge~$e$ of~$G$ there is a $t\in V(T)$ such that
	      $\chi(t)$
	      contains both endpoints of~$e,$ and
	\item for every~$v \in V(G),$ the subgraph of~${T}$
	      induced by $\{t \in V(T)\mid {v \in \chi(t)}\}$ is connected.
\end{itemize}
The \emph{width} of $(T,\chi)$ is equal to $\max\big\{\left|\chi(t)\right|-1 \bigmid t\in V(T)\big\}$ and the \emph{treewidth} of $G$ is the minimum width over all tree decompositions of $G.$

\subsection{First-order logic and monadic second-order logic}\label{subsec_log}

In this subsection, we give the definition of first-order and monadic second-order logic.
We define these logics on {\sl relational} vocabularies with constant symbols and we work with structures of these vocabularies.

\paragraph{Structures.}
A \emph{vocabulary} is a finite set of relation and constant symbols (we do not use function symbols).
Every relation symbol ${\sf R}$ is associated with a positive integer that is called the \emph{arity} of ${\sf R},$ which we denote ${\sf ar}({\sf R}).$
A \emph{structure $\mathfrak{A}$ of vocabulary $\tau$}, in short a \emph{$\tau$-structure}, consists of a non-empty set $V(\mathfrak{A}),$ called the \emph{universe} of $\mathfrak{A},$
 an $r$-ary relation ${\sf R}^{\mathfrak{A}}\subseteq {V(\mathfrak{A})}^r$ for each relation symbol ${\sf R}\in \tau$ of arity $r\geq 1,$
and an element\footnote{{We stress that we allow constant symbols to be interpreted as the element $\mathspace$, where $\mathspace$ is an element that is not in $V(\mathfrak{A})$.
Throughout this paper, we assume that the universe of every given structure is extended by adding the extra element $\mathspace$, while all relation symbols are interpreted as tuples of elements of $V(\mathfrak{A})$, not containing $\mathspace$.
Moreover, we assume that for every formula that we consider, quantified first-order variables are interpreted as elements of the original universe of the structure (and not $\mathspace$).}}
 ${\sf c}^{\mathfrak{A}}\in \{\mathspace\}\cup V(\mathfrak{A})$ for each constant symbol ${\sf c}\in\tau$.
We refer to ${\sf R}^\mathfrak{A}$ (resp. ${\sf c}^{\mathfrak{A}}$) as the \emph{interpretation of the symbol ${\sf R}$ (resp. ${\sf c}$) in the structure $\mathfrak{A}$}.
A structure $\mathfrak{A}$ is \emph{finite} if its universe $V(\mathfrak{A})$ is a finite set.
We denote by $\mathbb{STR}[\tau]$ the set of all finite $\tau$-structures.\medskip

We say that a $\tau$-structure $\mathfrak{A}$ is \emph{isomorphic} to a $\tau$-structure $\mathfrak{B}$ if there is a bijection  $V(\mathfrak{A})\cup\{\mathspace\}$ to $V(\mathfrak{B})\cup\{\mathspace\},$ such that $\pi(\mathspace) = \mathspace$ and for every $k\geq 1,$ every relation symbol ${\sf R}\in \tau$ of arity $k,$ and every $(a_1, \ldots, a_k)\in {V(\mathfrak{A})}^k,$ it holds that $(a_1,\ldots, a_k)\in {\sf R}^{\mathfrak{A}} \iff (\pi(a_1), \ldots, \pi(a_k))\in {\sf R}^{\mathfrak{B}}$ and for every constant symbol ${\sf c}\in \tau,$ it holds that $\pi({\sf c}^{\mathfrak{A}}) = {\sf c}^{\mathfrak{B}}.$\medskip

Given two $\tau$-structures $\mathfrak{A}$ and $\mathfrak{B}$,
where for every constant symbol ${\sf c}\in \tau$ either ${\sf c}^{\mathfrak{A}}={\sf c}^{\mathfrak{B}}$ or ${\sf c}^{\mathfrak{A}} = \mathspace \lor {\sf c}^{\mathfrak{B}} = \mathspace$, we define the \emph{disjoint union} of $\mathfrak{A}$ and $\mathfrak{B},$ and we denote it by $\mathfrak{A}\dot\cup\mathfrak{B},$ as
the $\tau$-structure where $V(\mathfrak{A}\dot\cup\mathfrak{B})$ is the disjoint union of $V(\mathfrak{A})\setminus\{\mathspace\}$, $V(\mathfrak{B})\setminus\{\mathspace\}$ and $\{\mathspace\}$, for every relation symbol ${\sf R}\in \tau,$ ${\sf R}^{\mathfrak{A}\dot\cup\mathfrak{B}}= {\sf R}^{\mathfrak{A}}\cup {\sf R}^{\mathfrak{B}},$
and
for every constant symbol ${\sf c}\in \tau$, if  ${\sf c}^{\mathfrak{A}}={\sf c}^{\mathfrak{B}}$, then
${\sf c}^{\mathfrak{A}\dot\cup\mathfrak{B}}= {\sf c}^{\mathfrak{A}} ={\sf c}^{\mathfrak{B}}$, and if ${\sf c}^{\mathfrak{A}} = \mathspace$ (resp. ${\sf c}^{\mathfrak{B}} = \mathspace$), then ${\sf c}^{\mathfrak{A}\dot\cup\mathfrak{B}} = {\sf c}^{\mathfrak{B}}$ (resp. ${\sf c}^{\mathfrak{A}\dot\cup\mathfrak{B}} = {\sf c}^{\mathfrak{A}}$).

An undirected graph  without loops can be seen as an $\{{\sf E}\}$-structure $\mathfrak{G} = (V(\mathfrak{G}),{\sf E}^{\mathfrak{G}}),$ where ${\sf E}^{\mathfrak{G}}$ is a binary relation that is symmetric and anti-reflexive.

The \emph{Gaifman graph} $G_{\mathfrak{A}}$ of a $\tau$-structure $\mathfrak{A}$ is the graph whose vertex set is $V(\mathfrak{A})$ and two vertices $x,y$ are adjacent if there is an ${\sf R}\in\tau$ and a  $\bar{v}\in {\sf R}^{\mathfrak{A}}$ such that both $x$ and $y$ are elements of $\bar{v}$.

\paragraph{First-order logic and monadic second-order logic.}
We now define the syntax and the semantics of first-order logic and monadic second-order logic of a vocabulary $\tau.$
We assume the existence of a countable infinite set of \emph{first-order variables},
usually denoted by lowercase symbols ${\sf x}_1,{\sf x}_2,\ldots,$
and of a countable infinite set of \emph{set variables},
usually denoted by uppercase symbols ${\sf X}_1,{\sf X}_2, \ldots.$
A \emph{first-order term} is either a first-order variable or a constant symbol.
A \emph{first-order logic formula}, in short \emph{\FOL-formula}, of vocabulary $\tau$ is built from atomic formulas ${\sf x}={\sf y}$
and $({\sf x}_1, \ldots, {\sf x}_r)\in {\sf R},$ where ${\sf R}\in \tau$ and has arity $r\geq 1,$ on first-order terms ${\sf x},{\sf y},{\sf x}_1,\ldots, {\sf x}_r,$
by using the logical connectives $\vee,$ $\wedge,$ $\neg$ and the
quantifiers $\forall, \exists$ on first-order variables.
We denote by $\FOL[\tau]$ the set of all \FOL-formulas of vocabulary $\tau.$\medskip

A \emph{monadic second-order logic formula}, in short \emph{{\sf MSOL}-formula}, of vocabulary $\tau$ is obtained by enhancing
the syntax of \FOL-formulas by allowing the atomic formulas ${\sf x}\in {\sf X},$
for some first-order term ${\sf x}$ and some set variable ${\sf X},$ and allowing quantification on both first-order and set variables.
We denote by ${\sf MSOL}[\tau]$ the set of all {\sf MSOL}-formulas of vocabulary $\tau.$
We make clear that what we call here {\sf MSOL} is what is commonly referred in the literature as {\sf MSO}$_1$, in which, for the vocabulary of graphs, first-order variables are interpreted as vertices and set variables are interpreted as sets of vertices.\medskip

The formulas in $\FOL[\tau]$ and $\MSOL[\tau]$ are evaluated on $\tau$-structures by interpreting every
symbol in $\tau$ as its interpretation in the structure and every first-order (resp. set) variable as an element
(resp. set of elements) of the universe of the structure.
Given a formula $\varphi,$ the \emph{free variables} of $\varphi$ are its variables
that are not in the scope of any quantifier.
We write $\varphi({\sf x}_1,\ldots, {\sf x}_k)$ to indicate that the free variables of the formula $\varphi$ are ${\sf x}_1, \ldots,{\sf x}_k.$
A \emph{sentence} is a formula without free variables.
Let $\varphi = Q_1 {\sf x}_1\ldots Q_r {\sf x}_r \psi({\sf x}_1,\ldots, {\sf x}_r)$ be a sentence, where for each $i\in[r]$,
$Q_i\in \{\forall, \exists\}$, ${\sf x}_1,\ldots, {\sf x}_r$ are first-order variables,
and $\psi({\sf x}_1,\ldots, {\sf x}_r)$ is a quantifier-free formula with free variables ${\sf x}_1,\ldots, {\sf x}_r$.
We call $r$ the \emph{quantifier rank} of $\varphi$.\medskip

Given a $\tau$-structure $\mathfrak{A},$ a formula $\varphi({\sf x}_1, \ldots, {\sf x}_k)\in \FOL[\tau],$
and $a_1,\ldots, a_k$ in $V(\mathfrak{A}),$ we write $\mathfrak{A}\models \varphi(a_1, \ldots, a_k)$ to denote that $\varphi({\sf x}_1,\ldots, {\sf x}_k)$ holds in $\mathfrak{A}$ if, for every $i\in[k],$ the variable ${\sf x}_i$ is interpreted as $a_i.$
Given a $k\in\mathbb{N}$, two formulas $\varphi({\sf x}_1, \ldots, {\sf x}_k),$ $\psi({\sf x}_1,\ldots, {\sf x}_k)\in \FOL[\tau]$ are \emph{equivalent} if for every $\tau$-structure $\mathfrak{A}$ and every $a_1, \ldots, a_k\in V(\mathfrak{A}),$ we have $\mathfrak{A}\models \varphi(a_1, \ldots, a_k) \iff \mathfrak{A}\models \psi(a_1, \ldots, a_k).$
We call the set $\{\mathfrak{A}\in \mathbb{STR}[\tau]\mid \mathfrak{A}\models\varphi\}$ the set of \emph{models of $\varphi$} and we denote it by $\Mod(\varphi).$

\subsection{Disjoint paths logic}\label{subsec_dplogic}
For the rest of this paper,
we deal with colored graphs, i.e.,
we fix a vocabulary $\tau$ that contains a binary relation symbol ${\sf E}$ that is always interpreted as a symmetric and anti-reflexive binary relation (corresponding to the edges of the graph) and a collection of unary relation symbols ${\sf Y}_1,\ldots, {\sf Y}_h$ (corresponding to colors on the vertices of the graph).
We call such a vocabulary a \emph{colored-graph vocabulary}.
Also, we always assume that the interpretations of ${\sf Y}_1,\ldots, {\sf Y}_c$ are always pairwise disjoint (if not, then we introduce extra unary relations symbols for each intersection).

\paragraph{Disjoint-paths logic.}
We define the $2k$-ary predicate $\DP_k({\sf x}_1, {\sf y}_1, \ldots,{\sf x}_k, {\sf y}_k)$, which evaluates true in a $\tau$-structure $\mathfrak{G}$  if and only if there
are paths $P_1,\ldots, P_k$ of $(V(\mathfrak{G}),{\sf E}^{\mathfrak{G}})$ of length at least $2$ between (the interpretations of) ${\sf x}_i$ and ${\sf y}_i$ for all $i\in[k]$ such that for every $i,j\in[k]$, $i\neq j$, $V(P_i)\cap V(P_j)=\emptyset.$
We let $\tau + \DP := \tau \cup \{\DP_k\mid k\geq 1\}$, where each
$\DP_k$ is a $2k$-ary relation symbol.
We use $\DP$ instead of $\DP_k$ when $k$ is clear from the context.
Our main result is the following (slightly more general) version of \autoref{@definitionen}:

\begin{theorem}\label{thm_main}
For every colored-graph vocabulary $\tau$ and every sentence $\varphi\in\FOL[\tau+\DP]$, there exists an algorithm that, given a $\tau$-structure $G$ of size $n$, outputs whether $G\models \varphi$ in time $\mathcal{O}_{|\varphi|,{\bf hw}(G)}(n^2)$.
\end{theorem}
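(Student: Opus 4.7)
The plan is to prove the theorem by iteratively reducing the instance via the irrelevant vertex technique until the treewidth drops below a function $f(|\varphi|, \hw(G))$, at which point Courcelle's theorem applies (since every predicate $\DP_k$ can be expressed in \MSOL\ by existentially quantifying $k$ vertex sets and asserting connectedness and pairwise disjointness). Each reduction step should strictly decrease the number of vertices and run in linear time, so after $\mathcal{O}(n)$ iterations we reach the base case, yielding the overall $\mathcal{O}_{|\varphi|, \hw(G)}(n^2)$ bound.

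To sustain the iteration, I would first pass from the sentence $\varphi$ to its \emph{enhanced annotated version} $\varphi_{\mathsf{R}, \mathbf{c}}$, as sketched in \autoref{sec_overview}, so that each reduction is witnessed on a triple $(G, R, \mathbf{a})$ consisting of a graph, an annotation set $R \subseteq V(G)$ containing all admissible interpretations of first-order variables, and an apex-tuple $\mathbf{a}$. The invariant will be that for $R = V(G)$ and any apex-tuple $\mathbf{a}$, satisfaction of $\varphi_{\mathsf{R}, \mathbf{c}}$ on $(G, R, \mathbf{a})$ coincides with satisfaction of $\varphi$ on $G$; the apex-tuple will absorb the apex vertices produced by each flat-wall extraction without enlarging the vocabulary, and the Backwards Translation Theorem will guarantee that the enhanced sentence remains in \FOL{\sf +DP} with quantifier rank bounded in terms of $|\varphi|$.

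The reduction step itself proceeds as follows. When $\tw(G) > f(|\varphi|, \hw(G))$, I invoke \autoref{prop_flatwallbdtw} to find, in linear time, an apex set $A$ and a large flat wall of $G \setminus A$ whose compass has bounded treewidth; inside the wall I extract a flat railed annulus $\mathcal{A}$. The crux is to shrink the annotation inside the compass: for each prospective boundary tuple of $R$, the Linkage Combing Lemma (\autoref{prop_combinglemma}) together with the planar leveling of $\mathcal{A}$ (\autoref{lem_levelingpaths}) replaces any linkage crossing $\mathcal{A}$ by a combed linkage with a bounded number of interface vertices on a prescribed cycle. This lets me define the \emph{signature} of $(G, R)$ relative to $(\mathcal{A}, \mathbf{a})$ as a recursive structure over all ways of extending the boundary by further vertices of $R$; \autoref{lem_equirep} then asserts that two annotations giving the same signature are indistinguishable by enhanced \FOL{\sf +DP} sentences. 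Since the signature is expressible in \MSOL\ over the bounded-treewidth compass, Courcelle's theorem yields, in linear time, a set $R' \subseteq R$ of signature-representatives whose intersection with the compass is of constant size (\autoref{lem_repre}, \autoref{corr_new_ext_rep}). Consequently a large central sub-wall remains annotation-free, and classical rerouting arguments for disjoint paths (\autoref{lem_removingirr}) extract a problem-irrelevant vertex $v$ from this bidimensional area; the next iteration starts from $(G \setminus v, R', \mathbf{a})$.

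The main obstacle will be the definition of signatures and the proof of their exchangeability, namely \autoref{lem_equirep}. One has to show, by a double induction on the recursive depth of the signature and on the quantifier rank of $\varphi$, that identical recursive signatures force identical recursive patterns in the sense of \autoref{lemma_reducing}, and hence agreement on every \FOL{\sf +DP} sentence of the prescribed rank. The delicate point is the recursive interplay: at each induction step we need that the combed linkages provided by the Linkage Combing Lemma line up with the boundary tuples obtained by fixing one more quantified variable, so that patterns on the inside and outside of $\mathcal{A}$ can be composed coherently. Once this combinatorial core is established, the algorithmic pieces (flat wall extraction, leveling, \MSOL-expressibility of signatures, and the final rerouting of a single vertex) slot together to implement the linear-time reduction step, and the resulting outer loop yields the claimed quadratic algorithm.
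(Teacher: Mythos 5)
Your proposal matches the paper's own proof in all essentials: enhanced annotated sentences with apex-tuples handled via apex-projection and backwards translation, the flat wall theorem with bounded-treewidth compass, combing linkages through a flat railed annulus via its leveling, signatures and the exchangeability lemma proved by reducing to equality of recursive patterns, representatives computed by Courcelle on the compass to shrink the annotation, and a final Unique-Linkage-style rerouting to delete a problem-irrelevant region, iterated linearly many times before a bounded-treewidth base case. The only cosmetic difference is that the paper removes the whole compass of a small tilted subwall per iteration rather than a single vertex, which does not affect the argument or the quadratic bound.
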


\section{An alternative view to first-order logic model-checking}
\label{sec_alternative}
In this section we present an alternative way to interpret first-order logic model-checking,
by embedding a given graph to a (rooted) tree.
In~\autoref{subsec_treeembed}, we translate the problem of whether a  graph satisfies a formula $\varphi$ to 
the search of a subtree of the tree in which the graph is embedded, such that the bifurcations of this subtree correspond to the quantifiers of $\varphi$ and the vertices collected in each root-to-leaf path satisfy the quantifier-free ``tail'' of $\varphi$. 
We also formally define the notion of \emph{signature} of a graph that encodes in a tree-like way how tuples of vertices satisfy quantifier-free formulas of $\FOLDP$.
In~\autoref{subsec_patt}, we define the notion of {\sl pattern} of a colored graph $\mathfrak{G}$ together with some vertices $v_1,\ldots,v_r\in V(\mathfrak{G})$.
This notion is used to encode the way all quantifier-free $\FOLDP$-formulas can be satisfied by $\mathfrak{G}$ when interpreting the variables as $v_1,\ldots,v_r$.
This way, we also define the respective notion of patterns of quantifier-free $\FOLDP$-formulas and, in~\autoref{subsec_exprpatt}, we formulate model-checking in these terms (\autoref{obs_grleafs}).
Finally, in~\autoref{subsec_equiv}, we prove the main result of this section (\autoref{lemma_reducing}), which intuitively states that two graphs that have the same signatures (or, in other words, the same recursive patterns) satisfy the same sentences.

\subsection{Embedding model-checking to trees}\label{subsec_treeembed}

In this subsection we present a way to embed graphs to trees and how to trace the satisfaction of a sentence from the given graph, in terms of subtrees of the original tree.

\paragraph{Rooted trees.}
Let $(T,t_0)$ be a rooted tree.
Given a node $x$ of $T$, we denote by $T_x$ the subtree of $T$ rooted at $x$.
We use $L(T)$ to denote the leaves of $T$.
For every $i\in[0,r]$, where $r\in \mathbb{N}$ is the height of $T$,
we use $D_i(T)$ to denote the set of nodes of $T$ that are at distance $i$ from $t_0$.
We say that a node $t\in V(T)$ has \emph{depth} $i$ if $t\in D_i (T),i\in[0,r]$.
In this paper, for every rooted tree $(T,t_0)$ of height $r$ that we consider,
we assume that $L(T)= D_r(T)$.
We use $\textsf{Paths}(T)$ to denote the set of all root-to-leaf paths of $T$.
We denote by $\textsf{children}_T (t)$ the set of all children of $t$ in $T$.
Also, when it is clear from the context, we use $(t_0,\ldots, t_r)$ to denote a path $P\in \textsf{Paths}(T)$ such that $V(P)=\{t_0,\ldots,t_r\}$ and $E(P)=\bigcup_{i\in[0,r-1]}\{\{t_i,t_{i+1}\}\}$.
A rooted tree $(T,t_0)$ of height $r$ is called \emph{$n$-ary},
if each node of depth at most $r-1$ has $n$ children.
%%\red{See~\autoref{fig_rooted_tree} for an example of a $(3,2,4)$-ary rooted tree.}
%%\gstam{Remove figure.}
%%
%%\begin{figure}[ht]
%%\centering
%%\includegraphics[width=10cm]{figures/rooted_tree.pdf}
%%\caption{An example of a $(3,2,4)$-ary rooted tree $(T,t_0)$.}
%%\label{fig_rooted_tree}
%%\end{figure}

\paragraph{Trees expressing quantification of formulas.}
We now express how, given a rooted tree $(T,t_0)$ and a sentence $\varphi$, where the height of the tree and the quantifier rank of $\varphi$ are the same, use the quantifier alternation of $\varphi$ to construct a subtree $T'$ of $T$.
This is done in the following recursive way: the subtree starts from $t_0$ and, for every $t\in V(T')$,
if the considered quantifier of $\varphi$ is the universal one, then $T'$ spans to {\sl all} children of $t$, while if we have the existential one, $T'$ arbitrarily choses {\sl one} child of $t$.
We proceed to formalize this idea.

Let $\tau$ be a colored-graph vocabulary.
Let $\varphi = Q_1 {\sf x}_1\ldots Q_r {\sf x}_r \psi({\sf x}_1,\ldots, {\sf x}_r)$ be an $\FOL[\tau+\DP]$-sentence and
let $(T,t_0)$ be a rooted tree of height $r$ (where each root-to-leaf path has $r+1$ nodes).
A \emph{$\varphi$-spanning triple} of $(T,t_0)$ is a triple $(T,t_0,\mathcal{U}_\varphi)$ such that
$\mathcal{U}_\varphi=\{U_0,U_1,\ldots, U_r\}$ where $U_0 =\{t_0\}$
and for every $i\in [r]$, $U_i$ is a subset of $D_i(T)$ such that
\begin{itemize}
\item if $Q_i = \forall$, then for each node $v\in U_{i-1}$,
we add in $U_i$ \textbf{all} nodes $u\in\textsf{children}_T(v)$ and
\item if $Q_i = \exists$, then for each node $v\in U_{i-1}$,
we add in $U_i$ \textbf{one} node $u\in\textsf{children}_T(v)$.
\end{itemize}
A \emph{$\varphi$-spanning subtree} of $T$ is a rooted subtree $(T',t_0)$ of $(T,t_0)$
where $T'=T[\bigcup_{i=0}^r U_i]$ for a $\varphi$-spanning triple
$(T,t_0,\{U_0,U_1,\ldots, U_r\})$ of $(T,t_0)$. See~\autoref{figure_spanningsubtree}.
Note that, given a rooted tree $(T,t_0)$ of height zero and
a quantifier-free formula $\psi$, $(T,t_0)$ is the unique $\psi$-spanning subtree of $T$.

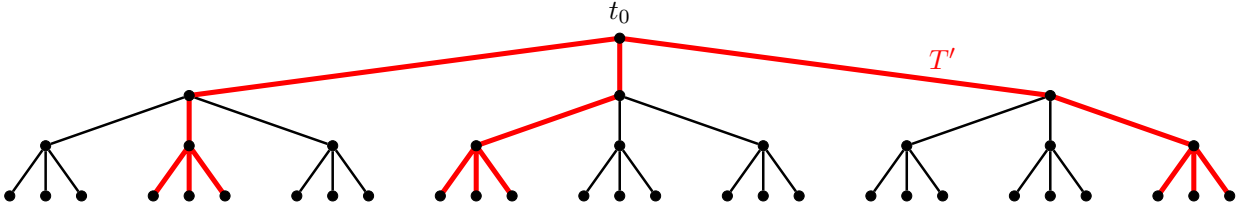
\begin{figure}[ht]
\scalebox{0.954}{
\begin{tikzpicture}
\node[treenode] (v000) at (0,0) {};
\node[treenode] (mid00) at (0.5,0) {};
\node[treenode] (v001) at (1,0) {};
\node[treenode] (v010) at (2,0) {};
\node[treenode] (mid01) at (2.5,0) {};
\node[treenode] (v011) at (3,0) {};
\node[treenode] (v020) at (4,0) {};
\node[treenode] (mid02) at (4.5,0) {};
\node[treenode] (v021) at (5,0) {};
\node[treenode] (v100) at (6,0) {};
\node[treenode] (mid10) at (6.5,0) {};
\node[treenode] (v101) at (7,0) {};
\node[treenode] (v110) at (8,0) {};
\node[treenode] (mid11) at (8.5,0) {};
\node[treenode] (v111) at (9,0) {};
\node[treenode] (v120) at (10,0) {};
\node[treenode] (mid12) at (10.5,0) {};
\node[treenode] (v121) at (11,0) {};
\node[treenode] (v200) at (12,0) {};
\node[treenode] (mid20) at (12.5,0) {};
\node[treenode] (v201) at (13,0) {};
\node[treenode] (v210) at (14,0) {};
\node[treenode] (mid21) at (14.5,0) {};
\node[treenode] (v211) at (15,0) {};
\node[treenode] (v220) at (16,0) {};
\node[treenode] (mid22) at (16.5,0) {};
\node[treenode] (v221) at (17,0) {};

\node[treenode] (v00) at (0.5,0.7) {};
\node[treenode] (v01) at (2.5,0.7) {};
\node[treenode] (v02) at (4.5,0.7) {};
\node[treenode] (v10) at (6.5,0.7) {};
\node[treenode] (v11) at (8.5,0.7) {};
\node[treenode] (v12) at (10.5,0.7) {};
\node[treenode] (v20) at (12.5,0.7) {};
\node[treenode] (v21) at (14.5,0.7) {};
\node[treenode] (v22) at (16.5,0.7) {};

\node[treenode] (v0) at (2.5,1.4) {};
\node[treenode] (v1) at (8.5,1.4) {};
\node[treenode] (v2) at (14.5,1.4) {};

\node[treenode, label={$t_0$}] (t0) at (8.5,2.2) {};
\node[label={[red]$T'$}] () at (13,1.5) {};

\begin{scope}[on background layer]
\draw[line width=1pt] (t0) -- (v0) (t0) -- (v1) (t0) -- (v2);
\draw[line width=1pt] (v0) -- (v00) (v0) -- (v01) (v0) -- (v02);
\draw[line width=1pt] (v1) -- (v10) (v1) -- (v11) (v1) -- (v12);
\draw[line width=1pt] (v2) -- (v20) (v2) -- (v21) (v2) -- (v22);
\draw[line width=1pt] (v00) -- (v000) (v00) -- (v001) (v00) -- (mid00);
\draw[line width=1pt] (v01) -- (v010) (v01) -- (v011) (v01) -- (mid01);
\draw[line width=1pt] (v02) -- (v020) (v02) -- (v021) (v02) -- (mid02);
\draw[line width=1pt] (v10) -- (v100) (v10) -- (v101) (v10) -- (mid10);
\draw[line width=1pt] (v11) -- (v110) (v11) -- (v111) (v11) -- (mid11);
\draw[line width=1pt] (v12) -- (v120) (v12) -- (v121) (v12) -- (mid12);
\draw[line width=1pt] (v20) -- (v200) (v20) -- (v201) (v20) -- (mid20);
\draw[line width=1pt] (v21) -- (v210) (v21) -- (v211) (v21) -- (mid21);
\draw[line width=1pt] (v22) -- (v220) (v22) -- (v221) (v22) -- (mid22);

\draw[line width=2pt, red] (t0.center) -- (v0.center) (t0.center) -- (v1.center) (t0.center) -- (v2.center);
\draw[line width=2pt, red] (v0.center) -- (v01.center) (v1.center) -- (v10.center) (v2.center) -- (v22.center);
\draw[line width=2pt, red]
(v01.center) -- (v010.center)
(v01.center) -- (mid01.center)
(v01.center) -- (v011.center)
(v10.center) -- (v100.center)
(v10.center) -- (mid10.center)
(v10.center) -- (v101.center)
(v22.center) -- (v220.center)
(v22.center) -- (mid22.center)
(v22.center) -- (v221.center);
\end{scope}
\end{tikzpicture}}
\caption{A rooted tree $(T,t_0)$ and a $\varphi$-spanning subtree $T'$ of $T$ (depicted in red), for a sentence $\varphi=\forall{\sf x}_1 \exists {\sf x}_2 \forall {\sf x}_3 \psi({\sf x}_1,{\sf x}_2, {\sf x}_3)$.}
\label{figure_spanningsubtree}
\end{figure}

%\gstam{Garbage!}
%\red{
%\paragraph{Annotated and colored graphs.}
%We now proceed to define the notion of assignments of annotated graphs to rooted trees.
%We use the term ``annotated'' in order to differentiate from the term ``colored''.
%Annotated graphs are graphs of the form $(G,R)$, for some $R\subseteq V(G)$, that are, by definition, colored graphs.
%However, we make this (semantical) distinction to make clear that the set $R$ will correspond to the interpretation of the symbol ${\sf R}$ of the enhanced version of a given formula $\varphi_{{\sf R},{\bf c}}$ (see~\autoref{subsec_annotate}) and {\sl not} a color of a colored graph $\mathfrak{G}$.
%For the same reason and since we express the satisfaction of ``annotated'' sentences in terms of tree-embeddings (see~\autoref{obs_treeformula}),
%the definitions in the next paragraph are for annotated graphs.
%}

\paragraph{Signatures of structures.}
Let $\tau$ be the vocabulary of graphs of $h$ colors and $\ell$ roots (i.e., constants).
We use $\Psi_{\FOL[\tau+\DP]}^{r,h,\ell}$ to denote the set of all quantifier-free $\FOL[\tau+\DP]$-formulas with $r$ free variables.
We treat equivalent formulas as equal (and choose one representative for each equivalence class, which is possible for quantifier-free formulas).
Then the size of $\Psi_{\FOL[\tau+\DP]}^{r,h,\ell}$ is upper-bounded by some constant depending only on $r$, $h$, and~$\ell$.
Let $r\in\mathbb{N}$, let $\mathfrak{G}$ be a $\tau$-structure, let $R_1,\ldots,R_r\subseteq V(\mathfrak{G})$.
We set $\bar{R}=(R_1,\ldots,R_r)$.
The \emph{atomic type} of a tuple $(v_1,\ldots,v_r)\in (V(G)\cup\{\mathspace\})^r$
is the set of all atomic formulas that are true for $(v_1,\ldots,v_r)$ in~$\mathfrak{G}$.
Given $(v_1,\ldots,v_r)\in (V(G)\cup\{\mathspace\})^r$, we define $\mathsf{sig}^0(\mathfrak{G},\bar{R},v_1,\ldots,v_r)$
to be the atomic type of $(v_1,\ldots,v_r)$.
Also, for each $i\in[r-1]$ and every $v_1,\ldots,v_{r-i}\in V(G)\cup\{\mathspace\}$,
we define
\[\mathsf{sig}^i(\mathfrak{G},\bar{R},v_1,\ldots,v_{r-i}) = \big\{\mathsf{sig}^{i-1}(\mathfrak{G},\bar{R},v_1,\ldots,v_{r-i},u)\mid u\in R_{r-i+1}\cup\{\mathspace\}\big\}\]
Finally, we define
\[\mathsf{sig}^r(\mathfrak{G},\bar{R}) = \big\{\mathsf{sig}^{r-1}(\mathfrak{G},\bar{R},v)\mid v\in R_1\cup\{\mathspace\}\big\}.\]
In the case where $\bar{R}=V(\mathfrak{G})^r$, we omit $\bar{R}$ from the notation, i.e., we write $\mathsf{sig}^i(\mathfrak{G},v_1,\ldots,v_{r-i})$ and $\mathsf{sig}^r(\mathfrak{G})$ instead of $\mathsf{sig}^i(\mathfrak{G},\bar{R},v_1,\ldots,v_{r-i})$ and $\mathsf{sig}^r(\mathfrak{G},\bar{R})$.
Also, we set $\mathcal{B}^{r,h,\ell}_\mathsf{sig} := \{\mathsf{sig}^r(\mathfrak{G})\mid \mathfrak{G}\in\mathbb{STR}[\tau]\}$.

\begin{observation}
Let $\tau$ be the vocabulary of graphs of $h$ colors and $\ell$ roots and let $r\in\mathbb{N}$.
For every $\beta\in \mathcal{B}^{r,h,\ell}_\mathsf{sig}$, there is a $\varphi_\beta\in\FOL[\tau+\DP]$ such that for every $\tau$-structure $\mathfrak{G}$,
$\mathsf{sig}^r(\mathfrak{G}) = \beta$ if and only if $\mathfrak{G}\models \varphi_\beta$.
\end{observation}

\paragraph{Assigning graphs to rooted trees.}
Let $r\in\mathbb{N}$.
Given a graph $G$, we construct a rooted tree $(T,t_0)$ (starting from a single root $t_0$) and a function $\lambda: V(T)\setminus\{t_0\}\to V(G)$ as follows.
First, for every $\alpha\in \mathsf{sig}^{r}(\mathfrak{G})$, we consider a vertex $v_\alpha\in V(G)$ such that $\mathsf{sig}^{r-1}(\mathfrak{G},v) = \alpha$.
We add $|\mathsf{sig}^r(\mathfrak{G})|$ children to $t_0$, while we define $\lambda_{t_0}$ to be a bijection from $\textsf{children}_T (t_0)$ to $\{v_\alpha\mid \alpha\in \mathsf{sig}^{r}(\mathfrak{G})\}$.
Then, for every $t\in V(T)$, if $t_0,t_1,\ldots,t_d=t$ is a path of $T$ and $(u_1,\ldots,u_d) = \lambda_{t_0}(t_1),\ldots,\lambda_{t_{d-1}}(t_d))$,
then we add $|\mathsf{sig}^{r-d}(\mathfrak{G},u_1,\ldots,u_d)|$-many children to $t$.
Also, for every $\alpha\in \mathsf{sig}^{r-d}(\mathfrak{G},u_1,\ldots,u_d)$, we consider a vertex $v_\alpha\in V(G)$ such that $\mathsf{sig}^{r-d-1}(\mathfrak{G},u_1,\ldots,u_d,v) = \alpha$ and we define $\lambda_t$ to be a bijection from $\textsf{children}_T (t)$
to $\{v_\alpha\mid \alpha\in \mathsf{sig}^{r-d}(\mathfrak{G},u_1,\ldots,u_d)\}$.
We also define the function $\lambda: V(T)\setminus\{t_0\}\to V(G)$ such that for every $t\in V(T)\setminus\{t_0\}$ $\lambda(t) = \lambda_{t'}(t)$, where $t'$ is the parent of $t$.
We call $\lambda$ an \emph{assignment of $G$ to $(T,t_0)$}.

We conclude this subsection by formulating model-checking of sentences in $\FOL[\tau+\DP]$
in terms of assignments and $\varphi$-spanning trees.

\begin{observation}\label{obs_treeformula}
Let $r\in\mathbb{N}$.
Let $\tau$ be a colored rooted graph vocabulary.
For every $\tau$-structure $\mathfrak{G}$ and every sentence
$\varphi = Q_1 {\sf x}_1\ \ldots Q_r {\sf x}_r \ \psi({\sf x}_1,\ldots, {\sf x}_r)$ where $\psi({\sf x}_1,\ldots, {\sf x}_r)\in \FOL[\tau+\DP]$,
we have that $\mathfrak{G}\models \varphi$ if and only if there is
\begin{itemize}
\item[-] an assignement $\lambda$ of $G$ to rooted tree $(T,t_0)$ of height $r$ and
\item[-] a $\varphi$-spanning subtree $(T',t_0)$ of $(T,t_0)$
\end{itemize}
such that for every $(t_0,t_1,\ldots, t_r)\in \textsf{Paths}(T')$,
it holds that
$(\mathfrak{G},\lambda(t_1),\ldots,\lambda(t_r))\models \psi({\sf x}_1,\ldots, {\sf x}_r)$,
where ${\sf x}_1,\ldots, {\sf x}_r$ are interpreted as $\lambda(t_1),\ldots,\lambda(t_r)$.
\end{observation}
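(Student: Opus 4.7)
The plan is to proceed by induction on the quantifier rank $r$ of $φ$. Both the statement and the notion of $φ$-spanning subtree are defined recursively on the quantifier prefix, so a straightforward structural induction should close the gap. Throughout, let $(T,t_0)$ denote the complete $|R|$-ary rooted tree of height $r$, which is the canonical object into which $(G,R)$ is assigned via some bijection $λ_t \colon \textsf{children}_T(t)\to R$ at every internal node $t$.

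For the base case $r=0$, the sentence is already the quantifier-free formula $ψ$ (with no free variables), the rooted tree is the single node $t_0$, the unique $φ$-spanning subtree is $(T,t_0)$ itself, and $\textsf{Paths}(T)$ contains only the trivial path $(t_0)$; the claim reduces to $(\mathfrak{G},{\bf a})\models ψ \iff (\mathfrak{G},{\bf a})\models ψ$, which is immediate.

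For the inductive step, write $φ = Q_1 {\sf x}_1 \in {\sf R}\ φ'({\sf x}_1)$, where $φ'({\sf x}_1) = Q_2 {\sf x}_2 \in {\sf R}\ \ldots Q_r {\sf x}_r \in {\sf R}\ ψ({\sf x}_1,\ldots,{\sf x}_r)$ has quantifier rank $r-1$ and one free variable. I would split into two cases based on $Q_1$. If $Q_1 = \exists$, then by the definition of restricted quantification $(\mathfrak{G},R,{\bf a})\models φ$ iff there exists $v\in R$ with $(\mathfrak{G},R,{\bf a},v)\models φ'({\sf x}_1)$; on the tree side, the definition of $φ$-spanning subtree forces $U_1$ to contain exactly one child $t_1$ of $t_0$, and applying the induction hypothesis to $φ'({\sf x}_1)$ with the assigned subtree $(T_{t_1},t_1)$ (which is a complete $|R|$-ary tree of height $r-1$ and inherits an assignment of $(G,R)$) yields the equivalence, using $v := λ(t_1)$. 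If $Q_1 = \forall$, then $(\mathfrak{G},R,{\bf a})\models φ$ iff for every $v\in R$ we have $(\mathfrak{G},R,{\bf a},v)\models φ'({\sf x}_1)$; on the tree side, $U_1 = \textsf{children}_T(t_0)$ and the bijection $λ_{t_0}$ enumerates $R$ without repetition, so applying the induction hypothesis to each subtree $(T_{t_1},t_1)$ simultaneously yields the equivalence. In both cases the root-to-leaf paths of the overall spanning subtree are exactly the concatenations of $t_0$ with root-to-leaf paths of the chosen subtrees, which matches the way the free variable ${\sf x}_1$ is interpreted as $λ(t_1)$ at each step.

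There is no genuine obstacle here: the observation is a book-keeping translation of the Tarskian semantics of bounded-quantifier prenex formulas into a tree-embedding formalism, in the spirit of the standard model-checking game tree. The only care needed is that, in the bijection $λ_{t_0}$, distinct children of $t_0$ receive distinct images in $R$, so that the universal quantifier ``$\forall {\sf x}_1\in {\sf R}$'' is faithfully captured by ranging $t_1$ over $\textsf{children}_T(t_0)$; this is built into the definition of an assignment and so requires no extra work.
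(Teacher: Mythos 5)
Your induction on the quantifier prefix is correct, and it is essentially the intended argument: the paper states this as an unproven observation (a direct translation of the semantics of restricted quantifiers into the tree formalism), and the same quantifier-by-quantifier induction appears explicitly later in the proof of \autoref{lemma_reducing}. The only point to make explicit is that the induction hypothesis must be stated for formulas whose leading variables are already interpreted by fixed vertices (your $φ'({\sf x}_1)$ with ${\sf x}_1\mapsto λ(t_1)$), which is the routine strengthening you already use.
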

We say that a $\varphi$-spanning tree  $(T',t_0)$ as above
\emph{certifies} that $\mathfrak{G}\models \varphi$.

\subsection{Patterns of boundaried colored graphs}\label{subsec_patt}
In this subsection,
we define the notion of a \emph{pattern} of a colored rooted graph.
%Intuitively, a colored rooted graph is a colored graph $\mathfrak{G}$, some apex-tuple of $\mathfrak{G}$ and some extra ``boundary'' tuple of vertices of $\mathfrak{G}$.
The pattern aims to encode all information of this colored graph
that concern the boundary vertices:
1) if some elements of the ``boundary'' tuple are the same,
2) which boundary vertices are root-vertices,
3) what are the colors of boundary vertices,
4) what is the graph induced by the boundary vertices,
and
5) which sets of pairs of boundary vertices can be connected with internally vertex-disjoint paths of lenght at least two.
In other words,
the pattern is defined in a way that, having it in hand, we can check which quantifier-free formulas of $\FOL[(\tau+\DP)\cup{\bf c}]$
are satisfied from $(\mathfrak{G},{\bf a})$ if we interpret their free variables as the boundary vertices.

\paragraph{Apex-tuples of structures.}
Let $\tau$ be a vocabulary, let $\mathfrak{A}$ be a $\tau$-structure, and let $l\in \mathbb{N}.$
A tuple ${\bf a}=(a_{1},\ldots, a_{l})$ where each $a_{i}$ is either an element of $V(\mathfrak{A})$ or {$\mathspace$},
is called a \emph{apex-tuple} of $\mathfrak{A}$ of size $l.$
We use $V({\bf a})$ for the set containing the {non-$\mathspace$}
elements in ${\bf a}.$
Also, if $S\subseteq V(\mathfrak{A}),$ we define ${\bf  a}\cap S=(a_{1}',\ldots,a_{l}')$
so that if $a_{i}\in S,$ then $a_i'=a_i,$  and otherwise  $a_i'=\mathspace.$
We also define ${\bf  a}\setminus S={\bf  a}\cap  (V(\mathfrak{A})\setminus S).$
For every apex-tuple ${\bf a}$, we always assume that all non-$\mathspace$ elements in ${\bf a}$ are distinct.

Intuitively, an apex-tuple of a graph is a tuple consisting of vertices and empty entries and every choice of vertices (and empty entries) can be seen as an apex-tuple of appropriate size. All definitions and results in this subsection are stated for general apex-tuples, although in our proofs, we will consider apex-tuples of a particular type, i.e., arbitrary orderings of the apex sets $A$ given by the algorithmic version of the Flat Wall Theorem (\autoref{prop_flatwallbdtw}) presented in~\autoref{sec_flatwalls}.

\paragraph{Boundaried colored graphs.}
Let $t,h,l\in\mathbb{N}$.
A \emph{$t$-boundaried $(h,l)$-colored graph} is a tuple ${\bf G} =(G,X_1,\ldots, X_h,{\bf a},v_1,\ldots,v_t)$ where $(G,X_1,\ldots, X_h)$ is a colored graph, ${\bf a}$ is an apex-tuple of $G$ of size $l$,
and $v_1,\ldots,v_t\in V(G)\cup\{\mathspace\}$.
Intuitively, we use $\mathspace$ to encode the absence of a vertex of $G$ of a certain index.
The vertices $v_i, i\in[t]$ that belong to $V(G)$ are called \emph{boundary vertices}.

\begin{figure}[ht]
\centering
\scalebox{0.9}{
\begin{tikzpicture}
\node[simple,minimum size=5pt,fill=red, line width= 1pt, label={left:$v_5$}] (v5) at (0,0) {};
\node[simple,minimum size=5pt,fill=blue, line width= 1pt, label={left:$v_4$}] (v4) at (0,1) {};
\node[simple,minimum size=5pt,fill=red, line width= 1pt, label={left:$v_2$}] (v2) at (0,2) {};
\node[simple,minimum size=5pt,fill=green, line width= 1pt, label={left:$v_1$}] (v1) at (0,3) {};
\node[simple, fill=red] (u1) at (1,0.5) {};
\node[simple, fill=blue] (u2) at (1.4,2) {};
\node[simple, fill=red] (u3) at (1.5,3.2) {};
\node[simple, fill=blue] (u4) at (1.7,1) {};
\node[simple, fill=red] (u5) at (2.3,2.3) {};
\node[simple, fill=red, label={above:$a_1$}] (u6) at (2.8,3.5) {};
\node[simple, fill=green] (u7) at (3,-0.5) {};
\node[simple, fill=red] (u8) at (3.4,1) {};
\node[simple, fill=blue] (u9) at (3.6,2.5) {};
\node[simple, fill=green] (u10) at (4,1.5) {};
\node[simple, fill=blue, label={above:$a_3$}] (u11) at (5,3) {};
\node[simple, fill=blue] (u12) at (6,0.5) {};
\node[simple, fill=green] (u13) at (7,1.5) {};

\draw[-] (v1) -- (u2) (v1) -- (u3);
\draw[-] (v2) -- (u3) (v2) -- (u4);
\draw[-] (v4) -- (u2) (v4) -- (u4) (v4) -- (u1);
\draw[-] (v5) -- (u3) (v5) -- (u1);
\draw[-] (u3) -- (u5) (u3) -- (u6) (u6) -- (u11) (u6) -- (u9) (u9)-- (u5) (u9) -- (u10) (u6) -- (u8) (u2) -- (u5) (u5) -- (u10) (u1) -- (u7) (u7) -- (u8) (u8) -- (u12) (u12) -- (u13) (u8) -- (u9) (u11) -- (u10) (u11) -- (u9) (u5) -- (u8) (u1) -- (u4) (u4) -- (u7) (u4) -- (u9) (u7) -- (u12);
\end{tikzpicture}}
\caption{An example of a $5$-boundaried $(3,3)$-colored graph $(G,X_1,X_2,X_3, a_1,\mathspace, a_3, v_1,v_2,\mathspace, v_4,v_5)$, where $X_1$ is the set of vertices depicted in red, $X_2$ is the set of vertices depicted in blue, and $X_3$ is the set of vertices depicted in green.}
\label{figure_boundariedgraph}
\end{figure}
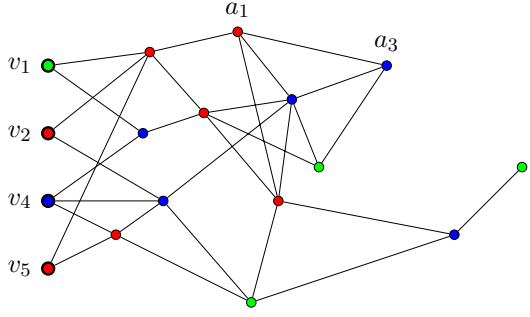

Given two $t$-boundaried $(h,l)$-colored graphs ${\bf G}_1 = (G_1,X_1,\ldots,X_h,{\bf a},v_1,\ldots,v_t)$ and ${\bf G}_2 = (G_2,X_1',\ldots, X_h',{\bf a}',u_1,\ldots, u_t)$, we say that ${\bf G}_1$ and ${\bf G}_2$ are \emph{isomorphic} if
$G_1$ is isomorphic to $G_2$ via a bijection $\eta: V(G_1)\cup\{\mathspace\}\to V(G_2)\cup\{\mathspace\}$ such that $\eta(\mathspace) = \mathspace$, for every $i\in[t]$, $\eta(v_i) = u_i$,  for every $i\in[l]$, $\eta(a_i) =\eta(a_i')$, and  for every $i\in[h]$, $\eta(X_i) = X_i'$.
We denote by $\mathcal{B}^{(t,h,l)}$
the set of all (pairwise non-isomorphic) $t$-boundaried $(h,l)$-colored graphs and we set $\mathcal{B}^{(h,l)} = \bigcup_{t\in\mathbb{N}}\mathcal{B}^{(t,h,l)}$.

\paragraph{Patterns.}
Let $(G,X_1,\ldots, X_h,{\bf a},v_1,\ldots, v_r)\in \mathcal{B}^{(r,h,l)}$.
Given a set $I\subseteq [r]$, we denote $${\sf Ind}_G(I)=(I,\{\{a,b\}\in I\times I\mid \{v_a,v_b\}\in E(G)\}).$$

We set $I=\{i\in[r]\mid v_i\neq\mathspace\}$.
We define the \emph{pattern} of $(G,X_1,\ldots, X_h,{\bf a},v_1,\ldots, v_r)$, denoted by ${\sf pattern}(G,X_1,\ldots, X_h,{\bf a},v_1,\ldots, v_r)$, to be the quintuple $(V,\kappa,\delta,{H}^e,\mathcal{H}^P)$, where
\begin{itemize}
\item $V$ is the partition of $I$ into sets such that for every $A\in I$ and every $i,j\in A$, $v_i = v_j$,
\item $\kappa:I\to [l]$ is the partial function mapping each $i\in I$ to the integer $j\in [l]$ such that $v_i =a_j$,
\item $\delta:I\to [h]$ is the partial function mapping each $i\in I$ to the integer $j\in [h]$ such that $v_i\in X_j$,
\item ${H}^{e}={\sf Ind}_G (I)$, and
\item $\mathcal{H}^{P}=\bigcup_{E\subseteq I\times I} \left\{(I,E)\ \middle\vert \begin{array}{c}G \text{ contains vertex-disjoint paths of length at least}\\
\text{two between the vertices $v_{i}, v_{j}$ for all $\{i,j\}\in E$}
\end{array}\right\}.$
\end{itemize}
Intuitively, the pattern of $(G,X_1,\ldots, X_h,v_1,\ldots, v_r)$ encodes a partition $V$ of the set $I$ formed by grouping indices that correspond to the same vertices, a partial function $\kappa$ mapping each index in $I$ to the index of the apex vertex to which the vertex indexed $i$ corresponds, and a partial function $\delta$ mapping every index in $I$ to the index of the color class among $X_1,\ldots, X_h$ in which the corresponding vertex belongs.
Also, it encodes some graphs with $I$ as vertex set.
These are (1) the graph $H^e$ that corresponds to the graph induced by the vertices $v_i, i\in I$
and (2) the collection $\mathcal{H}^{P}$ that has all graphs with vertex set $I$ whose edge set corresponds to the existence of internally vertex-disjoint paths between the respective vertices.
We set
\[\mathcal{G}_{\sf pat}^{(r,h,l)}=\{{\sf pattern}(G,X_1,\ldots, X_h,{\bf a},v_1,\ldots, v_r)\mid (G,X_1,\ldots, X_h,{\bf a},v_1,\ldots, v_r)\in \mathcal{B}^{(r,h,l)}\}.\]
Also, by the definition of a pattern, we observe the following.
\begin{observation}\label{obs_pattbound}
There is a function $\newfun{@desembarazadamente}:\mathbb{N}^3\to\mathbb{N}$ such that, for every $r\in\mathbb{N}$, $|\mathcal{G}_{\sf pat}^{(r,h,l)}|\leq \funref{@desembarazadamente}(r,h,l)$.
\end{observation}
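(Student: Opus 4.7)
The plan is to observe that a pattern is a purely combinatorial object built from $[r]$, $[h]$, and $[l]$, each of whose ingredients ranges over a finite set whose size depends only on $r$, $h$, and $l$. Concretely, I will bound the number of possibilities for each of the five coordinates of the quintuple $(V,\kappa,\delta,H^e,\mathcal{H}^P)$ separately and then take the product as an upper bound for the total.

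First, any pattern is specified once we fix the index set $I \subseteq [r]$ (at most $2^{r}$ choices), the partition $V$ of $I$ (at most $B_{r}$ choices, where $B_r$ is the Bell number), the partial function $\kappa \colon I \to [l]$ (at most $(l+1)^{r}$ choices, with $l+1$ to account for $i \notin \mathrm{dom}(\kappa)$), and the partial function $\delta \colon I \to [h]$ (at most $(h+1)^{r}$ choices). Once $I$ is fixed, $H^e$ is a simple graph on the vertex set $I$, so there are at most $2^{\binom{r}{2}}$ possibilities, and $\mathcal{H}^P$ is a subset of the set of all simple graphs on $I$, so there are at most $2^{2^{\binom{r}{2}}}$ possibilities.

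Taking the product of these bounds gives
\[
|\mathcal{G}_{\sf pat}^{(r,h,l)}| \;\leq\; 2^{r}\cdot B_{r}\cdot (l+1)^{r}\cdot (h+1)^{r}\cdot 2^{\binom{r}{2}}\cdot 2^{2^{\binom{r}{2}}},
\]
which is a computable function of $(r,h,l)$ alone, so we may take $\funref{@desembarazadamente}(r,h,l)$ to be the right-hand side (or any simpler upper bound thereof).

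There is no real obstacle: the only subtlety is to remember that $\kappa$ and $\delta$ are only partial (the paper's definition allows vertices $v_i$ that are neither apices nor colored), which is why the bases $l+1$ and $h+1$ appear rather than $l$ and $h$, and that $\mathcal{H}^P$ is a family of graphs and therefore contributes the doubly-exponential factor. No reference to the combinatorial structure of the boundaried graph $(G,X_1,\ldots,X_h,{\bf a},v_1,\ldots,v_r)$ is needed; the observation is a finiteness statement about the codomain of the map ${\sf pattern}$, not about the (possibly very large) fibre over any given pattern.
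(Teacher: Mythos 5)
Your proof is correct and matches the paper's (implicit) reasoning: the paper states this as an immediate observation from the definition of a pattern, i.e., that each of the five coordinates of $(V,\kappa,\delta,H^e,\mathcal{H}^P)$ ranges over a finite set whose size depends only on $r$, $h$, and $l$, which is exactly the counting you make explicit. Your product bound is a valid choice of $\funref{@desembarazadamente}(r,h,l)$, so nothing further is needed.
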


\subsection{Expressing satisfiability of sentences using patterns}
\label{subsec_exprpatt}
As mentioned in the previous subsection, patterns of boundaried colored graphs can be seen as encodings of the set of quantifier-free formulas of $\FOL[(\tau+\DP)\cup{\bf c}]$ that the given boundaried colored graph satisfies.
Following this line, in this subsection, we also encode quantifier-free formulas of $\FOL[(\tau+\DP)\cup{\bf c}]$ in the setting of patterns.
This will allow us to formulate model-checking questions in terms of pattern realization (see~\autoref{obs_leaftransl} and~\autoref{obs_grleafs}).
Under this viewpoint, model-checking can be formulated in purely graph-theoretical terms (through the information encoded in patterns) of boundaried colored graphs, whose boundary is
rescursively obtained, following the assignment of the given graph to a rooted tree.
\smallskip

We start with some additional definitions on formulas.

\paragraph{Atomic formulas and literals.}
Let $\tau = \{{\sf E},{\sf Y}_1,\ldots,{\sf Y}_h\}$ be a colored-graph vocabulary and let a collection ${\bf c}=\{{\sf c}_1,\ldots, {\sf c}_l\}$ of constant symbols.
An \emph{atomic formula} is a formula of the form ${\sf x}_i = {\sf x}_j$,
or ${\sf x}_i = {\sf c}_j$,
or ${\sf x}_i\in {\sf Y}_j$,
or ${\sf E}({\sf x}_i,{\sf x}_j)$ or $\DP_k({\sf s}_1,{\sf t}_1,\ldots,{\sf s}_k,{\sf t}_k)$ for some $k\in[r(r+1)/2]$, where ${\sf x}_i,{\sf x}_j, {\sf s}_1,\ldots, {\sf s}_k, {\sf t}_1, \ldots, {\sf t}_k$ are first-order variables.
A \emph{literal} is an atomic formula or the negation of an atomic formula.

\paragraph{Disjunctive normal form and full clauses.}
Let $\psi({\sf x}_1,\ldots, {\sf x}_r)$ be a quantifier-free $\FOL[(\tau+\DP)\cup{\bf c}]$-formula.
We say that $\psi({\sf x}_1,\ldots, {\sf x}_r)$ is in \emph{disjunctive normal form}
if there is some $k\in\mathbb{N}$ such that
$\psi = c_1\vee \cdots \vee c_k$,
where for each $i\in[k]$, $c_i$ is a conjunction of literals.
We call each $c_i$ a \emph{clause} of $\psi$.
We say $\psi$ is in \emph{full disjunctive normal form} if
every $\psi$ is in disjunctive normal form and, additionally, for every clause $c$ of $\psi$ it holds that
\begin{itemize}
\item for every distinct  $i,j\in[r]$, either ${\sf x}_i = {\sf x}_j$ or its negation are literals of $c$,
\item for every $i\in[r]$ and every $j\in[l]$, either ${\sf x}_i = {\sf c}_j$ or its negation are literals of $c$,
\item for every $i\in[r]$ and every $j\in[h]$, either ${\sf x}_i\in {\sf Y}_j$ or its negation are literals of $c$, 
\item for every distinct $i,j\in[r]$, either ${\sf E}({\sf x}_i,{\sf x}_j)$ or its negation are literals of $c$, and
\item for every $\ell\in[r(r+1)/2]$ and every $i_1,i_2,\ldots,i_\ell,j_1,j_2,\ldots,j_\ell\in [r]$, either the atomic formula $\DP({\sf x}_{i_1},{\sf x}_{j_1}, \ldots, {\sf x}_{i_\ell},{\sf x}_{j_\ell})$ or its negation are literals of $c$.
\end{itemize}
We say that a quantifier-free  $\FOL[(\tau+\DP)\cup{\bf c}]$-formula $\psi'$ \emph{extends} $\psi$ if every clause of $\psi$ is a sub-formula of a clause of $\psi'$.
We now prove that, for every quantifier-free $\FOL[(\tau+\DP)\cup{\bf c}]$-formula $\psi$, we can
construct an equivalent quantifier-free formula $\psi'$ of the same vocabulary
that is in full disjunctive normal form and extends $\psi$.
 
\begin{lemma}\label{lem_fulldisjunctive}
For every quantifier-free $\FOL[(\tau+\DP)\cup{\bf c}]$-formula $\psi$, there is a quantifier-free  $\FOL[(\tau+\DP)\cup{\bf c}]$-formula $\psi'$ that is in full disjunctive normal form, extends $\psi$, and is equivalent to $\psi$.
\end{lemma}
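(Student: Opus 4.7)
The plan is to proceed in two stages: first normalize $ψ$ into disjunctive normal form, then saturate each clause with the missing literals.

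First I would apply the standard propositional normalization to $ψ$. Treating the atomic formulas ${\sf x}_i={\sf x}_j$, ${\sf x}_i={\sf c}_j$, ${\sf x}_i\in {\sf Y}_j$, ${\sf E}({\sf x}_i,{\sf x}_j)$, and $\DP({\sf x}_{i_1},{\sf x}_{j_1},\ldots,{\sf x}_{i_\ell},{\sf x}_{j_\ell})$ as propositional atoms, I would push negations inwards using De Morgan's laws until each negation sits directly in front of an atomic formula (producing a literal), and then distribute $\wedge$ over $\vee$ to obtain an equivalent formula $ψ_0 = c_1\vee\cdots\vee c_k$ in disjunctive normal form. By construction every literal that occurs in $ψ_0$ appears (up to negation) already in $ψ$, so any formula whose clauses extend those of $ψ_0$ will in particular extend those of $ψ$.

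Second, I would saturate each clause. Let $\mathcal{L}$ be the finite set of atomic formulas listed in the five bullets of the ``full disjunctive normal form'' definition (there are only finitely many since $r,h,l$ and the bound $\ell\leq r(r+1)/2$ on the arity of $\DP$ are fixed once $ψ$ is fixed). For each clause $c_i$ of $ψ_0$ and each $α\in\mathcal{L}$ such that neither $α$ nor $\neg α$ is a literal of $c_i$, I would replace $c_i$ by the equivalent disjunction $(c_i\wedge α)\vee (c_i\wedge \neg α)$, using the tautology $c_i\equiv (c_i\wedge α)\vee(c_i\wedge \neg α)$. Iterating this for every missing atom yields $2^{|\mathcal{L}|-m_i}$ new clauses in place of $c_i$ (where $m_i$ is the number of atoms of $\mathcal{L}$ already mentioned in $c_i$), each of which extends $c_i$ and contains, for every $α\in\mathcal{L}$, exactly one of $α$ or $\neg α$. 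After doing this for all clauses, the resulting formula $ψ'$ is a disjunction of clauses, each of which is a conjunction of literals and contains the required literals for every atom in $\mathcal{L}$; hence $ψ'$ is in full disjunctive normal form, it extends $ψ_0$ (and thus $ψ$), and it is equivalent to $ψ_0$ (and thus to $ψ$) since every replacement step preserves equivalence.

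There is no real obstacle: the statement is a purely syntactic/propositional fact, and both the DNF conversion and the expansion step are classical. The only thing to double-check is the bookkeeping that the expanded clauses literally extend the original clauses of $ψ$ in the sense of the definition, which is immediate because every literal of $c_i$ is preserved in each of the $2^{|\mathcal{L}|-m_i}$ clauses produced from $c_i$, and because the DNF rewriting of $ψ$ into $ψ_0$ can itself be organized so that the clauses of $ψ_0$ extend those of $ψ$ (indeed, one may take $ψ$ itself to be written in DNF at the outset, or observe that the distributivity step only conjoins additional literals to existing conjunctions).
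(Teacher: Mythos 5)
Your proof is correct and uses essentially the same argument as the paper: saturating each clause via the tautology $c\equiv (c\wedge α)\vee(c\wedge\neg α)$ for every missing atomic formula, iterated over the finitely many atoms. The only difference is your explicit preliminary conversion to disjunctive normal form, which the paper tacitly assumes (its proof speaks of "every clause of $ψ$", and the notion of \emph{extends} is only meaningful for formulas already in DNF), so this addition is harmless and if anything makes the argument more self-contained.
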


\begin{proof}
Given a quantifier-free  $\FOL[(\tau+\DP)\cup{\bf c}]$-formula $\psi({\sf x}_1,\ldots, {\sf x}_r)$, we construct the formula $\psi'$ as follows.
For every clause $c$ of $\psi$,
if there are $i,j\in[r]$ such that
neither ${\sf x}_i={\sf x}_j$ nor ${\sf x}_i\neq {\sf x}_j$ appear in $c$, then
we replace $c$ by $$\big(c\wedge {\sf x}_i={\sf x}_j\big)\vee \big(c\wedge {\sf x}_i\neq {\sf x}_j\big).$$
By recursively applying this procedure for  ${\sf x}_i = {\sf c}_j$ for every $i\in[r]$ and every $j\in[l]$, for ${\sf x}_i\in {\sf Y}_j$ for every $i\in[r]$ and every $j\in[h]$, for ${\sf E}({\sf x}_i,{\sf x}_j)$ for every distinct $i,j\in[r]$, and for $\DP_\ell$, for every $\ell\in[r(r+1)/2]$ and every $i_1,i_2,\ldots,i_\ell,j_1,j_2,\ldots,j_\ell\in [r]$, we obtain $\psi'$.
\end{proof}

In the rest of this paper we assume that every quantifier-free formula
$\psi({\sf x}_1,\ldots, {\sf x}_r)$ in
$\FOL[(\tau+\DP)\cup{\bf c}]$ is in disjunctive normal form.

\paragraph{Patterns of quantifier-free formulas.}
We now define a notion of pattern of quantifier-free formulas.
Let $\psi({\sf x}_1,\ldots, {\sf x}_r)$ be a quantifier-free $\FOL[(\tau+\DP)\cup{\bf c}]$-formula in full disjunctive normal form.
We assume that $\psi$ is satisfiable, i.e., $\Mod(\psi)\neq\emptyset$.
For every clause $c$ of $\psi$, we define the \emph{pattern} of $c$, denoted by $H_c$, as the quintuple
$(V_c, \kappa_c,\delta_c,H^{e}_c, \mathcal{H}^{P}_c)$, where
\begin{itemize}
\item $V_c$ is a partition of $[r]$ into sets such that for every $A\in V_H$ and every $i,j\in A$ either ${\sf x}_i = {\sf x}_j $ appears as a literal of $c$ or  $i=j$,
\item $\kappa_c = \{(i,j)\in[r]\times[l]\mid \text{the atomic formula ${\sf x}_i = {\sf c}_j$ appears as a literal of $c$}\}$,
\item $\delta_c = \{(i,j)\in[r]\times[h]\mid \text{the atomic formula ${\sf x}_i\in {\sf Y}_j$ appears as a literal of $c$}\}$,
\item ${H}^{e}_c =([r],\{\{i,j\}\in [r]\times [r]\mid \text{the atomic formula } {\sf E}({\sf x}_i, {\sf x}_j) \text{ appears as a literal of $c$}\})$, and
\item $\mathcal{H}^{P}_c =\bigcup_{E\subseteq [r]\times[r]}
 \left\{([r],E)\ \middle\vert \begin{array}{l} \text{if }E = \{(i_1,j_1),\ldots, (i_\ell, j_\ell)\} \text{ then the atomic formula}\\
  \DP ({\sf x}_{i_1},{\sf x}_{j_1}, \ldots,{\sf x}_{i_\ell},{\sf x}_{j_\ell}) \text{ appears as a literal of $c$}
  \end{array}\right\}$.
\end{itemize}

Notice that the fact that $V_H$ is a partition of $[r]$ follows from the fact that $\psi$ is satisfiable.
Intuitively, satisfiability of $\psi$ is asked so as to rule out the cases where, for example, all three atomic formulas ${\sf x}_1={\sf x}_2$, ${\sf x}_2 = {\sf x}_3$, and $\neg ({\sf x}_2 = {\sf x}_3)$ appear as literals in the same clause.
Also, satisfiability of $\psi$ and the fact that it is in full disjunctive normal form implies that for every $\{i,j\}\in I\times I\setminus E(H^{e}_c)$, $\neg {\sf E}({\sf x}_i, {\sf x}_j)$ appears as a literal in $c$.
If $\psi$ is not satisfiable, then we set $H_c$ to be $\emptyset$.

Let $\psi$ be a quantifier-free  $\FOL[(\tau+\DP)\cup{\bf c}]$-formula.
We define $${\sf ext}(\psi)=\{c\mid \text{ $c$ is a clause of a formula $\psi'$ in full disjunctive normal form that extends $\psi$}\}.$$
Following~\autoref{lem_fulldisjunctive}, we get the following observation.

\begin{observation}\label{obs_nonemptyext}
For every quantifier-free  $\FOL[(\tau+\DP)\cup{\bf c}]$-formula $\psi$, the set ${\sf ext}(\psi)$ is non-empty.
\end{observation}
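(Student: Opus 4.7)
The plan is to derive the observation as an almost immediate corollary of \autoref{lem_fulldisjunctive} together with the syntactic definition of (full) disjunctive normal form. I view the observation as a bookkeeping statement that ensures the construction of $\mathrm{ext}(ψ)$ is never vacuous, so that subsequent uses of it (e.g., for defining patterns of clauses) are well-posed.

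First I would apply \autoref{lem_fulldisjunctive} to $ψ$ to obtain a quantifier-free $\FOL[(τ+\DP)\cup{\bf c}]$-formula $ψ'$ that is in full disjunctive normal form, extends $ψ$, and is equivalent to $ψ$. Then I would invoke the definition of disjunctive normal form given just before \autoref{lem_fulldisjunctive}: any such $ψ'$ is, by definition, of the form $c_1 \vee \cdots \vee c_k$ for some $k \in \mathbb{N}$, and to be a well-formed formula we must have $k \geq 1$ (a disjunction of zero clauses is not a formula of the language). Consequently, $ψ'$ contains at least one clause $c_1$, which by definition of $\mathrm{ext}(ψ)$ belongs to $\mathrm{ext}(ψ)$. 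Hence $\mathrm{ext}(ψ) \neq \emptyset$.

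The only subtle point worth addressing explicitly is the case where $ψ$ is unsatisfiable. Even then, the construction in the proof of \autoref{lem_fulldisjunctive} (iterative splitting of each clause on every missing atomic formula or its negation) produces a syntactically well-formed formula $ψ'$ in full disjunctive normal form extending $ψ$, regardless of semantic satisfiability; the definition of pattern of a clause simply assigns such clauses the empty pattern. Thus no satisfiability hypothesis is needed, and the observation holds uniformly. I do not anticipate any real obstacle here, since the argument is essentially a one-line consequence of the preceding lemma and the definitions.
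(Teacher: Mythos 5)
Your proposal is correct and matches the paper's reasoning: the paper states \autoref{obs_nonemptyext} as a direct consequence of \autoref{lem_fulldisjunctive}, exactly as you do, since the full-disjunctive-normal-form formula $ψ'$ extending $ψ$ necessarily has at least one clause, which then lies in ${\sf ext}(ψ)$ by definition. Your added remark that the construction is purely syntactic and thus needs no satisfiability assumption is consistent with the paper's convention of assigning the empty pattern to unsatisfiable clauses.
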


We set $\mathcal{H}_\psi$ to be the collection of patterns of all $c\in  {\sf ext}(\psi)$, i.e., $$\mathcal{H}_\psi = \{H_c\mid c\in {\sf ext}(\psi)\},$$ and we call it the \emph{set of patterns of $\psi$}.
Having defined the set  $\mathcal{H}_\psi$ of patterns of $\psi$,
we now define when a boundaried colored graph {\sl realizes} an element of  $\mathcal{H}_\psi$.

\paragraph{Realizing a pattern.}
Let $\psi({\sf x}_1,\ldots, {\sf x}_r)$ be a quantifier-free  $\FOL[(\tau+\DP)\cup{\bf c}]$-formula, let $c\in{\sf ext}(\psi)$, and let $H_c$ be the pattern of $c$.
Given a colored rooted graph $\mathfrak{G}$ and vertices $v_1,\ldots, v_r\in V(\mathfrak{G})$, we say that $(\mathfrak{G},v_1,\ldots, v_r)$ \emph{realizes} $H_c$
if  ${\sf pattern}(\mathfrak{G},v_1,\ldots, v_r)=H_c$.
Note that, equivalently, $(\mathfrak{G},v_1,\ldots, v_r)$ realizes $H_c$
if $(\mathfrak{G},v_1,\ldots, v_r)\models c({\sf x}_1,\ldots, {\sf x}_r)$ (where ${\sf x}_i$ is interpreted as $v_i$, for every $i\in[r]$).
Keep in mind that every $(\mathfrak{G},v_1,\ldots, v_r)$ realizes at most one  $H_c\in \mathcal{H}_\psi$.

Due to~\autoref{obs_nonemptyext} and the definition of realization of an element of  $\mathcal{H}_\psi$, we obtain the following result.

\begin{observation}\label{obs_leaftransl}
Let $r\in\mathbb{N}$ and let $\psi({\sf x}_1,\ldots,{\sf x}_r)$ be a quantifier-free  $\FOL[(\tau+\DP)\cup{\bf c}]$-formula.
Then $(\mathfrak{G},v_1,\ldots, v_r)\models \psi({\sf x}_1,\ldots,{\sf x}_r)$ if and only if $(\mathfrak{G},v_1,\ldots,v_r)$ realizes an element of $\mathcal{H}_\psi$.
\end{observation}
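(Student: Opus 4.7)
The plan is to prove both implications via a single ``central observation'' that translates syntactic satisfaction of a full disjunctive normal form clause into realization of its pattern. Concretely, for every clause $c$ in full disjunctive normal form with variables ${\sf x}_1,\ldots,{\sf x}_r$ and every tuple $(\mathfrak{G},{\bf a},v_1,\ldots,v_r)$ with $v_i\in V(\mathfrak{G})$, we have
\[(\mathfrak{G},{\bf a},v_1,\ldots,v_r)\models c\quad \iff\quad {\sf pattern}(\mathfrak{G},{\bf a},v_1,\ldots,v_r)=H_c.\]
This correspondence would be verified by checking, component by component, that the five ingredients $(V_c,\kappa_c,\delta_c,H_c^{e},\mathcal{H}_c^{P})$ of $H_c$ faithfully encode, respectively, the equality literals ${\sf x}_i={\sf x}_j$, the constant-identification literals ${\sf x}_i={\sf c}_j$, the color-membership literals ${\sf x}_i\in{\sf Y}_j$, the edge literals ${\sf E}({\sf x}_i,{\sf x}_j)$, and the disjoint-path literals $\DP(\cdot)$ that appear (positively or negatively) in $c$. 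The fact that $c$ is in \emph{full} disjunctive normal form is crucial here: it ensures that every such atomic formula is decided by $c$ (either the atom or its negation is a literal of $c$), so that the pattern of any realizing tuple is pinned down uniquely by $c$.

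For the forward direction, I would assume $(\mathfrak{G},{\bf a},v_1,\ldots,v_r)\models \psi$ and apply~\autoref{lem_fulldisjunctive} to obtain a formula $\psi'$ in full disjunctive normal form that is equivalent to $\psi$ and extends $\psi$. Since the tuple satisfies $\psi'$, at least one clause $c$ of $\psi'$ is satisfied by it; by the central observation, ${\sf pattern}(\mathfrak{G},{\bf a},v_1,\ldots,v_r)=H_c$. By definition $c\in{\sf ext}(\psi)$, so $H_c\in\mathcal{H}_\psi$, and therefore the tuple realizes an element of $\mathcal{H}_\psi$.

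For the backward direction, suppose the tuple realizes some $H_c\in\mathcal{H}_\psi$ with $c$ a clause of some full disjunctive normal form $\psi'$ extending $\psi$. The central observation immediately yields $(\mathfrak{G},{\bf a},v_1,\ldots,v_r)\models c$. It remains to argue that $c\Rightarrow \psi$: this is because the canonical extension procedure of~\autoref{lem_fulldisjunctive} produces each clause of $\psi'$ by adding extra literals to an existing clause of $\psi$, so each such clause is logically at least as strong as a clause of $\psi$ and hence implies $\psi$. Combining this with the previous step gives $(\mathfrak{G},{\bf a},v_1,\ldots,v_r)\models \psi$.

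The main obstacle, if any, will be the clean verification of the central observation, which amounts to a careful case analysis over the five components of the pattern and the matching syntactic categories of literals; however, nothing beyond bookkeeping is needed. All remaining steps are routine once this bridge between the syntactic world of clauses and the combinatorial world of patterns is in place, together with the invocations of~\autoref{lem_fulldisjunctive} and~\autoref{obs_nonemptyext} to guarantee that a full disjunctive normal form extension exists and that $\mathcal{H}_\psi$ is non-empty whenever $\psi$ is.
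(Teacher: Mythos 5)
Your proposal is correct and follows essentially the same route as the paper: the paper treats the equivalence between satisfying a full-DNF clause $c$ and having pattern exactly $H_c$ as immediate from the definitions (it is stated right after the definition of realization), and then derives the observation from \autoref{obs_nonemptyext} and \autoref{lem_fulldisjunctive}, exactly as you do. Your write-up merely makes the component-by-component verification and the two directions explicit, which is fine.
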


To conclude this subsection,
we explain how to revisit the approach presented in~\autoref{subsec_treeembed}, and,
in particular, how to restate~\autoref{obs_treeformula} using patterns.

\paragraph{Pattern-coloring.}
Let $r,h,l\in\mathbb{N}$. 
Let $\mathfrak{G}$ be an $h$-colored $l$-rooted graph and let $\lambda$ be an assignment of $G$ to a rooted tree $(T,t_0)$ of height $r$.
We define the function
${\sf pc}_\lambda: L(T)\to\mathcal{G}_{\sf pat}^{(r,h,l)}$ such that for every $t\in L(T)$, if $(t_0,t_1,\ldots, t_r = t)\in \textsf{Paths}(T)$, then
\begin{equation*}
{\sf pc}_\lambda(t) = {\sf pattern}(\mathfrak{G},\lambda(t_1),\ldots, \lambda(t)).
\end{equation*}
We call ${\sf pc}_\lambda$ the \emph{pattern-coloring of $L(T)$ with respect to $\lambda$}.
\medskip

Using the definition of the pattern-coloring and~\autoref{obs_leaftransl},
we can now restate~\autoref{obs_treeformula} as follows.

\begin{observation}\label{obs_grleafs}
Let $r,l\in\mathbb{N}$.
Let $\tau$ be a colored rooted graph vocabulary.
For every $\tau$-structure $\mathfrak{G}$
and every sentence $\varphi =   Q_1 {\sf x}_1\ \ldots Q_r {\sf x}_r \ \psi({\sf x}_1,\ldots, {\sf x}_r)$ where $\psi({\sf x}_1,\ldots, {\sf x}_r)\in \FOL[\tau+\DP]$,
we have that $\mathfrak{G}\models \varphi$ if and only if there is
\begin{itemize}
\item[-] an assignment $\lambda$ of $G$ to a rooted tree $(T,t_0)$ of height $r$ and
\item[-] a $\varphi$-spanning subtree $(T',t_0)$ of $(T,t_0)$
\end{itemize}
such that for every $t\in L(T')$, ${\sf pc}_\lambda(t)\in \mathcal{H}_{\psi}$.
\end{observation}

\subsection{Graphs with the same patterns satisfy the same sentences}\label{subsec_equiv}
In this subsection, we aim to prove that if two colored graphs give the same signatures, then these two colored graphs
satisfy the same (annotated) sentences.

\begin{lemma}\label{lemma_reducing}
Let $\tau$ be a colored rooted graph vocabulary and let $\mathfrak{G},\mathfrak{G}'$ be two $\tau$-structures.
For every $r\in\mathbb{N}$, if $\mathsf{sig}^r(\mathfrak{G})=\mathsf{sig}^r(\mathfrak{G}')$, then for every sentence
$\varphi\in\FOL[\tau+\DP]$ of quantifier rank at most $r$,
it holds that
$$\mathfrak{G}\models \varphi\iff \mathfrak{G}'\models \varphi.$$
\end{lemma}

\begin{proof}
Let $\varphi =   Q_1 {\sf x}_1\ \ldots Q_r {\sf x}_r \ \psi({\sf x}_1,\ldots, {\sf x}_r)$,
where $\psi({\sf x}_1,\ldots, {\sf x}_r)$ is a quantifier-free formula in $\FOL[\tau+\DP]$.
Let $\lambda$ be an assignment of $\mathfrak{G}$ to some rooted tree $(T,t_0)$ of height $r$ and let $\lambda'$ be an assignment of $\mathfrak{G}'$ to some rooted tree $(\tilde{T},\tilde{t}_0)$ of height $r$.
Due to~\autoref{obs_grleafs},
to prove that
$\mathfrak{G}\models \varphi\iff \mathfrak{G}'\models \varphi$,
it suffices to show that
\begin{eqnarray}
& \text{there is a $\varphi$-spanning subtree $(T',t_0)$ of $(T,t_0)$ certifying that $\mathfrak{G}\models \varphi$} \nonumber\\
& \iff \label{@romantically}\\ 
& \text{there is a $\varphi$-spanning subtree $(\tilde{T}',\tilde{t}_0)$ of $(\tilde{T},\tilde{t}_0)$ certifying that $\mathfrak{G}'\models \varphi$}. \nonumber
\end{eqnarray}

%Since $\mathsf{sig}^r(\mathfrak{G})=\mathsf{sig}^r(\mathfrak{G}')$,
%${\sf palette}_{{\sf pc}_\lambda}(t_0) = {\sf palette}_{{\sf pc}_{\lambda'}}(\tilde{t}_0)$.
For every $i\in[0,r]$, we denote by $\varphi^i({\sf x}_1,\ldots, {\sf x}_i)$
the formula $Q_{i+1}{\sf x}_{i+1}\ldots Q_r {\sf x}_r \psi({\sf x}_1,\ldots, {\sf x}_r)$.
We first prove an analogue of~\eqref{@romantically} for the quantifier-free formula $\psi({\sf x}_1,\ldots, {\sf x}_r)$.
Recall that since $\psi({\sf x}_1,\ldots, {\sf x}_r)$ is quantifier-free,
given a $t\in L(T)$, $T_t$ has a unique $\psi({\sf x}_1,\ldots, {\sf x}_r)$-spanning subtree, that is $T_t$ itself.
Nevertheless, we formulate the next statement in terms of $\psi({\sf x}_1,\ldots, {\sf x}_r)$-spanning subtrees in order to
use it as the base case of a recursive argument built in the course of the proof of~\eqref{@romantically}.
\begin{claim}\label{claim_base}
For every $(t_0,t_1,\ldots, t_{r})\in\textsf{Paths}(T)$ and every $(\tilde{t}_0,\tilde{t}_1,\ldots, \tilde{t}_{r})\in\textsf{Paths}(\tilde{T})$,
such that for every $i\in[r]$, $\mathsf{sig}^{r-i}(\mathfrak{G},\lambda(t_1),\ldots,\lambda(t_i))=\mathsf{sig}^{r-i}(\mathfrak{G}',\lambda'(\tilde{t}_1),\ldots, \lambda'(\tilde{t}_{i}))$, it holds that 
\begin{eqnarray*}
& \text{there is a $\psi({\sf x}_1,\ldots, {\sf x}_r)$-spanning subtree $T_{t_r}'$ of $T_{t_r}$ certifying that}\\
& \text{$(\mathfrak{G},\lambda(t_1),\ldots, \lambda(t_{r}))\models \psi({\sf x}_1,\ldots, {\sf x}_r)$}\\
& \iff\\
& \text{there is a $\varphi^r({\sf x}_1,\ldots, {\sf x}_r)$-spanning subtree $\tilde{T}_{\tilde{t}_r}'$ of $\tilde{T}_{\tilde{t}_r}$ certifying that}\\
& \text{$\big(\mathfrak{G}',\lambda'(\tilde{t}_1),\ldots, \lambda'(\tilde{t}_r)\big)\models \psi({\sf x}_1,\ldots, {\sf x}_r)$.}
\end{eqnarray*}
\end{claim}
\medskip

\noindent\emph{Proof of~\autoref{claim_base}}:
Since $\psi({\sf x}_1,\ldots, {\sf x}_r)$ is a quantifier-free formula,
$T_{t_r}$ has a unique $\psi({\sf x}_1,\ldots, {\sf x}_r)$-spanning subtree,
that is $T_{t_r}$, and the same holds for $\tilde{T}_{\tilde{t}_r}$.
By assumption, $\mathsf{sig}^{0}(\mathfrak{G},\lambda(t_1),\ldots,\lambda(t_r))=\mathsf{sig}^{0}(\mathfrak{G}',\lambda'(\tilde{t}_1),\ldots, \lambda'(\tilde{t}_{r}))$. Therefore, ${\sf pc}_\lambda(t_r) = {\sf pc}_{\lambda'}(\tilde{t}_r)$, which in turn implies that ${\sf pc}_\lambda(t_r)\in \mathcal{H}_{\psi}\iff {\sf pc}_{\lambda'}(\tilde{t}_r)\in \mathcal{H}_{\psi}.$
\hfill $\diamond$
\medskip\medskip

We now prove the following. The case where $i=0$ proves~\eqref{@romantically}.
\begin{claim}\label{claim_step}
For every $i\in[0,r-1]$, for every $(t_0,t_1,\ldots, t_{i})\in\textsf{Paths}(T)$ and every $(\tilde{t}_0,\tilde{t}_1,\ldots, \tilde{t}_{i})\in\textsf{Paths}(\tilde{T})$, such that for every $j\in[i]$, $\mathsf{sig}^{r-j}(\mathfrak{G},\lambda(t_1),\ldots,\lambda(t_j))=\mathsf{sig}^{r-j}(\mathfrak{G}',\lambda'(\tilde{t}_1),\ldots, \lambda'(\tilde{t}_{j}))$, it holds that 
\begin{eqnarray*}
& \text{there is a $\varphi^i({\sf x}_1,\ldots, {\sf x}_i)$-spanning subtree $T_{t_i}'$ of $T_{t_i}$ certifying that}\\
& \text{$(\mathfrak{G},\lambda(t_1),\ldots, \lambda(t_{i}))\models \varphi^i({\sf x}_1,\ldots, {\sf x}_i)$}\\
& \iff\\
& \text{there is a $\varphi^i({\sf x}_1,\ldots, {\sf x}_i)$-spanning subtree $\tilde{T}_{\tilde{t}_i}'$ of $\tilde{T}_{\tilde{t}_i}$ certifying that}\\
& \text{$\big(\mathfrak{G}',\lambda'(\tilde{t}_1),\ldots, \lambda'(\tilde{t}_i)\big)\models \varphi^i({\sf x}_1,\ldots, {\sf x}_i)$.}
\end{eqnarray*}
\end{claim}

\noindent\emph{Proof of~\autoref{claim_step}}:
In~\autoref{claim_base}, we already proved the statement for $i=r$.
Let $i\in[0,r-1]$.
We assume that the statement holds for $i+1$.
We will prove that it holds for $i$.
We distinguish two cases, depending on whether $Q_{i+1} =\exists$ or $Q_{i+1} =\forall$.
\medskip\medskip\medskip

\noindent\fbox{\bf Case 1}: $Q_{i+1}=\exists$.
\medskip

\noindent$(\Leftarrow)$
Suppose that there is a $\varphi^i({\sf x}_1,\ldots, {\sf x}_i)$-spanning subtree
$\tilde{T}_{\tilde{t}_i}'$ of $\tilde{T}_{\tilde{t}_i}$ certifying that
$$\big(\mathfrak{G}',\lambda'(\tilde{t}_1),\ldots, \lambda'(\tilde{t}_i)\big)\models \varphi^i({\sf x}_1,\ldots, {\sf x}_i)$$
(in the case where $i=0$, we suppose that $\tilde{T}_{\tilde{t}_i}' = \tilde{T}'$ certifies that $\mathfrak{G}'\models \varphi$).

Since $Q_{i+1}=\exists$,
there exists a node $z\in\textsf{children}_{\tilde{T}}(\tilde{t}_i)$ that belongs to $V(\tilde{T}_{\tilde{t}_i}')$.
%Note that
%$\lambda'(z)\in R'$.
Also, observe that $\tilde{T}_{z}'$ is a $\varphi^{i+1}({\sf x}_1,\ldots, {\sf x}_{i+1})$-spanning subtree of $\tilde{T}_{z}$ certifying that
$$\big(\mathfrak{G}',\lambda'(\tilde{t}_1),\ldots, \lambda'(\tilde{t}_i), \lambda'(z)\big)\models \varphi^{i+1}({\sf x}_1,\ldots, {\sf x}_{i+1}).$$
Notice that, since $\mathsf{sig}^{r-i}(\mathfrak{G},\lambda(t_1),\ldots,\lambda(t_i))=\mathsf{sig}^{r-i}(\mathfrak{G}',\lambda'(\tilde{t}_1),\ldots, \lambda'(\tilde{t}_{i}))$,
there exists a $t_{i+1}\in\textsf{children}_T(t_i)$ such that
$\mathsf{sig}^{r-i-1}(\mathfrak{G},\lambda(t_1),\ldots,\lambda(t_i),\lambda(t_{i+1}))
=
\mathsf{sig}^{r-i-1}(\mathfrak{G}',\lambda'(\tilde{t}_1),\ldots, \lambda'(\tilde{t}_{i}),\lambda'(z))$.
Following our recursive assumption,
there is a $\varphi^{i+1}({\sf x}_1,\ldots, {\sf x}_{i+1})$-spanning subtree $T_{t_{i+1}}'$ of $T_{t_{i+1}}$ certifying that
$$\big(\mathfrak{G},\lambda(t_1),\ldots,\lambda(t_{i}), \lambda(t_{i+1})\big)\models \varphi^{i+1}({\sf x}_1,\ldots, {\sf x}_{i+1}).$$
%Note that $\lambda(t_{i+1})\in R$.
Then, the graph $T_{t_i}' :=T[\{t_i\}\cup V(T_{t_{i+1}}')]$
is a $\varphi^i({\sf x}_1,\ldots, {\sf x}_i)$-spanning subtree of $T_{t_i}$ certifying that
$$\big(\mathfrak{G},\lambda(t_1),\ldots,\lambda(t_{i})\big)\models \varphi^i({\sf x}_1,\ldots, {\sf x}_i).$$
In the case where $i=0$, we have that $T_{t_i}' = T'$ certifies that $\mathfrak{G}\models \varphi$.
\medskip

\noindent$(\Rightarrow)$
Suppose that there is a $\varphi^i({\sf x}_1,\ldots, {\sf x}_i)$-spanning subtree $T_{t_i}'$ of $T_{t_i}$ certifying that
$$\big(\mathfrak{G},\lambda(t_1),\ldots,\lambda(t_{i})\big)\models \varphi^i({\sf x}_1,\ldots, {\sf x}_i)$$
(in the case where $i=0$, we suppose that $T_{t_i}' = T'$ certifies that $\mathfrak{G}\models \varphi$).

Since $Q_{i+1}=\exists$,
there exists a node $t_{i+1}\in \textsf{children}_T (t_i)$ that belongs to $V(T_{t_i}')$.
%Note that $\lambda(t_{i+1}) \in R$.
Also, observe that $T_{t_{i+1}} '$ is a $\varphi^{i+1}({\sf x}_1,\ldots, {\sf x}_{i+1})$-spanning subtree of $T_{t_{i+1}}$ certifying that
$$\big(\mathfrak{G},\lambda(t_1),\ldots,\lambda(t_{i}), \lambda(t_{i+1})\big)\models \varphi^{i+1}({\sf x}_1,\ldots, {\sf x}_{i+1}).$$
Since $\mathsf{sig}^{r-i}(\mathfrak{G},\lambda(t_1),\ldots,\lambda(t_i))=\mathsf{sig}^{r-i}(\mathfrak{G}',\lambda'(\tilde{t}_1),\ldots, \lambda'(\tilde{t}_{i}))$,
there is a node $\tilde{t}_{i+1}$ in ${\sf children}_{\tilde{T}}(\tilde{t}_i)$ such that $$\mathsf{sig}^{r-i-1}(\mathfrak{G},\lambda(t_1),\ldots,\lambda(t_i),\lambda(t_{i+1}))
=
\mathsf{sig}^{r-i-1}(\mathfrak{G}',\lambda'(\tilde{t}_1),\ldots, \lambda'(\tilde{t}_{i}),\lambda'(\tilde{t}_{i+1})).$$
Following our recursive assumption, 
there is a $\varphi^{i+1}({\sf x}_1,\ldots, {\sf x}_{i+1})$-spanning subtree
$\tilde{T}_{\tilde{t}_{i+1}}'$ of $\tilde{T}_{\tilde{t}_{i+1}}$ certifying that
$$\big(\mathfrak{G}',\lambda'(\tilde{t}_1),\ldots, \lambda'(\tilde{t}_i), \lambda'(\tilde{t}_{i+1})\big)\models \varphi^{i+1}({\sf x}_1,\ldots, {\sf x}_{i+1}).$$
%Notice that
%$\lambda'(\tilde{t}_{i+1})\in R'$.
Now, observe that the graph $\tilde{T}_{\tilde{t}_i}' = \tilde{T}[\{\tilde{t}_i\}\cup V(\tilde{T}_{\tilde{t}_{i+1}}')]$
is a $\varphi^i({\sf x}_1,\ldots, {\sf x}_i)$-spanning subtree of $\tilde{T}_{\tilde{t}_i}$ certifying that
$$\big(\mathfrak{G}',\lambda'(\tilde{t}_1),\ldots, \lambda'(\tilde{t}_i)\big)\models \varphi^i({\sf x}_1,\ldots, {\sf x}_i).$$
In the case where $i=0$, we have that $\tilde{T}_{\tilde{t}_i}' = \tilde{T}'$ certifies that $\mathfrak{G}\models \varphi$.
\medskip\medskip\medskip

\noindent\fbox{\bf Case 2}: $Q_{i+1}=\forall$.
\medskip

\noindent$(\Leftarrow)$
Suppose that there is a $\varphi^i({\sf x}_1,\ldots, {\sf x}_i)$-spanning subtree
$\tilde{T}_{\tilde{t}_i}'$ of $\tilde{T}_{\tilde{t}_i}$ certifying that
$$\big(\mathfrak{G}',\lambda'(\tilde{t}_1),\ldots, \lambda'(\tilde{t}_i)\big)\models \varphi^i({\sf x}_1,\ldots, {\sf x}_i)$$
(in the case where $i=0$, we suppose that $\tilde{T}_{\tilde{t}_i}' = \tilde{T}'$ certifies that $\mathfrak{G}'\models \varphi$).

Since $Q_{i+1}=\forall$,
every $z\in\textsf{children}_{\tilde{T}}(\tilde{t}_i)$ belongs to $V(\tilde{T}_{\tilde{t}_i}')$.
%Recall that $\lambda_{\tilde{t}_i}'$ is a bijection from $\textsf{children}_{\tilde{T}}(\tilde{t}_i)$ to $\{v_\alpha\in V(\mathfrak{G}')\mid \alpha\in \mathsf{sig}^{r-d}(\mathfrak{G}',\lambda'(\tilde{t}_1),\ldots,\lambda'(\tilde{t}_i))\}$.
Now observe that,
for every $z\in\textsf{children}_{\tilde{T}}(\tilde{t}_i)$,
 $\tilde{T}_{z}'$ is a $\varphi^{i+1}({\sf x}_1,\ldots, {\sf x}_{i+1})$-spanning subtree of $\tilde{T}_{z}$ certifying that
$$\big(\mathfrak{G}',\lambda'(\tilde{t}_1),\ldots, \lambda'(\tilde{t}_i), \lambda'(z)\big)\models \varphi^{i+1}({\sf x}_1,\ldots, {\sf x}_{i+1}).$$
Also, since $\mathsf{sig}^{r-d}(\mathfrak{G},\lambda({t}_1),\ldots,\lambda({t}_i))= \mathsf{sig}^{r-d}(\mathfrak{G}',\lambda'(\tilde{t}_1),\ldots,\lambda'(\tilde{t}_i))$,
for every $z\in\textsf{children}_{\tilde{T}}(\tilde{t}_i)$
the set
$$\{t_{i+1}\in \textsf{children}_T(t_i)
\mid
\mathsf{sig}^{r-i-1}(\mathfrak{G},\lambda(t_1),\ldots,\lambda(t_i),\lambda(t_{i+1}))
=
\mathsf{sig}^{r-i-1}(\mathfrak{G}',\lambda'(\tilde{t}_1),\ldots, \lambda'(\tilde{t}_{i}),\lambda'(z))
\}$$
is non-empty and, also,
$\bigcup_{z\in \textsf{children}_{\tilde{T}}(\tilde{t}_i)}\{t_{i+1}\in \textsf{children}_T(t_i)\mid \mathsf{sig}^{r-i-1}(\mathfrak{G},\lambda(t_1),\ldots,\lambda(t_i),\lambda(t_{i+1}))
=
\mathsf{sig}^{r-i-1}(\mathfrak{G}',\lambda'(\tilde{t}_1),\ldots, \lambda'(\tilde{t}_{i}),\lambda'(z))\} = \textsf{children}_T(t_i).$
%Recall that $\lambda_{t_i}$ is a bijection from $\textsf{children}_T(t_i)$ to $R$.
Following our recursive assumption,
for every $t_{i+1}\in \textsf{children}_T(t_i)$,
there is a $\varphi^{i+1}({\sf x}_1,\ldots, {\sf x}_{i+1})$-spanning subtree $T_{t_{i+1}}'$ of $T_{t_{i+1}}$ certifying that
$$\big(\mathfrak{G},\lambda(t_1),\ldots,\lambda(t_{i}), \lambda(t_{i+1})\big)\models \varphi^{i+1}({\sf x}_1,\ldots, {\sf x}_{i+1}).$$
%Notice that for every $t_{i+1}\in\textsf{children}_T(t_i)$,
%$\lambda(t_{i+1})\in R$.
Then, the graph $$T[\{t_i\}\cup \bigcup_{t_{i+1}\in\textsf{children}_T(t_i)}V(T_{t_{i+1}}')]$$
is a $\varphi^i({\sf x}_1,\ldots, {\sf x}_i)$-spanning subtree of $T_{t_i}$ certifying that
$$\big(\mathfrak{G},\lambda(t_1),\ldots,\lambda(t_{i})\big)\models \varphi^i({\sf x}_1,\ldots, {\sf x}_i)$$
In the case where $i=0$, we have that $T_{t_i}' = T'$ certifies that $\mathfrak{G}\models \varphi$.
\medskip

\noindent$(\Rightarrow)$
Suppose that there is a $\varphi^i({\sf x}_1,\ldots, {\sf x}_i)$-spanning subtree $T_{t_i}'$ of $T_{t_i}$ certifying that
$$\big(\mathfrak{G},\lambda(t_1),\ldots,\lambda(t_{i})\big)\models \varphi^i({\sf x}_1,\ldots, {\sf x}_i)$$
(in the case where $i=0$, we suppose that $T_{t_i}' = T'$ certifies that $\mathfrak{G}\models \varphi$).

Since $Q_{i+1}=\forall$,
every $t_{i+1}\in \textsf{children}_T (t_i)$ belongs to $V(T_{t_i}')$.
%Keep in mind that $\lambda_{t_i}$ is a bijection from $\textsf{children}_{T}(t_i)$ to $R$.
Now observe that,
for every $t_{i+1}\in \textsf{children}_T (t_i)$,
 $T_{t_{i+1}} '$ is a $\varphi^{i+1}({\sf x}_1,\ldots, {\sf x}_{i+1})$-spanning subtree of $T_{t_{i+1}}$ certifying that
$$\big(\mathfrak{G},\lambda(t_1),\ldots,\lambda(t_{i}), \lambda(t_{i+1})\big)\models \varphi^{i+1}({\sf x}_1,\ldots, {\sf x}_{i+1}).$$
Since $\mathsf{sig}^{r-d}(\mathfrak{G},\lambda({t}_1),\ldots,\lambda({t}_i))= \mathsf{sig}^{r-d}(\mathfrak{G}',\lambda'(\tilde{t}_1),\ldots,\lambda'(\tilde{t}_i))$,
for every $t_{i+1}\in {\sf children}_T (t_i)$ there exists a node $z\in {\sf children}_{\tilde{T}}(\tilde{t}_i)$
such that $$\mathsf{sig}^{r-i-1}(\mathfrak{G},\lambda(t_1),\ldots,\lambda(t_i),\lambda(t_{i+1}))
=
\mathsf{sig}^{r-i-1}(\mathfrak{G}',\lambda'(\tilde{t}_1),\ldots, \lambda'(\tilde{t}_{i}),\lambda'(z))
.$$
Following our recursive assumption,
for every $z\in \textsf{children}_T (t_{i})$
there is a $\varphi^{i+1}({\sf x}_1,\ldots, {\sf x}_{i+1})$-spanning subtree
$\tilde{T}_{\tilde{t}_{i+1}}'$ of $\tilde{T}_{\tilde{t}_{i+1}}$ certifying that
$$\big(\mathfrak{G},\lambda(t_1),\ldots,\lambda(t_i),\lambda(t_{i+1})\big)\models \varphi^{i+1}({\sf x}_1,\ldots, {\sf x}_{i+1}).$$
Also, note that for every $\tilde{t}_{i+1}\in {\sf children}_{\tilde{T}}(\tilde{t}_i)$, there exists a node $t_{i+1}\in {\sf children}_T (t_i)$
such that $\mathsf{sig}^{r-i-1}(\mathfrak{G},\lambda(t_1),\ldots,\lambda(t_i),\lambda(z))
=
\mathsf{sig}^{r-i-1}(\mathfrak{G}',\lambda'(\tilde{t}_1),\ldots, \lambda'(\tilde{t}_{i}),\lambda'(\tilde{t}_{i+1}))
$.
Restating what we mentioned above for
the existence of a  $\varphi^{i+1}({\sf x}_1,\ldots, {\sf x}_{i+1})$-spanning subtree of ${T}_{t_{i+1}}$ for each
$t_{i+1}\in \textsf{children}_T(t_i)$, we get that
for every $z\in \textsf{children}_{\tilde{T}}(\tilde{t}_i)$,
there is a $\varphi^{i+1}({\sf x}_1,\ldots, {\sf x}_{i+1})$-spanning subtree $\tilde{T}_{z}'$ of $\tilde{T}_{z}$ certifying that
$$\big(\mathfrak{G},\lambda'(\tilde{t}_1),\ldots, \lambda'(\tilde{t}_i), \lambda'(z)\big)\models \varphi^{i+1}({\sf x}_1,\ldots, {\sf x}_{i+1}).$$
%Keep in mind that $\lambda_{\tilde{t}_i}'$ is a bijection from $\textsf{children}_{\tilde{T}}(\tilde{t}_i)$ to $R'$.
Now, observe that the graph $$\tilde{T}[\{\tilde{t}_i\}\cup \bigcup_{z\in \textsf{children}_{\tilde{T}}(\tilde{t}_i)} V(\tilde{T}_{z}')]$$ is a $\varphi^i({\sf x}_1,\ldots, {\sf x}_i)$-spanning subtree of $\tilde{T}_{\tilde{t}_i}$ certifying that
$$\big(\mathfrak{G}',\lambda'(\tilde{t}_1),\ldots, \lambda'(\tilde{t}_i)\big)\models \varphi^i({\sf x}_1,\ldots, {\sf x}_i).$$
In the case where $i=0$, we have that $\tilde{T}_{\tilde{t}_i}' = \tilde{T}'$ certifies that $\mathfrak{G}'\models \varphi$.
This concludes the proof of the Claim. \hfill $\diamond$
\bigskip

Following~\autoref{claim_step} for $i=0$, we deduce~\eqref{@romantically}, which completes the proof of the lemma.
\end{proof}

\paragraph{A comment on the use of~\autoref{lemma_reducing}.}
Before concluding this section,
we wish to stress the following.
Suppose that we are given an (annotated) colored graph and one can find an (annotated) colored graph that has the same ``meta-collection'' of patterns (in the sense of equivalence of leaf-labeled trees, as in the statement of~\autoref{lemma_reducing}).
\autoref{lemma_reducing} indicates that
model-checking for the original annotated colored graph can be reduced to model-checking for the second annotated colored graph and this is safe for {\sl every} formula in $\FOL[\tau+\DP]$.
Therefore, if we were able to find a way to compute such an ``equivalent'' graph and if its size (or, the size of its annotated set) was smaller than the original one's, we could reduce the problem of model-checking to smaller instances and therefore report some progress.
However, this is not straightforward and in the next three sections, i.e.,~\autoref{sec_reroutinginannuli},~\autoref{sec_annotated}, and~\autoref{sec_signaturesexchengability}, we explain how to deal with this situation.
From a high-level point of view, our approach considers expressing partial patterns (and consequently, partial satisfaction of formulas in $\FOL[\tau+\DP]$) inside a bounded treewidth part of the given graph. Then, using Courcelle's Theorem, we will be able to compute {\sl representatives} of the vertices inside this bounded treewidth part with respect to equivalence of (partial) patterns.
Using the fact that we know which vertices in this part ``represent''
the variety of patterns, we find some ``irrelevant'' vertices to discard from the annotation, without changing the signature.

\section{Routing linkages through railed annuli}
\label{sec_reroutinginannuli}

In~\autoref{sec_alternative}, we presented how to encode questions expressed in $\FOLDP$ in purely graph-theoretical terms, using patterns.
Recall that a pattern of a boundaried colored graph encodes, by the collection $\mathcal{H}^P$, what disjoint paths can be routed through the boundary vertices.
 This is the only ``non-local'' information encoded in the pattern, as all other information can be determined by inspecting only the boundaried vertices and the adjacencies between them.
Aiming to define a notion of ``partial'' pattern,
we have to deal with the possible ways that boundary vertices can be connected through disjoint paths and this can be seen as a question about the variety of linkages that can be routed through the boundaried vertices.
A crucial tool for handling linkages is the Linkage Combing Lemma (\autoref{prop_combinglemma}) proved in~\cite{GolovachST22comb} (see also~\cite{GolovachST20hitti}).
This result is applied in the presence of a \emph{partially annulus-embedded graph} and an \emph{annulus-embedded railed annulus}, notions that are defined in~\autoref{subsec_disks}.
The definition of linkages and the Linkage Combing Lemma of~\cite{GolovachST22comb} are presented in~\autoref{subsec_combin}.
Finally, in~\autoref{subsec_tm}, we define linkages of boundaried graphs and we describe how to ``encode'' models of boundaried graphs using patterns.

\subsection{Graphs partially embedded on an annulus and railed annuli}\label{subsec_disks}
We say that a pair $(L,R)\in 2^{V(G)}\times 2^{V(G)}$ is a \emph{separation} of $G$
if $L\cup R=V(G)$ and there is no edge in $G$ between a vertex in $L\setminus R$ and a vertex in $R\setminus L.$
We say that two separations  $(X_1,Y_1)$ and $(X_2,Y_2)$ of a graph $G$ are \emph{laminar} if $Y_1\subseteq Y_2$ and $X_2\subseteq X_1$.

\paragraph{Disks and annuli.}
A \emph{cycle} is a set homeomorphic to the set $\{(x,y)\in \mathbb{R}^2 \mid x^2+y^2 = 1\}$.
We define a \emph{closed disk} (resp. \emph{open disk}) to be a set homeomorphic to the set $\{(x,y)\in \mathbb{R}^2\mid x^2+y^2\leq 1\}$
(resp. $\{(x,y)\in \mathbb{R}^2\mid x^2+y^2< 1\}$) and
a \emph{closed annulus} (resp. \emph{open annulus}) to be a set homeomorphic to the set $\{(x,y)\in \mathbb{R}^2 \mid 1\leq x^2+y^2\leq 2\}$ (resp. $\{(x,y)\in \mathbb{R}^2 \mid 1< x^2+y^2< 2\}$).
Given a closed disk or a closed annulus $X$, we use $\bd(X)$ to denote the boundary of $X$ (i.e., the set of points of $X$ for which every neighborhood around them contains some point not in $X$).
Notice that if $X$ is a closed disk then $\bd(X)$ is a cycle, while if $X$ is a closed annulus then $\bd(X)=C_{1}\cup C_{2}$ where $C_{1}, C_{2}$ are the two unique connected components of $\bd(X)$ and $C_1,C_2$ are two disjoint cycles.
We call $C_1$ and $C_2$ \emph{boundaries} of $X$.
We call $C_1$ the \emph{left boundary} of $X$ and $C_2$ the \emph{right boundary} of $X$.
Also given a closed disk (resp. closed annulus) $X$, we use $\inter(X)$ to denote the open disk (resp. open annulus) $X\setminus \bd(X)$. When we embed a graph $G$ in the plane, in a closed disk, or in a closed annulus, we treat G as a set of points.
This permits us to make set operations between graphs and sets of points.

\paragraph{Partially annulus-embedded graphs.}
Let $\Delta$ be a closed annulus.
We say that a graph $G$ is \emph{partially $\Delta$-embedded}, 
if there is some subgraph $K$ of $G$ that is embedded in $\Delta$
such that $\bd(\Delta)$ is the disjoint union of two cycles of $K$ and there are two laminar separations $(X_1,Y_1)$ and $(X_2,Y_2)$ of $G$ such that $X_1\cap Y_2 =V(G)\cap \Delta$.
We also call the graph $K$
\emph{compass}
of the partially $\Delta$-embedded graph $G$ and we always assume that we accompany
a partially $\Delta$-embedded graph $G$ together with an embedding of its compass in $\Delta$ that is the set $G\cap \Delta$.
See~\autoref{figure_partially} for an illustration of a partially annulus-embedded graph.
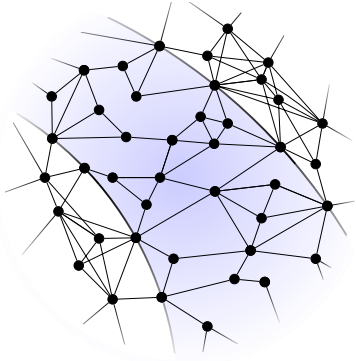
\begin{figure}[ht]
\centering
\scalebox{0.9}{
\begin{tikzpicture}

\clip (0,2.8) circle (2.6cm);

\begin{scope}

    %inner cycle left part
    \path (-1.2,2.7)  arc (40:77:5)
    node[terminal,pos=0.1] (P1) {}
    node[terminal,pos=0.3] (P0) {}
    (90:5)--(90:5);
    \draw [thick,black,path fading=north, fading angle =40] (-1.2,2.7)  arc (40:64:5)  (100:4)--(100:4);
    
    %inner cycle right part
    \path (-1.2,2.7) arc (40:-15:5)
    node[terminal,pos=0.2] (P2) {}
    node[terminal,pos=0.4] (P3) {}
    node[pos=1] (A) {}
    (-15:5)--(-15:5);
    \draw [thick,black,path fading=south] (-1.2,2.7) arc (40:5:5)    (7:5)--(7:5);
    
    %middle cycle left part
    \path (0.2,3)  arc (40:82:6.5)
    node[terminal,pos=0.1] (P6) {}
    node[terminal,pos=0.45] (P4) {}
    (85:7)--(85:7);
    
    %middle cycle right part
    \path (0.2,3) arc (40:-5:6.5)
    node[terminal,pos=0.1] (P7) {}
    node[terminal,pos=0.3] (P8) {}
    node[terminal,pos=0.4] (P9) {}
    node[pos=1] (B) {}
    (-5:7)--(-5:7);
    
    %outer cycle left part
    \path (1.7,3)  arc (40:85:8.5)
    node[pos=1] (D) {}
    node[terminal,pos=0.4] (P10) {}
    node[terminal,pos=0.25] (P11) {}
    node[terminal,pos=0.05] (P12) {}
    (90:9)--(90:9);
    \draw [thick,black,path fading=west] (1.7,3)  arc (40:69:8.5)   (80:9)--(80:9);
    
    %outer cycle right part
    \path (1.7,3) arc  (40:5:8.5)
    node[terminal,pos=0.15] (P13) {}
    node[terminal,pos=0.35] (P14) {}
    node[terminal,pos=0.45] (P15) {}
    node[pos=1] (C) {}
    (5:9)--(5:9);
    \draw [thick,black,path fading=south] (1.7,3) arc  (40:25:8.5)   (25:9)--(25:9);
    
    \foreach \x in {0,...,3} \node[terminal,black,circle] () at (P\x) {};
    \foreach \x in {4,6,7,8,9} \node[terminal,black,circle] () at (P\x) {};
    \foreach \x in {10,...,15} \node[terminal,black,circle] () at (P\x) {};

    \node[] (A1) at ($(A)+(-45:1)$) {$A_{\bar{w}0}$};
    \node[] (A2) at ($(B)+(-45:1)$) {$A_{\bar{w}1}$};

%Nodes A1
    \node[terminal,black,circle](Q9) at (0.4,.6) {};
    \node[terminal,black,circle](Q8) at (0.8,1.3) {};
    \node[terminal,black,circle](Q7) at (-0.1,1.6) {};
    \node[terminal,black,circle](Q5) at (-.5,2.4) {};
    \node[terminal,black,circle](Q4) at (-0.3,2.8) {};        
    \node[terminal,black,circle](Q3) at (-1,2.8) {};    
    \node[terminal,black,circle](Q2) at (-0.8,3.4) {};
    \node[terminal,black,circle](Q1) at (-1.2,3.8) {};
    
    \node[terminal,black,circle](ex2) at (-1.9,4) {};

%Nodes A2
    \node[terminal,black,circle](B8) at (2,1.6) {};
    \node[terminal,black,circle](B7) at (1.2,2.2) {};
    \node[terminal,black,circle](B6) at (1.4,2.7) {};
    \node[terminal,black,circle](B5) at (0.5,3.3) {};
    \node[terminal,black,circle](B4) at (0.7,3.6) {};
    \node[terminal,black,circle](B3) at (0.3,3.7) {};    
    \node[terminal,black,circle](B2) at (-.65,4) {};
    \node[terminal,black,circle](B1) at (-.85,4.45) {};

%Nodes out
    \node[terminal,black,circle] (C7) at (0.7,5) {};
    \node[terminal,black,circle] (C6) at (0.4,4.6) {};
    \node[terminal,black,circle] (C5) at (1.3,4.5) {};
    \node[terminal,black,circle] (C4) at (1.2,4.25) {};
    \node[terminal,black,circle] (C3) at (1.45,3.95) {};
    \node[terminal,black,circle] (C2) at (2.1,3.6) {};
    \node[terminal,black,circle] (C1) at (2,3) {};

%Nodes in
    \node[terminal,black,circle] (A5) at (-2,2.8) {};
    \node[terminal,black,circle] (A4) at (-1.8,2.3) {};
    \node[terminal,black,circle] (A3) at (-1.2,1.9) {};
    \node[terminal,black,circle] (A2) at (-1.5,1.5) {};
    \node[terminal,black,circle] (A1) at (-1,1) {};

\begin{scope}[on background layer]
\fill[white] (0,2.8) circle (2.6cm);
\clip (C)  arc (5:185:8.5);
 \draw[draw=white,fill=white] (0,2.8)  circle (2.7cm);
 \draw[fill=blue!20!white,path fading=fade out](0,2.8)  circle (2.8cm);
     \begin{scope}
         \clip (A)  arc (-15:195:5);
         \draw[draw=white,fill=white](0,2.8)  circle (2.7cm);
     \end{scope}
\clip (0,2.8) circle (2.6cm);

\end{scope}

%Flaps in A1
    \draw[black,path fading= west] (ex2) -- ($(ex2)+(145:0.4)$);

    \draw[black] (P4)--(Q1) (P6) -- (Q4) (P7)--(Q4) (P7)--(P2)  (P8)--(Q7) (P8)--(Q8) (P9)--(Q8);
    \draw[black] (Q3) -- (Q4) (Q4) -- (P6);
    
    \draw[black] (P0) -- (Q2) (Q2)-- (P6);

    \draw[black]  (P0)--(Q1) (P1)--(Q3) (P2)--(Q5)  (P2)--(Q7) (P3)--(Q7) (P3)--(Q8); 
    
    \draw[black] (ex2) -- (P0) (ex2) -- (P4) (P0) -- (P4) (Q1)--(Q2) (Q3)--(Q4)--(Q5)--(Q3) (P3) -- (Q9);
    
    %patch for left fading
    %\fill[white] (-2.4,4.6) -- (-2.9,3.8) -- (-3,4.4);

%Flaps in A2
    \draw[black] (B1) -- (B2) (B1) -- (P4) (B1) -- (P10) (B2) -- (P10);
    
    \draw[black] (B2) -- (P11);
    
    \draw[black] (P6) -- (B3) (B3) -- (P11);
    \draw[black] (P8) -- (P7) (P7)--(B6)-- (P13);
    \draw[black] (P6) -- (B5) (B5) -- (P12) (B3) --(B4) (B4) -- (B5) (B3) -- (B5) (P11) -- (B4) (P12) -- (B4);
    
    \draw[black] (B5) -- (Q4);
    
    \draw[black] (P7) -- (P12);
    
    \draw[black] (P7) -- (B6) (P7) -- (B7) (P8) -- (B7) (B7)-- (P13) (P13) -- (B6) (P13) -- (P8) (B6) -- (B7);
    
    \draw[black] (P8) -- (B8) (P13) -- (B8);

%

%Flaps out
\draw[black] (P11) -- (C4) (C4) -- (C5);
\draw[] (P10) -- (C6) (C6) -- (C5) (C7) -- (C4);
\draw[] (C6) -- (P11) (C6) -- (C7);
\draw[] (P11) -- (C5) (P11) -- (C4) (P11) -- (C3) (P11) -- (C2) (C5) -- (C2) (C5) -- (C3) (C5) -- (C4) (C4) -- (C3) (C3) -- (C2) (C4) to [bend left = 5] (C2) (C1)--(C3) (P12) -- (C3) (P12) -- (C4) (C6) -- (C4) (P11) -- (C7) (C7)--(C5);
\draw (P12) -- (C2) (P12) -- (C1) (P13) -- (C1) (C1) -- (C2);

%Flaps in
    \draw[black] (A1)  -- (P3);
    \draw[black] (A1) -- (P2) (A1) -- (A2) (A1) -- (A3) (A1) -- (A4) (A2) -- (P2) (A2) -- (A3) (A2) -- (A4) (A3) -- (A4) (A4) -- (P2);
     \draw[black]      (A3) -- (P2)      (A3) -- (A4);
    \draw[black] (A4) -- (A5)  -- (P1) -- (A4);
    \draw[black] (A5) -- (P0) (A5)-- (A3);

%Faded edges
\draw[black, path fading=east] (Q9) -- ($(Q9)+(-30:0.6)$);
\draw[black, path fading=south] (Q9) -- ($(Q9)+(-100:0.5)$);
\draw[black, path fading=north] (P4) -- ($(P4)+(160:0.6)$);
\draw[black, path fading=south] (B8) -- ($(B8)+(-70:0.4)$);
\draw[black, path fading=east] (B8) -- ($(B8)+(-30:0.3)$);
\draw[black, path fading=south] (P9) -- ($(P9)+(-70:0.7)$);
\draw[black, path fading=east] (P13) -- ($(P13)+(10:0.5)$);
\draw[black, path fading=east] (C2) -- ($(C2)+(-30:0.55)$);
\draw[black, path fading=north] (C2) -- ($(C2)+(80:0.7)$);
\draw[black, path fading=north] (C5) -- ($(C5)+(60:1)$);
\draw[black, path fading=north] (C7) -- ($(C7)+(50:1)$);
\draw[black, path fading=north] (C7) -- ($(C7)+(145:1)$);
\draw[black, path fading=north] (P10) -- ($(P10)+(75:1)$);
\draw[black, path fading=west] (P10) -- ($(P10)+(170:2)$);
\draw[black, path fading=south] (A1) -- ($(A1)+(-75:1)$);
\draw[black, path fading=west] (A1) -- ($(A1)+(-140:1)$);
\draw[black, path fading=west] (A4) -- ($(A4)+(-130:1)$);
\draw[black, path fading=west] (A5) -- ($(A5)+(-160:1)$);
\draw[black, path fading=north] (A5) -- ($(A5)+(120:1.5)$);
\end{scope}
\end{tikzpicture}}
\caption{An illustration of a partially annulus-embedded graph.}
\label{figure_partially}
\end{figure}

Let $\Delta$ be a closed annulus with left boundary $B_1$ and right boundary $B_2$.
Also, let $G$ be a partially $\Delta$-embedded graph.
We denote by ${\sf Left}_\Delta (G)$  the connected component of $G\setminus \inter(\Delta)$ that contains $B_1$ and by ${\sf Right}_\Delta (G)$ the graph $G\setminus ((V(G)\cap \Delta)\cup V({\sf Left}_\Delta (G)))$.

\paragraph{Parallel cycles.}
Let $\Delta$ be a closed annulus  with left boundary $B_1$ and right boundary $B_2$ and let $G$ be a partially $\Delta$-embedded graph.
Also, let $\mathcal{C} = [C_1,\ldots, C_r]$, $r\geq 2$ be a collection of vertex disjoint cycles of $G$ that are embedded in $\Delta$.
We say that $\mathcal{C}$ is a \emph{$\Delta$-parallel sequence of cycles} of $G$ if
$C_1 = B_1$, $C_r = B_2$ and, for every $i\in [2,r]$, $C_1$ and $C_i$ are the boundaries of a closed annulus, denoted by $\Delta_i$, that is a subset of $\Delta$ such that $\Delta_2\subseteq \cdots \subseteq \Delta_{r}=\Delta$.
We call $C_1$ the \emph{leftmost} and $C_r$ the \emph{rightmost} cycle of $\mathcal{C}$.
From now on, each $\Delta$-parallel sequence $\mathcal{C}$ of cycles will be accompanied with the sequence $[\Delta_2, \ldots, \Delta_{r}]$ of the corresponding closed annuli.
Given $i,j\in[2,r]$, where $i\leq j$, we call the set $\Delta_i\setminus {\sf int}(\Delta_j)$ \emph{$(i,j)$-annulus} of $\mathcal{C}$ and we denote it by $\ann(\mathcal{C},i,j)$. Also, for every $i\in[2,r]$, we set $\Delta_i$ to be the \emph{$(1,i)$-annulus} of $\mathcal{C}$ and we also denote it by $\ann(\mathcal{C},1,i)$.

\paragraph{Railed annuli.}
Let $G$ be a graph.
Also, let $r\in\mathbb{N}_{\geq 3}$ and $q\in \mathbb{N}_{\geq 3}$ and assume that $r$ is an odd number.
An \emph{$(r,q)$-railed annulus} of $G$ is a pair $\mathcal{A}=(\mathcal{C},\mathcal{P})$ where 
$\mathcal{C}=[C_{1},\ldots,C_{r}]$  is a sequence of cycles of $G$ and $\mathcal{P}=[P_{1},\ldots,P_{q}]$ is a  collection of pairwise vertex-disjoint 
 paths in $G$ such that 
 \begin{itemize}
\item  
For every $j\in[q],$ the endpoints of $P_{j}$ are vertices of $C_r$ and $C_1$ and
 
\item for every $(i,j)\in[r]\times[q],$   $C_{i}\cap P_{j}$ is  a non-empty path, that we denote $P_{i,j}$.%\marg{ $P_{i,j}$.}
\end{itemize}
See~\autoref{@incomprehensible} for an example. 
We refer to the paths of $\mathcal{P}$ as the \emph{rails} of $\mathcal{A}$ and to the cycles of $\mathcal{C}$ as the \emph{cycles} of $\mathcal{A}$.
We use $V(\mathcal{A})$ to denote the vertex set $\bigcup_{i\in[r]}V(C_i)\cup \bigcup_{j\in[q]}V(P_j)$ and
$E(\mathcal{A})$ to denote the edge set $\bigcup_{i\in[r]}E(C_i)\cup \bigcup_{j\in[q]}E(P_j)$.
We can see each path $P_{j}$ in $\mathcal{P}$ as being oriented towards the ``inner'' part of ${cal A}$,
i.e., starting from a vertex of  $C_{p}$ and finishing to a vertex of $P_{1,j}$.
For every $(i,j)\in[r]\times[q]$, we denote by $s_{i,j}$ (resp. $t_{i,j}$) the first (resp. last) vertex of $P_{i,j}$ when traversing $P_j$ according to this orientation.
If $(i,i')\in[r]^2$ with $i<i'$ then we define $\mathcal{A}_{i,i'}$ to be the railed annulus $([C_{i},\cdots,C_{i'}],P_1',\ldots, P_q'),$
where  for every $j\in[q]$, $P_j'$ is the subpath of $P_j$ between $s_{i,j}$ and $t_{i',j}$.

\paragraph{Annulus-embedded railed annuli.}
Let $\Delta$ be a closed annulus
and let $G$ be a partially $\Delta$-embedded graph.
Also, let $r\in\mathbb{N}_{\geq 3}$ and $q\in \mathbb{N}_{\geq 3}$ and assume that $r$ is an odd number.
A $(r,q)$-railed annulus $\mathcal{A}=(\mathcal{C},\mathcal{P})$ of $G$ is called \emph{$\Delta$-embedded} if $\mathcal{C}=[C_{1},\ldots,C_{r}]$  is a $\Delta$-parallel sequence of cycles of $G$.
We use $\ann(\mathcal{A})$ to denote $\ann(\mathcal{C},1,r)$.

The following proposition~\cite[Proposition 5.1]{BasteST20acomp} states that large railed annuli can be found inside a slightly larger wall  and will be used in the next section. For the definition of a wall see~\autoref{label_mistreatment}.

\begin{proposition}\label{label_simultaneously}
If $x,z \geq 3$ are odd integers, $y\geq 1,$ and $W$  is an   ${\sf odd}(2x+\max\{z,\frac{y}{4}-1\})$-wall,  then
\begin{itemize}
\item there is a collection $\mathcal{P}$ of $y$ paths in $W$ such that if $\mathcal{C}$ is the collection of the first $x$ layers of $W,$ then $(\mathcal{C},\mathcal{P})$ is an $(x,y)$-railed annulus of $W$ where the first cycle of $\mathcal{C}$ is the perimeter of $W,$ and
\item the open disk defined by the $x$-th cycle of $\mathcal{C}$ contains the vertices of the compass of the central $z$-subwall of $W.$
\end{itemize}
\end{proposition}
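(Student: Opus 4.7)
The plan is to directly extract the railed annulus from the bidimensional grid-like structure of the wall $W$. Set $n = {\sf odd}(2x+\max\{z,\frac{y}{4}-1\})$, so $W$ is an $n$-wall. I would take $\mathcal{C}=[C_1,\ldots,C_x]$ to be the first $x$ layers of $W$ (i.e., the sequence of homocentric cycles obtained by peeling off the outer layers), with $C_1$ being the perimeter. By the standard definition of the layers of a wall, after removing the first $x$ layers we obtain a subwall whose size is at least $n-2x\geq \max\{z, \frac{y}{4}-1\}\geq z$; this subwall has, as a subgraph, the central $z$-subwall $W_z$ of $W$, and its compass lies in the open disk bounded by $C_x$. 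This immediately yields the second bullet.

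The more substantive part is constructing the $y$ rails. I would exploit the ``brick wall'' structure of $W$: between two consecutive layers $C_i$ and $C_{i+1}$ there are $\Theta(n-2i)$ natural ``radial'' edges (vertical or diagonal, depending on the parity of the row). Starting from $y$ suitably chosen vertices on $C_1$, each rail $P_j$ is built as an alternation of short arcs along layer cycles and radial edges between consecutive layers, proceeding from $C_1$ inward to $C_x$. The key inequality to check is that the ``circumference'' available at the innermost layer $C_x$ is at least $y$: each of the cycles $C_i$, for $i\in[x]$, has length at least $4(n-2i)\geq 4(n-2x)$, and the condition $n-2x\geq \frac{y}{4}-1$ guarantees $4(n-2x)\geq y-4$, providing sufficient room to distribute $y$ disjoint rails along each layer. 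The rails are made pairwise vertex-disjoint by choosing them to use pairwise disjoint sets of radial edges, and by placing their layer-arcs on disjoint arcs of each cycle.

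The main obstacle, and the part that demands care, is verifying the railed-annulus condition that $P_{i,j}:=C_i\cap P_j$ is a \emph{non-empty path} for every $(i,j)\in[x]\times[y]$. This means that the rail $P_j$ cannot simply ``kiss'' the cycle $C_i$ at a single vertex and leave, nor can it enter $C_i$, depart, and return: the entry and exit points into $C_i$ must be chosen so that the segment of $P_j$ lying on $C_i$ is a consecutive arc of $C_i$. I would enforce this by a careful choice of the short arcs of $C_i$ used by each rail, using the fact that the radial edges between $C_i$ and $C_{i\pm 1}$ naturally come in adjacent pairs at every brick, so a rail can be rerouted locally to pass through any prescribed pair of adjacent vertices of $C_i$. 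Once this local routing is worked out, the rest is bookkeeping. Since this statement is a slight repackaging of the construction in~\cite{BasteST20acomp}, I would ultimately invoke their explicit construction, only adapting the constants to the form ${\sf odd}(2x+\max\{z,\frac{y}{4}-1\})$.
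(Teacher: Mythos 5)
This proposition is not proved in the paper at all---it is quoted verbatim from \cite[Proposition 5.1]{BasteST20acomp}---so your final step of invoking that explicit construction is precisely the paper's own treatment, and your sketch is a fair reconstruction of it: peeling $x$ layers off the wall leaves an $(n-2x)$-wall of height at least $z$ (giving the second bullet), and the factor $\frac{y}{4}$ reflects that the rails approach the $x$-th cycle from all four sides, via the upper and lower halves of the vertical paths and the left and right halves of the horizontal paths of $W$, which is the capacity count you phrase as the circumference of $C_x$. The one hand-wavy spot, enforcing that $C_i\cap P_j$ is a non-empty path by ``local rerouting through adjacent radial pairs,'' is obtained more directly in the cited construction, since a rail taken as a radial half of a vertical (resp.\ horizontal) path of $W$ meets each layer cycle exactly in the vertex or edge where it crosses the horizontal (resp.\ vertical) portion of that cycle; as you ultimately defer to that construction anyway, this imprecision does not amount to a gap.
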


\subsection{Combing linkages}\label{subsec_combin}
In this subsection we define linkages and we present the Linkage Combing Lemma from~\cite{GolovachST22comb} (\autoref{prop_combinglemma}) -- see also~\cite{GolovachST20hitti}.

\paragraph{Linkages.} 
A \emph{linkage} in a graph $G$ is a subgraph $L$ of $G$ whose connected components are non-trivial paths.
The \emph{paths} of a linkage are its connected components and we denote them by $\mathcal{P}(L)$.
We call $|\mathcal{P}|$ the \emph{size} of $\mathcal{P}(L)$.
The \emph{terminals} of a linkage $L$, denoted by $T(L)$, are the endpoints of the paths of $L$,
and the \emph{pattern} of $L$ is the set $\{\{s,t\}\mid \mathcal{P}(L)\mbox{  contains some  }(s,t)\mbox{-path}\}.$
%We use $L_G$ to denote the graph whose vertex set is $T(L)$ and its edge set is the pattern of $L$.
Two linkages $L_{1}, L_{2}$ of $G$ are \emph{equivalent} if they have the same pattern and we denote this fact by $L_{1}\equiv L_{2}$.
%A linkage $L$ of $G$ is \emph{$r$-scattered} if for every two connected components $C,C'$ of $L$ it holds that ${\bf dist}_{G}(C,C')\geq r$.
Let $\Delta$ be a closed annulus or a closed disk,
let $G$ be a partially $\Delta$-embedded graph, $L$ be a linkage of $G$, and $D$ be a subset of $\Delta$.
We say that $L$ is \emph{$D$-avoiding} if $T(L)\cap D=\emptyset$
(see \autoref{@incomprehensible}). 

\begin{figure}[ht]
	\centering
	\begin{tikzpicture}[scale=.51]
		
	\foreach \x in {2,2.5,...,6}{
		\draw[line width =0.6pt] (0,0) circle (\x cm);
	}
	\begin{scope}[on background layer]
	\fill[blue!10!white] (0,0) circle (6 cm);
	\fill[white] (0,0) circle (2 cm);
	\end{scope}
	
	\node[black node] (P11) at (45:6) {};
	\node[black node] (m21) at (40:5.5) {};
	\node[black node] (P21a) at (30:5) {};
	\node[black node] (P21b) at (40:5) {};
	\node[black node] (m31) at (40:4.5) {};
	\node[black node] (P31a) at (35:4) {};
	\node[black node] (P31b) at (50:4) {};
	\node[black node] (m41) at (45:3.5) {};
	\node[black node] (P41a) at (45:3) {};
	\node[black node] (P41b) at (25:3) {};
	\node[black node] (m51) at (25:2.5) {};
	\node[black node] (P51) at (40:2) {};
	
	\draw[line width=1pt] (P11) -- (m21) -- (P21a) -- (P21b) -- (m31) --(P31a) -- (P31b) -- (m41) -- (P41a) -- (P41b) -- (m51) -- (P51);
	
	\node[black node] (P12) at (70:6) {};
	\node[black node] (m22) at (80:5.5) {};
	\node[black node] (P22a) at (80:5) {};
	\node[black node] (P22b) at (75:5) {};
		\node[black node] (m32) at (80:4.5) {};
	\node[black node] (P32a) at (90:4) {};
	\node[black node] (P32b) at (75:4) {};
		\node[black node] (m42) at (80:3.5) {};
	\node[black node] (P42a) at (85:3) {};
	\node[black node] (P42b) at (70:3) {};
		\node[black node] (m52) at (70:2.5) {};
	\node[black node] (P52) at (75:2) {};
	
	\draw[line width=1pt] (P12) -- (m22) -- (P22a) -- (P22b) -- (m32)-- (P32a) -- (P32b) -- (m42) -- (P42a) -- (P42b) -- (m52) -- (P52);
	
	\node[black node] (P13a) at (120:6) {};
	\node[black node] (P13b) at (110:6) {};
		\node[black node] (m23) at (105:5.5) {};
	\node[black node] (P23a) at (110:5) {};
	\node[black node] (P23b) at (115:5) {};
		\node[black node] (m33) at (120:4.5) {};
	\node[black node] (P33a) at (120:4) {};
	\node[black node] (P33b) at (130:4) {};
		\node[black node] (m43) at (130:3.5) {};
	\node[black node] (P43a) at (135:3) {};
	\node[black node] (P43b) at (120:3) {};
		\node[black node] (m53) at (120:2.5) {};
	\node[black node] (P53) at (125:2) {};
	
	\draw[line width=1pt] (P13a) -- (P13b) -- (m23) -- (P23a) -- (P23b) --  (m33) -- (P33a) -- (P33b) -- (m43) -- (P43a) -- (P43b) -- (m53) -- (P53);
	
	\node[black node] (P14a) at (170:6) {};
	\node[black node] (P14b) at (160:6) {};
			\node[black node] (m24) at (160:5.5) {};
	\node[black node] (P24) at (155:5) {};
			\node[black node] (m34) at (153:4.5) {};
	\node[black node] (P34) at (160:4) {};
			\node[black node] (m44) at (163:3.5) {};
	\node[black node] (P44a) at (165:3) {};
	\node[black node] (P44b) at (180:3) {};
			\node[black node] (m54) at (175:2.5) {};
	\node[black node] (P54) at (170:2) {};
	
	\draw[line width=1pt] (P14a) -- (P14b) -- (m24) -- (P24) -- (m34) --  (P34) -- (m44) -- (P44a) -- (P44b) -- (m54) -- (P54);
	
	\node[black node] (P18a) at (200:6) {};			
		\node[black node] (m28) at (197:5.5) {};
	\node[black node] (P28a) at (195:5) {};
	\node[black node] (P28b) at (220:5) {};
		\node[black node] (m38) at (215:4.5) {};
	\node[black node] (P38a) at (210:4) {};
	\node[black node] (P38b) at (225:4) {};
		\node[black node] (m48) at (215:3.5) {};
	\node[black node] (P48a) at (205:3) {};
	\node[black node] (P48b) at (220:3) {};
		\node[black node] (m58) at (210:2.5) {};
	\node[black node] (P58) at (200:2) {};
	
	\draw[line width=1pt] (P18a) -- (m28) -- (P28a) to [bend right=10]  (P28b) -- (m38) --  (P38a) -- (P38b) -- (m48) -- (P48a) -- (P48b) -- (m58) -- (P58);

	\node[black node] (P15) at (235:6) {};
		\node[black node] (m25) at (235:5.5) {};
	\node[black node] (P25a) at (240:5) {};
	\node[black node] (P25b) at (250:5) {};
		\node[black node] (m35) at (245:4.5) {};
	\node[black node] (P35b) at (245:4) {};
		\node[black node] (m45) at (247:3.5) {};
	\node[black node] (P45a) at (250:3) {};
	\node[black node] (P45b) at (240:3) {};
		\node[black node] (m55) at (236:2.5) {};
	\node[black node] (P55) at (235:2) {};
	
	\draw[line width=1pt] (P15) -- (m25)-- (P25a)  -- (P25b) -- (m35) --  (P35b) -- (m45) -- (P45a) -- (P45b) -- (m55) -- (P55);
	
	\node[black node] (P16a) at (290:6) {};
	\node[black node] (P16b) at (300:6) {};
		\node[black node] (m26) at (300:5.5) {};
	\node[black node] (P26a) at (295:5) {};
	\node[black node] (P26b) at (310:5) {};
		\node[black node] (m36) at (305:4.5) {};
	\node[black node] (P36a) at (300:4) {};
	\node[black node] (P36b) at (290:4) {};
		\node[black node] (m46) at (300:3.5) {};
	\node[black node] (P46a) at (310:3) {};
	\node[black node] (P46b) at (325:3) {};
		\node[black node] (m56) at (325:2.5) {};
	\node[black node] (P56) at (320:2) {};
	
	\draw[line width=1pt] (P16a) -- (P16b) -- (m26) -- (P26a) -- (P26b) --  (m36) -- (P36a) -- (P36b) -- (m46) -- (P46a) -- (P46b) -- (m56) -- (P56);
	
	\node[black node] (P17a) at (10:6) {};
	\node[black node] (P17b) at (-5:6) {};
		\node[black node] (m27) at (-5:5.5) {};
	\node[black node] (P27a) at (0:5) {};
	\node[black node] (P27b) at (-10:5) {};
		\node[black node] (m37) at (-10:4.5) {};
	\node[black node] (P37a) at (-15:4) {};
	\node[black node] (P37b) at (0:4) {};
		\node[black node] (m47) at (5:3.5) {};
	\node[black node] (P47a) at (10:3) {};
	\node[black node] (P47b) at (-5:3) {};
		\node[black node] (m57) at (0:2.5) {};
	\node[black node] (P57) at (-5:2) {};
	
	\draw[line width=1pt] (P17a) -- (P17b) -- (m27) --  (P27b)  -- (P27a) -- (m37) --  (P37a) -- (P37b) -- (m47) --  (P47a) -- (P47b) -- (m57) -- (P57);
	
	%Linkage
	\draw[red, line width=1.5pt] plot [smooth, tension=1.2] coordinates { (120:1) (140:2) (120:3.5) (100:4.5) (145:5) (150:6.5)};
	\draw[red, line width=1.5pt] plot [smooth, tension=1] coordinates { (245:6.5) (230:5) (230:3) (240:1.5) (270:2.5) (285:4) (280:6.5)};
	\draw[red, line width=1.5pt] plot [smooth, tension=1.2] coordinates { (40:6.5) (20:5) (-10:2.5) (-40:6.5)};
	\node[black node]  () at (120:1) {};
	\node[black node]  () at (150:6.5) {};
	\node[black node]  () at (245:6.5) {};
	\node[black node]  () at (280:6.5) {};
	\node[black node]  () at (40:6.5) {};
	\node[black node]  () at (-40:6.5) {};

	\end{tikzpicture}
	\caption{An example of a $\Delta$-embedded railed annulus $\mathcal{A}$
		and a linkage $L$ (depicted in red) that is $\Delta$-avoiding.}
	\label{@incomprehensible}
\end{figure}
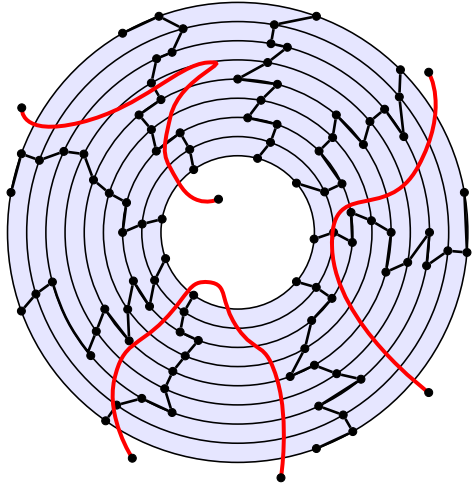

\paragraph{Linkages confined in annuli.}
Let $t\in\mathbb{N}_{\geq 1}$, let  $p=2t+1$, and let $s\in[p]$ where $s=2t'+1$.
Also, let $\Delta$ be a closed annulus and $\mathcal{A}=(\mathcal{C},\mathcal{P})$ be a  $\Delta$-embedded $(p,q)$-railed annulus of a partially $\Delta$-embedded graph $G$.
Given some $I\subseteq [q]$, we say that a linkage  $L$ of $G$ is \emph{$(s,I)$-confined in $\mathcal{A}$} if 
$$L\cap \ann(\mathcal{C},t+1-t',t+1+t')\subseteq \bigcup_{i\in I}P_{i}.$$

%\paragraph{Topological minor models.}
%We say that $(M,T)$   is a \emph{{\sf tm}-pair} if $M$ is  a graph, $T\subseteq V(M)$, and  all vertices in 
%$V(M)\setminus T$ have degree two. We denote by ${\sf diss}(M,T)$ the graph obtained 
%from  $M$ by dissolving all vertices  in $V(M)\setminus T$.
%A \emph{{\sf tm}-pair} of a graph $G$  is a  {\sf tm}-pair $(M,T)$ where 
%$M$ is a subgraph of $G$. 
%Given two graphs $H$ and $G,$ we say that a {\sf tm}-pair $(M,T)$ of $G$,  is a \emph{topological minor model of $H$ in $G$} if $H$ is isomorphic to ${\sf diss}(M,T)$.
%
%\paragraph{Topological minor models in railed annuli.}
%Let $\Delta$ be a closed annulus, let $G$ be a partially $\Delta$-embedded graph, let $\mathcal{A}=(\mathcal{C},\mathcal{P})$ be a $\Delta$-embedded $(p,q)$-railed annulus of $G$, and keep in mind that $p$ is an odd integer in $\mathbb{N}_{\geq 3}$.
%Let also $s\in[p]$ where $s=2t'+1$, for some $t'\in\mathbb{N}$.
%Given some $I\subseteq [q]$, we say that a topological minor model $(M,T)$ of a graph $H$ in $G$ is \emph{$(s,I)$-confined in $\mathcal{A}$} if 
%$$M\cap \ann(\mathcal{C},\frac{p+1}{2} -t',\frac{p+1}{2}+t')\subseteq \bigcup_{i\in I}P_{i}.$$
%

Intuitively, the above definition demands that $L$ traverses the ``middle'' $(s,q)$-annulus 
by intersecting it only at the rails of $\mathcal{A}$.
\medskip

We now state the Linkage Combing Lemma from~\cite{GolovachST22comb} (see also~\cite{GolovachST20hitti}).
Intuitively, it says that in the presence of a ``big enough'' $\Delta$-embedded railed annulus $\mathcal{A}$ in a partially $\Delta$-embedded graph $G$, where $\Delta$ is a closed annulus, every linkage of $G$ can be ``combed'' through the rails of $\mathcal{A}$ in some central buffer inside $\mathcal{A}$.

\begin{proposition}[Linkage Combing]
\label{prop_combinglemma}

There exist two functions  $\newfun{@norteamericana}, \newfun{@heteronomously}:\mathbb{N}\to\mathbb{N}$, where
the images of  $\funref{@heteronomously}$ are even,
such that 
for every odd $s\in \mathbb{N}_{\geq  1}$ and every $k\in\mathbb{N}$, if
\begin{itemize}
\item $\Delta$ is a closed annulus,
\item $G$ is a graph that is partially $\Delta$-embedded,
\item $\mathcal{A}=(\mathcal{C},\mathcal{P})$ is a  $\Delta$-embedded $(p,q)$-railed 
annulus of $G$, where  $p\geq \funref{@heteronomously}(k)+s$ and $q\geq  5/2\cdot\funref{@norteamericana}(k)$,
\item $L$ is a $\Delta$-avoiding linkage of size at most $k$, and
\item $I\subseteq [q]$, where $|I|> \funref{@norteamericana}(k)$,
\end{itemize}
then
$G$ contains a linkage
$\tilde{L}$ where $\tilde{L}\equiv L$, $\tilde{L}\setminus \Delta \subseteq L \setminus \Delta$, and  $\tilde{L}$ is $(s,I)$-confined in $\mathcal{A}$.
Moreover, $ \funref{@heteronomously}(k)= \mathcal{O}((\funref{@norteamericana}(k))^2)$.
\end{proposition}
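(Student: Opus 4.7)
The plan is to prove the Linkage Combing Lemma through a sequence of rerouting operations that progressively push the linkage $L$ onto the rails of $\mathcal{A}$ indexed by $I$, exploiting both the planar structure of the compass of the partially $\Delta$-embedded graph $G$ and the redundancy offered by the large numbers of cycles and rails in $\mathcal{A}$. Write $f_1 = \funref{@norteamericana}$ and $f_2 = \funref{@heteronomously}$ for the two functions to be constructed.

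The first step is a pigeonhole/layering argument. Since $L$ is $\Delta$-avoiding and has size at most $k$, every path of $L$ enters and exits $\Delta$ at most a bounded (in $k$) number of times, so the intersection of $L$ with the compass decomposes into $\mathcal{O}(k)$ sub-segments, each of whose endpoints lies on $\bd(\Delta)$. Viewing the $\Delta$-parallel sequence $C_1,\ldots,C_p$ as successive layers, each sub-segment crosses these layers in some pattern. I would apply a pigeonhole over the $p - s$ layers outside the central buffer $\ann(\mathcal{C},t+1-t',t+1+t')$ to extract a consecutive block of $\Theta(f_1(k))$ layers in which every sub-segment of $L$ crosses each cycle of the block in a topologically uniform and monotone way. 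Iterating this selection once per sub-segment produces the quadratic budget $f_2(k) = \mathcal{O}(f_1(k)^2)$ on the total number of layers required.

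The second step is the combing itself. Inside the clean sub-annulus, $\mathcal{A}$ restricted to this range of layers has the topological structure of a cylindrical grid, and with $|I| > f_1(k)$ rails in our budget there is enough room to reassign each sub-segment of $L$ to a unique rail of $I$. Using the planarity of the compass and the fact that the sub-segments cross our selected block of cycles monotonically, a greedy assignment of sub-segments to rails -- processed in the cyclic order in which they appear on the outer boundary of the block -- allows each sub-segment to be rerouted along ``its'' rail inside the central buffer. The factor $5/2$ in the bound $q \geq 5/2 \cdot f_1(k)$ reflects the extra slack needed so that greedy assignment always succeeds. Pattern preservation is automatic since we only alter the interiors of the paths, and the condition $\tilde{L}\setminus\Delta \subseteq L\setminus\Delta$ holds because all modifications happen inside the compass.

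The main obstacle will be the simultaneous rerouting step: combing one sub-segment is easy, but ensuring that all rerouted sub-segments remain pairwise disjoint and retain the correct endpoint pairing requires a careful planarity-based exchange argument inside the clean sub-annulus. This is where one needs that the rails of $\mathcal{A}$, together with the cycles of the selected block, genuinely behave as a planar grid, and that the selected block is wide enough to absorb all the crossings of sub-segments between different rails before they reach the central buffer. The rest of the argument is routine topological bookkeeping, and the bound $f_2(k) = \mathcal{O}(f_1(k)^2)$ follows by optimizing the pigeonhole constants.
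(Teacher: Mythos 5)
First, note that the paper does not prove \autoref{prop_combinglemma} at all: it is imported verbatim from~\cite{GolovachST22comb} (see also~\cite{GolovachST20hitti}), so there is no in-paper argument for your sketch to be measured against; the question is whether your outline would itself constitute a proof, and it would not.

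The decisive gap is in your first step. You assert that, because $L$ is $\Delta$-avoiding and has size at most $k$, its intersection with the annulus decomposes into $\mathcal{O}(k)$ sub-segments. This is false: ``$\Delta$-avoiding'' only says the terminals of $L$ lie outside $\Delta$, and a single path of $L$ may wind around the annulus and cross it an arbitrary number of times, so the number of segments of $L\cap\Delta$ is not bounded by any function of $k$. Reducing to a bounded number of crossings while preserving the pattern (and without touching $L\setminus\Delta$) is precisely the hard content of the combing lemma -- it requires a nontrivial unwinding/rerouting argument for linkages in an annulus, not a pigeonhole over layers. With this premise gone, your layer-selection step and the claimed budget $f_2(k)=\mathcal{O}(f_1(k)^2)$ have no basis. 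Two further points would still need repair even granting bounded crossings: (i) the confinement required by the statement is in the \emph{central} $s$ cycles $\ann(\mathcal{C},t+1-t',t+1+t')$, not in some conveniently chosen ``clean block'' found by pigeonhole, so the combed segments must be steered onto the rails of $I$ inside that prescribed buffer; and (ii) the ``greedy assignment plus planarity-based exchange'' is exactly the simultaneous disjoint-rerouting problem that you yourself flag as the main obstacle and then leave unresolved -- in particular one must handle segments that enter and exit on the same boundary of $\Delta$ together with segments that traverse it, keep the pairings of the pattern intact, and do all of this only by modifying $L$ inside $\Delta$. As it stands, the proposal restates the difficulty rather than overcoming it, so it is not a proof of the proposition.
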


\subsection{Linkages in boundaried graphs}\label{subsec_tm}
In this subsection, we aim to define the set of \emph{models} of a boundaried graph, that is all linkages that can be routed inside this graph and contain the boundary vertices as terminals.
This collection of graphs, can be encoded in abstract terms of collections of binary relations between indices of the boundaried vertices, representing the existence of (disjoint) paths between the corresponding boundary vertices.
This encoding is, in fact, present in the encoding of the pattern (see~\autoref{obs_translatemodelstopattern}).

\paragraph{Linkages of boundaried graphs.}
A \emph{pairing} is a $t$-boundaried graph  ${\bf L} = (L,v_1,\ldots, v_t)$ where $L$ is a linkage and $T(L)\subseteq \{v_1,\ldots,v_t\}$.
We use $\mathcal{B}_{\sf pair}^{(t)}$ to denote the set of all (pairwise non-isomorphic) $t$-boundaried graphs that are pairings.
A path of a linkage $L$ is \emph{non-trivial} if it is not a single edge.
Given a $k$-boundaried graph $(G,u_1,\ldots, u_k)$, a linkage $L$ of $G$ and
some $v_1,\ldots, v_t\in T(L)\cup\{\mathspace\}$ such  that $\{u_1,\ldots, u_k\}\subseteq \{v_1,\ldots, v_t\}$,
we say that $(L,v_1,\ldots, v_t)$ is a \emph{boundaried linkage of $(G,u_1,\ldots, u_k)$}.
%Let two $t$-boundaried graphs $(G_1,x_1,\ldots,x_t)$ and $(G_2,y_1,\ldots, y_t)$.
%We say that $(G_1,x_1,\ldots, x_t)$ is a \emph{topological minor} of $(G_2, y_1,\ldots, y_t)$,
%denoted by $(G_1,x_1,\ldots, x_t)\preceq_{\sf tm}(G_2, y_1,\ldots, y_t)$, if there is a  {\sf btm}-tuple $(M,v_1,\ldots, v_r)$  of  $(G_2, y_1,\ldots, y_t)$ such that ${\sf diss}(M,v_1,\ldots, v_r)$ is isomorphic to $(G_1,x_1,\ldots, x_t)$.

Let $(G,v_1,\ldots,v_t)$ be a $t$-boundaried graph.
We define the set of \emph{pairings} of $(G,v_1,\ldots, v_t)$, denoted by $\Models(G,v_1,\ldots, v_t)$,
to be the set $$\Models (G,v_1,\ldots, v_t)=\left\{(L,v_1,\ldots,v_t) \in\mathcal{B}_{\sf pair}^{(t)}\ \middle\vert
\begin{array}{l} (L,v_1,\ldots,v_t) \text{ is a boundaried linkage}\\
\text{of $(G,v_1,\ldots,v_t)$ and every path of $L$}\\
\text{is non-trivial}
\end{array}\right\}.$$

\paragraph{Imprint of linkages.}
Let $r,h,l\in\mathbb{N}$.
Let ${\bf G}=(G,X_1,\ldots, X_h,{\bf a},v_1,\ldots, v_r)\in \mathcal{B}^{(r,h,l)}$.
Given an ${\bf L} \in \Models(G,v_1,\ldots, v_r)$, we
define the \emph{imprint} of ${\bf L}$, denoted by $\imprint({\bf L})$,
to be the graph whose vertex set is $I=\{i\in [r]\mid v_i\neq \mathspace\}$ and
two vertices $i,j\in I$ are adjacent if $\{v_i,v_j\}$ belongs to the pattern of $L$.
We define the \emph{compression} of ${\bf G}$, denoted by ${\sf compression}({\bf G})$ to be the quintuple $(\mathcal{V}, \kappa,\delta,H^e, \mathcal{H}^P)$, where
\begin{itemize}
\item $\mathcal{V}$ is the partition of $I$ to sets of pairwise equal vertices,
\item $\kappa$ is the partial function mapping each $i\in I$ to $j\in[l]$ such that $v_i = a_j$,
\item $\delta$ is the partial function mapping each $i\in I$ to $j\in [h]$ such that $v_i\in X_j$,
\item $H^e ={\sf Ind}_G(I)$, and
\item $\mathcal{H}^P = \{\imprint({\bf L})\mid {\bf L}\in \Models(G,v_1,\ldots, v_r)\}$.
\end{itemize}
It is easy to observe the following.
\begin{observation}\label{obs_translatemodelstopattern}
For every ${\bf G}\in \mathcal{B}^{(r,h,l)}$,  ${\sf compression}({\bf G}) = {\sf pattern}({\bf G})$.
\end{observation}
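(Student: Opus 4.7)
The plan is to observe that $\mathsf{compression}({\bf G})$ and $\mathsf{pattern}({\bf G})$ are five-tuples whose first four components are defined by the \emph{same} formulas in terms of $I=\{i\in[r]\mid v_i\neq\mathspace\}$, the tuple ${\bf a}$, the colors $X_1,\ldots,X_h$, and $G[\{v_i : i\in I\}]$. Hence the only thing to check is that the two definitions of the fifth component $\mathcal{H}^P$ coincide.

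The argument for equality of the two $\mathcal{H}^P$'s is a straightforward unpacking of definitions, via two inclusions:

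\begin{itemize}
\item[$(\supseteq)$] Start with $(I,E)\in\mathcal{H}^P_{\sf pat}$, i.e., there exist pairwise vertex-disjoint paths $P_{\{i,j\}}$ of $G$, one for each $\{i,j\}\in E$, each of length at least two, with endpoints $v_i$ and $v_j$. Set $L:=\bigcup_{\{i,j\}\in E} P_{\{i,j\}}$. Since the $P_{\{i,j\}}$ are pairwise vertex-disjoint, their union $L$ is a linkage whose connected components are exactly these paths, and all of them are of length at least two (hence non-trivial, i.e., not single edges). Moreover $T(L)\subseteq\{v_1,\dots,v_r\}$ by construction, so ${\bf L}:=(L,v_1,\ldots,v_r)$ is a pairing belonging to $\mathsf{Pairings}(G,v_1,\ldots,v_r)$. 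The pattern of $L$ is precisely $\{\{v_i,v_j\} : \{i,j\}\in E\}$, whence $\mathsf{imp}({\bf L})=(I,E)$ and therefore $(I,E)\in\mathcal{H}^P_{\sf comp}$.
\item[$(\subseteq)$] Conversely, take $(I,E)=\mathsf{imp}({\bf L})$ for some ${\bf L}=(L,v_1,\ldots,v_r)\in\mathsf{Pairings}(G,v_1,\ldots,v_r)$. The components of the linkage $L$ are pairwise vertex-disjoint paths; because ${\bf L}\in\mathsf{Pairings}(\cdot)$, all of them are non-trivial, i.e., of length at least two. By the definition of $\mathsf{imp}({\bf L})$, $\{i,j\}\in E$ iff $\{v_i,v_j\}$ belongs to the pattern of $L$, that is, iff some component of $L$ is a $(v_i,v_j)$-path. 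Consequently, the collection of these components witnesses that $G$ contains pairwise vertex-disjoint paths of length at least two between $v_i$ and $v_j$ for all $\{i,j\}\in E$, giving $(I,E)\in\mathcal{H}^P_{\sf pat}$.
\end{itemize}

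There is no real obstacle here: the statement is essentially a bookkeeping identity showing that the two equivalent ways of recording ``which subsets of boundary pairs can be simultaneously linked by internally disjoint paths'' yield the same set. The only subtle point to verify is the length-at-least-two condition on both sides, which matches the ``non-trivial'' requirement in the definition of $\mathsf{Pairings}(G,v_1,\ldots,v_r)$ with the explicit length requirement in the definition of $\mathcal{H}^P_{\sf pat}$.
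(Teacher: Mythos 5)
Your proof is correct and is exactly the definition-unpacking the paper intends: the paper asserts this observation without proof, remarking only that imprints are graphs on indices while pairings are subgraphs of $G$, and your two inclusions (matching the length-at-least-two requirement in $\mathcal{H}^P$ with the non-triviality of the paths of a pairing in $\Models(G,v_1,\ldots,v_r)$) simply make that explicit. The one caveat --- if the boundary tuple repeats a vertex, the imprint automatically contains every index pair realizing the same vertex pair, so your step ``$\imprint({\bf L})=(I,E)$'' should be read modulo the equality partition recorded in the first component of the pattern --- is an imprecision already present in the paper's own formulation of the observation, not a gap introduced by your argument.
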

We stress that the only difference between the collection $\mathcal{H}^P$ of imprints of all pairings of $(G,v_1,\ldots, v_r)$ and the set $\Models(G,v_1,\ldots, v_r)$ is that the first is a collection of graphs whose vertex set is the set of indices $I\subseteq [r]$, while graphs in $\Models(G,v_1,\ldots, v_r)$ are subgraphs of $G$.
The distinction between the two is essential in order to 1) encode the patterns of linkages ``abstractly'' (in terms of graphs on indices) and 2) decode the presence of same variety of linkages inside graphs with the same pattern.
\medskip

Before concluding this section, we present some additional definitions on linkages of boundaried colored graphs and the reason is the following.
In~\autoref{sec_signaturesexchengability}, we describe how, given a colored graph
partially embedded in a ``big enough'' railed annulus, construct (a series of) boundaried graphs whose boundary vertices will also be some vertices of the rails of the annuli.
These boundary vertices will be chosen to be the ``few'' vertices in which every linkage can be combed, due to~\autoref{prop_combinglemma}.
Therefore, as we are about to prove in~\autoref{sec_signaturesexchengability} (in particular,~\autoref{lem_colomodelsrerout}),
after combing, linkages can be separated on a ``left'' and a ``right'' part.
Therefore, we need to define a way to \emph{glue} pairings.

\paragraph{Gluing pairings.}
We now define compatibility between pairings.
Let $\ell,r\in\mathbb{N}$.
Let ${\bf L}= (L,u_1,\ldots, u_{r+\ell})$ and ${\bf L}'= (L',v_1,\ldots, v_{r+\ell})$ be two pairings.
We say that ${\bf L}$ and ${\bf L}'$ are \emph{$\ell$-compatible} if
$u_{r+1},\ldots, u_{r+\ell}\in T(L)$,
 $v_{r+1},\ldots, v_{r+\ell}\in T(L')$,
 for every $i\in[r]$, $u_i=\mathspace$ if and only if $v_i \neq\mathspace$,
 and for every $i,j\in[\ell]$, $\{v_{r+i},v_{r+j}\}$ is in the pattern of $L$ if and only if $\{v_{r+i},v_{r+j}\}$ is not in the pattern of $L'$.
Given two families $\mathcal{F}_1, \mathcal{F}_2\subseteq \mathcal{B}^{(r+\ell)}$ of pairings,
we say that $\mathcal{F}_1$ and $\mathcal{F}_2$ are \emph{$\ell$-compatible}
if for every ${\bf L}_1\in \mathcal{F}_1$ and every ${\bf L}_2\in \mathcal{F}_2$,
${\bf L}_1$ and ${\bf L}_2$ are $\ell$-compatible.

Let ${\bf L}_1 = (L_1,u_1,\ldots, u_{r+\ell})$ and ${\bf L}_2= (L_2,v_1,\ldots, v_{r+\ell})$ be two $\ell$-compatible pairings.
We denote by $(L_1,u_1,\ldots, u_{r+\ell})\oplus_\ell (L_2,v_1,\ldots, v_{r+\ell})$ the pairing
$(L,c_1,\ldots, c_r)$,
where
\begin{itemize}
\item $L$ is the linkage obtained from the disjoint union of $L_1$ and $L_2$ after identifying, for every $i\in[r+1,\ell]$, the vertices $u_i$ and $v_i$, and
\item $c_i = u_i$, if $u_i\in T(L_1)$, while $c_i = v_i$, if $v_i\in T(L_2)$ (recall that by definition of compatibility, for every $i\in[r]$, at least one of $u_i$ and $v_i$ is equal to $\mathspace$ but not both).
\end{itemize}
Given two $\ell$-compatible families  $\mathcal{F}_1, \mathcal{F}_2\subseteq \mathcal{B}^{(r+\ell)}$ of pairings,
we denote by $\mathcal{F}_1\oplus_\ell \mathcal{F}_2$ the collection $\{{\bf L}_1\oplus_\ell {\bf L}_2 \mid {\bf L}_1\in \mathcal{F}_1 \text{ and } {\bf L}_2\in \mathcal{F}_2\}.$

\section{Dealing with apices}
\label{sec_annotated}
In this section, we show how to deal with apex vertices that can be adjacent to vertices in a flat area of the graph in a completely arbitrary way.
In~\autoref{subsec_apices}, 
we define a certain ``transformation'' from a colored graph to another colored graph (with more colors), interpreting neighborhoods of some predetermined vertices as new colors. Moreover, in~\autoref{subsec_apex_formula},
we describe how to get an equivalent version of a formula in this ``projected'' setting (see~\autoref{lem_interpretapex}).
This will allow us to define an equivalent version of any given formula in $\FOLDP$ that will transform a question from a flat graph with apices to a flat graph without apices (where adjacencies with apices are interpreted as colors in the graph).

\subsection{Projections of graphs with respect to some apex-tuple}\label{subsec_apices}
In this subsection, we define a way to ``project'' a given colored graph with respect to a given apex-tuple and a way to define the ``projected'' version of a sentence such that the initial colored graph satisfies the initial sentence if and only if the ``projected'' colored graph satisfies the ``projected'' sentence (see~\autoref{lem_interpretapex}).
%\red{Moreover, we observe that this transformation can be seen as a transduction (\autoref{@disadvantages}) and therefore,
%questions on the ``projected'' colored graph can be backwards-translated to questions in graphs of the initial vocabulary (\autoref{@recognitions}).
%This two-sided analogy will be exploited in the definition of an annotated version of the problem in~\autoref{subsec_annotate}.}
This transformation will allow us to work with the projected colored graphs, where the absence of apices in a graph of big enough treewidth implies a flat bidimensional structure (see~\autoref{prop_flatwallbdtw} in~\autoref{sec_flatwalls}).
Flatness is particularly critical for our techniques, as it will be explained in~\autoref{sec_signaturesexchengability}.
\medskip

We proceed to formalize the idea of ``projecting'' a structure with respect to an apex-tuple.
First, we describe what is the vocabulary of the ``projected'' structure.

\paragraph{Constant-projections of vocabularies.}
Let $\tau$ be a colored-graph vocabulary,
let $l\in\mathbb{N}$, and let ${\bf c}=\{{\sf c}_1,\ldots, {\sf c}_l\}$ be a collection of $l$ constant symbols.
We define the \emph{constant-projection} $\tau^{\langle {\bf c}\rangle}$ of $(\tau\cup{\bf c})$ to be the vocabulary
$(\tau \cup {\bf c} \cup \{{\sf C}_1,\ldots, {\sf C}_l\})$, where ${\sf C}_1,\ldots, {\sf C}_l$ are unary relation symbols not contained in $\tau$.
\medskip

Given a colored-graph vocabulary $\tau$ and some collection of constant symbols  ${\bf c}$, we proceed to define a way to construct a $\tau^{\langle {\bf c}\rangle}$-structure from a given $(\tau\cup{\bf c})$-structure $(\mathfrak{G},{\bf a})$. The obtained $\tau^{\langle {\bf c}\rangle}$-structure is the ``projection'' of  $(\mathfrak{G},{\bf a})$
with respect to the apex-tuple ${\bf a}$.

\paragraph{Projecting a colored graph with respect to an apex-tuple.}
Let $\tau$ be a colored-graph vocabulary,
let $l\in\mathbb{N}$, and let ${\bf c}=\{{\sf c}_1,\ldots, {\sf c}_l\}$ be a collection of $l$ constant symbols.
Let also $\tau^{\langle {\bf c}\rangle}$ be the constant-projection of $(\tau\cup{\bf c})$.
Given a $(\tau\cup{\bf c})$-structure $(\mathfrak{G},{\bf a})$, where ${\bf a} =(a_1,\ldots, a_l)$ is an apex-tuple of $G$ of size $l$ and, for every $i\in[l]$, ${\sf c}_i^{(\mathfrak{G},{\bf a})} =a_i$, we define the structure
${\sf ap}_{\bf c}(\mathfrak{G},{\bf a})$ to be the $\tau^{\langle {\bf c}\rangle}$-structure obtained as follows:

\begin{itemize}
\item $V({\sf ap}_{\bf c}(\mathfrak{G},{\bf a})) = V(G),$
\item for every $i\in [l]$ ${\sf c}_i^{{\sf ap}_{\bf c}(\mathfrak{G},{\bf a})} = a_i,$
\item ${\sf E}^{{\sf ap}_{\bf c}(\mathfrak{G},{\bf a})} = {\sf E}^\mathfrak{G} \cap ((V(G)\setminus V({\bf a}))^2 \cup V({\bf a})^2),$
\item every ${\sf R}\in \tau\setminus\{{\sf E}\}$ is interpreted in ${\sf ap}_{\bf c}(\mathfrak{G},{\bf a})$ as in $\mathfrak{G},$ and
\item for every $i\in[l]$, ${\sf C}_i$ is interpreted in ${\sf ap}_{\bf c}(\mathfrak{G},{\bf a})$ as $N_{G}(a_i)\setminus V({\bf a}).$
\end{itemize}
Notice that if $a_i = \mathspace$, ${\sf C}_i$ is interpreted in ${\sf ap}_{\bf c}(\mathfrak{G},{\bf a})$ as the empty set.
Intuitively, we introduce a color $C_i,i\in[a]$ for each $a_i$.
We keep the same universe,
we keep only edges that are either between apices or between non-apices, and
we color the neighbors of $a_i$ by color $C_i$. See~\autoref{fig_apexprojection} for an example.
\begin{figure}[ht]
\centering
\begin{tikzpicture}
\node[simple] (v1) at (0,-0.2) {};
\node[simple] (v2) at (0.3,1) {};
\node[simple] (v3) at (0.6,0.5) {};
\node[simple] (v4) at (1.2,0) {};
\node[simple] (v5) at (1.7,1.2) {};
\node[simple] (v6) at (1.8,-0.4) {};
\node[simple] (v7) at (2.2,0.6) {};
\node[simple] (v8) at (2.8,0.9) {};
\node[simple] (v9) at (2.9,0) {};
\node[simple] (v10) at (3.3,1.4) {};
\node[simple] (v11) at (3.5, 0.3) {};
\node[simple] (v12) at (3.8, 0.7) {};

\node[simple, label={$a_1$}] (a1) at (0.8,2.5) {};
\node[simple, label={$a_2$}] (a2) at (1.4,2.5) {};
\node[simple, label={$a_3$}] (a3) at (2.1,2.5) {};
\node[simple, label={$a_4$}] (a4) at (2.7,2.5) {};

\begin{scope}[on background layer]
\node[circle, minimum width=2pt,diagonal fill={cyan}{orange!80!yellow}] () at (v2) {};
\node[circle, minimum width=6pt,fill=orange!80!yellow] () at (v9) {};
\node[circle, minimum width=6pt,fill=orange!80!yellow] () at (v10) {};
\node[circle, minimum width=6pt, diagonal fill={green!80!blue}{red}] () at (v5) {};
\node[circle, minimum width=6pt,fill=red] () at (v11) {};
\node[circle, minimum width=5pt,fill=cyan] () at (v6) {};
\node[circle, minimum width=5pt,fill=cyan] () at (v8) {};
\node[circle, minimum width=6pt,fill=green!80!blue] () at (v3) {};
\node[circle, minimum width=6pt,fill=green!80!blue] () at (v12) {};
\end{scope}

\draw[white!80!blue] (a1) -- (v2) (a1) -- (v6) (a1) -- (v8);
\draw[white!80!blue] (a2) -- (v3) (a2) -- (v5) (a2) -- (v12);
\draw[white!80!blue] (a3) -- (v2) (a3) -- (v9) (a3) -- (v10);
\draw[white!80!blue] (a4) -- (v5) (a4) -- (v11);

\draw[-] (v1) -- (v2) (v1) -- (v3) (v1) -- (v4) (v2) -- (v3) (v2) -- (v5) (v3)-- (v4) (v4) -- (v6) (v5) -- (v7) (v5) -- (v8) (v7)--(v9) (v9) -- (v11) (v4)--(v7) (v10)-- (v8) (v10) -- (v12) (v11)--(v12) (v8)--(v11);

\draw[-] (a1) -- (a2) (a2) -- (a3) (a2) to [bend right = 30] (a4)  (a1) to [bend right = 30] (a3);

\end{tikzpicture}
\caption{An example of a graph $G$ with an apex-tuple ${\bf a}=(a_1,\ldots,a_4)$.
The set $C_1$ contains all cyan vertices (the neighbors of $a_1$), the set $C_2$ contains all green vertices, the set $C_3$ contains all orange vertices and the set $C_4$ contains all red vertices. Vertices that are neighbors to more than one $a_i$ are depicted with multiple colors.
The structure ${\sf ap}_{\bf c}(\mathfrak{G},{\bf a})$ is obtained from $G$ after introducing colors $C_i, i\in[4]$ and after removing all edges between a vertex in $\{a_1,\ldots, a_4\}$ and a vertex in $V(G)\setminus \{a_1,\ldots, a_4\}$.}
\label{fig_apexprojection}
\end{figure}
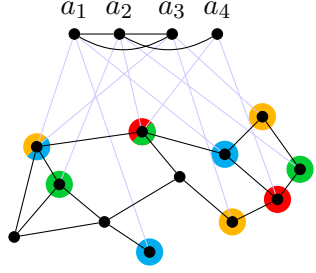

\subsection{Apex-projected sentences}\label{subsec_apex_formula}

Having defined the structure ${\sf ap}_{\bf c}(\mathfrak{G},{\bf a})$,
we now define for every sentence $\varphi\in\FOLDP$ its
\emph{$l$-apex-projected sentence} $\varphi^l$.
This will be a sentence in $\FOL[\tau^{\langle {\bf c}\rangle}+\DP]$ (see \autoref{obs_projectingstructure}) and can be seen as the equivalent (to $\varphi$) question that is asked to be satisfied by ${\sf ap}_{\bf c}(\mathfrak{G},{\bf a})$.

Let $\tau$ be a colored-graph vocabulary, let $l\in \mathbb{N}$, and let ${\bf c}=\{{\sf c}_1,\ldots, {\sf c}_l\}$ be a collection of $l$ constant symbols.
Given a set $S$ and an $\ell\in\mathbb{N}_{\geq 1}$, we use $\mathcal{P}_\ell (S)$ to denote all partitions of $S$ into $\ell$ parts, i.e., all collections of $\ell$ pairwise disjoint subsets $S_1,\ldots,S_\ell$ of $S$ such that $\bigcup_{i\in[\ell]}S_i = S$.\medskip

For every sentence $\varphi\in\FOLDP$, we define its \emph{$l$-apex-projected sentence} $\varphi^l$ to be the sentence obtained from $\varphi$ by replacing each atomic formula ${\sf E}({\sf x},{\sf y})$ by
\[{\sf E}({\sf x},{\sf y})\vee \bigvee_{i\in[l]} \big(({\sf x}={\sf c}_i \wedge {\sf y}\in  {\sf C}_i)\vee ({\sf y}={\sf c}_i \wedge{\sf x}\in  {\sf C}_i )\big)\]
and each atomic formula $\DP({{\sf s}}_{1},  {{\sf t}}_{1},\ldots, {{\sf s}}_{k}, {{\sf t}}_{k})$ by the formula
$\zeta_{\DP}({{\sf s}}_{1},  {{\sf t}}_{1},\ldots, {{\sf s}}_{k}, {{\sf t}}_{k})$ that we proceed to define.
We set ${\sf distinct}({\sf x}_1,\ldots, {\sf x}_t):=\bigwedge_{i,j\in[t], i\neq j} ({\sf x}_i\neq {\sf x}_j)$.
We define $\zeta_{\DP}({{\sf s}}_{1},  {{\sf t}}_{1},\ldots, {{\sf s}}_{k}, {{\sf t}}_{k})$ as follows
 (see next paragraph for an intuitive explanation):\medskip
 
\begin{eqnarray}
& &\bigvee_{B\subseteq [l]}
\bigvee_{d\in[k]}
\bigvee_{(I_1,\ldots, I_d)\in\mathcal{P}_d (B)}
\bigvee_{\substack{\text{all injective functions}\\ \rho: \{I_1,\ldots, I_d\}\to [k]}}
\label{@protagonista}\\
& &~ \Bigg(
\exists\ {\sf x}_1^{I_1}, \ldots, {\sf x}_{|I_1|}^{I_1},
\ldots,
{\sf x}_1^{I_d}, \ldots, {\sf x}_{|I_d|}^{I_d}: {\sf distinct}({\sf x}_1^{I_1}, \ldots, {\sf x}_{|I_1|}^{I_1},
\ldots,
{\sf x}_1^{I_d}, \ldots, {\sf x}_{|I_d|}^{I_d})
%\bigwedge_{i\in[d]}\bigwedge_{\substack{j,j'\in [|I_i|],\\j\neq j'}}({\sf x}_j^{I_i}\neq {\sf x}_{j'}^{I_i})
\label{@bandolerisme}\\
& &~~
\wedge \bigwedge_{\substack{i,i'\in[d],\\ i\neq i'}}\bigwedge_{j\in[2,|I_i|-1]}({\sf x}_j^{I_i}\neq {\sf s}_{\rho(I_{i'})} \wedge{\sf x}_j^{I_i}\neq {\sf t}_{\rho(I_{i'})})
\label{@unpardonably}\\
& &~~~
\wedge\bigwedge_{i\in[d]}\bigvee_{\substack{\text{all bijections}\\\lambda_{I_i}:[|I_i|]\to I_i}}
\bigg( \bigwedge_{i\in [d]}\bigwedge_{j\in[|I_i|]} {\sf x}_j^{I_i} = {\sf c}_{\lambda_{I_i}(j)}
\label{@desenvolvimiento}\\
& &~~~~\wedge \bigwedge_{i\in[d]}
\bigvee_{J_i\subseteq  [|I_i|-1]}
\Big(
\bigwedge_{i\in[d]}
\bigwedge_{j\in [|I_i|-1]\setminus J_i}
{\sf E} ({\sf x}_{j}^{I_i},{\sf x}_{j+1}^{I_i}) 
\label{@avoidability}
\\
& &~~~~~ \wedge\
\exists_{j\in J_1} {\sf y}_j^{I_1}, \ldots,\exists_{j\in J_d} {\sf y}_{j}^{I_d},
\exists_{j\in J_1} {\sf z}_j^{I_1}, \ldots,\exists_{j\in J_d} {\sf z}_{j}^{I_d},
%
%{\exists}\limits_{j\in J_i}^{}{\sf y}_j^{I_1},{\sf z}_{j}^{I_1},\ldots, {\sf y}_j^{I_d}, {\sf z}_{j}^{I_d}
\label{@neoclassicism}
\\
& &~~~~~~ {\sf distinct}(({\sf y}_j^{I_1})_{j\in J_1},\ldots,
({\sf y}_j^{I_d})_{j\in J_d}, {{\sf s}}_{1},  {{\sf t}}_{1},\ldots, {{\sf s}}_{k}, {{\sf t}}_{k})
\label{@deficiencies}
\\
& &~~~~~~~
\wedge\ {\sf distinct}(({\sf z}_j^{I_1})_{j\in J_1},\ldots,
({\sf z}_j^{I_d})_{j\in J_d}, {{\sf s}}_{1},  {{\sf t}}_{1},\ldots, {{\sf s}}_{k}, {{\sf t}}_{k})
\label{@amorosamente}
\\
& &~~~~~~~~~~~~~~
\wedge\bigwedge_{\substack{i,i'\in[d],\\ i\neq i'}}\bigwedge_{\substack{j\in J_i,\\j'\in J_{i'}}} ({\sf y}_j^{I_i}\neq {\sf z}_{j'}^{I_{i'}})\wedge\bigwedge_{i\in[d]}\bigwedge_{\substack{j,j'\in J_i,\\j\neq j'}} ({\sf y}_j^{I_i}\neq {\sf z}_{j'}^{I_{i}})
\label{@unsuspicious}
\\
& &~~~~~~~~~~~~~~\wedge\big(\bigwedge_{i\in[d]}
\bigwedge_{j\in J_i}
({\sf y}_j^{I_i}\in {\sf C}_{\lambda_{I_i}(j)} \wedge{\sf z}_{j}^{I_i}\in  {\sf C}_{\lambda_{I_i}(j+1)})\wedge
\label{@externalized}\\
& &~~~~~~~~~~~~~~~~~~~~ \bigvee_{(X_i,Y_i,Z_i)\in \mathcal{P}_3(J_i)}(
\bigwedge_{j\in X_i} ({\sf y}_j^{I_i} = {\sf z}_{j}^{I_i})
\wedge
\bigwedge_{j\in Y_i} {\sf E}({\sf y}_j^{I_i},{\sf z}_{j}^{I_i}) \wedge\ \psi_{I_i,\rho,Z_i})\big)
\Big)
\bigg)
\Bigg),\label{@unbelievability}
\end{eqnarray}
where $\psi_{I_i,\rho,Z_i}$ is the formula
\begin{align*}
\big({\sf s}_{\rho(I_i)} = {\sf x}_{1}^{I_i} \wedge {\sf t}_{\rho(I_i)} = {\sf x}_{|I_i|}^{I_i}\wedge\
\DP(\xi_{I_i,\rho,Z_i}({{\sf s}}_{1},  {{\sf t}}_{1},\ldots,{{\sf s}}_{k}, {{\sf t}}_{k},{\sf c}_1,{\sf c}_1,\ldots,{\sf c}_l,{\sf c}_l))\big)\\
\vee\\
\big({\sf s}_{\rho(I_i)} \neq {\sf x}_{1}^{I_i} \wedge {\sf t}_{\rho(I_i)} =  {\sf x}_{|I_i|}^{I_i}\\
\wedge\  \exists\ \tilde{{\sf s}}_{\rho(I_i)} \in  {\sf C}_{\lambda_{I_i}(1)}
\wedge{\sf distinct}(\tilde{\sf s}_{\rho(I_i)}, ({\sf s}_j,{\sf t}_j)_{j\neq \rho(I_i)})
\wedge\ \DP(\kappa_{I_i,\rho,Z_i}({{\sf s}}_{1},  {{\sf t}}_{1},\ldots, {{\sf s}}_{k}, {{\sf t}}_{k},{\sf c}_1,{\sf c}_1,\ldots,{\sf c}_l,{\sf c}_l))\big)\\
\vee\\
\big({\sf s}_{\rho(I_i)} = {\sf x}_{1}^{I_i} \wedge {\sf t}_{\rho(I_i)} \neq  {\sf x}_{|I_i|}^{I_i}\\
\wedge\ \exists\ \tilde{{\sf t}}_{\rho(I_i)} \in {\sf C}_{\lambda_{I_i}(|I_i|)}
\wedge{\sf distinct}(\tilde{\sf t}_{\rho(I_i)}, ({\sf s}_j,{\sf t}_j)_{j\neq \rho(I_i)})
\wedge\ \DP(\delta_{I_i,\rho,Z_i}({{\sf s}}_{1},  {{\sf t}}_{1},\ldots, {{\sf s}}_{k}, {{\sf t}}_{k},{\sf c}_1,{\sf c}_1,\ldots,{\sf c}_l,{\sf c}_l))\big)\\ 
\vee\\
\big({\sf s}_{\rho(I_i)} \neq {\sf x}_{1}^{I_i} \wedge {\sf t}_{\rho(I_i)} \neq  {\sf x}_{|I_i|}^{I_i}\\
\wedge\  \exists\ \tilde{{\sf s}}_{\rho(I_i)} \in {\sf C}_{\lambda_{I_i}(1)}
\wedge{\sf distinct}(\tilde{\sf s}_{\rho(I_i)}, ({\sf s}_j,{\sf t}_j)_{j\neq \rho(I_i)})
\\
\wedge\ \exists\ \tilde{{\sf t}}_{\rho(I_i)}\in {\sf C}_{\lambda_{I_i}(|I_i|)}\
\wedge{\sf distinct}(\tilde{\sf t}_{\rho(I_i)}, ({\sf s}_j,{\sf t}_j)_{j\neq \rho(I_i)})
\wedge \DP(\zeta_{I_i,\rho,Z_i}({{\sf s}}_{1},  {{\sf t}}_{1},\ldots, {{\sf s}}_{k}, {{\sf t}}_{k},{\sf c}_1,{\sf c}_1,\ldots,{\sf c}_l,{\sf c}_l))\big),
\end{align*}
and 
\begin{itemize}
\item $\xi_{I_i,\rho,Z_i}({{\sf s}}_{1},  {{\sf t}}_{1},\ldots, {{\sf s}}_{k}, {{\sf t}}_{k},{\sf c}_1,{\sf c}_1,\ldots,{\sf c}_l,{\sf c}_l)$
is used to denote the tuple obtained from the tuple $({{\sf s}}_{1},  {{\sf t}}_{1},\ldots,{{\sf s}}_{k}, {{\sf t}}_{k},{\sf c}_1,{\sf c}_1,\ldots,{\sf c}_l,{\sf c}_l)$
after removing, for every $i\in[\ell]$, $({\sf s}_{\rho(I_i)},{\sf t}_{\rho(I_i)})$
and adding $({\sf y}_j^{I_i},{\sf z}_{j+1}^{I_i})$, for all $j\in Z_i$.
\item $\kappa_{I_i,\rho,Z_i}({{\sf s}}_{1},  {{\sf t}}_{1},\ldots, {{\sf s}}_{k}, {{\sf t}}_{k},{\sf c}_1,{\sf c}_1,\ldots,{\sf c}_l,{\sf c}_l)$
is used to denote the tuple obtained from the tuple $({{\sf s}}_{1},  {{\sf t}}_{1},\ldots,{{\sf s}}_{k}, {{\sf t}}_{k},{\sf c}_1,{\sf c}_1,\ldots,{\sf c}_l,{\sf c}_l)$
after removing, for every $i\in[\ell]$, $({\sf s}_{\rho(I_i)},{\sf t}_{\rho(I_i)})$
and adding $({\sf s}_{\rho(I_i)},\tilde{{\sf s}}_{\rho(I_i)})$ and
$({\sf y}_j^{I_i},{\sf z}_{j+1}^{I_i})$, for all $j\in Z_i$.
\item $\delta_{I_i,\rho,Z_i}({{\sf s}}_{1},  {{\sf t}}_{1},\ldots, {{\sf s}}_{k}, {{\sf t}}_{k},{\sf c}_1,{\sf c}_1,\ldots,{\sf c}_l,{\sf c}_l)$
is used to denote the tuple obtained from the tuple $({{\sf s}}_{1},  {{\sf t}}_{1},\ldots,{{\sf s}}_{k}, {{\sf t}}_{k},{\sf c}_1,{\sf c}_1,\ldots,{\sf c}_l,{\sf c}_l)$
after removing, for every $i\in[\ell]$, $({\sf s}_{\rho(I_i)},{\sf t}_{\rho(I_i)})$
and adding $({\sf t}_{\rho(I_i)},\tilde{{\sf t}}_{\rho(I_i)})$ and
$({\sf y}_j^{I_i},{\sf z}_{j+1}^{I_i})$, for all $j\in Z_i$.
\item 
 $\zeta_{I_i,\rho,Z_i}({{\sf s}}_{1},  {{\sf t}}_{1},\ldots, {{\sf s}}_{k}, {{\sf t}}_{k},{\sf c}_1,{\sf c}_1,\ldots,{\sf c}_l,{\sf c}_l)$
is used to denote the tuple obtained from the tuple $({{\sf s}}_{1},  {{\sf t}}_{1},\ldots,{{\sf s}}_{k}, {{\sf t}}_{k},{\sf c}_1,{\sf c}_1,\ldots,{\sf c}_l,{\sf c}_l)$ after removing, for every $i\in[\ell]$, $({\sf s}_{\rho(I_i)},{\sf t}_{\rho(I_i)})$ and adding 
$({\sf s}_{\rho(I_i)},\tilde{{\sf s}}_{\rho(I_i)}),({\sf t}_{\rho(I_i)},\tilde{{\sf t}}_{\rho(I_i)}),$ and  $({\sf y}_j^{I_i},{\sf z}_{j+1}^{I_i})$, for all $j\in Z_i$.
\end{itemize}

\paragraph{Intuitive explanation of the above formulas.}
We decode step by step the intuition behind the above formulas.
First, we replace ${\sf E}({\sf x},{\sf y})$ by
${\sf E}({\sf x},{\sf y})\vee \bigvee_{i\in[l]} \big(({\sf x}={\sf c}_i \wedge {\sf y}\in  {\sf C}_i)\vee ({\sf y}={\sf c}_i \wedge{\sf x}\in  {\sf C}_i )\big)$ in order to encode that, in the colored graph obtained after the removal of all edges between apex-vertices and the rest vertices of the graph, two vertices $x,y$ are adjacent if either the edge $\{x,y\}$ is present in the modified graph or
one of $x$ and $y$ is an apex-vertex $a_i$ (interpreting ${\sf c}_i$) and the other is colored by the corresponding color $C_i$, that is the color that all neighbors of $a_i$ receive.

For the the atomic formula $\zeta_{\DP}({{\sf s}}_{1},  {{\sf t}}_{1},\ldots,{{\sf s}}_{k}, {{\sf t}}_{k})$, the intuition is the following.
We want to separate the formula into many parts, i.e., many questions for disjoint paths or adjacencies, guessing whether the variables ${{\sf s}}_{1},  {{\sf t}}_{1},\ldots, {{\sf s}}_{k}, {{\sf t}}_{k}$ are assigned to apex-vertices and/or whether the apex-vertices are internal vertices of the disjoint paths between ${{\sf s}}_{1},  {{\sf t}}_{1},\ldots, {{\sf s}}_{k}, {{\sf t}}_{k}$.

For this reason, in line~\eqref{@protagonista} of the above definition, and in particular in ``$\bigvee_{B\subseteq [l]}$'', we start by guessing the subset $B$ of apex-vertices that are part of the disjoint paths (either as endpoints or as internal vertices). We will refer to this set as \emph{active apices}.
Then, with ``$\bigvee_{d\in[k]}$'', we guess how many among the $k$ disjoint paths contain active apices (and we call them \emph{active paths}) and then, with ``$\bigvee_{(I_1,\ldots, I_d)\in\mathcal{P}_d (B)}$'' we guess how $B$ is partitioned in $d$ sets, each set corresponding to active apices that belong to the same active path and with
``$\bigvee_{\substack{\text{all injective functions } \rho: \{I_1,\ldots, I_d\}\to [k]}}$'', we guess which active apices belong to each active path.

Having made all these guesses, in line~\eqref{@bandolerisme}, we ask for variables
(``$\exists\ {\sf x}_1^{I_1}, \ldots, {\sf x}_{|I_1|}^{I_1},\ldots, {\sf x}_1^{I_d}, \ldots, {\sf x}_{|I_d|}^{I_d}$'')
that will be interpreted as the active apices and we ask these variables to be interpreted as pairwise disjoint vertices  (``${\sf distinct}({\sf x}_1^{I_1}, \ldots, {\sf x}_{|I_1|}^{I_1},
\ldots,
{\sf x}_1^{I_d}, \ldots, {\sf x}_{|I_d|}^{I_d})$'').
Also, in line~\eqref{@unpardonably}, we ask that all
vertices that interpret ${\sf x}_j^{I_i}$ for $i\in[d]$ and $j\in[2,|I_i|-1]$ are different from the endpoints of all the other active paths (``$\bigwedge_{\substack{i,i'\in[d], i\neq i'}}\bigwedge_{j\in[2,|I_i|-1]}({\sf x}_j^{I_i}\neq {\sf s}_{\rho(I_{i'})} \wedge{\sf x}_j^{I_i}\neq {\sf t}_{\rho(I_{i'})})$'').
The two later properties are necessary for the disjointness of the demanded paths.
For the moment, we {\sl do not} demand the interpretations of ${\sf x}_1^{I_i}$ and ${\sf x}_{|I_i|}^{I_i}$, for any $i\in[d]$,
to be distinct from the endpoints of the paths.

To express that these variables correspond to apices, in line~\eqref{@desenvolvimiento},
first we guess to which element of $B$ each ${\sf x}_j^{I_i}$ corresponds
(``$\bigwedge_{i\in[d]}\bigvee_{\substack{\text{all bijections }\lambda_{I_i}:[|I_i|]\to I_i}}$'').
We stress that the order of ${\sf x}_j^{I_i}$ is fixed and implicitly corresponds to the order that active apices
are transversed by the corresponding active path.
For each $i\in[d]$, the bijection $\lambda_{I_i}$ is used to correspond the ascending indices $j$ of the variables ${\sf x}_j^{I_i}$ to the actual active apices.
Then, with ``$\bigwedge_{i\in[d]}\bigwedge_{j\in[|I_i|]} {\sf x}_j^{I_i} = {\sf c}_{\lambda_{I_i}(j)}$'',
we check whether this guess indeed corresponds to an interpretation of each  ${\sf x}_j^{I_i} $ with the appropriate 
active apex $a_{\lambda_{I_i}(j)}$.

In line~\eqref{@avoidability}, we partition active apices into two sets. First, we have the active apices whose next neighbor on the corresponding active path is {\sl not} an apex-vertex (we orient paths according to the ordering given by the ascending ordering of the indices of ${\sf x}_j^{I_i}$).
These active apices are guessed using ``$\bigwedge_{\in[d]}\bigvee_{J_i\subseteq  [|I_i|-1]}$''.
These, we call them \emph{shifting active apices}.
For the remaining ones (``$\bigwedge_{i\in[d]}
\bigwedge_{j\in [|I_i|-1]\setminus J_i}$''),
we check if indeed their next neighbor on the corresponding active path {\sl is} an apex-vertex (``${\sf E} ({\sf x}_{j}^{I_i},{\sf x}_{j+1}^{I_i})$'').

Now, in line~\eqref{@neoclassicism},
we deal with the shifting active apices (apices that would be transversed by path that would be routed through at least one non-apex vertex before entering again the set of apex vertices).
For each shifting active apex, we ask for the existence of two extra vertices, corresponding to the first vertex after this shifting active apex and the last vertex before the next apex in a supposed path.
This is done in ``$\exists_{j\in J_1} {\sf y}_j^{I_1}, \ldots,\exists_{j\in J_d} {\sf y}_{j}^{I_d},
\exists_{j\in J_1} {\sf z}_j^{I_1}, \ldots,\exists_{j\in J_d} {\sf z}_{j}^{I_d}$''.

The variables ${\sf y}_j^{I_i}$, $i\in[d]$, $j\in J_i$ and the variables ${\sf z}_j^{I_i}$, $i\in[d]$, $j\in J_i$ should be interpreted as pairwise disjoint vertices that are also different from the endpoints of all the other (active or not) paths.
We check this in lines~\eqref{@deficiencies} and~\eqref{@amorosamente} using ``${\sf distinct}(({\sf y}_j^{I_1})_{j\in J_1},\ldots,
({\sf y}_j^{I_d})_{j\in J_d}, {{\sf s}}_{1},  {{\sf t}}_{1},\ldots, {{\sf s}}_{k}, {{\sf t}}_{k})$'' and ``${\sf distinct}(({\sf z}_j^{I_1})_{j\in J_1},\ldots,
({\sf z}_j^{I_d})_{j\in J_d}, {{\sf s}}_{1},  {{\sf t}}_{1},\ldots, {{\sf s}}_{k}, {{\sf t}}_{k})$''.
Also, we demand that variables ${\sf y}_j^{I_i}$ and ${\sf z}_{j'}^{I_{i'}}$ are interpreted as disjoint vertices if they correspond to different paths (i.e., $i\neq i'$) or if they belong to the same path but they are not consecutive (i.e., $j\neq j'$). This is done in line~\eqref{@unsuspicious}.

For every $i\in[d]$, the vertices interpreting ${\sf y}_j^{I_i}, j\in J_i$ should appear in the corresponding active path directly {\sl after} the corresponding shifting active apex $a_{\lambda_{I_i}(j)}$ and the vertices interpreting ${\sf z}_{j}^{I_i},j\in J_i$ should appear in the corresponding active path directly {\sl before} the next corresponding apex $a_{\lambda_{I_i}(j+1)}$.
Since edges between apices and non-apices are no longer present in the graph, we encode this ``succession'' by using the colors of the neighborhood of the apices. This is done in ``$\bigwedge_{i\in[d]}
\bigwedge_{j\in J_i}
({\sf y}_j^{I_i}\in {\sf C}_{\lambda_{I_i}(j)} \wedge{\sf z}_{j}^{I_i}\in  {\sf C}_{\lambda_{I_i}(j+1)})$'', in line~\eqref{@externalized}.

At this point, we have dealt with the internal part of the paths (we have not yet discussed what happens with the endpoints; we will do so in the next paragraph) for what concerns the apices. In fact, we already explained how to guess which part of the apices will be part of the supposed disjoint paths (lines~\eqref{@protagonista}-\eqref{@desenvolvimiento}), how to guess which part of the paths is routed {\sl only} through apices (line~\eqref{@avoidability}), and where the paths ``exit'' and ``enter'' the set of apices (lines~\eqref{@neoclassicism}-\eqref{@externalized}).
What remains is to describe how to formulate the question on the graph without the apices and how we deal with the endpoints of the paths.

In fact, we already mentioned that, for every $i\in[d]$ and every $j\in[d]$, ${\sf y}_j^{I_i}$ and ${\sf z}_j^{I_i}$ are interpreted as two vertices that are not apices and for the supposed path corresponding to index $i$, the part between the interpretations of ${\sf y}_j^{I_i}$ and ${\sf z}_j^{I_i}$ is a maximal path that does not contain apices.
Having this in mind, in line~\eqref{@unbelievability}, we guess what is the relation between ${\sf y}_j^{I_i}$ and ${\sf z}_{j}^{I_i}$, for all $j\in J_i$.
We partition $J_i$ to three sets $X_i, Y_i,$ and $Z_i$ (``$\bigvee_{(X_i,Y_i,Z_i)\in \mathcal{P}_3(J_i)}$'').
The set $X_i$ contains all indices $j$ for the demanded path that passes through the apices $a_{\lambda_{I_i}(j)}$ and $a_{\lambda_{I_i}(j+1)}$ has to be routed through a single vertex between  $a_{\lambda_{I_i}(j)}$ and $a_{\lambda_{I_i}(j+1)}$, or, in other words,
${\sf y}_j^{I_i}$ and ${\sf z}_{j}^{I_i}$ are asked to be interpreted as the same vertex.
 (``$\bigwedge_{j\in X_i} ({\sf y}_j^{I_i} = {\sf z}_{j}^{I_i})$'').
The set $Y_i$ contains all indices $j$ for which the demanded path that passes through the apices $a_{\lambda_{I_i}(j)}$ and $a_{\lambda_{I_i}(j+1)}$ has to be routed through an edge connecting the interpretations of ${\sf y}_j^{I_i}$ and ${\sf z}_{j}^{I_i}$ (``$\bigwedge_{j\in Y_i} {\sf E}({\sf y}_j^{I_i},{\sf z}_{j}^{I_i})$'').
Finally, $Z_i$ contains all remaining $j\in J_i$, i.e., all indices $j$ for which the demanded path has to be routed through path of length two between the interpretations of ${\sf y}_j^{I_i}$ and ${\sf z}_{j}^{I_i}$.

To finish the description of the formula
$\zeta_{\DP}({{\sf s}}_{1},  {{\sf t}}_{1},\ldots, {{\sf s}}_{k}, {{\sf t}}_{k})$, we have to discuss how the final new demands for disjoint paths are formulated.
This is encoded in the formula $\psi_{I_i,\rho,Z_i}$.
There, for each active path, say indexed by $\rho(I_i)$, we have to ``update'' the demand for a path from ${\sf s}_{\rho(I_i)}$ to ${\sf t}_{\rho(I_i)}$ to the demand for paths between ${\sf y}_j^{I_i}$ and ${\sf z}_{j+1}^{I_i})$, for all $j\in Z_i$ and this update has to be done for all active paths.

In the last argument of the previous paragraph, we omitted some important detail.
The aforementioned update is correct (in the sense that the two questions for disjoint paths are equivalent) only
if, for all active paths, its first and last active apex are its endpoints, i.e., if ${\sf s}_{\rho(I_i)} = {\sf x}_{1}^{I_i}$ and ${\sf t}_{\rho(I_i)} = {\sf x}_{|I_i|}^{I_i}$ are true.
This is why the first disjunctive term of $\psi_{I_i,\rho,Z_i}$ is ``${\sf s}_{\rho(I_i)} = {\sf x}_{1}^{I_i} \wedge {\sf t}_{\rho(I_i)} = {\sf x}_{|I_i|}^{I_i}\wedge\  \DP(\xi_{i,\rho,Z_i}({{\sf s}}_{1},  {{\sf t}}_{1},\ldots,{{\sf s}}_{k}, {{\sf t}}_{k},{\sf c}_1,{\sf c}_1,\ldots,{\sf c}_l,{\sf c}_l))$'', where $\xi_{I_i,\rho,Z_i}({{\sf s}}_{1},  {{\sf t}}_{1},\ldots, {{\sf s}}_{k}, {{\sf t}}_{k},{\sf c}_1,{\sf c}_1,\ldots,{\sf c}_l,{\sf c}_l)$ is the tuple obtained from $({{\sf s}}_{1},  {{\sf t}}_{1},\ldots,{{\sf s}}_{k}, {{\sf t}}_{k},{\sf c}_1,{\sf c}_1,\ldots,{\sf c}_l,{\sf c}_l)$ after removing, for every $i\in[\ell]$, $({\sf s}_{\rho(I_i)},{\sf t}_{\rho(I_i)})$ and adding $({\sf y}_j^{I_i},{\sf z}_{j+1}^{I_i})$, for all $j\in Z_i$.
Since all non-shifting active apices are adjacent (see line~\eqref{@avoidability}) and
for all $j\in J_i\setminus Z_i$, the interpretations of ${\sf y}_j^{I_i}$ and ${\sf z}_{j+1}^{I_i})$ are either identical or adjacent (see line~\eqref{@unbelievability}), what remains is to find disjoint paths between the interpretations of ${\sf y}_j^{I_i}$ and ${\sf z}_{j+1}^{I_i})$, for every $j\in Z_i$.
Of course,
having dealt with the case that paths contains apices,
we want the new paths that we search to be disjoint from all apices.
To express this, we add ``${\sf c}_1,{\sf c}_1,\ldots,{\sf c}_l,{\sf c}_l$'' to the above tuples, asking that the rest (disjoint) paths are also disjoint from the path of zero lenght that starts and finishes to the intepretation of ${\sf c}_i$, for every $i\in[l]$.

To deal with the case where, for an active paths, either its first or last active apex are not its endpoints, we have to guess the existence of an extra ``entering'' or ``exiting'' point, respectively, and ask for some supplementary disjoint path between this new guessed vertex and the corresponding endpoint (see last three disjunctive terms of $\psi_{I_i,\rho,Z_i}$). This concludes the intuitive explanation of the
formula  $\zeta_{\DP}({{\sf s}}_{1},  {{\sf t}}_{1},\ldots, {{\sf s}}_{k}, {{\sf t}}_{k})$.
\medskip

Note that if $\varphi\in\FOLDP$ and ${\bf c}$ is a collection
of $l$ constant symbols, then $\varphi^l\in\FOL[\tau^{\langle {\bf c}\rangle}+\DP]$.
\begin{observation}\label{obs_projectingstructure}
Let $\tau$ be a colored-graph vocabulary.
For every $\varphi\in\FOLDP$, every $l\in\mathbb{N}$, and every collection ${\bf c}$ of $l$ constant symbols,
$\varphi^l\in\FOL[\tau^{\langle {\bf c}\rangle}+\DP]$.
\end{observation}

Also, note that in the definition of $\varphi^l$, we add some extra quantified first-order variables (see the definition of $\zeta_{\DP}({{\sf s}}_{1},{{\sf t}}_{1},\ldots,{{\sf s}}_{k}, {{\sf t}}_{k})$; line~\eqref{@bandolerisme}, line~\eqref{@neoclassicism}, and the definition of $\psi_{I_i,\rho,Z_i}$).
\begin{observation}\label{obs_quantifierrank}
Let $\tau$ be a colored-graph vocabulary and let $r,l\in\mathbb{N}$.
There is a function $\newfun{@transversales}:\mathbb{N}^2\to\mathbb{N}$ such that for every $\varphi\in\FOLDP$,
if $\varphi$ has quantifier rank $r$, then 
$\varphi^l$ has quantifier rank $\funref{@transversales}(r,l)$.
\end{observation}

The definition of the $l$-apex-projected sentence $\varphi^l$ implies the following lemma, which can be seen as a generalization of~\cite[Lemma 26]{FlumG01fixe} that deals with graphs to also ``interpret'' the vertex-disjoint paths predicates.

\begin{lemma}
\labels{lem_interpretapex}
Let $\tau$ be a colored-graph vocabulary, let $l\in\mathbb{N},$ and let ${\bf c}$ be a collection of $l$ constant symbols.
For every $\varphi\in\FOLDP,$ every $\tau$-structure $\mathfrak{G},$ and every apex-tuple  ${\bf a}$ of $\mathfrak{G}$ of size $l,$ it holds that $\mathfrak{G}\models \varphi \iff {\sf ap}_{\bf c}(\mathfrak{G},{\bf a})\models \varphi^l$ (where ${\bf c}$ is interpreted as ${\bf a}$).
\end{lemma}

\section{Partial signatures and exchangability}
\label{sec_signaturesexchengability}
As explained in the end of~\autoref{subsec_equiv}, after proving~\autoref{lemma_reducing} our strategy is, given an annotated colored graph to find a way to construct another annotated colored graph that is equivalent to the original one in the sense that
they have the same signature.
In this section, we describe how, in the presence of a big enough flat railed annulus in the given graph, we can define a way to encode only patterns of ``one side'' of the annulus.
In fact, in~\autoref{subsec_hierarchi}, we define {\sl stamps} of vertices with respect to some given railed annulus, which will help us to group vertices in a way that encodes their relative position to the cycles of the railed annulus.
Then, in~\autoref{subsec_conventions}, in order to facilitate reading, we establish some conventions for boundaried colored graphs in flat railed annuli.
Using the notation introduced in these first two subsections,
in~\autoref{subsec_combing_in_levelings}, we show how to reformulate~\autoref{prop_combinglemma} and  how linkages of the given graph (that contains a big enough railed annulus) are ``combed'' inside boundaried graphs whose boundary vertices are either some particular vertices of the railed annulus, or vertices of appropriate stamps (\autoref{lem_colomodelsrerout}).
Then, to capture the ``pattern-behavior'' of the considered series of boundaried graphs in the railed annulus,
we define the notion of {\sl partial signature} of a graph as a ``meta-collection'' of patterns of boundaried graphs (as the ones described in~\autoref{subsec_conventions}) for boundary vertices of particular stamps.
Finally, in~\autoref{subsec_exchangability}, we show that boundaried (annotated) colored graphs with the same partial signature can be ``replaced'',
maintaining the same (global) signature (see~\autoref{lem_equirep}).

\subsection{Stamps of vertices with respect to annuli}
\label{subsec_hierarchi}
In this subsection, we describe how to attribute {\sl stamps} to the vertices of a given graph $G$ with a {\sl flat} railed annulus $\mathcal{A}$ that encode the relative position of these vertices with respect to the cycles of $\mathcal{A}$.
The definition of a flat railed annulus is given in~\autoref{sec_flatwalls}, and, in particular, in~\autoref{label_exceptionalness}. Intuitively,
it is the analogue of flat walls but in terms of railed annuli.
\medskip

Let $r\in\mathbb{N}$.
Let $G$ be a graph, let $(\mathcal{A},\frR)$ be a $(p,q)$-railed annulus flatness pair of $G$, where $p = 2^r +1$.
For every $\bar{w}\in\{0,1\}^r$, we denote by $C_{\bar{w}}$ the cycle $C_{n_{\bar{w}}}$ of $\mathcal{C}$, where  $n_{\bar{w}}=1+\sum_{i\in[r]} w_i 2^{r-i+1}$.

Given an $i\in[r]$ and a $\bar{w}\in \{0,1\}^{i-1}$,
for every vertex $v\in V(G)$
we define the \emph{$\bar{w}$-stamp} of $v$ as follows:
$$
\textsf{stamp}_{\bar{w}}(v)=
\begin{cases}
(0,\bullet), & \text{if $v\in V(\textsf{Influence}_{\mathfrak{R}} (C_{\bar{w}00^{r-i}}))$},\\
(0,\circ), & \text{if $v\in V(\textsf{Influence}_{\mathfrak{R}} (C_{\bar{w}11^{r-i}}))\setminus V(\textsf{Influence}_{\mathfrak{R}} (C_{\bar{w}10^{r-i}}))$},\\
(1,\bullet), & \text{if $v\in V(\textsf{Influence}_{\mathfrak{R}} (C_{\bar{w}10^{r-i}}))\setminus V(\textsf{Influence}_{\mathfrak{R}} (C_{\bar{w}00^{r-i}}))$, and}\\
(1,\circ), & \text{if $v\notin V(\textsf{Influence}_{\mathfrak{R}} (C_{\bar{w}11^{r-i}}))$}.
\end{cases}
$$

\begin{figure}[ht]
\centering
\scalebox{0.9}{
\begin{tikzpicture}
\node[label={above left:$C_{\bar{w}00^{r-i}}$}] () at ($(160:2.6)+(0,2.7)$) {};
\node[label={above left:$C_{\bar{w}10^{r-i}}$}] () at ($(130:2.6)+(0,2.7)$) {};
\node[label={above left:$C_{\bar{w}11^{r-i}}$}] () at ($(100:2.6)+(0,2.7)$) {};
\clip (0,2.8) circle (2.6cm);

\begin{scope}

    %inner cycle left part
    \path (-1.2,2.7)  arc (40:77:5)
    node[terminal,pos=0.1] (P1) {}
    node[terminal,pos=0.3] (P0) {}
    (90:5)--(90:5);
    \draw [line width = 2pt,black,path fading=north, fading angle =40] (-1.2,2.7)  arc (40:64:5)  (100:4)--(100:4);
    
    %inner cycle right part
    \path (-1.2,2.7) arc (40:-15:5)
    node[terminal,pos=0.2] (P2) {}
    node[terminal,pos=0.4] (P3) {}
    node[pos=1] (A) {}
    (-15:5)--(-15:5);
    \draw [line width = 2pt,black,path fading=south] (-1.2,2.7) arc (40:5:5)    (7:5)--(7:5);
    
    %middle cycle left part
    \path (0.2,3)  arc (40:82:6.5)
    node[terminal,pos=0.1] (P6) {}
    node[terminal,pos=0.25] (P5) {}
    node[terminal,pos=0.45] (P4) {}
    (85:7)--(85:7);
    \draw [line width = 2pt,black!50!white] (0.2,3)  arc (40:59:6.5)  (85:7)--(85:7);
    \draw [line width = 2pt,black,path fading=west] (0.2,3)  arc (40:66:6.5)  (85:7)--(85:7);

    %middle cycle right part
    \path (0.2,3) arc (40:-5:6.5)
    node[terminal,pos=0.1] (P7) {}
    node[terminal,pos=0.3] (P8) {}
    node[terminal,pos=0.4] (P9) {}
    node[pos=1] (B) {}
    (-5:7)--(-5:7);
    \draw [line width = 2pt,black,path fading=south] (0.2,3) arc (40:13:6.5)
    (15:7)--(15:7);
    
    %outer cycle left part
    \path (1.7,3)  arc (40:85:8.5)
    node[pos=1] (D) {}
    node[terminal,pos=0.4] (P10) {}
    node[terminal,pos=0.25] (P11) {}
    node[terminal,pos=0.05] (P12) {}
    (90:9)--(90:9);
    \draw [line width = 2pt,black,path fading=west] (1.7,3)  arc (40:69:8.5)   (80:9)--(80:9);
    
    %outer cycle right part
    \path (1.7,3) arc  (40:5:8.5)
    node[terminal,pos=0.15] (P13) {}
    node[terminal,pos=0.35] (P14) {}
    node[terminal,pos=0.45] (P15) {}
    node[pos=1] (C) {}
    (5:9)--(5:9);
    \draw [line width = 2pt,black,path fading=south] (1.7,3) arc  (40:25:8.5)   (25:9)--(25:9);
    
    \foreach \x in {0,...,3} \node[terminal,circle, draw=black, fill=white, line width=0.6pt] () at (P\x) {};
    \foreach \x in {4,...,9} \node[terminal,blue,circle] () at (P\x) {};
    \foreach \x in {11,...,12} \node[terminal,red,circle] () at (P\x) {};
 \node[terminal,blue,circle] () at (P10) {};
 \node[terminal,blue,circle] () at (P13) {};

    \node[] (A1) at ($(A)+(-45:1)$) {$A_{\bar{w}0}$};
    \node[] (A2) at ($(B)+(-45:1)$) {$A_{\bar{w}1}$};

%Nodes A1
    \node[terminal, circle, draw=black, fill=white, line width=0.6pt](Q9) at (0.4,.6) {};
    \node[terminal,blue,circle](Q8) at (0.8,1.3) {};
    \node[terminal, circle, draw=black, fill=white, line width=0.6pt](Q7) at (-0.1,1.6) {};
    \node[terminal, circle, draw=black, fill=white, line width=0.6pt](Q6) at (0.2,1.8) {};
    \node[terminal,circle, draw=black, fill=white, line width=0.6pt](Q5) at (-.5,2.4) {};
    \node[terminal,blue,circle](Q4) at (-0.3,2.8) {};        
    \node[terminal,circle, draw=black, fill=white, line width=0.6pt](Q3) at (-1,2.8) {};    
    \node[terminal,blue,circle](Q2) at (-0.8,3.4) {};
    \node[terminal,blue,circle](Q1) at (-1.2,3.8) {};
    
    \node[terminal,blue,circle](ex2) at (-1.9,4) {};

%Nodes A2
    \node[terminal,red,circle](B8) at (2,1.6) {};
    \node[terminal,blue,circle](B7) at (1.2,2.2) {};
    \node[terminal,blue,circle](B6) at (1.4,2.4) {};
    \node[terminal,red,circle](B5) at (0.5,3.3) {};
    \node[terminal,red,circle](B4) at (0.7,3.6) {};
    \node[terminal,red,circle](B3) at (0.3,3.7) {};    
    \node[terminal,blue,circle](B2) at (-.65,4.25) {};
    \node[terminal,blue,circle](B1) at (-.85,4.45) {};

%Nodes out
    \node[terminal, red,circle] (C7) at (0.7,5) {};
    \node[terminal, red,circle] (C6) at (0.4,4.6) {};
    \node[terminal, circle, draw=black, fill=green!81!blue, line width=0.6pt] (C5) at (1.3,4.5) {};
    \node[terminal, circle, draw=black, fill=green!81!blue, line width=0.6pt] (C4) at (1.2,4.25) {};
    \node[terminal, circle, draw=black, fill=green!81!blue, line width=0.6pt] (C3) at (1.45,3.95) {};
    \node[terminal, red,circle] (C2) at (2.1,3.6) {};
    \node[terminal, red,circle] (C1) at (2,3) {};

%Nodes in
    \node[terminal, circle, draw=black, fill=white, line width=0.6pt] (A5) at (-2,2.8) {};
    \node[terminal, circle, draw=black, fill=white, line width=0.6pt] (A4) at (-1.8,2.3) {};
    \node[terminal, circle, draw=black, fill=white, line width=0.6pt] (A3) at (-1.2,1.9) {};
    \node[terminal, circle, draw=black, fill=white, line width=0.6pt] (A2) at (-1.5,1.5) {};
    \node[terminal, circle, draw=black, fill=white, line width=0.6pt] (A1) at (-1,1) {};

\begin{scope}[on background layer]
\fill[white] (0,2.8) circle (2.6cm);
% \clip (C)  arc (5:185:8.5);
% \draw[draw=white,fill=white] (0,2.8)  circle (2.7cm);
% \draw[fill=red!50!white](0,2.8)  circle (2.6cm);
%     \begin{scope}
%         \clip (B)  arc (-5:185:6.5);
%         \draw[draw=white,fill=white](0,2.8)  circle (2.7cm);
%         \draw[fill=cyan!50!white] (0,2.8)  circle (2.6cm);

%         \clip (A)  arc (-15:195:5);
%         \draw[draw=white,fill=white](0,2.8)  circle (2.7cm);
%         \draw[draw=white,fill=cyan!30!white](0,2.8)  circle (2.6cm);
%     \end{scope}
\clip (0,2.8) circle (2.6cm);

    \begin{scope}
        \fill[cyan] (P9.center) to [bend left= 20] ($(P9)+(-70:1)$) to [bend left= 20] (P9.center);
        
        \fill[black!30!white] (P3.center) to [bend right= 10] ($(P3)+(-90:1)$) to ($(P3)+(-25:2)$) to [bend right=10] (P3.center);
        
        \fill[black!30!white] (A1.center) to [bend left= 20] ($(A1)+(-80:2)$) to ($(A1)+(-100:1)$) to [bend right=10] (A1.center);
        
        \fill[black!30!white] (A1.center) to [bend left= 10] ($(A1)+(-170:1)$) to ($(A1)+(-120:1)$) to [bend right=10] (A1.center);
        
        \fill[black!30!white] (A4.center) to [bend left= 10] ($(A4)+(-160:1)$) to ($(A4)+(-120:1)$) to [bend right=10] (A4.center);
        
        \fill[black!30!white] (A5.center) to [bend right= 10] ($(A5)+(-170:1)$) to ($(A5)+(-140:1)$) to [bend right=10] (A5.center);
  
        \fill[black!30!white] (P0.center) to [bend right= 10] ($(P0)+(140:1)$) to [bend right= 20]
        (A5.center) to [bend right=30] (P0.center);
        
        \fill[cyan] (ex2.center) to [bend left= 20] ($(ex2)+(150:1)$) to [bend left=30] (ex2.center);
        
        \fill[cyan] (P4.center) to [bend left= 20] ($(P4)+(150:1)$) to [bend left=30] (P4.center);
        
        \fill[red] (P10.center) to [bend left= 10] ($(P10)+(170:2)$) to ($(P10)+(120:1)$) to [bend right=10] (P10.center);
    
        \fill[green!81!blue] (P10.center) to [bend left= 20] ($(P10)+(85:1)$) to [bend left =20](C7.center) to [bend right=30] (P10.center);

        \fill[green!81!blue] (C7.center) to [bend right= 20] ($(C7)+(45:1)$) to [bend right =20]($(C7)+(70:1)$) to [bend right=15] (C7.center);
        
        \fill[green!81!blue] (C5.center) to [bend right= 20] ($(C5)+(45:1)$) to [bend right =20]($(C5)+(70:1)$) to [bend right=15] (C5.center);
        
        \fill[green!81!blue] (C2.center) to [bend right= 20] ($(C2)+(45:1)$) to [bend right =20]($(C2)+(90:2)$) to [bend right=15] (C2.center);
        
        \fill[green!81!blue] (C2.center) to [bend right= 20] ($(C2)+(-40:1)$) to [bend right =20]($(C2)+(-10:1)$) to [bend right=15] (C2.center);
        
        \fill[red] (P13.center) to [bend right= 20] ($(P13)+(-60:1)$) to [bend right =20]($(P13)+(30:1)$) to [bend right=15] (P13.center);
        
         \fill[red] (B8.center) to [bend right= 20] ($(B8)+(-100:1)$) to [bend right =20]($(B8)+(-10:1)$) to [bend right=15] (B8.center);

        \fill[white,path fading=fade in] (0,2.8) circle (3cm);
    \end{scope}
            % \draw[thick,black] (P13.center) to ($(P13)+(10:1)$);

\fill[black!30!white,opacity=0.5] (P1.center) to [bend right= 20] (P2.center) to [bend right= 20] (Q4.center) to [bend right=20] (P1.center);

\fill[cyan!70!white,opacity=0.5] (P0.center) to [bend right= 30] (P4.center) to [bend right= 20] (P6.center) to [bend left=20] (P0.center);
   
\fill[cyan!70!white,opacity=0.5] (P5.center) to [bend right= 20] (P6.center) to [bend right=20] (P5.center);

\fill[black!30!white,opacity=0.5] (P1.center) to [bend left= 20] (A4.center) to [bend left= 20] (A5.center) to [bend left=20] (P1.center);

    \fill[black!30!white,opacity=0.5] (P1.center) to [bend right= 20] (P0.center) to [bend right=20] (P1.center);
    
    \fill[cyan!70!white,opacity=0.5] (P6.center) to [bend left= 20] (P7.center) to [bend left= 20] (Q4.center) to [bend left=40] (P6.center);
    
    \fill[cyan!70!white,opacity=0.5] (P2.center) to [bend left= 20] (P7.center) to [bend left= 20] (P2.center);

    \fill[black!30!white,opacity=0.5] (A4.center) to [bend left= 20] (P2.center) to [bend left= 20] (A1.center) to [bend left=40] (A4.center);
    
    \fill[black!30!white,opacity=0.5] (P3.center) to [bend left= 10] (P2.center) to [bend left= 20] (P8.center) to [bend right=20] (P3.center);
    
    \fill[cyan!70!white,opacity=0.5] (P3.center) to [bend right= 5] (P8.center) to [bend left= 40] (P9.center) to [bend left=20] (P3.center);
 
    \fill[black!30!white,opacity=0.5] (A1.center) to [bend right= 20] (P3.center) to [bend right=20] (A1.center);
    
   \fill[cyan!70!white,opacity=0.5] (P0) to [bend left= 20] (ex2) to [bend left =20](P4.center) to [bend left=15] (P0);
   
   \fill[cyan!70!white,opacity=0.5] (P4.center) to [bend left= 20] (P10.center) to [bend left =20](P5.center) to [bend left=15] (P4.center);
   
   \fill[red!50!white,opacity=0.5] (P10.center) to [bend right= 10] (C7.center) to [bend left =20](P11.center) to [bend left=15] (P10.center);
   
    \fill[red!50!white,opacity=0.5] (P5.center) to [bend right= 10] (P11.center) to [bend right=15] (P5.center);
   
   \fill[green!81!blue,opacity=0.5] (P11.center) to [bend left= 20] (C5.center) to [bend left =20](C2.center) to [bend left=10] (P11.center);
   
   \fill[red!50!white,opacity=0.5] (P11.center) to [bend right= 20] (P6.center) to [bend right =20](P12.center) to [bend right=20] (P11.center);
   
   \fill[red!50!white,opacity=0.5] (P12.center) to [bend left= 20] (C2.center) to [bend left =20](P13.center) to [bend left=15] (P12.center);
   
   \fill[cyan!70!white,opacity=0.5] (P13.center) to [bend right= 20] (P7.center) to [bend right =20](P8.center) to [bend right=15] (P13.center);
   
   \fill[red!50!white,opacity=0.5] (P13.center) to [bend left= 20] (B8.center) to  [bend left=15] (P13.center);
   
   \fill[red!50!white,opacity=0.5] (P8.center) to [bend left= 20] (B8.center) to [bend left=15] (P8.center);
   
   \fill[red!50!white,opacity=0.5] (P7.center) to [bend left= 20] (P12.center) to [bend left=15] (P7.center);
\end{scope}

%Flaps in A1
    \draw[black,path fading= west] (ex2) -- ($(ex2)+(145:0.4)$);

    \draw[black] (P4)--(Q1) (P6) -- (Q4) (P7)--(Q4) (P7)--(P2) (P8)--(Q6) (P8)--(Q7) (P8)--(Q8) (P9)--(Q8);
    \draw[black] (Q3) -- (Q4) (Q4) -- (P6);
    
    \draw[black] (P0) -- (Q2) (Q2)-- (P6);
    \draw[black] (P3) -- (P2) (P2)-- (Q6) (Q6)-- (P8);

    \draw[black]  (P0)--(Q1) (P1)--(Q3) (P2)--(Q5) (P2)--(Q6) (P2)--(Q7) (P3)--(Q7) (P3)--(Q8) (Q3) to [bend left = 10] (P2) (P1) to [bend right = 10] (Q5); 
    
    \draw[black] (ex2) -- (P0) (ex2) -- (P4) (P0) -- (P4) (Q1)--(Q2) (Q3)--(Q4)--(Q5)--(Q3) (P3) -- (Q9);
    
    %patch for left fading
    %\fill[white] (-2.4,4.6) -- (-2.9,3.8) -- (-3,4.4);

%Flaps in A2
    \draw[black] (B1) -- (B2) (B1) -- (P4) (B1) -- (P5) (B1) -- (P10) (B2) -- (P10) (B2) -- (P5) (B2) -- (P4) (P5) -- (P10) -- (P4);
    
    \draw[black] (P5) -- (P11);
    
    \draw[thick,black] (P6) -- (B3) (B3) -- (P11);
    \draw[black] (P8) -- (P7) (P7)--(B6)-- (P13);
    \draw[black] (P6) -- (B4) (P6) -- (B5) (B5) -- (P12) (B3) --(B4) (B4) -- (B5) (B3) -- (B5) (P11) -- (B4) (P12) -- (B4) (B5) -- (P11);
    
    \draw[black] (P7) -- (P12);
    
    \draw[black] (P7) -- (B6) (P7) -- (B7) (P8) -- (B7) (B7)-- (P13) (P13) -- (B6) (P13) -- (P8) (B6) -- (B7);
    
    \draw[black] (P8) -- (B8) (P13) -- (B8);

%

%Flaps out
\draw[thick,black] (P11) -- (C4) (C4) -- (C5);
\draw[] (P10) -- (C6);
\draw[] (C6) -- (P11) (C6) -- (C7);
\draw[] (P11) -- (C5) (P11) -- (C4) (P11) -- (C3) (P11) -- (C2) (C5) -- (C2) (C5) -- (C3) (C5) -- (C4) (C4) -- (C3) (C3) -- (C2) (C4) to [bend left = 5] (C2);
\draw (P12) -- (C2) (P12) -- (C1) (P13) -- (C1) (C1) -- (C2);

%Flaps in
    \draw[black] (A1)  -- (P3);
    \draw[black] (A1) -- (P2) (A1) -- (A2) (A1) -- (A3) (A1) -- (A4) (A2) -- (P2) (A2) -- (A3) (A2) -- (A4) (A3) -- (A4) (A4) -- (P2);
     \draw[thick,black]      (A3) -- (P2)      (A3) -- (A4);
    \draw[black] (A4) -- (A5)  -- (P1) -- (A4);
    \draw[black] (A5) -- (P0);

%Faded edges
\draw[thick,black, path fading=east] (Q9) -- ($(Q9)+(-30:0.6)$);
\draw[thick,black, path fading=south] (Q9) -- ($(Q9)+(-100:0.5)$);
\draw[thick,black, path fading=south] (B8) -- ($(B8)+(-70:0.4)$);
\draw[thick,black, path fading=east] (B8) -- ($(B8)+(-30:0.3)$);
\draw[thick,black, path fading=east] (P13) -- ($(P13)+(10:0.5)$);
\draw[thick,black, path fading=east] (C2) -- ($(C2)+(-30:0.55)$);
\draw[thick,black, path fading=north] (C2) -- ($(C2)+(80:0.7)$);
\draw[thick,black, path fading=north] (C5) -- ($(C5)+(60:1)$);
\draw[thick,black, path fading=north] (C7) -- ($(C7)+(50:1)$);
\draw[thick,black, path fading=north] (C7) -- ($(C7)+(145:1)$);
\draw[thick,black, path fading=north] (P10) -- ($(P10)+(75:1)$);
\draw[thick,black, path fading=west] (P10) -- ($(P10)+(170:2)$);
\draw[thick,black, path fading=south] (A1) -- ($(A1)+(-75:1)$);
\draw[thick,black, path fading=west] (A1) -- ($(A1)+(-140:1)$);
\draw[thick,black, path fading=west] (A4) -- ($(A4)+(-130:1)$);
\draw[thick,black, path fading=west] (A5) -- ($(A5)+(-160:1)$);
\draw[thick,black, path fading=north] (A5) -- ($(A5)+(120:1.5)$);

\end{scope}
\end{tikzpicture}}
\caption{An illustration of some vertices of a graph which contains a flat railed annulus $\mathcal{A}=({\cal C},\mathcal{P})$, colored with respect to their stamps. The white-colored vertices have $\bar{w}$-stamp equal to $(0,\bullet)$, the blue-colored vertices have  $\bar{w}$-stamp  equal to $(1,\bullet)$, the red-colored vertices have  $\bar{w}$-stamp equal to $(0,\circ)$, and the green-colored vertices have  $\bar{w}$-stamp  equal to $(1,\circ)$.}
\label{figure_stamps}
\end{figure}
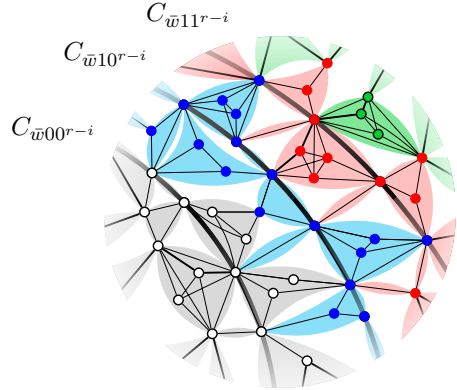

\paragraph{Trace of tuples of vertices with respect to annuli.}
Let $r\in\mathbb{N}$.
Let $G$ be a graph, let $(\mathcal{A},\frR)$ be a $(p,q)$-railed annulus flatness pair, where $p = 2^r +1$.
Given a tuple $(v_1,\ldots, v_r)$ of vertices of $G$, we define the \emph{trace of $(v_1,\ldots, v_r)$ with respect to $(\mathcal{A},\frR)$}, denoted by ${\sf trace}_{(\mathcal{A},\frR)}(v_1,\ldots, v_r)$, to be the pair $(w_1\ldots w_r,\omega_1\ldots \omega_r)$ where $(w_1,\omega_1) = \textsf{stamp}_\varepsilon (v_1)$ and for every $i\in[2,r]$, we set $(w_i,\omega_i) =\textsf{stamp}_{w_1\ldots w_{i-1}}(v_i)$.

\subsection{Some conventions for boundaried graphs in flat railed annuli}\label{subsec_conventions}

We proceed to define a series of boundaried graphs in a given railed annulus flatness pair $(\mathcal{A},\mathfrak{R})$.
In fact, we define these boundaried graphs in the {\sl leveling} ${\sf Leveling}_{(\mathcal{A}, \mathfrak{R})} (G)$ of $(\mathcal{A},\mathfrak{R})$, that is the ``planar representation'' of $(\mathcal{A},\mathfrak{R})$,
as defined in~\autoref{subsec_levelings}.
In order $(\mathcal{A},\mathfrak{R})$ to has this ``planar representation''
property, it has to be {\sl well-aligned} (see also~\autoref{subsec_levelings}).

Let $G$ be a graph and let $(\mathcal{A},\mathfrak{R})$ be a well-aligned $(p,q)$-railed annulus flatness pair of $G$.
We consider the graph ${\sf Leveling}_{(\mathcal{A}, \mathfrak{R})} (G)$ and keep in mind that ${\sf Leveling}_{(\mathcal{A}, \mathfrak{R})} (G)$ contains the representation $R_\mathcal{A}$ of $\mathcal{A}$, that is a $\Delta$-embedded $(p,q)$-railed annulus.
Let $R_\mathcal{A} = (\mathcal{C},\mathcal{P})$ and keep in mind that $|\mathcal{C}|=p$ and $p$ is an odd integer in $\mathbb{N}_{\geq 3}$.
Intuitively, the cycle $C_{(p+1)/2}$ is the ``middle'' cycle of $\mathcal{C}$.
We can see each path $P_{j}$ in $\mathcal{P}$ as being oriented towards the ``inner'' part of $R_\mathcal{A}$, i.e., starting from an endpoint of  $P_{p,j}$ and finishing to an endpoint of $P_{1,j}$.
For every $j\in[q],$
we define $r_{j}$ as the first vertex of $P_{j}$ that appears in $P_{(p+1)/2,j}$ (recall that $P_{(p+1)/2,j}$ is the intersection of the cycle $C_{(p+1)/2}$ and the rail $P_j$)
while traversing $P_{j}$ according to this orientation. 
Given a $t\in [q]$, 
we define the $t$-boundaried graph $${\bf G}_\mathcal{A}^{(t)} = (G',r_{1}\,\ldots,r_{t}),$$
where $G'$ is the graph ${\sf Leveling}_{(\mathcal{A}, \mathfrak{R})} (G)\setminus V({\sf Right}_{\Delta_{(p+1)/2}}({\sf Leveling}_{(\mathcal{A}, \mathfrak{R})} (G)))$ (recall that $\Delta_{(p+1)/2}$ is the closed annulus cropped by $C_1$ and $C_{(p+1)/2}$).
We call $r_1,\ldots, r_t$ the \emph{boundary vertices} of ${\bf G}_\mathcal{A}^{(t)}$.

Assume now that $\mathcal{A}$ is a well-aligned $(2^r \cdot(p-1) +1,q)$-railed annulus flatness pair of $G$.
We set $\mathcal{C}^{(p)} = C_1',\ldots, C_{2^r+1}'$, where for every $i\in [2^r+1]$, $C_i'= C_{ 1 + (p-1)\cdot (i-1)}$.
Intuitively,
we pick $\mathcal{C}^{(p)}$ in a way that every two consecutive cycles $C_i'$ and $C_{i+1}'$ of $\mathcal{C}^{(p)}$ crop
a $(p,q)$-railed annulus.
For every $\bar{w}\in \{0,1\}^r$, we use $\mathcal{A}_{\bar{w}}$ to denote the $(p,q)$-annulus that is cropped by the cycles $C_{n_{\bar{w}}}'$ and $C_{n_{\bar{w}}+1}'$ of $\mathcal{C}^{(p)}$, where $n_{\bar{w}}=1+\sum_{i\in[r]} w_i 2^{r-i+1}$ (i.e., $\mathcal{A}_{\bar{w}} = \mathcal{A}_{1 + (p-1)\cdot (n_{\bar{w}}-1), 1+ (p-1)\cdot n_{\bar{w}}})$.
See~\autoref{figr_nested} for an example.
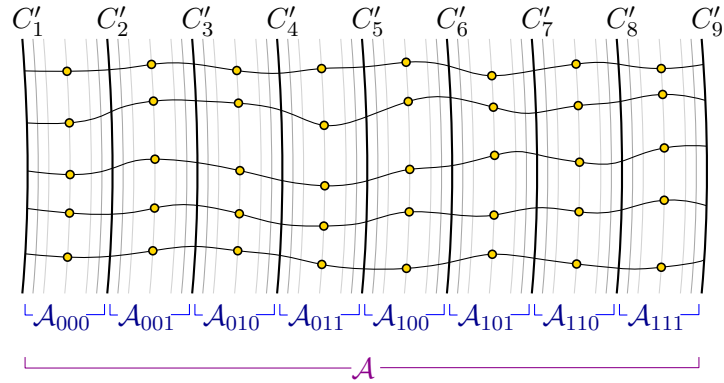
\begin{figure}[ht]
\centering
\begin{tikzpicture}[ipe stylesheet]
  \draw[lightgray]
    (232, 784)
     .. controls (234.6667, 749.3333) and (234.6667, 717.3333) .. (232, 688);
  \draw[lightgray]
    (248, 784)
     .. controls (250.6667, 749.3333) and (250.6667, 717.3333) .. (248, 688);
  \node[ipe node, text=darkblue]
     at (116, 676) {$\mathcal{A}_{000}$};
  \node[ipe node, text=darkblue]
     at (148, 676) {$\mathcal{A}_{001}$};
  \node[ipe node, text=darkblue]
     at (180, 676) {$\mathcal{A}_{010}$};
  \node[ipe node, text=darkblue]
     at (212, 676) {$\mathcal{A}_{011}$};
  \node[ipe node, text=darkblue]
     at (244, 676) {$\mathcal{A}_{100}$};
  \node[ipe node, text=darkblue]
     at (276, 676) {$\mathcal{A}_{101}$};
  \node[ipe node, text=darkblue]
     at (308, 676) {$\mathcal{A}_{110}$};
  \node[ipe node, text=darkblue]
     at (340, 676) {$\mathcal{A}_{111}$};
  \node[ipe node, text=darkmagenta]
     at (236, 656) {$\mathcal{A}$};
  \draw[shift={(113.035, 684.003)}, xscale=0.1236, yscale=1.0352, blue]
    (0, 0)
     -- (0, -4)
     -- (24, -4);
  \draw[shift={(136, 680.008)}, xscale=0.2905, yscale=0.9979, blue]
    (0, 0)
     -- (24, 0)
     -- (24, 4);
  \draw[shift={(145.035, 684.003)}, xscale=0.1236, yscale=1.0352, blue]
    (0, 0)
     -- (0, -4)
     -- (24, -4);
  \draw[shift={(168, 680.008)}, xscale=0.2905, yscale=0.9979, blue]
    (0, 0)
     -- (24, 0)
     -- (24, 4);
  \draw[shift={(177.035, 684.003)}, xscale=0.1236, yscale=1.0352, blue]
    (0, 0)
     -- (0, -4)
     -- (24, -4);
  \draw[shift={(200, 680.008)}, xscale=0.2905, yscale=0.9979, blue]
    (0, 0)
     -- (24, 0)
     -- (24, 4);
  \draw[shift={(209.035, 684.003)}, xscale=0.1236, yscale=1.0352, blue]
    (0, 0)
     -- (0, -4)
     -- (24, -4);
  \draw[shift={(232, 680.008)}, xscale=0.2905, yscale=0.9979, blue]
    (0, 0)
     -- (24, 0)
     -- (24, 4);
  \draw[shift={(241.035, 684.003)}, xscale=0.1236, yscale=1.0352, blue]
    (0, 0)
     -- (0, -4)
     -- (24, -4);
  \draw[shift={(264, 680.008)}, xscale=0.2905, yscale=0.9979, blue]
    (0, 0)
     -- (24, 0)
     -- (24, 4);
  \draw[shift={(273.035, 684.003)}, xscale=0.1236, yscale=1.0352, blue]
    (0, 0)
     -- (0, -4)
     -- (24, -4);
  \draw[shift={(296, 680.008)}, xscale=0.2905, yscale=0.9979, blue]
    (0, 0)
     -- (24, 0)
     -- (24, 4);
  \draw[shift={(305.035, 684.003)}, xscale=0.1236, yscale=1.0352, blue]
    (0, 0)
     -- (0, -4)
     -- (24, -4);
  \draw[shift={(328, 680.008)}, xscale=0.2905, yscale=0.9979, blue]
    (0, 0)
     -- (24, 0)
     -- (24, 4);
  \draw[shift={(337.035, 684.003)}, xscale=0.1236, yscale=1.0352, blue]
    (0, 0)
     -- (0, -4)
     -- (24, -4);
  \draw[shift={(360, 680.008)}, xscale=0.2905, yscale=0.9979, blue]
    (0, 0)
     -- (24, 0)
     -- (24, 4);
  \draw[shift={(113.036, 664.003)}, xscale=5.1236, yscale=1.0352, darkmagenta]
    (0, 0)
     -- (0, -4)
     -- (24, -4);
  \draw[shift={(246.857, 660.008)}, xscale=5.0048, yscale=0.9979, darkmagenta]
    (0, 0)
     -- (24, 0)
     -- (24, 4);
  \node[ipe node]
     at (108, 788) {$C_1'$};
  \node[ipe node]
     at (140, 788) {$C_2'$};
  \node[ipe node]
     at (172, 788) {$C_3'$};
  \node[ipe node]
     at (204, 788) {$C_4'$};
  \node[ipe node]
     at (236, 788) {$C_5'$};
  \node[ipe node]
     at (268, 788) {$C_6'$};
  \node[ipe node]
     at (300, 788) {$C_7'$};
  \node[ipe node]
     at (332, 788) {$C_8'$};
  \node[ipe node]
     at (364, 788) {$C_9'$};
  \draw[ipe pen heavier]
    (112, 784)
     .. controls (114.6667, 749.3333) and (114.6667, 717.3333) .. (112, 688);
  \draw[ipe pen heavier]
    (144, 784)
     .. controls (146.6667, 749.3333) and (146.6667, 717.3333) .. (144, 688);
  \draw[ipe pen heavier]
    (176, 784)
     .. controls (178.6667, 749.3333) and (178.6667, 717.3333) .. (176, 688);
  \draw[ipe pen heavier]
    (208, 784)
     .. controls (210.6667, 749.3333) and (210.6667, 717.3333) .. (208, 688);
  \draw[ipe pen heavier]
    (240, 784)
     .. controls (242.6667, 749.3333) and (242.6667, 717.3333) .. (240, 688);
  \draw[ipe pen heavier]
    (272, 784)
     .. controls (274.6667, 749.3333) and (274.6667, 717.3333) .. (272, 688);
  \draw[ipe pen heavier]
    (304, 784)
     .. controls (306.6667, 749.3333) and (306.6667, 717.3333) .. (304, 688);
  \draw[ipe pen heavier]
    (336, 784)
     .. controls (338.6667, 749.3333) and (338.6667, 717.3333) .. (336, 688);
  \draw[ipe pen heavier]
    (368, 784)
     .. controls (370.6667, 749.3333) and (370.6667, 717.3333) .. (368, 688);
  \draw[darkgray]
    (116, 784)
     .. controls (118.6667, 749.3333) and (118.6667, 717.3333) .. (116, 688);
  \draw[darkgray]
    (140, 784)
     .. controls (142.6667, 749.3333) and (142.6667, 717.3333) .. (140, 688);
  \draw[darkgray]
    (148, 784)
     .. controls (150.6667, 749.3333) and (150.6667, 717.3333) .. (148, 688);
  \draw[darkgray]
    (172, 784)
     .. controls (174.6667, 749.3333) and (174.6667, 717.3333) .. (172, 688);
  \draw[darkgray]
    (180, 784)
     .. controls (182.6667, 749.3333) and (182.6667, 717.3333) .. (180, 688);
  \draw[darkgray]
    (204, 784)
     .. controls (206.6667, 749.3333) and (206.6667, 717.3333) .. (204, 688);
  \draw[darkgray]
    (212, 784)
     .. controls (214.6667, 749.3333) and (214.6667, 717.3333) .. (212, 688);
  \draw[darkgray]
    (236, 784)
     .. controls (238.6667, 749.3333) and (238.6667, 717.3333) .. (236, 688);
  \draw[darkgray]
    (244, 784)
     .. controls (246.6667, 749.3333) and (246.6667, 717.3333) .. (244, 688);
  \draw[darkgray]
    (268, 784)
     .. controls (270.6667, 749.3333) and (270.6667, 717.3333) .. (268, 688);
  \draw[darkgray]
    (276, 784)
     .. controls (278.6667, 749.3333) and (278.6667, 717.3333) .. (276, 688);
  \draw[darkgray]
    (300, 784)
     .. controls (302.6667, 749.3333) and (302.6667, 717.3333) .. (300, 688);
  \draw[darkgray]
    (308, 784)
     .. controls (310.6667, 749.3333) and (310.6667, 717.3333) .. (308, 688);
  \draw[darkgray]
    (332, 784)
     .. controls (334.6667, 749.3333) and (334.6667, 717.3333) .. (332, 688);
  \draw[darkgray]
    (340, 784)
     .. controls (342.6667, 749.3333) and (342.6667, 717.3333) .. (340, 688);
  \draw[darkgray]
    (364, 784)
     .. controls (366.6667, 749.3333) and (366.6667, 717.3333) .. (364, 688);
  \draw[lightgray]
    (120, 784)
     .. controls (122.6667, 749.3333) and (122.6667, 717.3333) .. (120, 688);
  \draw[lightgray]
    (136, 784)
     .. controls (138.6667, 749.3333) and (138.6667, 717.3333) .. (136, 688);
  \draw[lightgray]
    (152, 784)
     .. controls (154.6667, 749.3333) and (154.6667, 717.3333) .. (152, 688);
  \draw[lightgray]
    (168, 784)
     .. controls (170.6667, 749.3333) and (170.6667, 717.3333) .. (168, 688);
  \draw[lightgray]
    (184, 784)
     .. controls (186.6667, 749.3333) and (186.6667, 717.3333) .. (184, 688);
  \draw[lightgray]
    (200, 784)
     .. controls (202.6667, 749.3333) and (202.6667, 717.3333) .. (200, 688);
  \draw[lightgray]
    (216, 784)
     .. controls (218.6667, 749.3333) and (218.6667, 717.3333) .. (216, 688);
  \draw[lightgray]
    (264, 784)
     .. controls (266.6667, 749.3333) and (266.6667, 717.3333) .. (264, 688);
  \draw[lightgray]
    (280, 784)
     .. controls (282.6667, 749.3333) and (282.6667, 717.3333) .. (280, 688);
  \draw[lightgray]
    (296, 784)
     .. controls (298.6667, 749.3333) and (298.6667, 717.3333) .. (296, 688);
  \draw[lightgray]
    (312, 784)
     .. controls (314.6667, 749.3333) and (314.6667, 717.3333) .. (312, 688);
  \draw[lightgray]
    (328, 784)
     .. controls (330.6667, 749.3333) and (330.6667, 717.3333) .. (328, 688);
  \draw[lightgray]
    (344, 784)
     .. controls (346.6667, 749.3333) and (346.6667, 717.3333) .. (344, 688);
  \draw[lightgray]
    (360, 784)
     .. controls (362.6667, 749.3333) and (362.6667, 717.3333) .. (360, 688);
  \draw
    (113.718, 752.3)
     .. controls (137.756, 751.022) and (141.617, 755.038) .. (147.5298, 757.4563)
     .. controls (153.4427, 759.8747) and (161.4073, 760.6953) .. (166.7312, 760.9212)
     .. controls (172.055, 761.147) and (174.738, 760.778) .. (181.419, 760.4457)
     .. controls (188.1, 760.1133) and (198.779, 759.8177) .. (206.671, 757.8772)
     .. controls (214.563, 755.9367) and (219.668, 752.3513) .. (225.0337, 751.4912)
     .. controls (230.3993, 750.631) and (236.0257, 752.496) .. (241.8147, 754.6738)
     .. controls (247.6037, 756.8517) and (253.5553, 759.3423) .. (258.8302, 760.7962)
     .. controls (264.105, 762.25) and (268.703, 762.667) .. (275.0532, 761.6333)
     .. controls (281.4033, 760.5997) and (289.5057, 758.1153) .. (294.8897, 756.8903)
     .. controls (300.2737, 755.6653) and (302.9393, 755.6997) .. (307.1983, 756.3813)
     .. controls (311.4573, 757.063) and (317.3097, 758.392) .. (322.6207, 758.9723)
     .. controls (327.9317, 759.5527) and (332.7013, 759.3843) .. (339.2822, 760.7762)
     .. controls (345.863, 762.168) and (354.255, 765.12) .. (369.404, 760.803);
  \draw
    (114, 734.181)
     .. controls (130.793, 731.826) and (138.3965, 732.951) .. (143.5272, 734.4208)
     .. controls (148.658, 735.8907) and (151.316, 737.7053) .. (156.6472, 738.3513)
     .. controls (161.9783, 738.9973) and (169.9827, 738.4747) .. (177.9867, 737.3165)
     .. controls (185.9907, 736.1583) and (193.9943, 734.3647) .. (199.3278, 733.08)
     .. controls (204.6613, 731.7953) and (207.3247, 731.0197) .. (211.5522, 730.2345)
     .. controls (215.7797, 729.4493) and (221.5713, 728.6547) .. (226.9042, 728.582)
     .. controls (232.237, 728.5093) and (237.111, 729.1587) .. (242.0268, 730.4877)
     .. controls (246.9427, 731.8167) and (251.9003, 733.8253) .. (257.236, 734.67)
     .. controls (262.5717, 735.5147) and (268.2853, 735.1953) .. (275.1303, 736.4153)
     .. controls (281.9753, 737.6353) and (289.9517, 740.3947) .. (295.2803, 741.2197)
     .. controls (300.609, 742.0447) and (303.29, 740.9353) .. (308.6347, 739.7635)
     .. controls (313.9793, 738.5917) and (321.9877, 737.3573) .. (327.3245, 736.9083)
     .. controls (332.6613, 736.4593) and (335.3267, 736.7957) .. (340.7598, 738.8506)
     .. controls (346.193, 740.9055) and (354.394, 744.679) .. (369.922, 743.535);
  \draw
    (113.827, 720.046)
     .. controls (145.701, 715.737) and (154.973, 718.702) .. (161.8795, 720.2032)
     .. controls (168.786, 721.7043) and (173.327, 721.7417) .. (178.5415, 721.115)
     .. controls (183.756, 720.4883) and (189.644, 719.1977) .. (194.9335, 717.7438)
     .. controls (200.223, 716.29) and (204.914, 714.673) .. (211.2622, 713.9337)
     .. controls (217.6103, 713.1943) and (225.6157, 713.3327) .. (230.9585, 713.5913)
     .. controls (236.3013, 713.85) and (238.9817, 714.229) .. (242.62, 715.2945)
     .. controls (246.2583, 716.36) and (250.8547, 718.112) .. (256.2052, 718.6435)
     .. controls (261.5557, 719.175) and (267.6603, 718.486) .. (273.1653, 717.8785)
     .. controls (278.6703, 717.271) and (283.5757, 716.745) .. (288.9265, 717.4253)
     .. controls (294.2773, 718.1057) and (300.0737, 719.9923) .. (305.8722, 720.2933)
     .. controls (311.6707, 720.5943) and (317.4713, 719.3097) .. (322.789, 718.6885)
     .. controls (328.1067, 718.0673) and (332.9413, 718.1097) .. (339.7442, 720.0001)
     .. controls (346.547, 721.8905) and (355.318, 725.629) .. (369.852, 721.078);
  \draw
    (113.114, 702.939)
     .. controls (128.703, 701.26) and (136.8645, 701.3725) .. (143.9848, 701.9223)
     .. controls (151.105, 702.472) and (157.184, 703.459) .. (162.565, 704.3015)
     .. controls (167.946, 705.144) and (172.629, 705.842) .. (177.7098, 705.7153)
     .. controls (182.7907, 705.5887) and (188.2693, 704.6373) .. (193.5702, 704.0467)
     .. controls (198.871, 703.456) and (203.994, 703.226) .. (209.3697, 702.3042)
     .. controls (214.7453, 701.3823) and (220.3737, 699.7687) .. (225.6435, 698.7862)
     .. controls (230.9133, 697.8037) and (235.8247, 697.4523) .. (240.7075, 697.2647)
     .. controls (245.5903, 697.077) and (250.4447, 697.053) .. (255.801, 697.3877)
     .. controls (261.1573, 697.7223) and (267.0157, 698.4157) .. (272.459, 699.58)
     .. controls (277.9023, 700.7443) and (282.9307, 702.3797) .. (288.2863, 702.7255)
     .. controls (293.642, 703.0713) and (299.325, 702.1277) .. (304.8495, 701.281)
     .. controls (310.374, 700.4343) and (315.74, 699.6847) .. (321.0332, 699.078)
     .. controls (326.3263, 698.4713) and (331.5467, 698.0077) .. (335.479, 697.618)
     .. controls (339.4113, 697.2283) and (342.0557, 696.9127) .. (349.4321, 697.5458)
     .. controls (356.8085, 698.179) and (368.917, 699.761) .. (368.985, 700.825);
  \draw
    (112.827, 771.929)
     .. controls (144.872, 771.169) and (153.1345, 773.28) .. (159.8064, 774.2593)
     .. controls (166.4783, 775.2387) and (171.5597, 775.0863) .. (176.475, 774.5402)
     .. controls (181.3903, 773.994) and (186.1397, 773.054) .. (191.5213, 772.2598)
     .. controls (196.903, 771.4657) and (202.917, 770.8173) .. (208.3208, 771.0268)
     .. controls (213.7247, 771.2363) and (218.5183, 772.3037) .. (223.8257, 772.7397)
     .. controls (229.133, 773.1757) and (234.954, 772.9803) .. (240.889, 773.4403)
     .. controls (246.824, 773.9003) and (252.873, 775.0157) .. (258.189, 775.2952)
     .. controls (263.505, 775.5747) and (268.088, 775.0183) .. (272.7532, 773.7597)
     .. controls (277.4183, 772.501) and (282.1657, 770.54) .. (287.5218, 770.17)
     .. controls (292.878, 769.8) and (298.843, 771.021) .. (305.7875, 772.2358)
     .. controls (312.732, 773.4507) and (320.656, 774.6593) .. (325.9667, 775.0257)
     .. controls (331.2773, 775.392) and (333.9747, 774.916) .. (339.5753, 773.9963)
     .. controls (345.176, 773.0765) and (353.68, 771.713) .. (368.677, 774.363);
  \draw[lightgray]
    (128, 784)
     .. controls (130.6667, 749.3333) and (130.6667, 717.3333) .. (128, 688);
  \draw[lightgray]
    (160, 784)
     .. controls (162.6667, 749.3333) and (162.6667, 717.3333) .. (160, 688);
  \draw[lightgray]
    (192, 784)
     .. controls (194.6667, 749.3333) and (194.6667, 717.3333) .. (192, 688);
  \draw[lightgray]
    (224, 784)
     .. controls (226.6667, 749.3333) and (226.6667, 717.3333) .. (224, 688);
  \draw[lightgray]
    (256, 784)
     .. controls (258.6667, 749.3333) and (258.6667, 717.3333) .. (256, 688);
  \draw[lightgray]
    (288, 784)
     .. controls (290.6667, 749.3333) and (290.6667, 717.3333) .. (288, 688);
  \draw[lightgray]
    (320, 784)
     .. controls (322.6667, 749.3333) and (322.6667, 717.3333) .. (320, 688);
  \draw[lightgray]
    (352, 784)
     .. controls (354.6667, 749.3333) and (354.6667, 717.3333) .. (352, 688);
  \pic[fill=gold]
     at (128.834, 771.781) {ipe fdisk};
  \pic[fill=gold]
     at (129.718, 752.325) {ipe fdisk};
  \pic[fill=gold]
     at (129.998, 732.826) {ipe fdisk};
  \pic[fill=gold]
     at (129.781, 718.311) {ipe fdisk};
  \pic[fill=gold]
     at (129.04, 701.682) {ipe fdisk};
  \pic[fill=gold]
     at (160.676, 774.382) {ipe fdisk};
  \pic[fill=gold]
     at (161.415, 760.536) {ipe fdisk};
  \pic[fill=gold]
     at (161.982, 738.631) {ipe fdisk};
  \pic[fill=gold]
     at (161.831, 720.193) {ipe fdisk};
  \pic[fill=gold]
     at (161.181, 704.085) {ipe fdisk};
  \pic[fill=gold]
     at (192.819, 772.073) {ipe fdisk};
  \pic[fill=gold]
     at (193.449, 759.776) {ipe fdisk};
  \pic[fill=gold]
     at (194.001, 734.315) {ipe fdisk};
  \pic[fill=gold]
     at (193.773, 718.057) {ipe fdisk};
  \pic[fill=gold]
     at (193.18, 704.091) {ipe fdisk};
  \pic[fill=gold]
     at (224.773, 772.811) {ipe fdisk};
  \pic[fill=gold]
     at (225.745, 751.393) {ipe fdisk};
  \pic[fill=gold]
     at (225.975, 728.602) {ipe fdisk};
  \pic[fill=gold]
     at (225.619, 713.413) {ipe fdisk};
  \pic[fill=gold]
     at (224.862, 698.936) {ipe fdisk};
  \pic[fill=gold]
     at (256.624, 775.192) {ipe fdisk};
  \pic[fill=gold]
     at (257.422, 760.389) {ipe fdisk};
  \pic[fill=gold]
     at (258, 734.783) {ipe fdisk};
  \pic[fill=gold]
     at (257.792, 718.766) {ipe fdisk};
  \pic[fill=gold]
     at (256.761, 697.451) {ipe fdisk};
  \pic[fill=gold]
     at (288.935, 770.109) {ipe fdisk};
  \pic[fill=gold]
     at (289.512, 758.238) {ipe fdisk};
  \pic[fill=gold]
     at (289.969, 740.073) {ipe fdisk};
  \pic[fill=gold]
     at (289.758, 717.54) {ipe fdisk};
  \pic[fill=gold]
     at (289.103, 702.768) {ipe fdisk};
  \pic[fill=gold]
     at (320.668, 774.502) {ipe fdisk};
  \pic[fill=gold]
     at (321.486, 758.838) {ipe fdisk};
  \pic[fill=gold]
     at (321.988, 737.495) {ipe fdisk};
  \pic[fill=gold]
     at (321.794, 718.811) {ipe fdisk};
  \pic[fill=gold]
     at (320.872, 699.097) {ipe fdisk};
  \pic[fill=gold]
     at (352.791, 697.867) {ipe fdisk};
  \pic[fill=gold]
     at (353.896, 723.15) {ipe fdisk};
  \pic[fill=gold]
     at (353.933, 742.872) {ipe fdisk};
  \pic[fill=gold]
     at (353.302, 763.044) {ipe fdisk};
  \pic[fill=gold]
     at (352.774, 772.817) {ipe fdisk};
\end{tikzpicture}
\caption{An illustration of the railed  annuli $\mathcal{A}_{\bar{w}}$, for every $\bar{w}\in\{0,1\}^3$, of a railed annulus $\mathcal{A}$.
For each $\bar{w}\in\{0,1\}^3$, the yellow vertices inside $\mathcal{A}_{\bar{w}}$ are the boundary vertices of ${\bf G}_{\bar{w}}^{(t)}$.}
\label{figr_nested}
\end{figure}

Given a $\bar{w}\in \{0,1\}^{r}$,
we define 
${\bf G}_{\bar{w}}^{(t)}$ to be the graph ${\bf G}_{\mathcal{A}_{\bar{w}}}^{(t)}$.
In the rest of the paper, given an $r\in\mathbb{N}$ and a $\bar{w}\in\{0,1\}^r$,
we denote by ${\bf G}_{\bar{w}}$ the $\ell$-boundaried graph ${\bf G}_{\bar{w}}^{(\ell)}$, where $\ell:= \funref{@norteamericana}(\binom{r}{2})+1$ (where $\funref{@norteamericana}$ is the first of the two functions of~\autoref{prop_combinglemma}), we use $G_{\bar{w}}$ to denote its underlying graph, and $v_{r+1},\ldots, v_{r+\ell}$ to denote its boundary vertices.
Also,
we denote by $G_{\bar{w}}^{\sf out}$ the graph $G\setminus(V(G_{\bar{w}})\setminus\{v_{r+1},\ldots, v_{r+\ell}\})$
and by ${\bf G}_{\bar{w}}^{\sf out}$ the $\ell$-boundaried graph $(G_{\bar{w}}^{\sf out}, v_{r+1},\ldots, v_{r+\ell})$.

\subsection{Combing linkages in levelings of flat annuli}
\label{subsec_combing_in_levelings}

We now formulate~\autoref{prop_combinglemma} in terms of pairings of the graphs
 ${\bf G}_{\bar{w}}$ and  ${\bf G}_{\bar{w}}^{\sf out}$.
 That is, for every linkage with terminals $v_1,\ldots,v_r$, we
 prove that another equivalent linkage can be found, being combed through the extra boundary vertices of 
${\bf G}_{\bar{w}}$ and  ${\bf G}_{\bar{w}}^{\sf out}$ (see~\autoref{lem_colomodelsrerout}).
Before presenting~\autoref{lem_colomodelsrerout} and its proof, we introduce some additional notation.

Let $G$ be a graph and let $(\mathcal{A},\mathfrak{R})$ be a well-aligned $(2^r \cdot(p-1) +1,q)$-railed annulus flatness pair of $G$, for some odd $p\in\mathbb{N}_{\geq 3}$, and some $q\in\mathbb{N}_{\geq 3}$.
Let $v_1,\ldots, v_r\in V(G)$ and let $(\bar{w},\omega_1\ldots \omega_r)$ be the trace of $(v_1,\ldots, v_r)$.
We use $\bar{v}^\bullet$ to denote the tuple $(v_1^\bullet,\ldots, v_r^\bullet)$, where for every $i\in[r]$, $v_i^\bullet = v_i$, if $\omega_i = \bullet$, and $v_i^\bullet=\mathspace$, if $\omega_i=\circ$.
Also, we use $\bar{v}^\circ$ to denote the tuple $(v_1^\circ,\ldots, v_r^\circ)$, where for every $i\in[r]$, $v_i^\circ = v_i$, if $\omega_i = \circ$, and $v_i^\circ=\mathspace$, if $\omega_i=\bullet$.
We use $({\bf G}_{\bar{w}}, \bar{v}^\bullet)$ to denote $(G_{\bar{w}},\bar{v}^\bullet,v_{r+1}, \ldots, v_{r+\ell})$ and
$({\bf G}_{\bar{w}}^{\sf out}, \bar{v}^{\circ})$ to denote $(G_{\bar{w}}^{\sf out},\bar{v}^{\circ}, v_{r+1}, \ldots, v_{r+\ell})$.
\medskip

\begin{lemma}\label{lem_colomodelsrerout}
There is a function $\newfun{@aristocracias}:\mathbb{N}\to\mathbb{N}$ such that for every $r\in\mathbb{N}$, if $G$ is a graph, $(\mathcal{A},\mathfrak{R})$ is a well-aligned $(\funref{@aristocracias}(r),\ell)$-railed annulus flatness pair of $G$,
and $(L,v_1,\ldots, v_r)\in \Models({G},v_1,\ldots,v_r)$, where $v_1,\ldots, v_r\in V(G)$,
then there exists a $(\tilde{{L}},v_1,\ldots, v_r)\in \Models({G},v_1,\ldots,v_r)$ such that
$L\equiv\tilde{L}$ and
 $(\tilde{{L}},v_1,\ldots,v_r)\in \Models({\bf G}_{\bar{w}},\bar{v}^\bullet)\oplus \Models({\bf G}_{\bar{w}}^{\sf out}, \bar{v}^\circ)$,
 where $(\bar{w},\bar{\omega}) = {\sf trace}_{(\mathcal{A},\mathfrak{R})}(v_1,\ldots, v_r)$. 
\end{lemma}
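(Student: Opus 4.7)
The plan is to apply the Linkage Combing Lemma (\autoref{prop_combinglemma}) to the specific sub-railed annulus $\mathcal{A}_{\bar{w}}$ of $\mathcal{A}$ determined by the trace $(\bar{w},\bar{\omega})$ of $(v_1,\ldots,v_r)$. I would define $\funref{@aristocracias}(r) := 2^r \cdot \funref{@heteronomously}(\binom{r}{2}) + 1$, so that each of the $2^r$ sub-railed annuli $\mathcal{A}_{\bar{w}}$ of $\mathcal{A}$ is a $(p,\ell)$-railed annulus with $p \geq \funref{@heteronomously}(\binom{r}{2})+1$, which is the width required by \autoref{prop_combinglemma} when $s = 1$ and $k = \binom{r}{2}$. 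The number of rails $\ell = \funref{@norteamericana}(\binom{r}{2})+1$ used to define ${\bf G}_{\bar{w}}$ was chosen precisely so that $I = [\ell]$ satisfies $|I| > \funref{@norteamericana}(\binom{r}{2})$, while $\binom{r}{2}$ upper-bounds the size of any linkage on $r$ terminals.

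By the definition of the trace, every terminal $v_i$ with $\omega_i = \bullet$ lies in the influence of a cycle strictly inside the middle cycle $C_{(p+1)/2}$ of $\mathcal{A}_{\bar{w}}$, while every $v_i$ with $\omega_i = \circ$ lies strictly outside $C_{(p+1)/2}$; in particular, no terminal lies in the ``middle strip'' of $\mathcal{A}_{\bar{w}}$. Using \autoref{lem_levelingpaths}, we first replace $L$ by an equivalent linkage $L'$ in ${\sf Leveling}_{(\mathcal{A},\mathfrak{R})}(G)$; this leveling is partially $\Delta$-embedded for the closed annulus $\Delta$ underlying $\mathcal{A}_{\bar{w}}$, and by the previous observation $L'$ is $\Delta$-avoiding.

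Applying \autoref{prop_combinglemma} to $L'$ with the $\Delta$-embedded railed annulus $\mathcal{A}_{\bar{w}}$, $s = 1$, and $I = [\ell]$ yields an equivalent linkage $\tilde{L}'$ that is $(1,I)$-confined in $\mathcal{A}_{\bar{w}}$. This means that $\tilde{L}'$ meets the middle cycle of $\mathcal{A}_{\bar{w}}$ only at the rail vertices $r_1,\ldots,r_\ell$, which by our conventions coincide with the boundary vertices $v_{r+1},\ldots,v_{r+\ell}$ of ${\bf G}_{\bar{w}}$. Hence $\tilde{L}'$ splits into a sublinkage $\tilde{L}_{\bar{w}} \subseteq G_{\bar{w}}$ (whose terminals are the entries of $\bar{v}^{\bullet}$ together with those boundary vertices met by $\tilde{L}'$) and a sublinkage $\tilde{L}_{\bar{w}}^{\sf out} \subseteq G_{\bar{w}}^{\sf out}$ (whose terminals are the entries of $\bar{v}^{\circ}$ together with the same boundary vertices). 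These two sublinkages yield $\ell$-compatible pairings in $\Models({\bf G}_{\bar{w}}, \bar{v}^{\bullet})$ and $\Models({\bf G}_{\bar{w}}^{\sf out}, \bar{v}^{\circ})$ whose gluing is the desired $(\tilde{L}, v_1, \ldots, v_r)$.

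The main obstacle will be correctly handling the passage between $G$ and its leveling: the leveling contracts the interior of each flap of $(\mathcal{A},\mathfrak{R})$ to a single vertex, and terminals $v_i$ lying inside a flap must be tracked carefully so that equivalence between the pre- and post-leveling linkages is preserved. A secondary technicality is ensuring that the two produced sublinkages are valid pairings: every path must be non-trivial (which holds because the rail vertices $r_j$ are distinct from the $v_i$ by the no-terminal-in-the-middle-strip property), and any ``internal'' path of $\tilde{L}'$ with both endpoints among $\{v_{r+1},\ldots,v_{r+\ell}\}$ must be consistently attributed to one of the two sides, which can be done according to the planar embedding of $\tilde{L}'$ in the leveling of $\mathcal{A}_{\bar{w}}$.
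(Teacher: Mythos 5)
Your overall route is the paper's route (use the trace to select the sub-annulus $\mathcal{A}_{\bar{w}}$, pass to a leveling, comb via \autoref{prop_combinglemma} with $I=[\ell]$, then split the combed linkage into an inner and an outer pairing), but there is a genuine gap in the combing step. You invoke the Combing Lemma with $s=1$ and then assert that $(1,I)$-confinement means $\tilde{L}'$ ``meets the middle cycle of $\mathcal{A}_{\bar{w}}$ only at the rail vertices $r_1,\ldots,r_\ell$.'' That is a misreading of confinement: being $(1,I)$-confined only says that the intersection of the linkage with the middle cycle $C_{(p+1)/2}$ is contained in $\bigcup_{i\in I}P_i$, i.e.\ in the subpaths $P_{(p+1)/2,j}$, which in general have many vertices, of which $r_j$ is just one designated endpoint. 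Worse, with $s=1$ a path of the combed linkage may cross the middle cycle by an edge joining a cycle vertex $u\in P_{(p+1)/2,j}\setminus\{r_j\}$ directly to a flap-vertex lying on the outer side; such an edge lies in neither ${\bf G}_{\bar{w}}$ (whose outer boundary stops at the middle cycle) nor ${\bf G}_{\bar{w}}^{\sf out}$ (which contains only $r_1,\ldots,r_\ell$ from that cycle), so $\tilde{L}\cap G_{\bar{w}}$ and $\tilde{L}\cap G_{\bar{w}}^{\sf out}$ are not $\ell$-compatible pairings gluing back to $\tilde{L}$, and $r_j$ need not even be a terminal of either piece. This is exactly why the paper combs with $s=3$: the $(3,I)$-confinement forces the linkage, throughout a buffer of one cycle on each side of $C_{(p+1)/2}$, to run only along the rails, so every crossing traverses $P_{(p+1)/2,j}$ and in particular passes through $r_j$, making the split at the designated boundary vertices clean. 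Accordingly the width must be taken with this buffer (odd $p\geq \funref{@heteronomously}(\binom{r}{2})+3$; the paper takes $p=\funref{@heteronomously}(\binom{r}{2})+5$ and $\funref{@aristocracias}(r)=2^r(p-1)+1$), so your $s=1$ choice and the corresponding value of $\funref{@aristocracias}$ need to be revised.

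A secondary deviation: you apply \autoref{lem_levelingpaths} to ${\sf Leveling}_{(\mathcal{A},\mathfrak{R})}(G)$, i.e.\ the leveling of the whole annulus, and then flag terminals inside flaps as an ``obstacle.'' The clean resolution, and what the paper does, is to level only the sub-annulus: since $(\bar{w},\bar{\omega})={\sf trace}_{(\mathcal{A},\mathfrak{R})}(v_1,\ldots,v_r)$ guarantees $v_1,\ldots,v_r\notin V({\sf Influence}_{\mathfrak{R}}(\mathcal{A}_{\bar{w}}))$, the hypothesis of \autoref{lem_levelingpaths} holds for $\mathcal{A}_{\bar{w}}$ (but not in general for $\mathcal{A}$), and the boundaried graphs ${\bf G}_{\bar{w}}$ and ${\bf G}_{\bar{w}}^{\sf out}$ are by definition built from ${\sf Leveling}_{(\mathcal{A}_{\bar{w}},\mathfrak{R})}(G)$, so working in that leveling makes the terminal-in-flap issue disappear rather than something to be ``tracked carefully.''
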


\begin{proof}
We set $\funref{@aristocracias}(r) = 2^r \cdot(p-1) +1$, where $p=\funref{@heteronomously}(\binom{r}{2})+5$.
Recall that ${\bf G}_{\bar{w}}$ and ${\bf G}_{\bar{w}}^{\sf out}$ are $(r+\ell)$-boundaried graphs where $\ell = \funref{@norteamericana}(\binom{r}{2})+1$ and whose boundary vertices are $v_{1},\ldots,v_{r+\ell}$.
Let $(L,v_1,\ldots, v_r)\in \Models({G},v_1,\ldots,v_r)$.
%We set $T=\{v_i\mid i\in[r] \text{ and }v_i\neq\mathspace\}$.

We set $(\bar{w},\omega_1\ldots \omega_r)={\sf trace}_{(\mathcal{A},\frR)}(v_1,\ldots, v_r)$ and we consider the $(p,q)$-railed annulus $\mathcal{A}_{\bar{w}}$ of $G$.
Observe that $v_1,\ldots, v_r\notin {\sf Influence}_\frR (\mathcal{A}_{\bar{w}})$
and therefore, by~\autoref{lem_levelingpaths}, we have that ${\sf Leveling}_{(\mathcal{A}_{\bar{w}},\frR)}(G)$
contains a linkage $\hat{L}$ that is equivalent to $L$ and it is $\ann(R_{\mathcal{A}_{\bar{w}}})$-avoiding.
We set $\Delta = \ann(R_{\mathcal{A}_{\bar{w}}})$ and we note that $R_{\mathcal{A}_{\bar{w}}}$ is a $\Delta$-embedded $(p,q)$-railed annulus
of ${\sf Leveling}_{(\mathcal{A},\frR)}(G)$.
By applying~\autoref{prop_combinglemma} for the $(p,q)$-railed annulus $R_{\mathcal{A}_{\bar{w}}}$,
for $s=3$, and for $I = [\ell]$,
we have that ${\sf Leveling}_{(\mathcal{A}_{\bar{w}},\frR)}(G)$
contains a linkage $\tilde{L}$ that is equivalent to $\hat{L}$  (and, therefore, equivalent to $L$)
and is $(3,I)$-confined in $R_{\mathcal{A}_{\bar{w}}}$.
%and $\tilde{M}\setminus \ann(R_{\mathcal{A}_{\bar{w}}})\subseteq {M}\setminus \ann(R_{\mathcal{A}_{\bar{w}}})$.

We set $\tilde{L}^{\bullet}:=\tilde{L}\cap G_{\bar{w}}$
and 
$\tilde{L}^\circ:=\tilde{L}\cap G_{\bar{w}}^{\sf out}$.
Observe that 
$(\tilde{{L}}^\bullet,\bar{v}^\bullet, v_{r+1},\ldots, v_{r+\ell})\in \Models({\bf G}_{\bar{w}}, \bar{v}^\bullet)$ and
$(\tilde{{L}}^\circ, \bar{v}^\circ, v_{r+1},\ldots, v_{r+\ell})\in \Models({\bf G}_{\bar{w}}^{\sf out}, \bar{v}^\circ)$.
Also, the pairings $(\tilde{{L}}^\bullet, \bar{v}^\bullet, v_{r+1},\ldots, v_{r+\ell})$
and $(\tilde{{L}}^\circ, \bar{v}^\circ, v_{r+1},\ldots, v_{r+\ell})$
are $\ell$-compatible and
$$(\tilde{{L}},v_1,\ldots, v_r) =
(\tilde{{L}}^\bullet, \bar{v}^\bullet, v_{r+1},\ldots, v_{r+\ell})
\oplus
(\tilde{{L}}^\circ, \bar{v}^\circ, v_{r+1},\ldots, v_{r+\ell}).$$
Since
$(\tilde{{L}}^\bullet, \bar{v}^\bullet, v_{r+1},\ldots, v_{r+\ell})\in  \Models({\bf G}_{\bar{w}}, \bar{v}^\bullet)$
and
$(\tilde{{L}}^\circ, \bar{v}^\circ, v_{r+1},\ldots, v_{r+\ell})\in \Models({\bf G}_{\bar{w}}^{\sf out}, \bar{v}^\circ)$,
we have that
$(\tilde{{L}},v_1,\ldots, v_r)\in \Models({\bf G}_{\bar{w}}, \bar{v}^\bullet)\oplus \Models({\bf G}_{\bar{w}}^{\sf out}, \bar{v}^\circ).$
\end{proof}

\subsection{Partial signatures of tuples of vertices}\label{subsec_signatures}
We now present the definition of partial signatures.
It is a recursive definition, that in the base case captures the pattern of a boundaried graph and for every recursive step, asks for all possible partial signatures that can be obtained after fixing another (boundary) vertex or its absence.
Intuitively, the symbol ``$\mathspace$'' expresses the absence of a vertex for this entry, or, in other words, that this vertex should be picked inside some boundaried graph that should be glued to our considered boundaried graph.
In this sense, partial signatures express ``partial'' patterns (see~\autoref{obs_bound}).\medskip

Let $\tau$ be a colored-graph vocabulary and let ${\bf c}=\{{\sf c}_1,\ldots,{\sf c}_l\}$ be a collection of $l$ constant symbols.
Let $r\in\mathbb{N}$.
Recall that  $\ell = \funref{@norteamericana}(\binom{r}{2})+1$, where $\funref{@norteamericana}$ is the second function of~\autoref{prop_combinglemma}.
Let $\mathfrak{G}$ be a $\tau$-structure, let ${R}_1,\ldots,R_r\subseteq V(G)$, and let $(\mathcal{A},\mathfrak{R})$ be a well-aligned $(2^r+1,\ell)$-railed annulus flatness pair of $G$.
For every $i\in[r]$ and every $\bar{w}\in \{0,1\}^{i}$, we define
$$R^{\bar{w}}_i = \{v\in R_i\mid \textsf{stamp}_{w_1\ldots w_{i-1}}(v) =(w_i,\bullet)\}$$
and
$$V^{\bar{w}} = \{v\in V(G)\mid \textsf{stamp}_{w_1\ldots w_{i-1}}(v) =(w_i,\bullet)\}.$$

\paragraph{Partial signature.}
Let $d\in\mathbb{N}$ and $r\in[d-1]$.
Given a $\bar{w}\in\{0,1\}^{d}$ and $v_1,\ldots, v_{d}$
such that for every $i\in[d]$, $v_i\in V^{w_1\ldots w_{i}}\cup\{\mathspace\}$
we define
\begin{equation*}
\mathsf{partial\text{-}sig}^{0}_r(\mathfrak{G},\bar{R},\bar{w},v_1,\ldots, v_{d}) =  {\sf pattern}({\bf G}_{\bar{w}},\mathspace^l,v_1,\ldots, v_{d}).
\end{equation*}

Also, for each $i\in[d-r]$, every $\bar{w}\in\{0,1\}^{d-i}$ and
every $v_1,\ldots,v_{d-i}\in V(G)\cup\{\mathspace\}$ such that
for every 
$j\in[d-i]$,
$v_j\in V^{w_1\ldots w_{j}}\cup\{\mathspace\}$,
we define
\begin{eqnarray*}
\mathsf{partial\text{-}sig}^{i}_r(\mathfrak{G},\bar{R},\bar{w},v_1,\ldots, v_{d-i})
& = &
\{\mathsf{partial\text{-}sig}^{i-1}_r(\mathfrak{G},\bar{R},\bar{w}0,v_1,\ldots, v_{d-i},v)\mid v\in V^{\bar{w}0}\}\\
& &\cup\  \{\mathsf{partial\text{-}sig}^{i-1}_r(\mathfrak{G},\bar{R},\bar{w}1,v_1,\ldots, v_{d-i},v)\mid v\in V^{\bar{w}1}\}\\
& & \cup\ \{\mathsf{partial\text{-}sig}^{i-1}_r(\mathfrak{G},\bar{R},\bar{w}0,v_1,\ldots, v_{d-i},\mathspace)\}\\
& & \cup\ \{\mathsf{partial\text{-}sig}^{i-1}_r(\mathfrak{G},\bar{R},\bar{w}1,v_1,\ldots, v_{d-i},\mathspace)\}
\end{eqnarray*}

Also, for each $i\in[d-r+1,d-1]$, every $\bar{w}\in\{0,1\}^{d-i}$ and
every $v_1,\ldots,v_{d-i}\in V(G)\cup\{\mathspace\}$ such that
for every $j\in[d-i]$,
$v_j\in V^{w_1\ldots w_{j}}\cup\{\mathspace\}$,
we define
\begin{eqnarray*}
\mathsf{partial\text{-}sig}^{i}_r(\mathfrak{G},\bar{R},\bar{w},v_1,\ldots, v_{d-i})
& = &
\{\mathsf{partial\text{-}sig}^{i-1}_r(\mathfrak{G},\bar{R},\bar{w}0,v_1,\ldots, v_{d-i},v)\mid v\in R^{\bar{w}0}_{d-i+1}\}\\
& &\cup\  \{\mathsf{partial\text{-}sig}^{i-1}_r(\mathfrak{G},\bar{R},\bar{w}1,v_1,\ldots, v_{d-i},v)\mid v\in R^{\bar{w}1}_{d-i+1}\}\\
& & \cup\ \{\mathsf{partial\text{-}sig}^{i-1}_r(\mathfrak{G},\bar{R},\bar{w}0,v_1,\ldots, v_{d-i},\mathspace)\}\\
& & \cup\ \{\mathsf{partial\text{-}sig}^{i-1}_r(\mathfrak{G},\bar{R},\bar{w}1,v_1,\ldots, v_{d-i},\mathspace)\}
\end{eqnarray*}
Finally, we define
\begin{eqnarray*}
\mathsf{partial\text{-}sig}^{d}_r(\mathfrak{G},\bar{R}) 
& = &
 \{\mathsf{partial\text{-}sig}^{d-1}_r(\mathfrak{G},\bar{R},0,v)\mid v\in R^{0}_1\}\\
& &\cup\  \{\mathsf{partial\text{-}sig}^{d-1}_r(\mathfrak{G},\bar{R},1,v)\mid v\in R^{1}_1\}\\
& & \cup\ \{\mathsf{partial\text{-}sig}^{d-1}_r(\mathfrak{G},\bar{R},0,\mathspace)\}\\
& & \cup\ \{\mathsf{partial\text{-}sig}^{d-1}_r(\mathfrak{G},\bar{R},1,\mathspace)\}
\end{eqnarray*}

We use $\mathbb{P}(A)$ to denote the powerset of a set $A$ and for every $d\in\mathbb{N}$,
we use $\mathbb{P}^d(A)$ to denote the
set $\underbrace{\mathbb{P}(\mathbb{P}(\cdots\mathbb{P}}_{d\text{-times}}(A)))$.
We observe the following.

\begin{observation}\label{obs_bound}
Let $d\in\mathbb{N}$ and $r\in[d-1]$.
For every $i\in[0,d-1]$, every $\bar{w}\in\{0,1\}^{d-i}$, and every $\bar{v}\in (V(G)\cup\{\mathspace\})^{r-i}$, where
for every $j\in[d-i]$, $v_j\in V^{w_1\ldots w_{j}}\cup\{\mathspace\}$,
it holds that
$\mathsf{partial\text{-}sig}^{i}_r(\mathfrak{G},\bar{R},\bar{w},\bar{v}) \in \mathbb{P}^{i}(\mathcal{G}_{\sf pat}^{(d+\ell,h,l)})$ and $\mathsf{partial\text{-}sig}^{d}_r (\mathfrak{G},\bar{R}) \in\mathbb{P}^d(\mathcal{G}_{\sf pat}^{(d+\ell,h,l)})$.
\end{observation}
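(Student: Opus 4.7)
The plan is to proceed by a straightforward induction on $i\in[0,r-1]$, with a single extra step for $\textsf{sig}^r$, since the claim is essentially a type-check that follows the recursive skeleton of the definition of signatures given just above the statement. The only non-definitional ingredient is to identify the base case correctly.

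\medskip\noindent\textbf{Base case $i=0$.} First I would verify that $\textsf{sig}^0(\mathfrak{G},R,\bar{w},v_1,\ldots,v_r)$ lies in $\mathcal{G}_{\sf pat}^{(r+\ell,h,l)} = \mathbb{P}^0(\mathcal{G}_{\sf pat}^{(r+\ell,h,l)})$. By definition, this object is ${\sf pattern}({\bf G}_{\bar{w}},\mathspace^l,v_1,\ldots,v_r)$. Recall that ${\bf G}_{\bar{w}}$ is an $\ell$-boundaried graph with boundary vertices $v_{r+1},\ldots,v_{r+\ell}$; hence, viewed together with the apex-tuple $\mathspace^l$ and the boundary extension $(v_1,\ldots,v_r,v_{r+1},\ldots,v_{r+\ell})$, it is an $(r+\ell)$-boundaried $(h,l)$-colored graph (the values $v_j=\mathspace$ are permitted by the definition of $\mathcal{B}^{(r+\ell,h,l)}$). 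Applying the pattern construction to an element of $\mathcal{B}^{(r+\ell,h,l)}$ gives, by definition, a member of $\mathcal{G}_{\sf pat}^{(r+\ell,h,l)}$.

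\medskip\noindent\textbf{Inductive step.} Fix $i\in[r-1]$ and assume the claim for $i-1$. By definition, $\textsf{sig}^i(\mathfrak{G},R,\bar{w},v_1,\ldots,v_{r-i})$ is the union of four families whose members are of the form $\textsf{sig}^{i-1}(\mathfrak{G},R,\bar{w}',v_1,\ldots,v_{r-i},v)$ for $\bar{w}'\in\{\bar{w}0,\bar{w}1\}$ and $v\in R^{\bar{w}'}\cup\{\mathspace\}$. By the induction hypothesis, each such element lies in $\mathbb{P}^{i-1}(\mathcal{G}_{\sf pat}^{(r+\ell,h,l)})$. Hence $\textsf{sig}^i(\mathfrak{G},R,\bar{w},v_1,\ldots,v_{r-i})$ is a \emph{set} of elements of $\mathbb{P}^{i-1}(\mathcal{G}_{\sf pat}^{(r+\ell,h,l)})$, i.e., a subset of $\mathbb{P}^{i-1}(\mathcal{G}_{\sf pat}^{(r+\ell,h,l)})$, and therefore a member of $\mathbb{P}^{i}(\mathcal{G}_{\sf pat}^{(r+\ell,h,l)})$, as required.

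\medskip\noindent\textbf{Conclusion for $\textsf{sig}^r$.} The same reasoning applies at the outermost level: $\textsf{sig}^r(\mathfrak{G},R)$ is by definition a set whose members are of the form $\textsf{sig}^{r-1}(\mathfrak{G},R,w,v)$ for $w\in\{0,1\}$ and $v\in R^{w}\cup\{\mathspace\}$, each of which lies in $\mathbb{P}^{r-1}(\mathcal{G}_{\sf pat}^{(r+\ell,h,l)})$ by the previous step. Hence $\textsf{sig}^r(\mathfrak{G},R)\in \mathbb{P}^{r}(\mathcal{G}_{\sf pat}^{(r+\ell,h,l)})$, finishing the proof. No step is substantially difficult; the only point deserving attention is making the base case parse correctly, namely that $({\bf G}_{\bar{w}},\mathspace^l,v_1,\ldots,v_r)$ is interpreted as an $(r+\ell)$-boundaried $(h,l)$-colored graph so that the invariant $\mathcal{G}_{\sf pat}^{(r+\ell,h,l)}$ is sharp.
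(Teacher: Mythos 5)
Your proof is correct and is exactly the routine type-check the paper has in mind: the paper states this as an observation without proof, since it follows immediately from the recursive definition of $\textsf{sig}^i$ once one notes (as you do) that $({\bf G}_{\bar{w}},\mathspace^l,v_1,\ldots,v_r)$ together with the boundary vertices $v_{r+1},\ldots,v_{r+\ell}$ is an $(r+\ell)$-boundaried $(h,l)$-colored graph, so the base case lands in $\mathcal{G}_{\sf pat}^{(r+\ell,h,l)}$ and each recursive level adds one application of $\mathbb{P}$. No gaps.
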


Following the above definition, we also define (global) signatures where some vertices, up to some index, are asked to belong to the corresponding $R_i$ while the rest of them can belong in $V(G)$. More formally, given some $d\in\mathbb{N}$ and some $r\in[0,d]$, for every $(v_1,\ldots,v_d)\in (V(G)\cup\{\mathspace\})^d$, we define $\mathsf{sig}^0_r(\mathfrak{G},\bar{R},v_1,\ldots,v_d)$
to be the atomic type of $(v_1,\ldots,v_d)$.
Also, for each $i\in[d-r]$, and every $v_1,\ldots,v_{d-i}\in V(G)\cup\{\mathspace\}$,
we define
\[\mathsf{sig}^i_r(\mathfrak{G},\bar{R},v_1,\ldots,v_{d-i}) = \big\{\mathsf{sig}^{i-1}_r(\mathfrak{G},\bar{R},v_1,\ldots,v_{d-i},u)\mid u\in V(G)\cup\{\mathspace\}\big\}.\]
Also, for each $i\in[d-r+1,d-1]$ and every $v_1,\ldots,v_{d-i}\in V(V(G)\cup\{\mathspace\}$,
we define
\[\mathsf{sig}^i_r(\mathfrak{G},\bar{R},v_1,\ldots,v_{d-i}) = \big\{\mathsf{sig}^{i-1}_r(\mathfrak{G},\bar{R},v_1,\ldots,v_{d-i},u)\mid u\in V(G)\cup\{\mathspace\}\big\},\]
while for each 
$i\in[d-r]$ and every $v_1,\ldots,v_{d-i}\in V(V(G)\cup\{\mathspace\}$,
we define
\[\mathsf{sig}^i_r(\mathfrak{G},\bar{R},v_1,\ldots,v_{d-i}) = \big\{\mathsf{sig}^{i-1}_r(\mathfrak{G},\bar{R},v_1,\ldots,v_{d-i},u)\mid u\in R_{d-i+1}\cup\{\mathspace\}\big\}.\]
Finally, we define
\[\mathsf{sig}^d_r(\mathfrak{G},\bar{R}) = \big\{\mathsf{sig}^{d-1}_r(\mathfrak{G},\bar{R},v)\mid v\in R_1\cup\{\mathspace\}\big\}.\]

\subsection{Exchangeability of graphs with the same partial signature}\label{subsec_exchangability}
The goal of this section is to present~\autoref{lem_equirep} and its proof.
This result states that in the presence of a railed annulus flatness pair in side an (annotated) colored graph $(\mathfrak{D},R^{\diamond}_1,\ldots,R^{\diamond}_r)$,
two (annotated) colored graphs $(\mathfrak{G},R_1,\ldots,R_r)$, 
$(\mathfrak{G}',R_1',\ldots,R_r')$
that yield the same partial signature when ``glued in the inner part'' of $(\mathfrak{D},R^{\diamond}_1,\ldots,R^{\diamond}_r)$,
satisfy the same formulas when we additionally glue another (annotated) colored graph 
$(\mathfrak{F},R^\star_1,\ldots,R^\star_r)$
in the outer part of $(\mathfrak{D},R^{\diamond}_1,\ldots,R^{\diamond}_r)$.
We now formalize the idea of ``gluing''.

\paragraph{Compatible colored graphs.}
Let $\mathfrak{G}=(G,X_1,\ldots, X_h),\mathfrak{G}'=(G',X_1',\ldots, X_h')$ be two colored graphs and let a partial function $\eta: V(G)\to V(G')$.
We say that $\mathfrak{G}$ and $\mathfrak{G}'$ are \emph{$\eta$-compatible} if for every $i\in[h]$ and every $v\in X_i$, $v\in X_i\iff \eta(v)\in X_i'$.
% and for every $i\in[l]$, either $a_i=\mathspace$ or $a_i'=\mathspace.$

\paragraph{Gluing colored graphs.}
Let $\mathfrak{G},\mathfrak{D}$ be two colored graphs and let a partial function $\eta: V(G)\to V(H)$ such that $\mathfrak{G}$ and $\mathfrak{D}$ are $\eta$-compatible.
We denote by $\mathfrak{G}\oplus_\eta \mathfrak{D}$ the colored graph obtained from the disjoint union of $\mathfrak{G}$ and $\mathfrak{D}$ after identifying vertices $v\in V(G)$ and $u\in V(H)$ if $\eta(v)=u$.

\paragraph{Inner- and outer-compatibility functions.}
Let two colored graphs $\mathfrak{G},\mathfrak{H}$, let ${\bf a}$ be an apex-tuple of $\mathfrak{G}$, and let $(\mathcal{A},\mathfrak{R})$ be a
railed annulus flatness pair of $\mathfrak{G}\setminus V({\bf a})$.
Given that $\mathfrak{R}=(X_1,Y_1,X_2,Y_2,Z_1,Z_2,\Gamma,\sigma,\pi)$,
we call a partial function $\eta: V(X_1\cap Y_1)\cup V({\bf a})\to V(H)$ (resp. $\xi: V(X_2\cap Y_2)\cup V({\bf a})\to V(H)$) such that 
$\mathfrak{G}$ and $\mathfrak{H}$ are $\eta$-compatible (resp. $\xi$-compatible)
an \emph{inner-compatibility (resp. outer-compatibility) function of $\mathfrak{G}$ and $\mathfrak{H}$}.

\begin{lemma}\label{lem_equirep}
Let $\tau$ be a colored-graph vocabulary and let $r,l\in\mathbb{N}$ and let $d=\funref{@transversales}(r,l)$.
Let $(\mathfrak{D},\bar{R}^\diamond)$ be a colored graph, let ${\bf a}$ be an apex-tuple of $\mathfrak{D}$ of size $l$,
and let $(\mathcal{A},\mathfrak{R})$ be a well-aligned $(\funref{@aristocracias}(r),\ell)$-railed annulus flatness pair of $\mathfrak{D}\setminus V({\bf a})$.
Also, let $(\mathfrak{G},\bar{R}), (\mathfrak{G}',\bar{R}')$ be two colored graphs and let two inner-compatibility functions $\eta,\eta'$ of $(\mathfrak{D},\bar{R}^{\diamond})$ and $(\mathfrak{G},\bar{R})$ (resp. $(\mathfrak{G}',\bar{R}')$).
If 
\[\mathsf{partial\text{-}sig}^{d}_r({\sf ap}_{\bf c}((\mathfrak{G},\bar{R})\oplus_{\eta} (\mathfrak{D},\bar{R}^\diamond),{\bf a}))= \mathsf{partial\text{-}sig}^{d}_r({\sf ap}_{\bf c}((\mathfrak{G}',\bar{R}')\oplus_{\eta'}(\mathfrak{D},\bar{R}^\diamond),{\bf a})),
\]
then
for every $(\tau\cup\{{\sf R}_1,\ldots,{\sf R}_r\})$-structure $(\mathfrak{F},\bar{R}^\star)$, every outer-compatibility function $\xi$ of $(\mathfrak{D},\bar{R}^\diamond)$ and $(\mathfrak{F},\bar{R}^\star)$,
it holds that
$$\mathsf{sig}^r \big((\mathfrak{G},\bar{R})\oplus_{\eta} (\mathfrak{D},\bar{R}^\diamond)\oplus_{\xi} (\mathfrak{F},\bar{R}^\star)\big)=\mathsf{sig}^r \big((\mathfrak{G}',\bar{R}')\oplus_{\eta'} (\mathfrak{D},\bar{R}^\diamond)\oplus_{\xi} (\mathfrak{F},\bar{R}^\star)\big).$$
\end{lemma}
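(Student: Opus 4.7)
The plan is to reduce the statement to an application of \autoref{lemma_reducing}. Write $\mathfrak{H}_1$ and $\mathfrak{H}_2$ for the two glued annotated colored graphs on the left- and right-hand side of the claimed equivalence, and let $\mathcal{R}_1$ and $\mathcal{R}_2$ denote their annotation sets (obtained by gluing $R$ or $R'$ with $\hat{R}$ and $R^\star$ via $\eta,\eta',\xi$). It suffices to exhibit assignments of $(\mathfrak{H}_1,\mathcal{R}_1)$ and $(\mathfrak{H}_2,\mathcal{R}_2)$ to rooted trees of height $r$ such that the resulting leaf-labeled trees are equivalent under pattern-coloring, since then \autoref{lemma_reducing} delivers the desired equivalence of satisfaction for every sentence in $\FOL[(\tau+\DP)\cup\{{\sf R}\}\cup{\bf c}]$ of quantifier rank at most $r$, and in particular for $\varphi_{{\sf R},{\bf c}}$.

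The key step is a recursive claim, proved by induction on $i\in\{0,1,\ldots,r\}$: for every stamp-word $\bar{w}\in\{0,1\}^{r-i}$ and every tuple $v_1,\ldots,v_{r-i}$ of admissible vertices (each $v_j$ either equal to $\mathspace$ or an element of $\mathcal{R}_1\cap R^{w_1\cdots w_j}$, respectively $\mathcal{R}_2\cap R^{w_1\cdots w_j}$), the collection of patterns
\[
\{\pattern({\sf ap}_{\bf c}(\mathfrak{H}_j,{\bf a}),v_1,\ldots,v_{r-i},u_1,\ldots,u_i)\mid u_1,\ldots,u_i\in \mathcal{R}_j\cup\{\mathspace\}\ \text{admissible}\}
\]
coincides for $j=1$ and $j=2$, as a nested multiset reflecting the bifurcation by stamps. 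The base case $i=0$ is the heart of the argument: the edge-, color-, apex- and partition-parts of such a pattern depend only locally on $v_1,\ldots,v_r$ and on ${\bf a}$, so they match trivially since these vertices either come from the shared outer part $\mathfrak{D}\oplus_{\xi}\mathfrak{F}$ or are reflected in matching boundaried substructures by the inner signatures. For the linkage part $\mathcal{H}^P$, invoke \autoref{lem_colomodelsrerout} (with the threshold $\funref{@aristocracias}(r)$ that is provided by the hypothesis on $(\mathcal{A},\mathfrak{R})$) to rewrite any linkage witnessing a pattern edge as the $\oplus_\ell$-composition of a pairing in ${\bf G}_{\bar{w}}$ and a pairing in ${\bf G}_{\bar{w}}^{\sf out}$ at the combing points $v_{r+1},\ldots,v_{r+\ell}$ on the central cycle of $\mathcal{A}_{\bar{w}}$. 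The inner pairings are exactly those enumerated by $\textsf{sig}^0$ at the corresponding stamp and tuple, and by hypothesis these coincide between the two structures; the outer pairings live in the shared outer graph and thus agree as well.

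The inductive step then lifts this matching one level of recursion at a time, using that a vertex chosen from $\mathcal{R}_j$ at step $i$ falls into a unique stamp class $R^{\bar{w}w}$ with $w\in\{0,1\}$, or into the outer part (which is shared); the branching structure of the signature matches the branching of the tree by stamps, and absence of an inner vertex is tracked by the $\mathspace$-entries on both sides. Hence from ${\sf sig}^r({\sf ap}_{\bf c}((\mathfrak{G},R)\oplus_\eta (\mathfrak{D},\hat{R}),{\bf a}))={\sf sig}^r({\sf ap}_{\bf c}((\mathfrak{G}',R')\oplus_{\eta'}(\mathfrak{D},\hat{R}),{\bf a}))$ one obtains matching nested pattern-palettes for $(\mathfrak{H}_1,\mathcal{R}_1)$ and $(\mathfrak{H}_2,\mathcal{R}_2)$, and finally equivalence of the leaf-labeled trees under pattern-coloring.

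The main obstacle is the bookkeeping in the induction: although the linkages occurring in patterns are bounded in size by $r$ and therefore amenable to combing, the chosen tuples $v_1,\ldots,v_r$ may freely mix inner vertices (where the signature prescribes the behaviour) and outer vertices (where the combing lemma is needed again to move their linkage portions across the annulus). One must show that the signatures, which only enumerate the inner pairings for inner vertices with $\stamp$ entry $\bullet$, still suffice to recover the pattern when some entries are $\circ$, because the $\circ$-entries correspond to vertices handed over to the outer part and there they are shared between $\mathfrak{H}_1$ and $\mathfrak{H}_2$. Once this careful pairing of signature-levels with tree-levels is set up, and the compatibility of $\oplus_\ell$ with $\Models$ and with the apex-projection ${\sf ap}_{\bf c}$ is checked, the lemma follows.
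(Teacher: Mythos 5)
Your proposal is correct and follows essentially the same route as the paper's proof: reduce to \autoref{lemma_reducing} by matching, level by level (via the stamp/trace decomposition), the pattern-colorings of the assignment trees of the two apex-projected glued structures, keeping $\circ$-stamped (outer, shared) vertices fixed, replacing $\bullet$-stamped vertices using the equality of signatures, and splitting the linkage part of each pattern into an inner pairing (covered by the signature) and an outer pairing (shared) via \autoref{lem_colomodelsrerout}. This matches the paper's argument (its palette claim), including the final transfer from $\varphi^l_{\sf R}$ on the projected structures to $\varphi_{{\sf R},{\bf c}}$.
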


See~\autoref{fig_exchangability} for a simplified visualization of the statement of~\autoref{lem_equirep}.

\begin{figure}[ht]
\centering
\scalebox{0.385}{\begin{tikzpicture}[scale=0.8]

\begin{scope}
\begin{scope}[on background layer]
\begin{scope}
\draw[fill=blue!80!green, path fading = fade out] (0,0) circle (8.5cm);
%  \draw[fill=blue!80!green, opacity=0.4] (0,0) circle (7.7cm);

\fill[white] (0:5.8) to [bend right =30]  (10:5.8) to [bend right =30]  (40:5.8) to [bend right =30]  (50:5.8) to [bend right =30]  (80:5.8) to [bend right =30]  (120:5.8) to [bend right =30]  (150:5.8) to [bend right =30] (160:5.8) to [bend right =30] (200:5.8) to [bend right =30] (220:5.8) to [bend right =30] (240:5.8) to [bend right =30] (250:5.8) to [bend right =30]  (280:5.8) to [bend right =30]  (320:5.8) to [bend right =30] 
(330:5.8) to [bend right =30] (0:5.8);

\node[label={\huge $(\mathfrak{F},\bar{R}^\star)$}] () at (-90:8) {};

\end{scope}

\draw[blue] (0:5.8) to [bend right =30]  (10:5.8) to [bend right =30]  (40:5.8) to [bend right =30]  (50:5.8) to [bend right =30]  (80:5.8) to [bend right =30]  (120:5.8) to [bend right =30]  (150:5.8) to [bend right =30] (160:5.8) to [bend right =30] (200:5.8) to [bend right =30] (220:5.8) to [bend right =30] (240:5.8) to [bend right =30] (250:5.8) to [bend right =30]  (280:5.8) to [bend right =30]  (320:5.8) to [bend right =30] 
(330:5.8) to [bend right =30] (0:5.8);
\end{scope}
\foreach \x in {0,10,40,50,80,120,150,160,200,220,240,250,280,320,330}{
\node[small black node] (a\x) at (\x:5.8) {};
}
\end{scope}

\begin{scope}[xshift=17cm]
\begin{scope}[on background layer]
\node[label={\Huge $\oplus_{\xi}$}] () at (-7.3,-1) {}; 

\draw[very thick,black,fill=black!10!white] (0,0) circle (5.8cm);
\draw[very thick,black,fill=white] (0,0) circle (3.4cm);

\foreach \x in {5.8,5.6,...,3.4} \draw (0,0) circle (\x cm);
\foreach \x in {0,10,..., 350} \draw (\x:5.8) -- (\x:3.4);

% \foreach \x / \y in {0 / 20, 20/30, 30 / 60, 60/90, 90/130, 130/150,
% 150/180, 180/190, 190/210, 210/240, 240/270, 270/280, 280/310, 310/330,330/360}{
% \draw[red] (\x:3.4) to [bend left =30] (\y:3.4);
% }
\draw[red, fill=red!20!white] (0:3.4) to [bend left =30]  (20:3.4) to [bend left =30]  (30:3.4) to [bend left =30]  (60:3.4) to [bend left =30]  (90:3.4) to [bend left =30]  (130:3.4) to [bend left =30]  (150:3.4) to [bend left =30] (180:3.4) to [bend left =30] (190:3.4) to [bend left =30] (210:3.4) to [bend left =30] (240:3.4) to [bend left =30] (270:3.4) to [bend left =30]  (280:3.4) to [bend left =30]  (310:3.4) to [bend left =30] 
(330:3.4) to [bend left =30] (0:3.4);
\node[label={\huge $(\mathfrak{G}',\bar{R}')$}] () at (0,0) {};

% \draw[blue] (0:5.8) to [bend right =30]  (10:5.8) to [bend right =30]  (40:5.8) to [bend right =30]  (50:5.8) to [bend right =30]  (80:5.8) to [bend right =30]  (120:5.8) to [bend right =30]  (150:5.8) to [bend right =30] (160:5.8) to [bend right =30] (200:5.8) to [bend right =30] (220:5.8) to [bend right =30] (240:5.8) to [bend right =30] (250:5.8) to [bend right =30]  (280:5.8) to [bend right =30]  (320:5.8) to [bend right =30] 
% (330:5.8) to [bend right =30] (0:5.8);
\foreach \x in {0,10,..., 350}{
\node[small black node] (h\x) at (\x:3.4) {};
\draw (\x:5.8) -- (\x:3.4);
}
\end{scope}

\node[rectangle,fill=black!10!white, path fading = middle, minimum size=0.8cm] () at (-90:4.5) {\huge $(\mathfrak{D},\bar{R}^\diamond)$};

\foreach \x in {0,10,..., 350}{
\node[small black node] (a\x) at (\x:5.8) {};
}
\end{scope}

\begin{scope}[xshift=36cm]
\begin{scope}[on background layer]
\begin{scope}
\draw[fill=blue!80!green, path fading = fade out] (0,0) circle (8.5cm);
%  \draw[fill=blue!80!green, opacity=0.4] (0,0) circle (7.7cm);

\fill[white] (0:5.8) to [bend right =30]  (10:5.8) to [bend right =30]  (40:5.8) to [bend right =30]  (50:5.8) to [bend right =30]  (80:5.8) to [bend right =30]  (120:5.8) to [bend right =30]  (150:5.8) to [bend right =30] (160:5.8) to [bend right =30] (200:5.8) to [bend right =30] (220:5.8) to [bend right =30] (240:5.8) to [bend right =30] (250:5.8) to [bend right =30]  (280:5.8) to [bend right =30]  (320:5.8) to [bend right =30] 
(330:5.8) to [bend right =30] (0:5.8);

\node[label={\huge $(\mathfrak{F},\bar{R}^\star)$}] () at (-90:8) {};

\end{scope}

\node[label={\Huge $=$}] () at (-10.3,-0.5) {}; 

\draw[very thick,black,fill=black!10!white] (0,0) circle (5.8cm);
\draw[very thick,black,fill=white] (0,0) circle (3.4cm);

\foreach \x in {5.8,5.6,...,3.4} \draw (0,0) circle (\x cm);
\foreach \x in {0,10,..., 350} \draw (\x:5.8) -- (\x:3.4);

% \foreach \x / \y in {0 / 20, 20/30, 30 / 60, 60/90, 90/130, 130/150,
% 150/180, 180/190, 190/210, 210/240, 240/270, 270/280, 280/310, 310/330,330/360}{
% \draw[red] (\x:3.4) to [bend left =30] (\y:3.4);
% }
\draw[red, fill=red!20!white] (0:3.4) to [bend left =30]  (20:3.4) to [bend left =30]  (30:3.4) to [bend left =30]  (60:3.4) to [bend left =30]  (90:3.4) to [bend left =30]  (130:3.4) to [bend left =30]  (150:3.4) to [bend left =30] (180:3.4) to [bend left =30] (190:3.4) to [bend left =30] (210:3.4) to [bend left =30] (240:3.4) to [bend left =30] (270:3.4) to [bend left =30]  (280:3.4) to [bend left =30]  (310:3.4) to [bend left =30] 
(330:3.4) to [bend left =30] (0:3.4);
\node[label={\huge $(\mathfrak{G}',\bar{R}')$}] () at (0,0) {};

\draw[blue] (0:5.8) to [bend right =30]  (10:5.8) to [bend right =30]  (40:5.8) to [bend right =30]  (50:5.8) to [bend right =30]  (80:5.8) to [bend right =30]  (120:5.8) to [bend right =30]  (150:5.8) to [bend right =30] (160:5.8) to [bend right =30] (200:5.8) to [bend right =30] (220:5.8) to [bend right =30] (240:5.8) to [bend right =30] (250:5.8) to [bend right =30]  (280:5.8) to [bend right =30]  (320:5.8) to [bend right =30] 
(330:5.8) to [bend right =30] (0:5.8);
\foreach \x in {0,10,..., 350}{
\node[small black node] (h\x) at (\x:3.4) {};
\draw (\x:5.8) -- (\x:3.4);
}
\end{scope}

\node[rectangle,fill=black!10!white, path fading = middle, minimum size=0.8cm] () at (-90:4.5) {\huge $(\mathfrak{D},\bar{R}^\diamond)$};

\foreach \x in {0,10,..., 350}{
\node[small black node] (a\x) at (\x:5.8) {};
}
\end{scope}

\begin{scope}[yshift=20cm]
\begin{scope}[on background layer]
\begin{scope}
\draw[fill=blue!80!green, path fading = fade out] (0,0) circle (8.5cm);
%  \draw[fill=blue!80!green, opacity=0.4] (0,0) circle (7.7cm);

\fill[white] (0:5.8) to [bend right =30]  (10:5.8) to [bend right =30]  (40:5.8) to [bend right =30]  (50:5.8) to [bend right =30]  (80:5.8) to [bend right =30]  (120:5.8) to [bend right =30]  (150:5.8) to [bend right =30] (160:5.8) to [bend right =30] (200:5.8) to [bend right =30] (220:5.8) to [bend right =30] (240:5.8) to [bend right =30] (250:5.8) to [bend right =30]  (280:5.8) to [bend right =30]  (320:5.8) to [bend right =30] 
(330:5.8) to [bend right =30] (0:5.8);

\node[label={\huge $(\mathfrak{F},\bar{R}^\star)$}] () at (-90:8) {};
\end{scope}
\draw[blue] (0:5.8) to [bend right =30]  (10:5.8) to [bend right =30]  (40:5.8) to [bend right =30]  (50:5.8) to [bend right =30]  (80:5.8) to [bend right =30]  (120:5.8) to [bend right =30]  (150:5.8) to [bend right =30] (160:5.8) to [bend right =30] (200:5.8) to [bend right =30] (220:5.8) to [bend right =30] (240:5.8) to [bend right =30] (250:5.8) to [bend right =30]  (280:5.8) to [bend right =30]  (320:5.8) to [bend right =30] 
(330:5.8) to [bend right =30] (0:5.8);
\end{scope}
\foreach \x in {0,10,40,50,80,120,150,160,200,220,240,250,280,320,330}{
\node[small black node] (a\x) at (\x:5.8) {};
}

\end{scope}

\begin{scope}[xshift=17cm, yshift=20cm]
\begin{scope}[on background layer]

\node[label={\Huge $\oplus_{\xi}$}] () at (-7.3,-1) {}; 

\draw[very thick,black,fill=black!10!white] (0,0) circle (5.8cm);
\draw[very thick,black,fill=white] (0,0) circle (3.4cm);

\foreach \x in {5.8,5.6,...,3.4} \draw (0,0) circle (\x cm);
\foreach \x in {0,10,..., 350} \draw (\x:5.8) -- (\x:3.4);

% \foreach \x / \y in {0 / 20, 20/30, 30 / 60, 60/90, 90/130, 130/150,
% 150/180, 180/190, 190/210, 210/240, 240/270, 270/280, 280/310, 310/330,330/360}{
% \draw[red] (\x:3.4) to [bend left =30] (\y:3.4);
% }
\draw[green!50!blue, fill=green!20!white] (0:3.4) to [bend left =30]  (20:3.4) to [bend left =30]  (30:3.4) to [bend left =30]  (60:3.4) to [bend left =30]  (90:3.4) to [bend left =30]  (130:3.4) to [bend left =30]  (150:3.4) to [bend left =30] (180:3.4) to [bend left =30] (190:3.4) to [bend left =30] (210:3.4) to [bend left =30] (240:3.4) to [bend left =30] (270:3.4) to [bend left =30]  (280:3.4) to [bend left =30]  (310:3.4) to [bend left =30] 
(330:3.4) to [bend left =30] (0:3.4);
\node[label={\huge $(\mathfrak{G},\bar{R})$}] () at (0,0) {};

\foreach \x in {0,10,..., 350}{
\node[small black node] (h\x) at (\x:3.4) {};
\draw (\x:5.8) -- (\x:3.4);
}
\end{scope}

\node[rectangle,fill=black!10!white, path fading = middle, minimum size=0.8cm] () at (-90:4.5) {\huge $(\mathfrak{D},\bar{R}^\diamond)$};

\foreach \x in {0,10,..., 350}{
\node[small black node] (a\x) at (\x:5.8) {};
}
\end{scope}

\begin{scope}[xshift=36cm, yshift=20cm]
\begin{scope}[on background layer]
\begin{scope}
\draw[fill=blue!80!green, path fading = fade out] (0,0) circle (8.5cm);
%  \draw[fill=blue!80!green, opacity=0.4] (0,0) circle (7.7cm);

\fill[white] (0:5.8) to [bend right =30]  (10:5.8) to [bend right =30]  (40:5.8) to [bend right =30]  (50:5.8) to [bend right =30]  (80:5.8) to [bend right =30]  (120:5.8) to [bend right =30]  (150:5.8) to [bend right =30] (160:5.8) to [bend right =30] (200:5.8) to [bend right =30] (220:5.8) to [bend right =30] (240:5.8) to [bend right =30] (250:5.8) to [bend right =30]  (280:5.8) to [bend right =30]  (320:5.8) to [bend right =30] 
(330:5.8) to [bend right =30] (0:5.8);

\node[label={\huge $(\mathfrak{F},\bar{R}^\star)$}] () at (-90:8) {};

\end{scope}

\node[label={\Huge $=$}] () at (-10.3,-0.5) {}; 

\draw[very thick,black,fill=black!10!white] (0,0) circle (5.8cm);
\draw[very thick,black,fill=white] (0,0) circle (3.4cm);

\foreach \x in {5.8,5.6,...,3.4} \draw (0,0) circle (\x cm);
\foreach \x in {0,10,..., 350} \draw (\x:5.8) -- (\x:3.4);

% \foreach \x / \y in {0 / 20, 20/30, 30 / 60, 60/90, 90/130, 130/150,
% 150/180, 180/190, 190/210, 210/240, 240/270, 270/280, 280/310, 310/330,330/360}{
% \draw[red] (\x:3.4) to [bend left =30] (\y:3.4);
% }
\draw[green!50!blue, fill=green!20!white] (0:3.4) to [bend left =30]  (20:3.4) to [bend left =30]  (30:3.4) to [bend left =30]  (60:3.4) to [bend left =30]  (90:3.4) to [bend left =30]  (130:3.4) to [bend left =30]  (150:3.4) to [bend left =30] (180:3.4) to [bend left =30] (190:3.4) to [bend left =30] (210:3.4) to [bend left =30] (240:3.4) to [bend left =30] (270:3.4) to [bend left =30]  (280:3.4) to [bend left =30]  (310:3.4) to [bend left =30] 
(330:3.4) to [bend left =30] (0:3.4);
\node[label={\huge $(\mathfrak{G},\bar{R})$}] () at (0,0) {};

\draw[blue] (0:5.8) to [bend right =30]  (10:5.8) to [bend right =30]  (40:5.8) to [bend right =30]  (50:5.8) to [bend right =30]  (80:5.8) to [bend right =30]  (120:5.8) to [bend right =30]  (150:5.8) to [bend right =30] (160:5.8) to [bend right =30] (200:5.8) to [bend right =30] (220:5.8) to [bend right =30] (240:5.8) to [bend right =30] (250:5.8) to [bend right =30]  (280:5.8) to [bend right =30]  (320:5.8) to [bend right =30] 
(330:5.8) to [bend right =30] (0:5.8);
\foreach \x in {0,10,..., 350}{
\node[small black node] (h\x) at (\x:3.4) {};
\draw (\x:5.8) -- (\x:3.4);
}
\end{scope}

\node[rectangle,fill=black!10!white, path fading = middle, minimum size=0.8cm] () at (-90:4.5) {\huge $(\mathfrak{D},\bar{R}^\diamond)$};

\foreach \x in {0,10,..., 350}{
\node[small black node] (a\x) at (\x:5.8) {};
}
\end{scope}
\end{tikzpicture}}
\caption{A simplified visualization of the statement of \autoref{lem_equirep}. If the upper middle and the lower middle colored graphs have the same partial signature, then for every colored graph $(\mathfrak{F},\bar{R}^\star)$ (on the left) glued to them, the resulting colored graphs (the upper right and the lower right) have the same (global) signature.}
\label{fig_exchangability}
\end{figure}
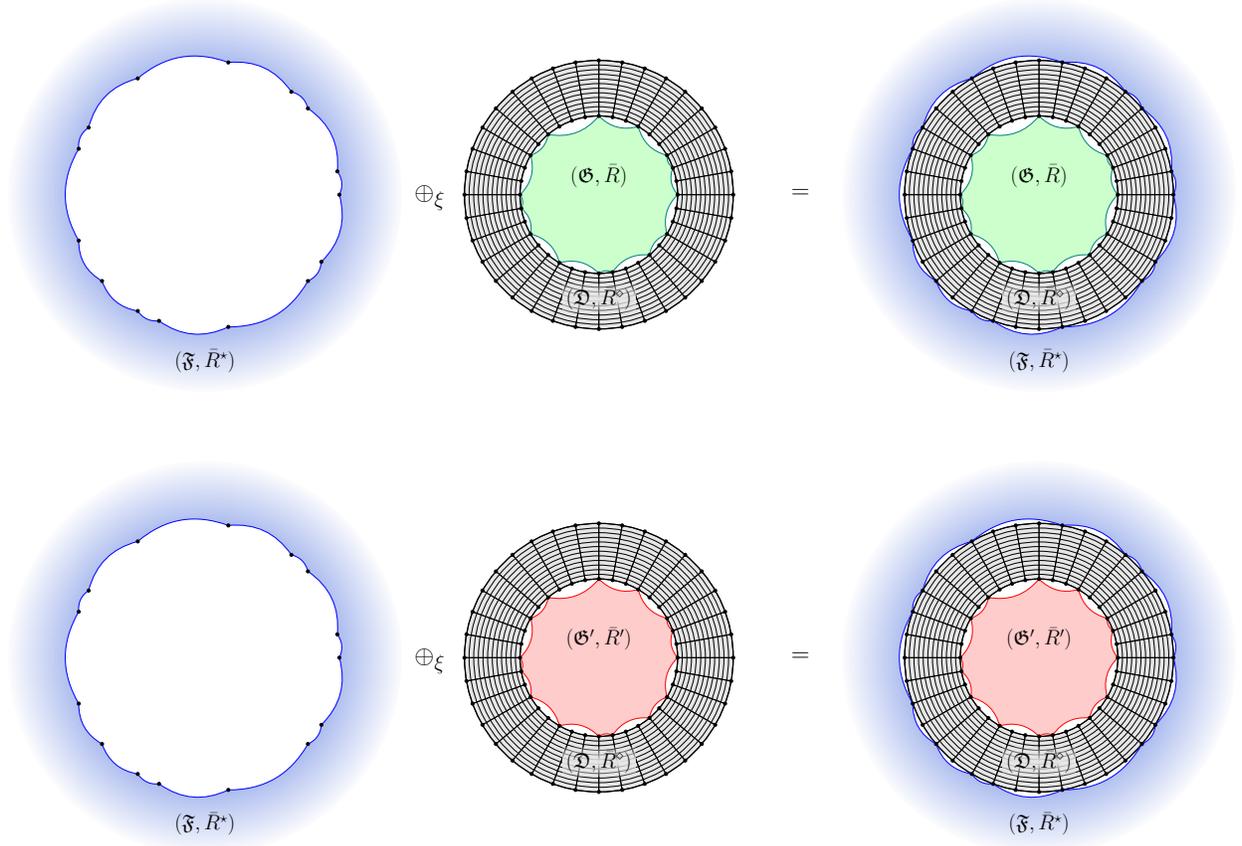

\begin{proof}
Let $\mathfrak{F}$ be a colored graph, let $\bar{R}^\star=(R^\star_1,\ldots,R^\star_r)$, where $R^\star_1,\ldots,R^\star_r\subseteq V(\mathfrak{F})$, and let $\xi$  be an outer-compatibility function of $(\mathfrak{D},\bar{R}^\diamond)$ and $(\mathfrak{F},\bar{R}^\star)$.

Given that $\mathfrak{R}=(X_1,Y_1,X_2,Y_2,Z_1,Z_2,\Gamma,\sigma,\pi)$, we set $\tilde{\mathfrak{R}}=(\tilde{X}_1,\tilde{Y}_1,\tilde{X}_2,\tilde{Y}_2,Z_1,Z_2,\Gamma,\sigma,\pi)$ and
$\tilde{\mathfrak{R}}'=(\tilde{X}_1',\tilde{Y}_1,\tilde{X}_2',\tilde{Y}_2,Z_1,Z_2,\Gamma,\sigma,\pi)$,
where
\begin{itemize}
\item $\tilde{Y}_1$ (resp. $\tilde{Y}_2$) is the vertex set obtained from the union of $Y_1$ (resp. $Y_2$) and $V(\mathfrak{G})$ after identifying the vertices in $(X_1\cap Y_1)\cup V({\bf a})$ with their images via $\eta$,
\item $\tilde{Y}_1'$ (resp. $\tilde{Y}_2'$) is the vertex set obtained from the union of $Y_1$ (resp. $Y_2$) and $V(\mathfrak{G}')$ after identifying the vertices in $(X_1\cap Y_1)\cup V({\bf a})$ with their images via $\eta'$, and
\item $\tilde{X}_1$ (resp. $\tilde{X}_2$) is the vertex set obtained from the union of $X_1$ (resp. $X_2$) and $V(\mathfrak{F})$ after identifying the vertices in $(X_2\cap Y_2)\cup V({\bf a})$ with their images via $\xi$.
\end{itemize}
Observe that $(\mathcal{A},\tilde{\mathfrak{R}})$ is a railed annulus flatness pair of $(\mathfrak{G}\oplus_{\eta} \mathfrak{D}\oplus_\xi\mathfrak{F})\setminus V({\bf a})$ 
and $(\mathcal{A},\tilde{\mathfrak{R}}')$ is a railed annulus flatness pair of  $(\mathfrak{G}'\oplus_{\eta'} \mathfrak{D}\oplus_\xi\mathfrak{F})\setminus V({\bf a})$.

We now consider the $\tau^{\langle {\bf c}\rangle}$-structures ${\sf ap}_{\bf c}(\mathfrak{G}\oplus_{\eta} \mathfrak{D}\oplus_\xi\mathfrak{F},{\bf a})$ and ${\sf ap}_{\bf c}(\mathfrak{G}'\oplus_{\eta'} \mathfrak{D}\oplus_\xi\mathfrak{F},{\bf a})$.
Note that  $(\mathcal{A},\tilde{\mathfrak{R}})$ is also a railed annulus flatness pair of ${\sf ap}_{\bf c}(\mathfrak{G}\oplus_{\eta} \mathfrak{D}\oplus_\xi\mathfrak{F},{\bf a})\setminus V({\bf a})$.
Also, since in  ${\sf ap}_{\bf c}(\mathfrak{G}\oplus_{\eta} \mathfrak{D}\oplus_\xi\mathfrak{F},{\bf a})$ there are no edges between $V({\bf a})$ and $V({\sf ap}_{\bf c}(\mathfrak{G}\oplus_{\eta} \mathfrak{D}\oplus_\xi\mathfrak{F},{\bf a})\setminus V({\bf a}))$, we can update $\tilde{\mathfrak{R}}$ by adding $V({\bf a})$ to $\tilde{X}_2$ and observe that
after this modification of $\tilde{\mathfrak{R}}$,
$(\mathcal{A},\tilde{\mathfrak{R}})$ is a railed annulus flatness pair of ${\sf ap}_{\bf c}(\mathfrak{G}\oplus_{\eta} \mathfrak{D}\oplus_\xi\mathfrak{F},{\bf a})$.
For the same reasons, we can assume that $(\mathcal{A},\tilde{\mathfrak{R}}')$ is a railed annulus flatness pair of ${\sf ap}_{\bf c}(\mathfrak{G}'\oplus_{\eta'} \mathfrak{D}\oplus_\xi\mathfrak{F},{\bf a})$.
We set
\begin{itemize}
\item $(\mathfrak{Z},\tilde{R}_1,\ldots,\tilde{R}_r):=(\mathfrak{G},\bar{R})\oplus_{\eta} (\mathfrak{D},\bar{R}^\diamond)\oplus_{\xi} (\mathfrak{F},\bar{R}^\star)$,
\item $(\mathfrak{Z}',\tilde{R}_1',\ldots,\tilde{R}_r'):=(\mathfrak{G}',\bar{R}')\oplus_{\eta'} (\mathfrak{D},\bar{R}^\diamond)\oplus_{\xi} (\mathfrak{F},\bar{R}^\star)$,
\item $(\mathfrak{H},\tilde{R}_1,\ldots,\tilde{R}_r, {\bf a}):={\sf ap}_{\bf c}(\mathfrak{Z},\tilde{R}_1,\ldots,\tilde{R}_r,{\bf a})$, and
\item $(\mathfrak{H}',\tilde{R}_1',\ldots,\tilde{R}_r', {\bf a}):={\sf ap}_{\bf c}(\mathfrak{Z}',\tilde{R}_1',\ldots,\tilde{R}_r',{\bf a})$.
\end{itemize}

Our goal is to prove that $\mathsf{sig}^r (\mathfrak{Z},\tilde{R}_1,\ldots,\tilde{R}_r)=\mathsf{sig}^r (\mathfrak{Z}',\tilde{R}_1',\ldots,\tilde{R}_r').$
To achieve this,
it will suffice to show that $\mathsf{sig}^{d}_r(\mathfrak{H},\tilde{R}_1,\ldots,\tilde{R}_r, {\bf a}) = \mathsf{sig}^{d}_r (\mathfrak{H}',\tilde{R}_1',\ldots,\tilde{R}_r', {\bf a})$.
%To see why showing the latter is sufficient, assume that
%$\mathsf{sig}^r(\mathfrak{Z},\tilde{R}) =\red{\beta}$.
Let us first prove that 
\[\text{$\mathsf{sig}^{d}_r(\mathfrak{H},\tilde{R}_1,\ldots,\tilde{R}_r, {\bf a}) = \mathsf{sig}^{d}_r (\mathfrak{H}',\tilde{R}_1',\ldots,\tilde{R}_r', {\bf a})$
implies $\mathsf{sig}^r (\mathfrak{Z},\tilde{R}_1,\ldots,\tilde{R}_r)=\mathsf{sig}^r (\mathfrak{Z}',\tilde{R}_1',\ldots,\tilde{R}_r').$}\]
For this, we will prove that for every quantifier-free formula $\psi(\mathsf{x}_1,\ldots,\mathsf{x}_r)$ of $\FOL[\tau+\DP]$ on $r$ free variables,
the following holds:
if $v_1,\ldots,v_r\in V(\mathfrak{Z})$ and $v_1',\ldots,v_r'\in V(\mathfrak{Z}')$ such that for every $i\in[r-1]$,
\[\mathsf{sig}^i(\mathfrak{Z},\tilde{R}_1,\ldots,\tilde{R}_r,v_1,\ldots,v_{r-i}) = \mathsf{sig}^i(\mathfrak{Z}',\tilde{R}_1',\ldots,\tilde{R}_r',v_1',\ldots,v_{r-i}'),\] 
then it holds that $(\mathfrak{Z},v_1,\ldots,v_r)\models \psi(\mathsf{x}_1,\ldots,\mathsf{x}_r) \iff (\mathfrak{Z},v_1,\ldots,v_r)\models \psi(\mathsf{x}_1,\ldots,\mathsf{x}_r)$.

Let $\psi(\mathsf{x}_1,\ldots,\mathsf{x}_r)$ be such a formula and let $v_1,\ldots,v_r\in V(\mathfrak{Z})$ and $v_1',\ldots,v_r'\in V(\mathfrak{Z}')$ such that $\mathsf{sig}^i(\mathfrak{Z},\tilde{R}_1,\ldots,\tilde{R}_r,v_1,\ldots,v_{r-i}) = \mathsf{sig}^i(\mathfrak{Z}',\tilde{R}_1',\ldots,\tilde{R}_r',v_1',\ldots,v_{r-i}')$, for every $i\in[r-1]$.
Assuming that $\mathsf{sig}^d_r(\mathfrak{H},\tilde{R}_1,\ldots,\tilde{R}_r, {\bf a}) = \mathsf{sig}^d_r (\mathfrak{H}',\tilde{R}_1',\ldots,\tilde{R}_r', {\bf a})$, it holds that
\[\mathsf{sig}^{d-r}_r(\mathfrak{H},\tilde{R}_1,\ldots,\tilde{R}_r, {\bf a}, v_1,\ldots, v_r) = \mathsf{sig}^{d-r}_r (\mathfrak{H}',\tilde{R}_1',\ldots,\tilde{R}_r', {\bf a},v_1',\ldots,v_r').\]
Therefore, by~\autoref{lemma_reducing}, we have $(\mathfrak{H}, {\bf a}, v_1,\ldots, v_r)\models \psi^l \iff (\mathfrak{H}', {\bf a},v_1',\ldots,v_r')\models \psi^l$,
where $\psi^l$ is the apex-projection of $\psi$ (see~\autoref{subsec_apex_formula}).
Also, by~\autoref{lem_interpretapex}, $(\mathfrak{Z},v_1,\ldots,v_r)\models \psi \iff (\mathfrak{H}, {\bf a}, v_1,\ldots, v_r)\models \psi^l$ and $(\mathfrak{Z},v_1,\ldots,v_r)\models \psi \iff (\mathfrak{H}', {\bf a},v_1',\ldots,v_r')\models \psi^l$.
Combining these last three logical equivalences, we get $(\mathfrak{Z},v_1,\ldots,v_r)\models \psi \iff (\mathfrak{Z},v_1,\ldots,v_r)\models \psi$.

We devote the rest of the proof to show that $\mathsf{sig}^{d}_r(\mathfrak{H},\tilde{R}_1,\ldots,\tilde{R}_r, {\bf a}) = \mathsf{sig}^{d}_r (\mathfrak{H}',\tilde{R}_1',\ldots,\tilde{R}_r', {\bf a}).$
Let $\lambda$ be an assignment of $(\mathfrak{H},\tilde{R}_1,\ldots,\tilde{R}_r)$ to a rooted tree $(T,t_0)$ and
let $\lambda'$ be an assignment of $(\mathfrak{H}',\tilde{R}_1',\ldots,\tilde{R}_r')$ to a rooted tree $(T',t_0')$.
%
%We aim to prove that ${\sf palette}_{{\sf pc}_\lambda}(t_0) = {\sf palette}_{{\sf pc}_{\lambda'}}(t_0')$.
%This will imply that $(T,t_0,{\sf pc}_{\lambda}
%)\equiv (T',t_0',{\sf pc}_{\lambda'})$, which, by~\autoref{lemma_reducing}, implies that
%$(\mathfrak{H},\tilde{R}, {\bf a})\models \varphi\iff  (\mathfrak{H}',\tilde{R}', {\bf a})\models \varphi,$
%for every $\varphi =   Q_1 {\sf x}_1\in {\sf R}\ \ldots Q_r {\sf x}\in {\sf R}_r \ \psi({\sf x}_1,\ldots, {\sf x}_r)$,
%where $\psi({\sf x}_1,\ldots, {\sf x}_r)$ is a quantifier-free formula in $\FOL[(\tau+\DP)\cup{\bf c}]$ and therefore,
%$(\mathfrak{H},\tilde{R}, {\bf a})\models \varphi^l_{\sf R}\iff  (\mathfrak{H}',\tilde{R}', {\bf a})\models \varphi^l_{\sf R}.$

To prove that $\mathsf{sig}^{d}_r(\mathfrak{H},\tilde{R}_1,\ldots,\tilde{R}_r, {\bf a}) = \mathsf{sig}^{d}_r(\mathfrak{H}',\tilde{R}_1',\ldots,\tilde{R}_r', {\bf a})$, it suffices to prove that
for every $i\in[0,d]$ and every $\gamma\in \mathbb{P}^{d-i}(\mathcal{G}_{\sf pat}^{(t,h)}),$
the two following statements hold:
\begin{itemize}
\item[(i)] For every $(t_0,\ldots,t_{i})\in\mathsf{Paths}(T)$ there exist $(t_0',\ldots,t_{i}')\in\mathsf{Paths} (T')$ such that $$\mathsf{sig}^{d-i}_r(\mathfrak{H},\tilde{R}_1,\ldots,\tilde{R}_r, {\bf a},\lambda(t_1),\ldots,\lambda(t_i)) = \mathsf{sig}^t (\mathfrak{H}',\tilde{R}_1',\ldots,\tilde{R}_r', {\bf a},\lambda'(t_1'),\ldots,\lambda'(t_i')).$$
\item[(ii)] for every $(t_0',\ldots,t_{i}')\in\mathsf{Paths}(T')$ there exist $(t_0,\ldots,t_{i})\in\mathsf{Paths} (T)$ such that $$\mathsf{sig}^{d-i}_r(\mathfrak{H},\tilde{R}_1,\ldots,\tilde{R}_r, {\bf a},\lambda(t_1),\ldots,\lambda(t_i)) = \mathsf{sig}^t (\mathfrak{H}',\tilde{R}_1',\ldots,\tilde{R}_r', {\bf a},\lambda'(t_1'),\ldots,\lambda'(t_i')).$$
\end{itemize}
We will show only a proof for (i), since the proof of (ii) will be totally symmetric to the one of (i).
In fact, we will prove the following statement, which is equivalent to (i).

\begin{claim}\label{claim_palette}
For every $i\in[0,d]$ and every  $\beta\in \mathcal{G}_{\sf pat}^{(d,h)},$
it holds that: 
For every $(t_0,\ldots,t_{i})\in\mathsf{Paths}(T)$ there is a  $(t_0',\ldots,t_{i}')\in\mathsf{Paths} (T')$
such that for every $t_d\in L(T_{t_i})$,
where $(t_0,\ldots,t_d)\in{\sf Paths}(T)$, there is a $t_d'\in L(T_{t_{i}'}')$, where $(t_0',\ldots, t_d')\in{\sf Paths}(T')$,
such that
 $$\mathsf{sig}^0_r(\mathfrak{H}, \tilde{R}_1,\ldots,\tilde{R}_r,{\bf a},\lambda(t_1),\ldots, \lambda(t_{d})) = \mathsf{sig}^0_r(\mathfrak{H}',\tilde{R}_1',\ldots,\tilde{R}_r', {\bf a},\lambda'(t_1'),\ldots, \lambda'(t_{d}')).$$
\end{claim}
\medskip

\noindent\emph{Proof of~\autoref{claim_palette}.}
Let $i\in[0,d]$.
%We will argue that for every $t_{i}\in D_i (T)$ and every $t_r\in L(T_{t_i})$,
%where $(t_0,\ldots,t_r)\in{\sf Paths}(T)$,
%there is a $t_{i}'\in D_i (T')$ and a $t_r'\in L(T_{t_{i}'}')$ where $(t_0',\ldots, t_r')\in{\sf Paths}(T')$,
%such that
% $${\sf pattern}(\mathfrak{H}, {\bf a},\lambda(t_1),\ldots, \lambda(t_{r})) = {\sf pattern}(\mathfrak{H}', {\bf a},\lambda'(t_1'),\ldots, \lambda'(t_{r}')).$$
%\medskip
Let $t_0,\ldots, t_i\in {\sf Paths}(T)$, let $(v_1,\ldots,v_i) = (\lambda(t_1),\ldots, \lambda(t_{i}))$.
We will prove that there is a $(t_{1}',\ldots,t_i')\in \textsf{Paths}(T')$ such that 
for every $(t_{i+1},\ldots,t_d)\in \textsf{Paths}(T),$ where $t_{i+1}\in{\sf children}_{T}(t_i)$, there is a $(t_{i+1}',\ldots,t_d')\in \textsf{Paths}(T')$, where $t_{i+1}'\in{\sf children}_{T'}(t_i')$,
such that
$$\mathsf{sig}^0_r(\mathfrak{H}, \tilde{R}_1,\ldots,\tilde{R}_r, {\bf a},\lambda(t_1),\ldots, \lambda(t_{d})) = \mathsf{sig}^0_r(\mathfrak{H}', \tilde{R}_1',\ldots,\tilde{R}_r', {\bf a},\lambda'(t_1'),\ldots, \lambda'(t_{d}')).$$
%
%Since ${\sf sig}^r(\mathfrak{G}) = {\sf sig}^r (\mathfrak{G}')$,
%for every $i\in [r]$ and for every $\beta\in \mathbb{P}^{r-i} (\mathcal{G}_{\sf pat}^{(r,h)}),$ it holds that
%\begin{eqnarray*}
%& \text{there is a $t\in D_i (T)$, such that if $t_0,\ldots, t_i = t\in {\sf Paths}(T)$ and}\\
%&\text{$(\bar{w},\omega_1\ldots\omega_i) = {\sf trace}(\lambda(t_1),\ldots, \lambda(t_{i}))$, then}\\
%&{\sf sig}^{r-i}(\mathfrak{G},\bar{w},\lambda(t_1),\ldots, \lambda(t_{i})) =\beta\\
%& \iff\\
%& \text{there is a $t'\in D_i (T')$, such that if  ${t}_0',\ldots, {t}_i' = {t}'\in {\sf Paths}({T}')$ and}\\
%& \text{$(\bar{w}',\omega_1'\ldots\omega_i') = {\sf trace}(\lambda'(t_1'),\ldots, \lambda'(t_{i}'))$, then}\\
%& {\sf sig}^{r-i}(\mathfrak{G}',\bar{w}',\lambda'(t_1'),\ldots, \lambda'(t_{i}')) =\beta.
%\end{eqnarray*}
%To show this it suffices to prove that there are there are $(u_{1},\ldots, u_i)\in \tilde{R}'$ such that
%for every $v_{i+1},\ldots, v_r\in \tilde{R}$,$${\sf pattern}(\mathfrak{H}, {\bf a},v_1,\ldots,v_{i},v_{i+1},\ldots, v_r) = {\sf pattern}(\mathfrak{H}', {\bf a},u_1,\ldots, u_{r}).$$
For every $j\in[i]$, let $(w_{j}, \omega_{j})={\sf stamp}_{w_1\ldots w_{j-1}}(v_j)$.
For every $j\in[i]$, we set
$$v_{j}^\bullet =
 \begin{cases}
 v_j, & \text{if $\omega_j=\bullet$}\\
\mathspace, & \text{if $\omega_j=\circ$}.
\end{cases}$$
Since $\mathsf{partial\text{-}sig}^{d}_r({\sf ap}_{\bf c}((\mathfrak{G},\bar{R})\oplus_{\eta} (\mathfrak{D},\bar{R}^\diamond),{\bf a}))= \mathsf{partial\text{-}sig}^{d}_r ({\sf ap}_{\bf c}((\mathfrak{G}',\bar{R}')\oplus_{\eta'}(\mathfrak{D},\bar{R}^\diamond),{\bf a})),$
there is a $t_i'\in D_i (T)$, such that
if $t_0',\ldots, t_i'\in \textsf{Paths}(T')$ and $(u_1,\ldots, u_i) = (\lambda'(t_1'),\ldots, \lambda'(t_{i}'))$,
then it holds that ${\sf trace}(v_1,\ldots, v_i) = {\sf trace}(u_1,\ldots, u_i)$ and
\begin{eqnarray}
& \mathsf{partial\text{-}sig}^{d-i}_r({\sf ap}_{\bf c}((\mathfrak{G},\bar{R})\oplus_{\eta} (\mathfrak{D},\bar{R}^\diamond), {\bf a}),\bar{w},v_1^\bullet,\ldots, v_i^\bullet)\notag\\
& =\label{eq_sig}\\
& \mathsf{partial\text{-}sig}^{d-i}_r({\sf ap}_{\bf c}((\mathfrak{G}',\bar{R}')\oplus_{\eta'}(\mathfrak{D},\bar{R}^\diamond),{\bf a}),\bar{w},u_1^\bullet,\ldots,u_i^\bullet),\notag
\end{eqnarray}
where for every $j\in[i]$, $u_j^\bullet = u_j$ if $\omega_j=\bullet$ and $u_j^\bullet = \mathspace$ if $\omega_j= \circ$.
Let $v_{i+1},\ldots, v_d\in V(\mathfrak{H})$, where for each $j\in[i+1,r]$, $v_j\in \tilde{R}_j$.
For every $j\in[i+1,d]$, let $(w_{j}, \omega_{j})={\sf stamp}_{w_1\ldots w_{j-1}}(v_j)$.
For every $j\in[i+1,d]$, we set
$$v_{j}^\bullet =
 \begin{cases}
 v_j, & \text{if $\omega_j=\bullet$}\\
\mathspace, & \text{if $\omega_j=\circ$}.
\end{cases}$$
Therefore, for every $j\in[i+1,r]$, $v_j^\bullet\in \tilde{R}^{w_1\ldots w_{j}}_j\cup\{\mathspace\}$ and for every $j\in[r+1,d]$, $v_j^\bullet\in V(\mathfrak{H})^{w_1\ldots w_{j}}\cup\{\mathspace\}$.

Following~\eqref{eq_sig},
there are $u_{i+1}^\bullet,\ldots, u_{d}^\bullet\in V({\sf ap}_{\bf c}(\mathfrak{G}\oplus_{\eta} \mathfrak{D}, {\bf a}))\cup\{\mathspace\}$ where for every $j\in[i+1,r]$,
 $u_j^\bullet\in (R_j'\cup R^\diamond_j)^{w_1\ldots w_{j}}\cup\{\mathspace\}$,
 and for every $j\in[r+1,d]$, 
  $u_j^\bullet\in  V({\sf ap}_{\bf c}(\mathfrak{G}\oplus_{\eta} \mathfrak{D}, {\bf a}))
^{w_1\ldots w_{j}}\cup\{\mathspace\}$,
such that
${\sf pattern}({\bf G}_{\bar{w}},\mathspace^l,v_{1}^\bullet,\ldots, v_d^\bullet) = {\sf pattern}({\bf G}_{\bar{w}}',\mathspace^l,u_{1}^\bullet,\ldots, u_d^\bullet).$
Then, by~\autoref{obs_translatemodelstopattern}, we have that
$$ {\sf compression}({\bf G}_{\bar{w}},\mathspace^l,v_{1}^\bullet,\ldots, v_d^\bullet)= {\sf compression}({\bf G}_{\bar{w}}',\mathspace^l,u_{1}^\bullet, \ldots,u_d^\bullet)$$
and therefore, if $I=\{j\in[d]\mid v_j^\bullet\neq\mathspace\}$ and $I' = \{j\in[d]\mid u_j^\bullet\neq\mathspace\}$, then
\begin{enumerate}
\item[(P1)] $I=I'$ and for every $j,\ell\in[d]$ $v_j^\bullet=v_\ell^\bullet$ if and only if $u_j^\bullet=u_\ell^\bullet$,
\item[(P2)] $\kappa:[d]\to [l]$ is the empty function,
\item[(P3)] for every $j\in[d]$, $\delta(j) = \delta'(j)$,
\item[(P4)] ${\sf Ind}_{G_{\bar{w}}}(I) = {\sf Ind}_{G_{\bar{w}}'}(I)$, and
\item[(P5)]  $\{\imprint({\bf L})\mid {\bf L}\in \Models({\bf G}_{\bar{w}},v_{1}^\bullet,\ldots, v_d^\bullet) \}=\{\imprint({\bf L})\mid {\bf L}\in \Models({\bf G}_{\bar{w}}',u_{1}^\bullet,\ldots, u_d^\bullet)\}$.
\end{enumerate}
We now define a sequence of vertices $u_{i+1},\ldots, u_{d}\in V(\mathfrak{H}')$, as follows:
$$\text{ for every } j\in[i+1,r], u_j =\begin{cases} u_j^\bullet,& \text{if }\omega_j=\bullet\\
v_j,& \text{if } \omega_j=\circ,\end{cases}$$
 and we aim to show that
$\mathsf{sig}^0_r(\mathfrak{H}, {\bf a},v_1,\ldots, v_d) =\mathsf{sig}^0_r(\mathfrak{H}', {\bf a},u_1,\ldots, u_{d}).$
To prove this, intuitively, we have to show that (P1)-(P5) also hold for the tuples $(v_1,\ldots,v_{d})$ and $(u_1,\ldots,u_{d})$.
Keep in mind that ${\sf trace}_{(\mathcal{A},\tilde{\frR}')}(u_1,\ldots, u_d) = {\sf trace}_{(\mathcal{A},\tilde{\frR})}(v_1,\ldots, v_d)$.

First observe that, by (P1) and by the definition of $u_j,j\in[d]$,
the partition of $[d]$ with respect to the equal vertices in $(v_1,\ldots,v_d)$ is the same as the one with respect to the equal vertices in $(u_1,\ldots,u_{d})$.

Also, observe that using the definition of $\mathfrak{H}$ and $\mathfrak{H}'$, the fact that ${\bf a}$ is an apex-tuple of $\mathfrak{D}$,
together with the properties (P2) and (P3), we have that, when modifiying the functions $\kappa,\delta$
to map the indices of $(v_1,\ldots,v_{d})$ to $[l]$ and $[h]$ (before they mapped the indices of $(v_{1}^\bullet,\ldots, v_d^\bullet)$)
and
the functions $\kappa',\delta'$ to map the indices of $(u_1,\ldots,u_{d})$
(before they mapped the indices of $(u_{1}^\bullet,\ldots, u_d^\bullet)$),
for every $j\in[d]$, $\kappa(j) = \kappa'(j)$ and for every $j\in[d]$, $\delta(j) = \delta'(j)$.

We next show that
${\sf Ind}_{\mathfrak{H}}([d]) = {\sf Ind}_{\mathfrak{H}'}([d])$.
Recall that, by (P1), $I = \{j\in[d]\mid v_j^\bullet\neq\mathspace\}= \{j\in[d]\mid u_j^\bullet\neq\mathspace\}$ and, by (P4), ${\sf Ind}_{G_{\bar{w}}}(I) = {\sf Ind}_{G_{\bar{w}}'}(I)$.
Also, since ${\sf trace}_{(\mathcal{A},\tilde{\frR}')}(u_1,\ldots, u_d) = {\sf trace}_{(\mathcal{A},\tilde{\frR})}(v_1,\ldots, v_d) = (\bar{w},\bar{\omega})$,
we have that $\{v_1,\ldots, v_d\}\cap V({\sf Influence}_{\tilde{\frR}}(\mathcal{A}_{\bar{w}}))=\emptyset$ and $\{u_1,\ldots, u_d\}\cap V({\sf Influence}_{\tilde{\frR}'}(\mathcal{A}_{\bar{w}}))$.
This implies that ${\sf Ind}_{\mathfrak{H}}(I) = {\sf Ind}_{\mathfrak{H}'}(I)$.
Also, let $J = \{j\in[d]\mid v_j^\bullet =\mathspace\} = \{j\in[d]\mid u_j^\bullet =\mathspace\}$ and observe that $[d]\setminus I = J$.
Also, observe that ${\sf Ind}_{\mathfrak{H}}(J)={\sf Ind}_{\mathfrak{H}'}(J)$ and  that ${\sf Ind}_{\mathfrak{H}}(J)$ is a subgraph of $\mathfrak{D}\oplus_{\xi}\mathfrak{F}$.
The fact that $\{v_1,\ldots, v_d\}\cap V({\sf Influence}_{\tilde{\frR}}(\mathcal{A}_{\bar{w}}))=\emptyset$ and $\{u_1,\ldots, u_d\}\cap V({\sf Influence}_{\tilde{\frR}'}(\mathcal{A}_{\bar{w}}))$ implies that there is no edge neither in $\mathfrak{H}$ nor in $\mathfrak{H}'$ between vertices indexed by $I$ and $J$.
Therefore, both ${\sf Ind}_{\mathfrak{H}}([d])$ and ${\sf Ind}_{\mathfrak{H}'}([d])$ are equal to the disjoint union of ${\sf Ind}_{\mathfrak{H}}(I)$ and ${\sf Ind}_{\mathfrak{H}}(J)$.

We conclude the proof of the claim by showing that
$$\{\imprint({\bf L})\mid {\bf L}\in \Models(\mathfrak{H},v_1,\ldots, v_d) \}
 = \{\imprint({\bf L})\mid {\bf L}\in \Models(\mathfrak{H}',u_1,\ldots,u_d)\}.
$$

Let $({L},v_1,\ldots,v_d)\in \Models(\mathfrak{H},v_1,\ldots, v_d)$.
By~\autoref{lem_colomodelsrerout}, there exists a
$(\tilde{{L}},v_1,\ldots,v_d)\in \Models(\mathfrak{H},v_1,\ldots, v_d)$ such that $L\equiv\tilde{L}$
and
$$(\tilde{{L}},v_1,\ldots,v_d)\in \Models({\bf G}_{\bar{w}},\bar{v}^\bullet)\oplus \Models({\bf G}_{\bar{w}}^{\sf out},\bar{v}^\circ).$$
%Also, let $(\mathfrak{M}',u_1,\ldots,u_r)\in \Models(\mathfrak{H}',u_1,\ldots, u_r)$.
%By~\autoref{lem_colomodelsrerout}, there exists a
%$(\tilde{\mathfrak{M}}',u_1,\ldots,u_r)\in \Models(\mathfrak{H}',u_1,\ldots, u_r)$ such that
%$$(\tilde{\mathfrak{M}}',u_1,\ldots,u_r)\in \Models({\bf G}_{\bar{w}}',\bar{u}^\bullet)\oplus \Models({\bf G}_{\bar{w}}^{\sf out \prime},\bar{u}^\circ).$$
Recall that, by definition, ${\sf Leveling}_{(\mathcal{A},\tilde{\frR})}(\mathfrak{H}) = {\bf G}_{\bar{w}}\oplus {\bf G}_{\bar{w}}^{\sf out}$.
Note that the vertex set of ${\bf G}_{\bar{w}}^{\sf out}$ is a subset of $V(\mathfrak{D}\oplus_{\xi}\mathfrak{F})$ and therefore,
${\sf Leveling}_{(\mathcal{A},\tilde{\frR}')}(\mathfrak{H}') = {\bf G}_{\bar{w}}'\oplus{\bf G}_{\bar{w}}^{\sf out}$.
Also, since $\bar{v}^\circ = \bar{u}^\circ$, we have that
$\Models({\bf G}_{\bar{w}}^{\sf out},\bar{v}^\circ) = \Models({\bf G}_{\bar{w}}^{\sf out},\bar{u}^\circ).$

We set $\tilde{\bf L}^\bullet = (\tilde{{L}},v_1,\ldots,v_d)\cap ({\bf G}_{\bar{w}},\bar{v}^\bullet).$
By (P5), we have that $\{\imprint({\bf L})\mid {\bf L}\in \Models({\bf G}_{\bar{w}},\bar{v}^\bullet) \}=
\{\imprint({\bf L})\mid {\bf L}\in \Models({\bf G}_{\bar{w}}',\bar{u}^\bullet)\}.$
Therefore, there is an ${\bf L}'\in \Models({\bf G}_{\bar{w}}',\bar{u}^\bullet)$ such that $\imprint(\tilde{\bf L}^\bullet) = \imprint({\bf L}')$.
This, together with the fact that $\Models({\bf G}_{\bar{w}}^{\sf out},\bar{v}^\circ) = \Models({\bf G}_{\bar{w}}^{\sf out},\bar{u}^\circ),$
implies that there exists an $({{L}}',u_1,\ldots,u_d)\in \Models(\mathfrak{H}',u_1,\ldots, u_r)$ such that 
$\imprint(\tilde{{L}},v_1,\ldots,v_d) = \imprint({{L}}',u_1,\ldots,u_d)$.
Since $L\equiv\tilde{L}$ implies that  $\imprint({L},v_1,\ldots, v_d)=\imprint(\tilde{{L}},v_1,\ldots, v_d)$, 
we have that  $\imprint({L},v_1,\ldots, v_d)=\imprint({{L}}',u_1,\ldots,u_d)$.
Therefore, we can conclude that
$\{\imprint({\bf L})\mid {\bf L}\in \Models(\mathfrak{H},v_1,\ldots,v_d)\}\subseteq\{\imprint({\bf L})\mid {\bf L}\in \Models(\mathfrak{H}',u_1,\ldots, u_{d})\}$.

To show that $\{\imprint({\bf L})\mid {\bf L}\in \Models(\mathfrak{H}',u_1,\ldots, u_d)\}\subseteq \{\imprint({\bf L})\mid {\bf L}\in \Models(\mathfrak{H},v_1,\ldots,v_d)\}$, we follow the same arguments.
We consider an $({L}',u_1,\ldots,u_d)\in \Models(\mathfrak{H}',u_1,\ldots, u_d)$, we apply~\autoref{lem_colomodelsrerout} and we obtain a $(\tilde{{L}}',u_1,\ldots,u_d)\in \Models(\mathfrak{H}',u_1,\ldots, u_d)$
such that 
$(\tilde{{L}}',u_1,\ldots,u_d)\in \Models({\bf G}_{\bar{w}}',\bar{u}^\bullet)\oplus \Models({\bf G}_{\bar{w}}^{\sf out \prime},\bar{u}^\circ)$
and $\imprint({L}',u_1,\ldots, u_d)=\imprint(\tilde{{L}}',u_1,\ldots, u_d)$.
This, by (P5), implies the existence of an ${\bf L}\in \Models({\bf G}_{\bar{w}},\bar{v}^\bullet)$ such that $\imprint(\tilde{\bf L}^{\prime \bullet}) = \imprint({\bf L})$, which, in turn, implies that there exists an
$({{L}},v_1,\ldots,v_d)\in \Models(\mathfrak{H},v_1,\ldots, v_d)$ such that 
$\imprint({{L}},v_1,\ldots,v_d) = \imprint(\tilde{{L}}',u_1,\ldots,u_d)$.

Therefore, for every $i\in[0,d]$ and every  $\beta\in \mathcal{G}_{\sf pat}^{(d,h)},$
it holds that: 
For every $(t_0,\ldots,t_{i})\in\mathsf{Paths}(T)$ there is a  $(t_0',\ldots,t_{i}')\in\mathsf{Paths} (T')$
such that for every $t_d\in L(T_{t_i})$,
where $(t_0,\ldots,t_d)\in{\sf Paths}(T)$, there is a $t_d'\in L(T_{t_{i}'}')$, where $(t_0',\ldots, t_d')\in{\sf Paths}(T')$,
such that
$\mathsf{sig}^0_r(\mathfrak{H}, {\bf a},\lambda(t_1),\ldots, \lambda(t_{d})) = \mathsf{sig}^0_r(\mathfrak{H}', {\bf a},\lambda'(t_1'),\ldots, \lambda'(t_{d}')).$
This concludes the proof of the claim.
 \hfill$\diamond$
\medskip

By~\autoref{claim_palette}, we have that
for every $i\in[0,d]$ and every $(t_0,\ldots,t_{i})\in\mathsf{Paths}(T)$ there exist $(t_0',\ldots,t_{i}')\in\mathsf{Paths} (T')$ such that
\[\mathsf{sig}^{d-i}_r (\mathfrak{H},\tilde{R}_1,\ldots,\tilde{R}_r, {\bf a},\lambda(t_1),\ldots,\lambda(t_{i})) = \mathsf{sig}^{d-i}_r(\mathfrak{H}',\tilde{R}_1',\ldots,\tilde{R}_r', {\bf a},\lambda'(t_1'),\ldots,\lambda'(t_{i}')).\]
Symmetrically, one can prove that for every $i\in[0,d]$ and every $(t_0',\ldots,t_{i}')\in\mathsf{Paths}(T')$ there exist $(t_0,\ldots,t_{i})\in\mathsf{Paths} (T)$ such that
\[\mathsf{sig}^{d-i}_r(\mathfrak{H},\tilde{R}_1,\ldots,\tilde{R}_r, {\bf a},\lambda(t_1),\ldots,\lambda(t_i)) = \mathsf{sig}^{d-i}_r (\mathfrak{H}',\tilde{R}_1',\ldots,\tilde{R}_r', {\bf a},\lambda'(t_1'),\ldots,\lambda'(t_i')).\]
These two imply that $\mathsf{sig}^{d}_r(\mathfrak{H},\tilde{R}_1,\ldots,\tilde{R}_r, {\bf a}) = \mathsf{sig}^{d}_r(\mathfrak{H}',\tilde{R}_1',\ldots,\tilde{R}_r', {\bf a})$.
\end{proof}

\section{Proof of~\autoref{thm_main}}
\label{sec_proofoftheorem}
In this section, we aim to show~\autoref{thm_main}.
In this direction, in~\autoref{subsec_sigrep}, we define {\sl representatives} of vertices,
following the notion of partial signatures defined in the previous section.
Using Courcelle's theorem, we are able to compute these representatives and therefore obtain a ``reduced'' colored graph that has the same signature as the initial one (\autoref{lem_repre}).
Using this and in the presence of a big enough flat wall, in~\autoref{subsec_reduce} we argue how to safely remove vertices from a bidimensional area of the graph where no variables of the sentence are quantified and obtain a ``reduced'' equivalent instance of the (annotated) problem.
This will allow us to apply iteratively this procedure in order to reduce the treewidth of the given colored graph.
We wrap-up the proof of~\autoref{thm_main} in~\autoref{subsec_mainproof}.

\subsection{Representatives}\label{subsec_sigrep}
Let $d\in\mathbb{N}$ and $r\in[d-1]$.
Let $(\mathfrak{G},{R}_1,\ldots,R_r)$ be an (annotated) colored graph and let $(\mathcal{A},\mathfrak{R})$ be a well-aligned $(\funref{@aristocracias}(d),\ell)$-railed annulus flatness pair of $G$.
Recall that for every $i\in[d]$ and every $w_1,\ldots, w_i\in \{0,1\}$, $R^{w_1\ldots w_i}_i = \{v\in R_i \mid \textsf{stamp}_{w_1\ldots w_{i-1}}(v) =(w_i,\bullet)\}$
and  $V^{w_1\ldots w_i} = \{v\in V(\mathfrak{G}) \mid \textsf{stamp}_{w_1\ldots w_{i-1}}(v) =(w_i,\bullet)\}.$
Given an $i\in[d]$ and $v_1,\ldots, v_{i-1}\in V(G)\cup\{\mathspace\}$ such that for every $j\in[d-i]$, $v_j\in V^{w_1\ldots w_{j}}\cup\{\mathspace\}$, 
where $(w_1,\ldots, w_{i-1},\bar{\omega}) = {\sf trace}_{(\mathcal{A},\frR)}(v_1,\ldots, v_{i-1})$,
we say that two vertices $v,v'\in V(\mathfrak{G})$ such that $v,v'\in R^{w_1\ldots w_{i-1}0}_i\cup R^{w_1\ldots w_{i-1}1}_i$ if $i\in[\min\{r,d-i\}]$, and 
$v,v'\in V^{w_1\ldots w_{i-1}0}\cup V^{w_1\ldots w_{i-1}1}$ if $i\in[\min\{r,d-i\}+1,d-i]$,
are \emph{$((v_1,\ldots,v_{i-1}),i)$-equivalent},
which we denote by $v\sim_i^{(v_1,\ldots,v_{i-1})} v'$, if $\textsf{stamp}_{w_1\ldots w_{i-1}}(v) = \textsf{stamp}_{w_1\ldots w_{i-1}}(v') = (w_i,\bullet)$
and
$$\textsf{sig}^{d-i}_r(\mathfrak{G},\bar{R},w_1\ldots w_{i-1}w_i,v_1,\ldots,v_{i-1},v) = \textsf{sig}^{d-i}_r(\mathfrak{G},\bar{R},w_1\ldots w_{i-1}w_i,v_1,\ldots, v_{i-1},v').$$
%For every equivalence class, we consider a \emph{representative},
%and we denote by $\mathcal{R}^{(v_1,\ldots, v_{i-1})}_i$ the set of all representatives of $\sim_i^{(v_1,\ldots, v_{i-1})}$.

Following~\autoref{obs_pattbound} and~\autoref{obs_bound}, we can easily derive an upper bound to the number of equivalence classes of each equivalence relation above. 

\begin{observation}\label{obs_bigrep}
There is a function $\newfun{@disassociate}:\mathbb{N}\to\mathbb{N}$
such that for every $i\in[d]$ and every $v_1,\ldots, v_{i-1}\in V(G)\cup\{\mathspace\}$,
such that for every $j\in[i-1]$,
 $v_j\in V(G)^{w_1\ldots w_{j}}\cup\{\mathspace\}$, where $(w_1,\ldots, w_{i-1},\bar{\omega}) = {\sf trace}_{(\mathcal{A},\frR)}(v_1,\ldots, v_{i-1})$,
it holds that the number of equivalence classes of $\sim_i^{(v_1,\ldots, v_{i-1})}$ is at most $\funref{@disassociate}(d)$.
\end{observation}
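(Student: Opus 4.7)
The plan is to deduce the bound immediately from the two preceding cardinality observations, with essentially no additional combinatorial work required. By the definition of $\sim_i^{(v_1,\ldots,v_{i-1})}$, each equivalence class is uniquely pinpointed by the pair consisting of (i) the stamp $\textsf{stamp}_{w_1\ldots w_{i-1}}(v)$ of any representative $v$ and (ii) the value of $\textsf{sig}^{r-i}(\mathfrak{G},R,w_1\ldots w_{i-1}w_i,v_1,\ldots,v_{i-1},v)$. Consequently, $|\mathcal{R}^{(v_1,\ldots,v_{i-1})}_i|$ is bounded above by the product of the number of possible stamps and the number of distinct values that the signature can take under this restricted quantification.

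For the first factor I note that stamps take one of at most $4$ values in $\{0,1\}\times\{\bullet,\circ\}$, so this contributes only a universal constant. For the second factor I chain~\autoref{obs_bound} and~\autoref{obs_pattbound}: the former guarantees that any such signature lies in $\mathbb{P}^{r-i}\bigl(\mathcal{G}_{\sf pat}^{(r+\ell,h,l)}\bigr)$, where $\ell = \funref{@norteamericana}(\binom{r}{2})+1$ is a function of $r$ alone; the latter then bounds $\bigl|\mathcal{G}_{\sf pat}^{(r+\ell,h,l)}\bigr|$ by $\funref{@desembarazadamente}(r+\ell,h,l)$. Composing gives that the number of possible signatures is at most an $(r-i)$-fold iterated power of $2$ applied to $\funref{@desembarazadamente}(r+\ell,h,l)$. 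Absorbing the factor of $4$ and packing everything into a single computable function of $r$ (treating the vocabulary-dependent parameter $h$ and the apex-tuple size $l$ as fixed constants in the context of~\autoref{thm_main}, as is done throughout this section) defines $\funref{@disassociate}(r)$ with the claimed property.

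No genuine obstacle arises: the argument is routine bookkeeping that chains the two earlier bounds. The only point requiring minimal care is ensuring that the auxiliary quantity $\ell$, which enters at the base of the iterated exponential, does not secretly introduce a dependency beyond $r$; this is resolved by the explicit formula $\ell = \funref{@norteamericana}(\binom{r}{2})+1$ fixed in~\autoref{subsec_conventions}, which is manifestly a function of $r$ alone.
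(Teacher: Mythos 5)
Your argument is correct and matches the paper's own (implicit) proof: the paper derives the bound precisely by chaining \autoref{obs_pattbound} and \autoref{obs_bound}, exactly as you do, with the stamp contributing only a constant factor and $h$, $l$, $\ell$ absorbed into the function of $r$ by the same conventions. Nothing further is needed.
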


%\paragraph{A function mapping vertices to representatives.}
%Let $i\in[r]$ and let $v_1,\ldots, v_{i-1}\in V(G)\cup\{\mathspace\}$,
%such that for every $j\in[i-1]$,
%$v_j\in R_{j}^{w_1\ldots w_{j}}\cup\{\mathspace\}$, where $(w_1,\ldots, w_{i-1},\bar{\omega}) = {\sf trace}_{(\mathcal{A},\frR)}(v_1,\ldots, v_{i-1})$.
%We define the function $\tau_i^{\bar{v}}: R_i^{\bar{w}0}\cup R_i^{\bar{w}1}\to\mathcal{R}^{(v_1,\ldots,v_{i-1})}_i$
%that maps every $v\in R_i^{\bar{w}0}\cup R_i^{\bar{w}1}$
%to the unique $y\in\mathcal{R}^{(v_1,\ldots,v_{i-1})}_i$ such that $v\sim_i^{\bar{v}} y.$
%\medskip

We can now prove the main result of this subsection.

\begin{lemma}\label{lem_repre}
There is a function $\newfun{@interchangeability}:\mathbb{N}\to\mathbb{N}$ and an algorithm that,
given $d,r,{\sf tw}\in\mathbb{N}$, where $r\in[d-1]$, an $n$-vertex colored graph $\mathfrak{G}$,
an apex-tuple ${\bf a}$ of $\mathfrak{G}$ of size $l$,
a well-aligned $(\funref{@aristocracias}(d),\ell)$-railed annulus flatness pair $(\mathcal{A},\mathfrak{R})$ of $\mathfrak{G}\setminus V({\bf a})$, where ${\bf tw}(G[Y_2])\leq {\sf tw}$, a set $Z\subseteq Y_1$, and sets $R_1,\ldots,R_r\subseteq V(G)$,
outputs, in time $\mathcal{O}_{d,{\sf tw}}(n)$, sets $R_1',\ldots,R_r'\subseteq V(\mathfrak{G})$ such that for every $i\in[r]$,
$R_i'\subseteq R_i$, $|R_i'\cap Z|\leq \funref{@interchangeability}(d)$, and
$$\mathsf{partial\text{-}sig}^d_r({\sf ap}_{\bf c}((\mathfrak{G},R_1,\ldots,R_r),{\bf a})) = \mathsf{partial\text{-}sig}^d_r ({\sf ap}_{\bf c}((\mathfrak{G},R_1',\ldots,R_r'),{\bf a})).$$
\end{lemma}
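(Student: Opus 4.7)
The plan is to recursively select, within the railed annulus region lying in $Y_1$, a bounded collection of \emph{representatives} for the equivalence relations $\sim_i^{(v_1,\ldots,v_{i-1})}$, and to compute them in time $\mathcal{O}_{r,{\sf tw}}(n)$ via $\MSOL$ and Courcelle's Theorem applied to $G[Y_2]$, whose treewidth is at most ${\sf tw}$.

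I would proceed by induction on $i$ from $1$ to $r$, maintaining a tree of prefixes $(v_1,\ldots,v_{i-1})$ with $v_j\in R^{w_1\ldots w_j}\cup\{\mathspace\}$. For each such prefix I partition $R^{\bar w 0}\cup R^{\bar w 1}$ into the equivalence classes of $\sim_i^{(v_1,\ldots,v_{i-1})}$ and pick one representative of each, which by~\autoref{obs_bigrep} amounts to at most $\funref{@disassociate}(r)$ vertices per prefix. Extending each prefix by every such representative and by $\mathspace$, after $r$ levels the tree has at most $(\funref{@disassociate}(r)+1)^{r}$ leaves and so the total number of representatives selected inside $R\cap Y_1$ is bounded accordingly. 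Setting $R':=(R\setminus Y_1)\cup\mathcal{R}^\star$, where $\mathcal{R}^\star$ collects all selected representatives, immediately yields $R'\subseteq R$ and $|R'\cap Y_1|\leq\funref{@interchangeability}(r)$ for an appropriate choice of $\funref{@interchangeability}$.

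To perform each partition, I construct, by induction on $r-i$, an $\MSOL$-formula $\psi_{\gamma,\bar w}({\sf x}_1,\ldots,{\sf x}_i)$ for every admissible value $\gamma\in\mathbb{P}^{r-i}(\mathcal{G}_{\sf pat}^{(r+\ell,h,l)})$ that expresses $\textsf{sig}^{r-i}(\mathfrak{G},R,\bar w,{\sf x}_1,\ldots,{\sf x}_i)=\gamma$. The base case $r-i=0$ asks the pattern of $({\bf G}_{\bar w},\mathspace^l,{\sf x}_1,\ldots,{\sf x}_r)$ to equal $\gamma$, which is $\MSOL$-definable because adjacency, color-membership, equalities between boundary entries, and existence of internally vertex-disjoint paths between prescribed pairs are all $\MSOL$-definable. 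Inductively, $\psi_{\gamma,\bar w}$ is a finite boolean combination asserting that every $\gamma'\in\gamma$ is realized by some extension (either the $\mathspace$-extension or an existentially quantified vertex of the correct stamp satisfying $\psi_{\gamma',\bar w w_{i+1}}$) and that every such extension lands in some $\gamma'\in\gamma$. Since $|\mathcal{G}_{\sf pat}^{(r+\ell,h,l)}|$ is bounded by~\autoref{obs_pattbound}, this family of formulas has size depending only on $r,h,l$. As ${\bf G}_{\bar w}$ lives inside the compass of the flatness pair, hence inside $G[Y_2]$, Courcelle's Theorem evaluates each formula on $G[Y_2]$ (augmented with the unary predicate for $R$ and extra unary predicates recording the apex-projection colors) in $\mathcal{O}_{r,{\sf tw}}(n)$ time, and the total running time over the bounded-size tree of prefixes remains $\mathcal{O}_{r,{\sf tw}}(n)$.

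For correctness, I show by induction on $i$ from $r$ down to $0$ that for every $S\subseteq R$ and every prefix $(v_1,\ldots,v_i)$ admissible for both $R$ and $R'\cup S$, the equality $\textsf{sig}^{r-i}(\mathfrak{G},R,\bar w,v_1,\ldots,v_i)=\textsf{sig}^{r-i}(\mathfrak{G},R'\cup S,\bar w,v_1,\ldots,v_i)$ holds. The direction $\supseteq$ is immediate from $R'\cup S\subseteq R$. For $\subseteq$, every vertex of $R$ used to produce a signature element at level $i+1$ lies in some class of $\sim_{i+1}^{(v_1,\ldots,v_i)}$ whose representative is retained in $R'\subseteq R'\cup S$, so the same element is reproduced via this representative, while the $\mathspace$-extension contributes identically on both sides. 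Taking $i=0$ yields the required equality of top-level signatures. The main obstacle will be constructing the $\MSOL$-formulas $\psi_{\gamma,\bar w}$ uniformly in $r,h,l$ and certifying that they can be faithfully interpreted in the bounded-treewidth graph $G[Y_2]$---in particular that the leveling-based structure of ${\bf G}_{\bar w}$ is accessible through $\MSOL$ over $G[Y_2]$---so that Courcelle's Theorem delivers the claimed $\mathcal{O}_{r,{\sf tw}}(n)$ running time.
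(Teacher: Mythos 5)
Your proposal follows essentially the same route as the paper's proof of \autoref{lem_repre}: you recursively select representatives of the relations $\sim_i^{(v_1,\ldots,v_{i-1})}$ along the tree of prefixes of already-chosen representatives, compute them via the \MSOL-expressibility of signatures and Courcelle's theorem on the bounded-treewidth part $G[Y_2]$, keep $R\setminus Y_1$ untouched while retaining only the representatives inside $Y_1$, and justify the equality of signatures for $R$ versus $R'\cup S$ by routing every quantified vertex through an equivalent retained representative, exactly as the paper does (your explicit bound $(\funref{@disassociate}(r)+1)^r$ versus the paper's $\funref{@disassociate}(r)\cdot r$ is immaterial since any function of $r$ suffices). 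Your sketch of how to build the \MSOL-formulas for ${\sf sig}^{r-i}$ is in fact more detailed than the paper's one-line assertion that ${\sf sig}^r$ is \MSOL-expressible, so the proposal is correct at (at least) the paper's level of rigor.
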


\begin{proof}
We set $\funref{@interchangeability}(d) = \funref{@disassociate}(d)^{d+1}$.
First observe that $\mathsf{partial\text{-}sig}^d_r$ can be expressed in {\sf MSOL}.
Assume that we have computed, for some $i\geq 1$, sets $R_1',\ldots,R_{i-1}'$ as claimed.
We show how to compute $R_i'$.
The fact that ${\bf tw}(G[Y_2])\leq {\sf tw}$ and that for every $\bar{w}\in\{0,1\}^d$, $V({\bf G}_{\bar{w}})$ is a subset of $X_2$ implies that, by using Courcelle's theorem, we can compute,
in time $\mathcal{O}_{d,{\sf tw}}(n)$, a subset $R_i'$ of $R_i$ such that $Y_2\cap X_2\subseteq R_i'$ and
for every $v_1\ldots,v_{i-1}\in V(G)$,
where $v_j\in R_j',j\in [i-1]$,
it holds that for every $v$ in $R_i'$ such that $v\in Z$, if $v\sim_i^{(v_1,\ldots, v_{i-1})}u$, for some $u\in R_i$, then $u=v$.
By keeping one $v$ for every $v_1\ldots,v_{i-1}\in V(G)$,
where $v_j\in R_j',j\in [i-1]$, we obtain a set $R_i'$ as claimed.
Intuitively, we obtain $R_i'$ from $R_i$ by keeping only one representative from each equivalence class (that itself can be expressed in \textsf{MSOL}) inside $Z$.
%We define $R_1' = (R\setminus X_1)\cup \mathcal{R}_1^{\varepsilon}$.
%Then, for every $i\in[2,r]$, we define $$R_i' =  (R\setminus Y_1)\cup \{v\in \mathcal{R}_i^{(v_1,\ldots,v_{i-1})}\mid
%v_1\in \mathcal{R}_1^{\varepsilon},
%v_2\in \mathcal{R}_2^{(v_1)},\ldots,
%v_{i-1}\in \mathcal{R}_{i-1}^{(v_1,\ldots, v_{i-2})}\}
%.$$
%We set $R'=\bigcup_{i\in[r]} R_i'$.
%
%First observe that
%every vertex in $R\setminus R'$ either belongs to $\mathcal{R}_1^{\varepsilon}$ or to $\mathcal{R}_i^{(v_1,\ldots,v_{i-1})}$, for some choice of $i\in[2,r]$ and some $v_1\in \mathcal{R}_1^{\varepsilon},
%v_2\in \mathcal{R}_2^{(v_1)},\ldots,
%v_{i-1}\in \mathcal{R}_{i-1}^{(v_1,\ldots, v_{i-2})}$.
%Also, by definition of the representatives, $\mathcal{R}_1^{\varepsilon}\cup\mathcal{R}_i^{(v_1,\ldots,v_{i-1})}\subseteq R$ for every $i\in[2,r]$ and every $v_1,\ldots, v_{i-1}\in V(G)\cup\{\mathspace\}$.
By~\autoref{obs_bigrep}, it follows that $|R_i'\cap Z|\leq \funref{@interchangeability}(d)$.
\end{proof}

To conclude this subsection, we next show how to combine~\autoref{lem_repre} and~\autoref{lem_equirep} in order to compute an equivalent colored graph with reduced annotation.
Let two colored graphs $\mathfrak{G},\mathfrak{H}$, let ${\bf a}$ be an apex-tuple of $\mathfrak{G}$, and let $(W,\mathfrak{R})$ be a flatness pair of $\mathfrak{G}\setminus V({\bf a})$.
Given that $\mathfrak{R}=(X,Y,P,C,\Gamma,\sigma,\pi)$,
we call a partial function $\xi: V(X\cap Y)\cup V({\bf a})\to V(\mathfrak{H})$ such that 
$\mathfrak{G}$ and $\mathfrak{H}$ are $\xi$-compatible 
a \emph{compatibility function of $\mathfrak{G}$ and $\mathfrak{H}$}.

\begin{lemma}
\label{lem_new_ext_rep}
Let $\tau$ be a colored-graph vocabulary.
There is a function $\newfun{samknkal}:\mathbb{N}^3\to\mathbb{N}$
and an algorithm that, given
\begin{itemize}
\item $r,l,q\in\mathbb{N}$,
\item a $\tau$-structure $\mathfrak{G}$,
\item an apex-tuple ${\bf a}$ of $\mathfrak{G}$ of size $l$,
\item a regular flatness pair $(W,\mathfrak{R})$ of $\mathfrak{G}\setminus V({\bf a})$
of height at least $\funref{samknkal}(r,l,q)$ whose compass has treewidth at most ${\sf tw}$,
and
\item sets  $R_1,\ldots,R_r\subseteq V(\mathfrak{G})$,
\end{itemize}
outputs, in time $\mathcal{O}_{r,l,q,{\sf tw}}(n)$, sets $R_1',\ldots,R_r'\subseteq V(\mathfrak{G})$ such that for every $i\in[r]$, $R_i'\subseteq R_i$, and a 
flatness pair $(\tilde{W}',\tilde{\mathfrak{R}}')$ of $G\setminus V({\bf a})$
that is a $W'$-tilt of $(W,\mathfrak{R})$ for some $q$-subwall $W'$ of $W$
such thatfor every $i\in[r]$, $R_i'\cap V({\sf Compass}_{\tilde{\mathfrak{R}}'}(\tilde{W}')) = \emptyset$ and
for every $(\tau\cup\{\mathsf{R}_1,\ldots,{\sf R}_r\})$-structure $(\mathfrak{F},\bar{R}^\star)$
and every compatibility function $\xi$ of $(\mathfrak{G},R_1,\ldots,R_r)$ and $(\mathfrak{F},\bar{R}^\star)$, it holds that
$$\mathsf{sig}^r((\mathfrak{F},\bar{R}^\star)\oplus_\xi (\mathfrak{G},R_1,\ldots,R_r)) = \mathsf{sig}^r((\mathfrak{F},\bar{R}^\star)\oplus_\xi (\mathfrak{G},R_1',\ldots,R_r')).$$
\end{lemma}

\begin{proof}
We set and let $d=\funref{@transversales}(r,l)$, $p=\funref{@aristocracias}(d)$, $y=\ell$, $g=\funref{label_internalization}(\funref{@interchangeability}(d)+1,q)$, and $\funref{samknkal}(d,y) = {\sf odd}(2p+\max\{g,\frac{y}{4}-1\})$.
We first apply ~\autoref{label_simultaneously}, and obtain a $(p,y)$-railed annulus $\mathcal{A}$ of $\mathfrak{G}\setminus V({\bf a})$.
We set  $(X,Y,P,C,\Gamma,\pi,\sigma)=\frR$.
Let $\Sigma$ be the union of all $W^{(g+1)}$-internal cells of $\frR$.
By setting $Y_2:=Y$, $X_2:=X$, $Y_1=\Sigma\cup\cupall\{V(\sigma)\mid \text{$c$ is a $W^{(g+1)}$-marginal cell of $\frR$}\}$, and $X_1 = V(G)\setminus \Sigma$,
we obtain a tuple $\frR' =(X_1,Y_1,X_2,Y_2,Z_1,Z_2,\Gamma',\pi',\sigma')$, where $\Gamma',\pi',\sigma'$ are the restrictions of $\Gamma,\pi,\sigma$, respectively to $\Gamma\setminus \Sigma$.

Observe that since $(W,\mathfrak{R})$ is a flatness pair of $\mathfrak{G}\setminus V({\bf a})$,
then $(\mathcal{A},\frR')$ is a railed annulus flatness pair of $\mathfrak{G}\setminus V({\bf a})$.
Also, since $(W,\mathfrak{R})$ is regular, it is also well-aligned (\autoref{prop_wellaligned}), which implies that $(\mathcal{A},\frR')$ is also well-aligned.
Moreover, since ${\sf Compass}_\frR (W)$ has treewidth at most ${\sf tw}$, we have that
 ${\bf tw}(G[Y_2])\leq {\sf tw}$.
Therefore, by applying the algorithm of~\autoref{lem_repre},
we find, in time $\mathcal{O}_{d,{\sf tw}}(n)$,
sets $R_1',\ldots,R_r'\subseteq V(\mathfrak{G})$ such that for every $i\in[r]$,
$R_i'\subseteq R_i$ and $|R_i'\cap Y_1|\leq \funref{@interchangeability}(d)$, and
$\mathsf{partial\text{-}sig}^d_r({\sf ap}_{\bf c}((\mathfrak{G},R_1,\ldots,R_r),{\bf a})) = \mathsf{partial\text{-}sig}^d_r ({\sf ap}_{\bf c}((\mathfrak{G},R_1',\ldots,R_r'),{\bf a})).$
By using~\autoref{label_stereotypical}, we can compute, in linear time, a $q$-subwall $W'$ of $W$ such that $V({\sf Influence}_\frR(W'))\subseteq Y_1$ and  $R_i' \cap V({\sf Influence}_\frR(W'))= \emptyset$ for every $i\in[r]$.
Then, by applying the algorithm of~\autoref{label_proporcionada} we compute, in linear time,
a $W'$-tilt $(\tilde{W}',\tilde{\mathfrak{R}}')$ of $(W,\frR)$ such that $R_i'\cap V({\sf Compass}_{\tilde{\mathfrak{R}}'}(\tilde{W}')) = \emptyset$, for every $i\in [r]$.

Since \[\mathsf{partial\text{-}sig}^d_r({\sf ap}_{\bf c}((\mathfrak{G},R_1,\ldots,R_r),{\bf a})) = \mathsf{partial\text{-}sig}^d_r ({\sf ap}_{\bf c}((\mathfrak{G},R_1',\ldots,R_r'),{\bf a})),\]
by \autoref{lem_equirep}, we have that for every $(\tau\cup\{{\sf R}_1,\ldots,{\sf R}_r\})$-structure $(\mathfrak{F},\bar{R}^\star)$, every compatibility function $\xi$ of $(\mathfrak{G},R_1,\ldots,R_r)$ and $(\mathfrak{F},\bar{R}^\star)$, it holds that
$\mathsf{sig}^r((\mathfrak{F},\bar{R}^\star)\oplus_\xi (\mathfrak{G},R_1,\ldots,R_r)) = \mathsf{sig}^r((\mathfrak{F},\bar{R}^\star)\oplus_\xi (\mathfrak{G},R_1',\ldots, R_r')).$
\end{proof}

\subsection{Reducing the instance}\label{subsec_reduce}

In this subsection we describe how to remove problem-irrelevant vertices inside a ``big enough'' annotation-irrelevant bidimensional area of our instance.
In order to achieve this, we have to argue that a linkage whose terminals are not intersecting a big enough bidimensional area can be rerouted away from some central part of this area.
This is guaranteed by the following version of the {\sl Unique Linkage Theorem}
that can be derived from~\cite[Theorem 23]{BasteST19hittIV} (see also~\cite{RobertsonS09XXI,RobertsonSGM22,KawarabayashiW10asho,Mazoit13asin,AdlerKKLST17irre}).

\begin{proposition}\label{label_pretendientes}
There exists a function $\newfun{label_encompassing}: \mathbb{N}^3\to\mathbb{N}$
such that,
for every $l,z,k\in\mathbb{N}$,
if $G$ is a graph, ${\bf a}$ is an apex-tuple of $G$ of size $l$,
if $(W,\mathfrak{R})$ is flatness pair of $G\setminus V({\bf a})$ of height $\funref{label_encompassing}(l,z,k)$
then for every linkage $L$ of size at most $k$ such that
$T(L)\cap V({\sf Compass}_{\frR}(W))  = \emptyset$,
there is a linkage $L'$ such that $L\equiv L'$ and $L\cap  V({\sf Compass}_{\tilde{\frR}'}(\tilde{W}')) = \emptyset$,
where $(\tilde{W}',\tilde{\mathfrak{R}}')$ is a $W^{(z)}$-tilt of $(W,\mathfrak{R}).$
\end{proposition}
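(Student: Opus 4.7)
The plan is to reduce the statement to the classical Unique Linkage Theorem for flat walls in apex-free graphs (e.g., the appropriate formulation that is provided by~\cite{BasteST19hittIV,KawarabayashiW10asho,Mazoit13asin,AdlerKKLST17irre}), via a standard preprocessing that ``hides'' the apex-tuple $V({\bf a})$. Concretely, I would first split each path $P_i$ of $L$ at its internal apex-vertices in $V({\bf a})\cap V(P_i)$, obtaining a linkage $L^{\circ}$ of $G\setminus V({\bf a})$ whose paths are the connected components of $L-V({\bf a})$. Since each path of $L$ meets at most $l$ apex vertices, $|\mathcal{P}(L^{\circ})|\leq k(l+1)=:k'$. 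The terminal set $T(L^{\circ})$ consists of the non-apex terminals of $L$ (which lie outside $V({\sf Compass}_{\frR}(W))$ by hypothesis) together with the non-apex neighbors in $L$ of the apex-vertices of $V(L)\cap V({\bf a})$; since every such apex-vertex has degree at most $2$ in $L$, there are at most $2l$ additional ``bad'' terminals, which may a priori lie inside $V({\sf Compass}_{\frR}(W))$.

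Next, I would peel the wall $W$ in order to escape those bad terminals: invoking~\autoref{label_stereotypical} (as is done in the proof of~\autoref{lem_new_ext_rep}), I would find a central subwall $W^{\star}$ of $W$ together with a $W^\star$-tilt $(W^{\star},\mathfrak{R}^{\star})$ of $(W,\mathfrak{R})$ whose compass is disjoint from the $\leq 2l$ bad terminals of $L^{\circ}$ and whose height is at least $q:=f_{\mathsf{ULT}}(k',z)$, where $f_{\mathsf{ULT}}$ is the function provided by the apex-free Unique Linkage Theorem. This is possible provided the height of $W$ is at least $\funref{label_internalization}(2l+1,q)$, and this determines $\funref{label_encompassing}(l,z,k):=\funref{label_internalization}(2l+1,f_{\mathsf{ULT}}(k(l+1),z))$. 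Since ${\sf Compass}_{\mathfrak{R}^{\star}}(W^{\star})\subseteq {\sf Compass}_{\frR}(W)$, the non-bad terminals of $L^{\circ}$ also lie outside this smaller compass, and therefore $T(L^{\circ})\cap V({\sf Compass}_{\mathfrak{R}^{\star}}(W^{\star}))=\emptyset$.

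Applying the Unique Linkage Theorem to $L^{\circ}$ inside $G\setminus V({\bf a})$ with the flatness pair $(W^{\star},\mathfrak{R}^{\star})$ yields an equivalent linkage $L^{\circ\prime}$ of $G\setminus V({\bf a})$ avoiding the compass of some $(W^{\star})^{(z)}$-tilt $(\tilde{W}',\tilde{\mathfrak{R}}')$, which, by transitivity of tilts, is also a $W^{(z)}$-tilt of $(W,\mathfrak{R})$. Finally, I would reassemble $L'$ by reinserting the apex vertices in the pattern dictated by $L$: for each original path $P_i$ of $L$ with internal apex sequence $a_{j_1},\ldots,a_{j_s}$, form a new path in $L'$ by concatenating the corresponding subpaths of $L^{\circ\prime}$ through $a_{j_1},\ldots,a_{j_s}$; these subpaths exist with the required endpoints because $L^{\circ\prime}\equiv L^{\circ}$. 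Since each apex vertex appears in at most one path of $L$, the concatenated paths of $L'$ are pairwise vertex-disjoint, so $L'$ is a linkage of $G$ with $L'\equiv L$. Because $V({\bf a})\cap V({\sf Compass}_{\tilde{\mathfrak{R}}'}(\tilde{W}'))=\emptyset$ (the compass lives in $G\setminus V({\bf a})$) and $L^{\circ\prime}$ avoids this compass, we conclude $V(L')\cap V({\sf Compass}_{\tilde{\mathfrak{R}}'}(\tilde{W}'))=\emptyset$.

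The main obstacle is the parameter bookkeeping combined with the fact that the equivalence of linkages fixes only the pattern (the unordered pairs of terminals). One must verify that the reassembly via apex vertices reproduces exactly the endpoint pairs of $L$; this is routine because each connected component of $L-V({\bf a})$ is unambiguously identified in the pattern of $L^{\circ}$ by its pair of endpoints, and the apex-connections used to glue the rerouted subpaths back together are taken verbatim from $L$.
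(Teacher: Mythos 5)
The paper never proves \autoref{label_pretendientes}: it is imported wholesale as a known form of the Unique Linkage Theorem, ``derived from \cite[Theorem 23]{BasteST19hittIV}'' (see also \cite{KawarabayashiW10asho,Mazoit13asin,AdlerKKLST17irre}), so what you are supplying is a proof where the paper only has a citation. Your reduction, however, has a genuine gap at its last step. The claim ``by transitivity of tilts, [a $(W^{\star})^{(z)}$-tilt] is also a $W^{(z)}$-tilt of $(W,\mathfrak{R})$'' is false: a tilt must have the \emph{same interior} as the subwall it tilts, and $W^{(z)}$ denotes the \emph{central} $z$-subwall of $W$, whereas your $W^{\star}$ is one of the $2l+1$ influence-disjoint subwalls of \autoref{label_stereotypical}, chosen \emph{after seeing $L$} so as to dodge the apex-attachment terminals; its central $z$-subwall is in general a $z$-subwall of $W$ sitting far from the centre of $W$. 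Hence what you actually prove is that $L$ can be rerouted off the compass of a tilt of \emph{some} $z$-subwall whose location depends on $L$, not off the prescribed region $V({\sf Compass}_{\tilde{\mathfrak{R}}'}(\tilde{W}'))$ for a $W^{(z)}$-tilt. This weaker statement is useless for the way the proposition is applied (\autoref{lem_removingirr}, \autoref{lem_gen_irr}): there the region is fixed in advance (it is where the deleted set $Y$ lives) and the linkage ranges over all choices of terminals outside the big compass, so you cannot re-centre the avoided region per linkage.

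Moreover, the gap is not just bookkeeping. If a vertex of ${\bf a}$ used by $L$ is adjacent only to vertices inside $\cupall{\sf influence}_{\mathfrak{R}}(W^{(z)})$, then the ``bad terminals'' it creates are unavoidable waypoints inside the prescribed central region, and no rerouting that keeps them (as your reassembly does) can avoid the compass of a $W^{(z)}$-tilt; dodging them by moving the subwall is exactly what the statement does not permit. So either one reads $L$ as a linkage that does not cross between $V({\bf a})$ and the rest of the graph --- which is what effectively happens in the paper's pipeline, since the relevant ${\sf dp}$-queries live in the apex-projected structure --- in which case your splitting/peeling machinery is unnecessary and the apexless Unique Linkage Theorem applies verbatim to $(W,\mathfrak{R})$ and $W^{(z)}$; or one genuinely needs the apex-aware rerouting statement that the paper imports from \cite[Theorem 23]{BasteST19hittIV}, which your reduction does not recover. (Two smaller points, both repairable: components of $L-V({\bf a})$ may be single vertices or empty when consecutive apices occur on a path, so $L^{\circ}$ need not be a linkage in the paper's sense of vertex-disjoint \emph{non-trivial} paths; and the total number of components is in fact at most $k+l$, not just $k(l+1)$.)
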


Using \autoref{label_pretendientes}, we can easily prove the following result that allows us to remove problem-irrelevant vertices inside a ``big enough'' annotation-irrelevant bidimensional area of our instance.
\begin{lemma}\label{lem_removingirr}
Let $\tau$ be a colored-graph vocabulary.
There is a function $\newfun{@psychotechnics}:\mathbb{N}^2\to\mathbb{N}$ such that, if
\begin{itemize}
\item $r,l,z\in\mathbb{N}$,
\item $\mathfrak{G}$ is a $\tau$-structure,
\item ${\bf a}$ is an apex-tuple of $\mathfrak{G}$ of size $l$,
\item $(W,\mathfrak{R})$ is a flatness pair of $\mathfrak{G}\setminus V({\bf a})$ of height $\funref{@psychotechnics}(r,l)+z$,
and
\item sets $R_1,\ldots,R_r\subseteq V(\mathfrak{G})$, where for every $i\in[r]$, $R_i\cap V({\sf Compass}_\frR (W))=\emptyset$,
\end{itemize}
then for every flatness pair $(\tilde{W}',\tilde{\frR}')$ of $\mathfrak{G}\setminus V({\bf a})$ that is a $W'$-tilt of $(W,\frR)$ for some $\funref{@psychotechnics}(q,l)$-internal subwall $W'$ of $W$ of height $z$,
for every $(\tau\cup\{{\sf R}_1,\ldots,{\sf R}_r\})$-structure $(\mathfrak{F},\bar{R}^\star)$, every compatibility function $\xi$ of $(\mathfrak{G},R_1,\ldots,R_r)$ and $(\mathfrak{F},\bar{R}^\star)$,
and every $Y\subseteq V({\sf Compass}_{\tilde{\mathfrak{R}}'}(\tilde{W}'))$, it holds that
$$\mathsf{sig}^r((\mathfrak{F},\bar{R}^\star)\oplus_\xi (\mathfrak{G},R_1,\ldots,R_r)) = \mathsf{sig}^r((\mathfrak{F},\bar{R}^\star)\oplus_\xi (\mathfrak{G}\setminus Y,R_1,\ldots,R_r)).$$
\end{lemma}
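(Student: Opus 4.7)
The plan is to apply the Unique Linkage Theorem (\autoref{label_pretendientes}) to reroute every linkage witnessing a $\DP$-atom of $φ_{{\sf R},{\bf c}}$ away from the central compass containing $Y$, after observing that no free-variable interpretation or constant of $φ_{{\sf R},{\bf c}}$ can land inside $V({\sf Compass}_{\mathfrak{R}}(W))$.

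First I would bound the arities of $\DP$-atoms appearing in $φ_{{\sf R},{\bf c}}$. Since $φ$ has quantifier rank at most $t$, \autoref{obs_quantifierrank} gives that $φ^l$ has quantifier rank at most $\funref{@transversales}(t,l)$, and the same bound holds for $φ^l_{\sf R}$. The Backwards Translation (\autoref{@imposibilitada}) used to obtain $φ_{{\sf R},{\bf c}}=φ^l_{\sf R}|_{{\sf ap}_{\bf c}}$ only increases the quantifier rank by a function of $l$, so there is a function $k^\star(t,l)$ bounding the arity $k$ of every $\DP_k$-atom of $φ_{{\sf R},{\bf c}}$. Setting $z:=\funref{@psychotechnics}(t,l)$, I choose $\funref{@psychotechnics}(t,l)$ large enough so that the fixed-point inequality $z\geq \funref{label_encompassing}(l,z,k^\star(t,l))$ holds, which can be arranged by letting $\funref{@psychotechnics}$ dominate the recursive application of $\funref{label_encompassing}$. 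Then any flatness pair of height at least $\funref{@psychotechnics}(t,l)+q$ meets the height hypothesis of \autoref{label_pretendientes} for this $z$ and $k=k^\star(t,l)$.

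Now fix $φ,(\mathfrak{F},R^\star),ξ,(\tilde{W}',\tilde{\mathfrak{R}}'),Y$ as in the statement and set $\mathfrak{H}:=(\mathfrak{F},R^\star)\oplus_{ξ}(\mathfrak{G},R)$ and $\mathfrak{H}':=(\mathfrak{F},R^\star)\oplus_{ξ}(\mathfrak{G}\setminus Y,R)$. Writing $\mathfrak{R}=(X,Y_{\mathfrak{R}},P,C,\Gamma,σ,π)$, since $ξ$ glues $\mathfrak{F}$ to $\mathfrak{G}$ along $V(X\cap Y_{\mathfrak{R}})\cup V({\bf a})$, $(W,\mathfrak{R})$ remains a flatness pair of $\mathfrak{H}\setminus V({\bf a})$ and of $\mathfrak{H}'\setminus V({\bf a})$, with the same compass. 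The crucial observation is that no free variable or constant of $φ_{{\sf R},{\bf c}}$ is interpreted inside $V({\sf Compass}_{\mathfrak{R}}(W))$: quantified variables are restricted to ${\sf R}$ and therefore interpret as elements of $R\cup R^\star$, where $R\cap V({\sf Compass}_{\mathfrak{R}}(W))=\emptyset$ by hypothesis and $R^\star\subseteq V(\mathfrak{F})$ lies outside the compass by the definition of the compatibility function $ξ$; the constants ${\sf c}_1,\ldots,{\sf c}_l$ interpret as apex vertices of $V({\bf a})$, which lie outside the compass since $W$ is a wall of $\mathfrak{G}\setminus V({\bf a})$.

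By a straightforward induction on the structure of $φ_{{\sf R},{\bf c}}$, it then suffices to check that every atomic subformula has the same truth value in $(\mathfrak{H},{\bf a})$ and $(\mathfrak{H}',{\bf a})$ under every interpretation of its free variables by elements of $R\cup R^\star$. Atoms involving equality, unary colors, or the edge relation depend only on these interpretations, which never lie in $Y$, so their truth value is preserved. For a $\DP_k$-atom with $k\leq k^\star(t,l)$, the backward direction is immediate since $\mathfrak{H}'$ is a subgraph of $\mathfrak{H}$. For the forward direction, any linkage $L$ of size at most $k$ witnessing the atom in $\mathfrak{H}$ has $T(L)\cap V({\sf Compass}_{\mathfrak{R}}(W))=\emptyset$. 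Because $W'$ is $\funref{@psychotechnics}(t,l)$-internal in $W$, the set $V({\sf Compass}_{\tilde{\mathfrak{R}}'}(\tilde{W}'))$ is contained in the vertex set of the compass of some $W^{(\funref{@psychotechnics}(t,l))}$-tilt of $(W,\mathfrak{R})$. Applying \autoref{label_pretendientes} to $(W,\mathfrak{R})$ viewed as a flatness pair of $\mathfrak{H}\setminus V({\bf a})$, with $z=\funref{@psychotechnics}(t,l)$ and $k=k^\star(t,l)$, produces an equivalent linkage $L'$ with $L'\cap V({\sf Compass}_{\tilde{\mathfrak{R}}'}(\tilde{W}'))=\emptyset$; in particular $L'\cap Y=\emptyset$, so $L'$ lies in $\mathfrak{H}'$ and witnesses the same atom there. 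The main technical obstacle is the self-referential choice of $\funref{@psychotechnics}(t,l)$ together with tracking $k^\star(t,l)$ through the enhancement process $φ\mapsto φ_{{\sf R},{\bf c}}$; everything else follows by a direct application of the Unique Linkage Theorem under the observation that all interpretations of variables and constants remain outside the central compass.
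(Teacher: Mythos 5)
Your proposal is correct and follows essentially the same route as the paper's proof: both bound the arity of the $\DP$-atoms of $φ_{{\sf R},{\bf c}}$ by $r=\funref{@transversales}(t,l)$, define $\funref{@psychotechnics}$ from the function $\funref{label_encompassing}$ of \autoref{label_pretendientes}, observe that all interpretations of quantified variables and constants lie outside ${\sf Compass}_{\frR}(W)$, and then invoke the Unique Linkage Theorem to reroute every witnessing linkage off $V({\sf Compass}_{\tilde{\frR}'}(\tilde{W}'))\supseteq Y$, so that each atom (and hence the whole sentence) keeps its truth value. The only cosmetic difference is that the paper simply sets $\funref{@psychotechnics}=\funref{label_encompassing}(l,q,r)$ rather than using your fixed-point phrasing, and leaves the atom-to-formula induction implicit.
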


\begin{proof}
We set $\funref{@psychotechnics}(r,l)=\funref{label_encompassing}(l,z,r).$
Let $\mathfrak{G}$ be a $\tau$-structure, let ${\bf a}$ be an apex-tuple of $\mathfrak{G}$ of size $l$, let
$(W,\mathfrak{R})$ be regular flatness pair of $\mathfrak{G}\setminus V({\bf a})$ of height $\funref{@psychotechnics}(r,l)+z$,
and $R_1,\ldots,R_r\subseteq V(\mathfrak{G})$, where $R_i\cap V({\sf Compass}_\frR (W))=\emptyset$ for every $i\in[r]$.
Also, let $(\tilde{W}',\tilde{\frR}')$ be a flatness pair of $\mathfrak{G}\setminus V({\bf a})$
that is a $W'$-tilt of $(W,\frR)$ for some $\funref{@psychotechnics}(q,l)$-internal subwall $W'$ of $W$ of height $z$.
%Let also $(\mathfrak{F},R^\star)$ be a $(\tau\cup\{{\sf R}\})$-structure and $\xi$ be a compatibility function of
%$(\mathfrak{G},R)$ and $(\mathfrak{F},R^\star)$, and
Let also
$Y\subseteq V({\sf Compass}_{\tilde{\mathfrak{R}}'}(\tilde{W}'))$.
Observe that,
for every $k\in \lfloor r/2\rfloor$ and every $s_1,t_1\ldots,s_k,t_k\in R$,
by \autoref{label_pretendientes}, if
$G$ is the Gaifman graph of $(\mathfrak{G},R)$, then
$(G,s_1,t_1,\ldots,s_k,t_k)\models {\sf dp}_k({\sf x}_1,{\sf y}_1,\ldots,{\sf x}_k,{\sf y}_k)$ if and only if $(G\setminus Y,s_1,t_1,\ldots,s_k,t_k)\models {\sf dp}_k({\sf x}_1,{\sf y}_1,\ldots,{\sf x}_k,{\sf y}_k)$.
This implies that $\mathsf{sig}^r((\mathfrak{F},\bar{R}^\star)\oplus_\xi (\mathfrak{G},R_1,\ldots,R_r)) = \mathsf{sig}^r((\mathfrak{F},\bar{R}^\star)\oplus_\xi (\mathfrak{G}\setminus Y,R_1,\ldots,R_r)).$
\end{proof}

By applying~\autoref{lem_new_ext_rep} and then~\autoref{lem_removingirr}, we get the following:

\begin{lemma}\label{corol_redu}
Let $\tau$ be a colored-graph vocabulary.
There is a function $\newfun{@cor_individuated}:\mathbb{N}^3\to\mathbb{N}$
and an algorithm that, given $r,l,q\in\mathbb{N}$,
a $\tau$-structure $\mathfrak{G}$, an apex-tuple ${\bf a}$ of $\mathfrak{G}$ of size $l$, a regular flatness pair $(W,\mathfrak{R})$ of $\mathfrak{G}\setminus V({\bf a})$ of height $h\geq \funref{@cor_individuated}(r,l,q)$ whose compass has treewidth at most ${\sf tw}$,
and sets $R_1,\ldots,R_r\subseteq V(G)$,
outputs, in time $\mathcal{O}_{r,l,q,{\sf tw}}(n)$, sets $R_1',\ldots,R_r'\subseteq V(\mathfrak{G})$ and a flatness pair $(\tilde{W}',\tilde{\frR}')$ of $\mathfrak{G}\setminus V({\bf a})$ that is a $W'$-tilt of $(W,\frR)$ for some $q$-subwall $W'$ of $W$ such that for every $i\in[r]$, $R_i'\subseteq R_i$ and $R_i'\cap V({\sf Compass}_{\tilde{\frR}'}(\tilde{W}'))=\emptyset$, 
and for every $(\tau\cup\{{\sf R}_1,\ldots,{\sf R}_r\})$-structure $(\mathfrak{F},\bar{R}^\star)$, every compatibility function $\xi$ of $(\mathfrak{G},R_1,\ldots,R_r)$ and $(\mathfrak{F},\bar{R}^\star)$, and every $Y\subseteq V({\sf Compass}_{\tilde{\frR'}}(\tilde{W}'))$, it holds that 
$$\mathsf{sig}^r((\mathfrak{F},\bar{R}^\star)\oplus_\xi (\mathfrak{G},R_1,\ldots,R_r)) = \mathsf{sig}^r((\mathfrak{F},\bar{R}^\star)\oplus_\xi (\mathfrak{G}\setminus Y,R_1',\ldots,R_r')).$$
\end{lemma}

\subsection{Proof of~\autoref{thm_main}}
\label{subsec_mainproof}
We conclude this section by presenting the proof of~\autoref{thm_main}.

\begin{proof}[Proof of~\autoref{thm_main}]
Given a sentence $\varphi\in \FOL[\tau+\DP]$ of quantifier rank $q$, we set
\begin{eqnarray*}
c&:=&\hw(G),\\
l&:=&\funref{@connaissions}(c)\mbox{ where $\funref{@connaissions}$ is the function of~\autoref{prop_flatwallbdtw}}, and\\
r&:= &\funref{@cor_individuated}(q,l,3).
\end{eqnarray*}
Our algorithm consists of three steps.
\medskip

\noindent{\bf Step 1}:
Run the algorithm of~\autoref{prop_flatwallbdtw} for $G,$ $r,$ and $c.$
This algorithm outputs, in linear time, either
a tree decomposition of $G$ of width at most $\funref{@resplandecientes}(c)\cdot r,$ or
a set $A\subseteq V(G),$  where $|A|\leq l,$ a regular flatness pair $(W,\mathfrak{R})$ of $G\setminus A$ of height $r,$
and a tree decomposition of ${\sf compass}_{\mathfrak{R}}(W)$ of width at most $\funref{@resplandecientes}(c)\cdot r.$
In the first possible output,
i.e., a tree decomposition of $G$ of width at most $\funref{@resplandecientes}(c)\cdot r,$
 proceed to Step 3.
In the second possible output,
proceed to Step 2.
\medskip

\noindent{\bf Step 2}:
%We first consider an ordering $a_1, \ldots, a_{l}$ of the vertices in $A,$ and set ${\bf a} = (a_1, \ldots, a_l).$
%By~\autoref{lem_changing_apices}, we have that
%$(\mathfrak{G},R, {\bf a}_0)\models \varphi_{{\sf R},{\bf c}} \iff (\mathfrak{G},R, {\bf a})\models \varphi_{{\sf R},{\bf c}}.$
We run the algorithm of~\autoref{corol_redu} for $q,l,3,$ $\mathfrak{G},$ $R_1,\ldots,R_q$ ${\bf a},$ and $(W,\mathfrak{R}),$ and we obtain, in linear time, 
sets $R_1',\ldots,R_q'\subseteq V(G)$
and a flatness
pair $(\tilde{W}',\tilde{\mathfrak{R}}')$ of $G\setminus V({\bf a})$ that is a $W'$-tilt of $(W,\frR)$ for some subwall $W'$ of $W$ of height $3$ such that for every $i\in [q]$,
$V({\sf compass}_{\tilde{\mathfrak{R}}'}(\tilde{W}'))\cap R_i'=\emptyset$ and $R_i'\subseteq R_i$,
and
$\mathsf{sig}^q(\mathfrak{G},R_1,\ldots,R_q)= \mathsf{sig}^q(\mathfrak{G}\setminus V({\sf compass}_{\tilde{\mathfrak{R}}'}(\tilde{W}')),R_1',\ldots,R_q').$
Then, we set $\mathfrak{G}:=\mathfrak{G}\setminus V({\sf compass}_{\tilde{\mathfrak{R}}'}(\tilde{W}')),$ for every $i\in[q]$, $R_i:= R_i',$ and we run again Step 1.
\medskip

\noindent{\bf Step 3}:
Given a tree decomposition of $G$ of width at most $\funref{@resplandecientes}(c)\cdot r,$ and since
 $\mathsf{sig}^q$ is expressible in $\MSOL[\tau],$ by using Courcelle's theorem, in linear time we can compute $\mathsf{sig}^q(\mathfrak{G},R_1,\ldots,R_q)$ and check the existence of a $\varphi$-spanning subtree of the tree obtained by $\mathsf{sig}^q(\mathfrak{G},R_1,\ldots,R_q)$ and therefore decide whether $\mathfrak{G}\models \varphi$.
\medskip

Observe that the first and the second step of the algorithm are executed in linear time and they can be repeated no more than a linear number of times.
Therefore, the overall algorithm runs in quadratic time, as claimed.
\end{proof}

%As a corollary of the proof of \autoref{thm_main} and by using the version of \autoref{prop_flatwallbdtw} that outputs a minor-model of $K_k$ if $K_k$ is a minor of the input graph $G$,
%we can obtain an algorithm that given a $\tau$-structure $\mathfrak{G}$ and a formula $\varphi\in{\sf FOL+DP}[\tau]$, outputs, in time ${\cal O}_{t,|\varphi|}(n^2)$,
%either a minor-model of $K_t$ in $G$,
%or an instance $(\mathfrak{G}',R')$ such that  $\mathsf{sig}^t(\mathfrak{G},V(\mathfrak{G}))= \mathsf{sig}^t(\mathfrak{G}', R')$, and 
%the treewidth of $\mathfrak{G}'$ is upper-bounded by some function of $t$ and $|\varphi|$.
%This way, we get the following corollary that will be used in future work.
%
%\begin{corollary}
%There are two functions $\newfun{fun_ex1},\newfun{fun_ex2}:\mathbb{N}^3\to\mathbb{N}$
%and an algorithm that given
%$h,\ell,r\in\mathbb{N}$,
%and an $h$-colored $\ell$-rooted $n$-vertex graph $G$,
%outputs, in time ${\cal O}_{t}(n^2)$, either a minor-model of $K_{\funref{fun_ex1}(h,\ell,r)}$ in $G$,
%or a set $V\subseteq V(G)$,
%a set $R\subseteq V$
%such that ${\sf tw}(G[V])\leq \funref{fun_ex2}(h,\ell,r)$ and
%$\mathsf{sig}^r(G,V(G)) = \mathsf{sig}^r(G[V],R).$
%\end{corollary}

\section{\texorpdfstring{Logic for $s$-scattered paths}{Logic for s-scattered paths}}
\label{sec_logicscattered}
In this section we define a class of extensions of $\FOL$, in the same spirit as the disjoint paths logic.

\subsection{\texorpdfstring{Definition of $s$-scattered paths logic}{Definition of s-scattered paths logic}}\label{subsec_scattereddefinition}
Let $r\in\mathbb{N}$.
We define the $2k$-ary predicate $s\text{-}{\sf sdp}_k({\sf x}_1, {\sf y}_1, \ldots, {\sf x}_k, {\sf y}_k)$, which evaluates true in a $\tau$-structure $\mathfrak{G}$  if and only if there
are paths $P_1,\ldots, P_k$ of $(V(\mathfrak{G}),{\sf E}^{\mathfrak{G}})$ of length at least $2$ between (the interpretations of) ${\sf x}_i$ and ${\sf y}_i$ for all $i\in[k]$ such that for every $i,j\in[k]$, $j\neq i$, $V(P_i)\cap N_{(V(\mathfrak{G}),{\sf E}^{\mathfrak{G}})}^{(\leq s)}(V(P_j))=\emptyset$.
We let $\tau + s\text{-}{\sf sdp} := \tau \cup \{i\text{-}{\sf sdp}_k\mid k\geq 1, i\leq s\}$, where each
$s\text{-}{\sf sdp}_k$ is a $2k$-ary relation symbol.
We use $s\text{-}{\sf sdp}$ instead of $s\text{-}{\sf sdp}_k$ when $k$ is clear from the context.
It is easy to see that for every $k\in\mathbb{N}$ and every $x_1,y_1,\ldots,x_k,y_k\in V(G)$,
$$G\models 0\text{-}{\sf sdp}_k (x_1,y_1, \ldots, x_k,y_k)\iff
G\models\DP_k (x_1,y_1, \ldots, x_k,y_k).$$
Therefore, we can observe the following.
\begin{observation}
For every colored-graph vocabulary $\tau$, it holds that
$\{\Mod(\varphi)\mid \varphi\in \FOL[\tau+\DP]\} = \{\Mod(\varphi)\mid \varphi\in \FOL[\tau+ 0\text{-}{\sf sdp}]\}.$
\end{observation}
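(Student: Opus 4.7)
The plan is to prove the two set inclusions simultaneously by exhibiting mutually inverse syntactic translations between $\FOL[\tau+\DP]$ and $\FOL[\tau+0\text{-}{\sf sdp}]$ that preserve the satisfaction relation. Define the translation $(\cdot)^\sharp\colon \FOL[\tau+\DP] \to \FOL[\tau+0\text{-}{\sf sdp}]$ by structural recursion: every atomic subformula $\DP_k({\sf x}_1,{\sf y}_1,\ldots,{\sf x}_k,{\sf y}_k)$ is replaced by $0\text{-}{\sf sdp}_k({\sf x}_1,{\sf y}_1,\ldots,{\sf x}_k,{\sf y}_k)$, while all other atomic formulas, Boolean connectives, and first-order quantifiers are preserved verbatim; let $(\cdot)^\flat$ be the inverse translation performing the opposite replacement. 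Both maps are clearly bijections between the two sets of sentences.

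The core task is then to verify, by straightforward induction on the construction of $\varphi$, that for every $\tau$-structure $\mathfrak{G}$ and every interpretation of the free variables, $\mathfrak{G}\models\varphi \iff \mathfrak{G}\models \varphi^\sharp$ (and symmetrically for $(\cdot)^\flat$). The inductive steps for $\neg,\vee,\wedge,\exists,\forall$ are immediate because the two translations commute with connectives and quantifiers. The only non-trivial step is the base case for the replaced atomic predicates, and this is precisely the pointwise equivalence noted just before the statement: for every $k\in\mathbb{N}$ and all $x_1,y_1,\ldots,x_k,y_k\in V(\mathfrak{G})$,
\[
\mathfrak{G}\models \DP_k(x_1,y_1,\ldots,x_k,y_k)\iff \mathfrak{G}\models 0\text{-}{\sf sdp}_k(x_1,y_1,\ldots,x_k,y_k).
\]
This identity in turn reduces to the set-theoretic fact that $N^{(\leq 0)}_{(V(\mathfrak{G}),{\sf E}^{\mathfrak{G}})}(S)=S$ for every $S\subseteq V(\mathfrak{G})$, so the $0$-scatteredness condition $V(P_i)\cap N^{(\leq 0)}(V(P_j))=\emptyset$ coincides with the disjointness condition $V(P_i)\cap V(P_j)=\emptyset$ built into $\DP_k$; the length-at-least-two requirement on the paths is identical in both definitions.

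Combining the induction with this base case yields $\Mod(\varphi)=\Mod(\varphi^\sharp)$ for every $\varphi\in\FOL[\tau+\DP]$ and $\Mod(\psi)=\Mod(\psi^\flat)$ for every $\psi\in\FOL[\tau+0\text{-}{\sf sdp}]$, which together give the claimed equality of the two classes of model sets. There is no genuine obstacle: the observation is a direct syntactic consequence of the already-noted equivalence between $\DP_k$ and $0\text{-}{\sf sdp}_k$ at the atomic level.
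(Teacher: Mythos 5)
Your proposal is correct and matches the paper's (implicit) argument: the paper states the observation as an immediate consequence of the pointwise equivalence $G\models 0\text{-}{\sf sdp}_k(x_1,y_1,\ldots,x_k,y_k)\iff G\models \DP_k(x_1,y_1,\ldots,x_k,y_k)$, and your atom-replacement translation with induction on formula structure is exactly the routine formalization of that fact.
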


Our main result is the following:

\begin{theorem}\label{thm_induced}
For every colored-graph vocabulary $\tau$, every $s\in\mathbb{N}$, and every sentence $\varphi\in\FOL[\tau+s\text{-}{\sf sdp}]$, there exists an algorithm that, given a $\tau$-structure $G$ of size $n$, outputs whether $G\models \varphi$ in time $\mathcal{O}_{|\varphi|,k}(n^2)$, where $k$ is the Euler genus of $G$.
\end{theorem}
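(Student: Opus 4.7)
The plan is to transpose the entire framework developed for Theorem~\ref{thm_main} to the scattered setting, replacing every invocation of the Linkage Combing Lemma (\autoref{prop_combinglemma}) by its scattered analogue from~\cite{GolovachST22comb} (\autoref{proposition_combinginducedlinkages}). The reason the bound shifts from Hadwiger number to Euler genus is precisely that the combing lemma for $s$-scattered linkages is only known in bounded genus: once two paths must be kept at distance more than $s$, rerouting across a flat wall in a general minor-closed class can be obstructed by apex vertices adjacent to many flaps, whereas in a surface the flat part is globally planar after removing a bounded apex set, which is what the scattered combing argument exploits.

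First, I would revisit the reduction machinery of \autoref{sec_annotated}. The apex-projection construction $\mathsf{ap}_{\bf c}(\mathfrak{G},{\bf a})$, the enhanced sentence $\varphi_{{\sf R},{\bf c}}$, and the backwards translation through the transduction of \autoref{@disadvantages} and \autoref{@recognitions} all go through verbatim, since they concern the encoding of apex adjacencies as colors and do not depend on what kind of path predicate we use, only on the fact that $s\text{-}{\sf sdp}$ is expressible in $\MSOL$. For the same reason, the definition of the $l$-apex-projected sentence $\varphi^l$ must be rewritten, replacing each atomic formula $\DP_k(\bar{\sf s},\bar{\sf t})$ by a slightly more elaborate disjunction that additionally forbids vertices of other paths from appearing in the $s$-neighborhood of the guessed internal pieces; the combinatorial accounting is identical to the one done for $\zeta_{\DP}$, only with extra distance constraints which are themselves $\FOL$-definable.

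Second, I would extend the pattern/signature infrastructure of Sections~\ref{sec_alternative} and~\ref{sec_signaturesexchengability} to the scattered setting. The definition of ${\sf pattern}$ is updated so that $\mathcal{H}^P$ records, for each subset of indices, the existence of $s$-scattered linkages between the corresponding boundary pairs; analogously, the notion of pairing is replaced by $s$-scattered pairing. \autoref{obs_translatemodelstopattern} and \autoref{lemma_reducing} (graphs with the same recursive patterns satisfy the same sentences) adapt without conceptual change. The crucial step is to prove a scattered analogue of \autoref{lem_colomodelsrerout}: in the presence of a well-aligned railed annulus flatness pair inside a bounded-genus graph, any $s$-scattered linkage whose terminals lie outside $\mathsf{Influence}_{\mathfrak{R}}(\mathcal{A}_{\bar{w}})$ can be rerouted to an equivalent $s$-scattered linkage that is combed through a bounded number of rail vertices. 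This is exactly where \autoref{proposition_combinginducedlinkages} is applied. With this in hand, the signatures $\mathsf{sig}^i$ are defined with the same recursion, and the exchangeability lemma (\autoref{lem_equirep}) is re-proved by the same double induction, using the scattered combing to split linkages into an inner and outer part.

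Third, the algorithmic wrap-up of Section~\ref{sec_proofoftheorem} carries over once one inputs the Euler-genus version of the Flat Wall Theorem (so that the apex set has size bounded in the genus and the compass has bounded treewidth) and uses that $s\text{-}{\sf sdp}$ predicates are $\MSOL$-definable, which keeps \autoref{lem_repre} valid for computing representatives via Courcelle's theorem. The irrelevant-vertex argument of \autoref{lem_removingirr}, which relies on the Unique Linkage Theorem, requires its scattered counterpart; the variant needed is that, inside a sufficiently tall flat wall with empty annotation in its compass, a central region may be removed without affecting the existence of any $s$-scattered linkage whose terminals avoid the compass. For bounded-genus graphs this follows again from the combing result, rerouting each path of the scattered linkage away from the central subwall while keeping pairwise distances above $s$. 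Putting everything together, one iterates the scheme: apply the Flat Wall Theorem, reduce the annotation via representatives (\autoref{corr_new_ext_rep}), delete a central compass by the scattered irrelevant-vertex lemma, and repeat until the treewidth is bounded by a function of $|\varphi|$ and the genus, at which point Courcelle's theorem finishes the job in linear time. The main obstacle in carrying out the details is the distance-preserving rerouting step: one has to argue that combing the linkage through the rails does not bring two previously scattered paths within distance $s$, and this is exactly the content of the bounded-genus scattered combing lemma, which is why the statement of \autoref{thm_induced} is restricted to classes of bounded Euler genus.
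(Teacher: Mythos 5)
Your overall plan (scattered patterns and pairings, a scattered analogue of \autoref{lem_colomodelsrerout} via \autoref{proposition_combinginducedlinkages}, the same signature recursion and exchangeability argument, representatives via Courcelle, and the same iteration) matches the paper. But there is a genuine gap in how you set up the annotated problem: you keep the apex-projection machinery (${\sf ap}_{\bf c}$, the constants ${\bf c}$, the rewritten atom $\zeta$, backwards translation) and even invoke a flat-wall theorem that returns an apex set of size bounded in the genus. The paper deliberately does the opposite: in the bounded-genus setting it uses \autoref{find_wall_eulergen} to obtain a \emph{disk-embedded} wall with bounded-treewidth compass and \emph{no apices at all}, and the enhanced sentence is just $\varphi_{\sf R}$ (annotation only, no constants). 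This is not a stylistic choice. The predicate $s\text{-}{\sf sdp}$ is about distances in the \emph{whole} graph, so once apex vertices are present, (i) the scattered combing lemma and the irrelevant-vertex rerouting cannot be applied as you describe, because \autoref{proposition_combinginducedlinkages} requires the entire graph under consideration to be surface-embedded, and an apex adjacent to many vertices of the wall collapses distances between rerouted paths (two paths far apart in the surface part may be at distance $2$ through an apex, and the flatness of the compass says nothing about this); and (ii) your claim that the apex-projected atom can be obtained from $\zeta_{\DP}$ ``with extra distance constraints which are themselves \FOL-definable'' does not go through, since scatteredness constrains \emph{all internal vertices} of the paths, which are not accessible to first-order quantification, and the metric of the projected structure (apex edges deleted) differs from the metric of the original graph in a way that cannot be absorbed into a bounded guess of entry/exit points as in the disjoint-paths case.

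Concretely, the fix is the paper's route: for bounded Euler genus, skip apices entirely, work with $\varphi_{\sf R}$, find the disk-embedded wall directly, and require in the exchangeability step that all glued colored graphs remain embedded in the fixed surface; the scattered irrelevant-vertex lemma (\autoref{lem_scat_ire}) is then obtained from the scattered combing lemma together with a scattered version of \autoref{lem_levelingpaths}, exactly as you anticipate for that step. Your stated explanation of the genus restriction (``the flat part is globally planar after removing a bounded apex set'') should likewise be revised: the restriction comes from the fact that the scattered combing lemma is only known for surface-embedded graphs, and the construction must avoid reintroducing apices precisely because scatteredness is measured in the full graph metric.
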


\subsection{Proof of~\autoref{thm_induced}}\label{subsec_sketchscattered}
In this subsection, we sketch how to prove~\autoref{thm_induced}.
Our strategy is the same as for the proof of~\autoref{thm_main} and in what follows we discuss how to recreate definitions and results of the previous sections in the setting of $s$-scattered paths.
We use $s\text{-}\mathsf{sig}$ to denote the signature obtained from $\mathsf{sig}$ where we replace the atomic types concerning disjoint paths predicates with scattered disjoint path predicates.

First of all, we stress that since we work with graphs of bounded Euler genus,
we can avoid the use of the framework of flat walls. In fact,
using~\cite[Lemma 6.2]{GolovachST20hittarxiv} (see also~\cite{BodlaenderDDFLP16,DemaineFHT05sube,DemaineHT04theb,FominGT11cont,DemaineH08line,GeelenRS04embe,CabelloVL16find,DemaineFHT05sube}), we can find a disk-embedded wall of bounded treewidth compass in a surface-embedded of large enough treewidth.
Before presenting this result,
we give some additional definitions.
Given a closed disk $\Delta$ and an integer $q\in\mathbb{N}_{\geq 3}$,
a \emph{$\Delta$-embedded $q$-wall} of $G$ is a $q$-wall $W$ that is embedded on $\Delta$ and whose perimeter is the boundary of $\Delta$. The \emph{compass} of $W$, denoted by ${\sf Compass}(W)$, is the graph $G\cap \Delta$.

\begin{proposition}\label{find_wall_eulergen}
There exists a constant $\newcon{dsdfsffds}$ and an algorithm that given
an $n$-vertex graph $G$ of Euler genus at most $g$ and an integer $q\in \mathbb{N}_{\geq 3}$,
outputs either a closed disk $\Delta$ and a $\Delta$-embedded $q$-wall $W$ of $G$ whose
compass has treewidth at most $\conref{dsdfsffds}\cdot q$ or  
a tree decomposition of $G$ of width at most $\conref{dsdfsffds}\cdot q$.
Moreover, this algorithm runs in $2^{\mathcal{O}_g(q^2)}\cdot n$ time.
\end{proposition}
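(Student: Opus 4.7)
The plan is to combine a linear-time embedding algorithm, an approximate treewidth computation, and the linear excluded-wall theorem for bounded-genus graphs, capped by a topological refinement that turns any large enough wall into a disk-embedded wall with a planar compass.

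First I would compute a $2$-cell embedding of $G$ into a surface $\Sigma$ of Euler genus at most $g$ in $\mathcal{O}_g(n)$ time using Mohar's embedding algorithm (or its simplification by Kawarabayashi--Mohar--Reed). I would then invoke a constant-factor treewidth approximation tailored to bounded-genus graphs: if the approximation returns a tree decomposition of width at most $\conref{dsdfsffds}\cdot q$, that is the second output option and we are done. Otherwise we are certified that $\tw(G) > \conref{dsdfsffds}\cdot q$, and the linear relationship between treewidth and wall-order for bounded-genus graphs (see~\cite{DemaineH08line,DemaineFHT05sube,BodlaenderDDFLP16,GolovachST20hittarxiv}) guarantees that $G$ contains a wall $W_0$ of order at least $c_1(g)\cdot q$. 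Such a $W_0$ can be extracted in $2^{\mathcal{O}_g(q^2)}\cdot n$ time by dynamic programming on an approximate tree decomposition of width $\mathcal{O}_g(q)$.

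The refinement step exploits the fixed embedding: because $\Sigma$ has Euler genus at most $g$, only $\mathcal{O}(g)$ of the bricks of $W_0$ can be traversed by pairwise non-homotopic non-contractible simple closed curves of $\Sigma$. By trimming an $\mathcal{O}_g(1)$-sized strip of rows and columns from $W_0$ I obtain a sub-wall $W$ of order $q$ whose perimeter bounds a closed disk $\Delta \subseteq \Sigma$ with ${\sf Compass}(W) \subseteq \Delta$. The compass is therefore a plane graph whose outer face is the perimeter of a $q$-wall, and the planar excluded-grid theorem (or equivalently a BFS-layering argument rooted at the perimeter, as in~\cite{DemaineFHT05sube,FominGT11cont}) yields $\tw({\sf Compass}(W)) \leq c_0\cdot q$ for an absolute constant $c_0$, producing the first output option after taking $\conref{dsdfsffds}$ to dominate both $c_0$ and the approximation factor.

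The main obstacle is meeting the $2^{\mathcal{O}_g(q^2)}\cdot n$ time bound: exact treewidth computation via Bodlaender's algorithm would cost $2^{\mathcal{O}(q^3)}\cdot n$, which is too expensive, so the argument must use a bounded-genus-aware treewidth approximation (see~\cite{CabelloVL16find}) together with a wall-extraction dynamic programming whose per-bag state space is $2^{\mathcal{O}_g(q^2)}$. Once the embedding is fixed and a sufficiently large wall is in hand, both the disk-refinement and the planar compass-treewidth verification are purely topological and can be implemented in $\mathcal{O}_g(n)$ additional time, matching the claimed total running time.
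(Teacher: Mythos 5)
Your first three steps (computing an embedding, running a constant-factor treewidth approximation, extracting a wall of order $\mathcal{O}_g(q)$ via the linear grid/wall--treewidth relation for bounded genus, and trimming away the $\mathcal{O}(g)$ rows and columns met by non-contractible curves to get a subwall whose perimeter bounds a disk $\Delta$) are all in the spirit of the references this proposition leans on; indeed the paper does not prove the statement itself but imports it from~\cite[Lemma 6.2]{GolovachST20hittarxiv}. The genuine gap is in your last step: it is simply false that a plane graph whose outer boundary is the perimeter of a $q$-wall has treewidth $\mathcal{O}(q)$. The compass here is \emph{all} of $G\cap\Delta$, and nothing prevents $G$ from having, say, an $N\times N$ grid with $N\gg q$ embedded inside a single brick of $W$ and attached to it; that grid lies in $\Delta$, so ${\sf Compass}(W)$ has treewidth at least $N$. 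Your appeal to the planar excluded-grid theorem or to BFS layering from the perimeter does not help, because the interior can have arbitrarily many BFS layers from the boundary (small outer face does not bound treewidth for planar graphs: any planar graph can be re-embedded with a $4$-face outside). So as written you can certify that $W$ is $\Delta$-embedded, but not that its compass has treewidth at most $\conref{dsdfsffds}\cdot q$, which is exactly the non-trivial part of the statement.

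Obtaining a wall with a \emph{bounded-treewidth compass} requires an additional selection argument, and this is precisely what \cite[Lemma 6.2]{GolovachST20hittarxiv} (and, in the apex/minor-free setting, the variant of the Flat Wall Theorem in \autoref{prop_flatwallbdtw}) provides: one does not take an arbitrary disk-embedded subwall, but either recurses into the compass when its treewidth is still too large (the graph inside $\Delta$ again either has a small tree decomposition or contains a large wall strictly deeper inside, and this can be organised so that the total work stays linear), or selects among many disjoint subwalls one whose compass is certifiably of width $\mathcal{O}_g(q)$, the tree decomposition of the compass being produced as part of the output. Without some such mechanism your algorithm has no way to output the required tree decomposition of ${\sf Compass}(W)$, and the claimed bound can outright fail. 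If you add this iteration/selection step (and verify the time analysis still gives $2^{\mathcal{O}_g(q^2)}\cdot n$), the rest of your outline goes through.
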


\paragraph{Patterns for $s$-scattered linkages.}
Next step is to define \emph{patterns} that encode $s$-scattered linkages.
Given an $s\in\mathbb{N}$,
we say that a linkage $L$ of $G$ is \emph{$s$-scattered} if for every $v\in V(L)$ it holds that
$N_G^{(\leq s)}(v)\cap (V(L)\setminus V(C_v)) = \emptyset,$
where $C_v$ is the connected component of $L$ that contains $v$.
Observe that, since $N_G^{(\leq 0)}(v)=\{v\}$, every linkage is $0$-scattered.

To define \emph{patterns} that encode $s$-scattered linkages, the definition of a pattern of a boundaried colored graph in~\autoref{subsec_patt} has to be modified as follows:
the collection $\mathcal{H}^P$ is defined as all graphs $(I,E)$, where $E\subseteq I\times I$, such that
$G$ contains $s$-scattered paths of length at least two between the vertices $v_i,v_j$ for all $\{i,j\}\in E$.

Also, as in~\autoref{subsec_exprpatt}, we define the \emph{pattern} of a quantifier-free formula in $\FOL[\tau+s\text{-}{\sf sdp}]$ by modifying the corresponding definition for formulas in $\FOLDP$.
We do this by replacing, in both definitions of full clauses and in the definition of the collection $\mathcal{H}_c^P$, every appearance of the atomic formula $\DP$ with the atomic formula $s\text{-}{\sf sdp}$.

After the above modifications, \autoref{obs_grleafs} holds also for sentences
$\varphi \in \FOL[\tau+s\text{-}{\sf sdp}]$.
Based on this observation (for $s$-scattered linkages),
we can then prove~\autoref{lemma_reducing} for any sentence $\varphi \in \FOL[\tau+s\text{-}{\sf sdp}]$.

\paragraph{Routing $s$-scattered linkages through railed annuli.}
Having~\autoref{lemma_reducing} in hand, our next goal is to prove~\autoref{lem_colomodelsrerout} for the case of $s$-scattered linkages. Recall that~\autoref{lem_colomodelsrerout} intuitively states that, in the presence of a flat railed annulus $\mathcal{A}$ inside a given graph $G$, every linkage $L$ of $G$ can be combed through some paths of $\mathcal{A}$ in some ``buffer'' (obtained by some hierarchical refinement of $\mathcal{A}$; see~\autoref{subsec_conventions}) of $\mathcal{A}$ corresponding to the position of the terminals of $L$.
In the case of $\FOLDP$, this is essentially a reformulation of~\autoref{prop_combinglemma} to the setting of pairings and flat railed annuli.
Therefore, to generalize~\autoref{lem_colomodelsrerout} to $s$-scattered linkages, one has to prove the analogue of~\autoref{prop_combinglemma} for $s$-scattered linkages.
This was done in~\cite{GolovachST22comb} and is stated below. However, this result is proven for graphs embeddable in some fixed surface and this is the reason why~\autoref{thm_induced} holds up to graphs of bounded-genus.
\begin{proposition}\label{proposition_combinginducedlinkages}
There exist two functions  $\newfun{@preservative}, \newfun{@desaguaderos}:\mathbb{N}^3\to\mathbb{N}$ such that 
for every odd $\ell\in \mathbb{N}$ and every $s,k,g\in\mathbb{N}$,
if 
\begin{itemize}
\item $\Sigma$ is a surface of Euler genus $g$,
\item $\Delta$ is a closed annulus of $\Sigma$,
\item $G$ is a graph embedded in $\Sigma$
\item  $\mathcal{A}=(\mathcal{C},\mathcal{P})$ is a $\Delta$-embedded $(p,q)$-railed annulus of $G$, where
$p\geq \funref{@preservative}(s,k,g)+\ell$ and $q\geq  \frac{2s+5}{2}\cdot\funref{@desaguaderos}(s,k,g)$
\item $L$ is a $\Delta$-avoiding $s$-scattered linkage of size at most $k$, and
\item $I\subseteq [q]$, where $|I|> \funref{@desaguaderos}(s,k,g)\cdot (s+1)$,
\end{itemize}
then
$G$ contains an $s$-scattered linkage
$\tilde{L}$ where $\tilde{L}\equiv L$, $\tilde{L}\setminus \Delta \subseteq L \setminus \Delta$, and  $\tilde{L}$ is $(\ell,I)$-confined in $\mathcal{A}$.
Moreover, $ \funref{@preservative}(s,k,g)= \mathcal{O}((\funref{@desaguaderos}(s,k,g))^2)$ and $\funref{@desaguaderos}(s,k,g)=s\cdot 2^{\mathcal{O}(k+g)}$.
\end{proposition}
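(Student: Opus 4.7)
My plan is to reduce Proposition~\ref{proposition_combinginducedlinkages} to the non-scattered Linkage Combing Lemma (Proposition~\ref{prop_combinglemma}) by carefully preprocessing the rail index set $I$. The idea is that if the combed linkage is forced to use only rails whose indices are pairwise far apart in the cyclic order of the annulus, then within the confinement sub-annulus the resulting paths are automatically $s$-scattered. The functions $\funref{@preservative}$ and $\funref{@desaguaderos}$ will be chosen so that $\funref{@desaguaderos}(s,k,g)\geq \tfrac{5}{2}\funref{@norteamericana}(k)\cdot \alpha(k,g)$ and $\funref{@preservative}(s,k,g)\geq \funref{@heteronomously}(k)+\beta(s,k,g)$, where $\alpha(k,g)$ and $\beta(s,k,g)$ absorb the topological overhead; the squared relation $\funref{@preservative}=\mathcal{O}(\funref{@desaguaderos}^2)$ will propagate from the corresponding relation in Proposition~\ref{prop_combinglemma}.

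The first step is to extract from $I$, by a greedy/pigeonhole argument on residues modulo $s+1$, a subset $I^\star\subseteq I$ with $|I^\star|> \funref{@desaguaderos}(s,k,g)$ such that any two distinct elements of $I^\star$ differ by at least $s+1$ in the cyclic order of $[q]$. The second step is to invoke Proposition~\ref{prop_combinglemma} on $\Delta$, $\mathcal{A}$, $L$, the set $I^\star$, and the confinement parameter $\ell$, which by our choice of parameters satisfies $p\geq \funref{@heteronomously}(k)+\ell$ and $|I^\star|> \funref{@norteamericana}(k)$. This yields a linkage $\tilde L\equiv L$ with $\tilde L\setminus\Delta\subseteq L\setminus\Delta$ and $\tilde L$ is $(\ell,I^\star)$-confined in $\mathcal{A}$.

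The third and crucial step is verifying that $\tilde L$ is $s$-scattered. This splits into three zones. Outside $\Delta$, $\tilde L\setminus\Delta\subseteq L\setminus\Delta$, so $s$-scatteredness is inherited from $L$. In the central $\ell$-sub-annulus, where $\tilde L$ lives strictly inside the union of rails in $I^\star$, any two points on distinct rails $P_i,P_j$ with $|i-j|\geq s+1$ in cyclic order are at graph distance at least $s+1$: combinatorial cyclic separation in the railed annulus translates to graph distance via the cycles $\mathcal{C}$, because in the surface embedding of $G$ of Euler genus $g$ the cycles of $\mathcal{A}$ together with the rails separate the disk-like faces of $\Delta$ in a controlled way. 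The remaining zone is the ``collar'' of $\Delta$ between the central sub-annulus and $\partial\Delta$, where $\tilde L$ is not forced to lie on the selected rails.

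The main obstacle will be handling this collar. In a truly planar annulus one can push the combing outward (by enlarging $\ell$) at linear cost, but when $G$ has Euler genus $g$ the topology of $\Delta\subseteq\Sigma$ can allow non-trivial handles to produce ``shortcuts'' between combed paths. To absorb this, I would iteratively refine $\tilde L$ by rerouting in the collar: each non-scattered near-approach between two paths inside the collar yields a non-trivial cycle in the combed subgraph, and by a topology-counting argument (analogous to those of~\cite{KawarabayashiK12alin,Mazoit13asin}) the number of such rerouting steps is bounded by $2^{\mathcal{O}(k+g)}$. Accounting for this in $\alpha(k,g)$ produces the claimed bound $\funref{@desaguaderos}(s,k,g)=s\cdot 2^{\mathcal{O}(k+g)}$. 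The hardest technical point will be making the rerouting step preserve both the $s$-scatteredness in the central sub-annulus and the equivalence class of $L$, which I expect requires a careful induction on the topological complexity of $\tilde L\cap\Delta$ in $\Sigma$.
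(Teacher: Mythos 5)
First, note that the paper does not prove \autoref{proposition_combinginducedlinkages} at all: it is imported verbatim from~\cite{GolovachST22comb} (with pointers to~\cite{KawarabayashiK12alin,Mazoit13asin}), so there is no internal argument your proposal could be matching; you are attempting to supply a proof that the authors deliberately outsourced, and your reduction to \autoref{prop_combinglemma} has genuine gaps.

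The first gap is in your ``central zone'' step. Being $(\ell,I^\star)$-confined only says that the intersection of the whole linkage with the middle sub-annulus is contained in $\bigcup_{i\in I^\star}P_i$; it says nothing about which path of $\tilde L$ uses which rail. Two distinct paths of $\tilde L$ may occupy disjoint segments of the \emph{same} selected rail (or a single path may zigzag so that another path is forced onto an adjacent selected rail), and then vertices of different components sit at distance one or two, violating $s$-scatteredness exactly where you claim it is automatic. Spacing the indices of $I^\star$ by $s+1$ does not repair this, because the conclusion of \autoref{prop_combinglemma} gives no per-path rail assignment. The second gap is the collar: inside $\Delta$ but outside the confinement zone the non-scattered combing gives no distance control whatsoever, and your fix is not an argument. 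A ``near-approach'' between two rerouted paths in the collar is a purely local event with no reason to produce a non-trivial cycle in $\Sigma$, so the proposed topology-counting bound of $2^{\mathcal{O}(k+g)}$ rerouting steps has no justification; moreover each reroute must simultaneously preserve the pattern of $L$, the containment $\tilde L\setminus\Delta\subseteq L\setminus\Delta$, and scatteredness both in the collar and in the already-combed zone, which is precisely the hard content of the theorem proved in~\cite{GolovachST22comb}. The role of bounded Euler genus there is not to bound a number of reroutes but to enable the scattered/induced-linkage rerouting machinery on surfaces; the scattered statement is genuinely stronger than \autoref{prop_combinglemma} plus a pigeonhole selection of rails, and cannot be obtained as a corollary of it in the way you outline.
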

Using~\autoref{proposition_combinginducedlinkages} and adjusting the definitions in~\autoref{subsec_tm} for the
$s$-scattered linkages setting
(this is done by just modifying the definition of pairings to consider (boundaried) linkages that are $s$-scattered),
we can prove the analogue of~\autoref{lem_colomodelsrerout}.

\paragraph{Partial signatures and exchangeability for $\FOL[\tau+s\text{-}{\sf sdp}]$.}
 The next important task is to define partial signatures of colored graphs (equipped with a disk-embedded railed annulus) that encode the (recursive) containment of ``meta-collections'' of $s$-scattered linkages in a recursively obtained collection of boundaried (sub)graphs.
This is done as for $\FOLDP$ (see~\autoref{subsec_signatures}) by building the recursive definition of partial signatures using the ``$s$-scattered'' patterns, as defined two paragraphs above, for the base of the recursion.
Using this definition, one can formulate~\autoref{lem_equirep} for $\FOL[\tau+s\text{-}{\sf sdp}]$.
The proof of this analogous version of~\autoref{lem_equirep} is actually the same as the one in~\autoref{subsec_exchangability} and uses the ``$s$-scattered'' versions of~\autoref{lemma_reducing} and~\autoref{lem_colomodelsrerout}.
Let us stress that since~\autoref{proposition_combinginducedlinkages} demands that the given graph $G$ is embedded on a fixed surface,
this also holds for the ``$s$-scattered'' version of~\autoref{lem_colomodelsrerout}
and therefore, in the ``$s$-scattered'' version of~\autoref{lem_equirep},
all considered colored graphs $\mathfrak{G},\mathfrak{G}',\mathfrak{H},$ and $\mathfrak{F}$ should be embedded on some fixed surface and their ``gluing'' should also preserve embeddability.
Also, since we deal with graphs of bounded Euler genus
and using~\autoref{find_wall_eulergen} we can directly obtain a disk-embedded wall if our input graph has large enough treewidth,
there are no apices to deal with.
Therefore, for $\FOL[\tau+s\text{-}{\sf sdp}]$, we do not need to apply the transformations of~\autoref{sec_annotated}.

\paragraph{Finding representatives and proof of~\autoref{thm_induced}.}
From this point on, the steps towards the proof of~\autoref{thm_induced} are completely analogous to the ones in~\autoref{sec_proofoftheorem} for the proof of~\autoref{thm_main}.
With the ``$s$-scattered'' version of \autoref{lem_equirep} in our toolbox,
we have to {\sl find} an (annotated) colored graph with the same partial signature as the original one.
This will allow us to reduce the annotation of the original colored graph.
To do this, we define representatives of vertices with the same recursive ``$s$-scattered'' partial signature as in~\autoref{subsec_sigrep} and we deduce the ``$s$-scattered'' analogue of~\autoref{lem_repre}.
In turn,~\autoref{lem_repre}, when combined with~\autoref{lem_equirep}, implies~\autoref{lem_new_ext_rep} 
for sentences in $\FOL[\tau+s\text{-}{\sf sdp}]$.
Last remaining piece is to prove the following analogue of~\autoref{lem_removingirr} for sentences in $\FOL[\tau+s\text{-}{\sf sdp}]$.
Given two colored graphs $\mathfrak{G},\mathfrak{H}$ and a disk-embedded wall $W$ of $\mathfrak{G}$,
a \emph{perimeter-compatibility function}  of  $\mathfrak{G}$ and $\mathfrak{H}$ is any partial function $\xi: V(D(W))\to V(\mathfrak{H})$ such that $\mathfrak{G}$ and $\mathfrak{H}$ are $\xi$-compatible.

\begin{lemma}\label{lem_scat_ire}
Let $\tau$ be a colored-graph vocabulary.
There is a function $\newfun{fun_sca_ril}:\mathbb{N}^3\to\mathbb{N}$ such that, if
\begin{itemize}
\item $s,g,r,q\in\mathbb{N}$,
\item $\mathfrak{G}$ is a $\tau$-structure embedded in a surface $\Sigma$ of Euler genus $g$,
\item $\Delta$ is a closed disk of $\Sigma$,
\item $W$ is a $\Delta$-embedded $(\funref{fun_sca_ril}(s,r,g)+q)$-wall of $\mathfrak{G}$,
and
\item $R_1,\ldots,R_r\subseteq V(\mathfrak{G})$, where $\cup_{i\in[r]}R_i$ is disjoint from $V({\sf Compass}(W))$,
\end{itemize}
then for each $\funref{fun_sca_ril}(s,r,g)$-internal $q$-subwall $W'$ of $W$,
every $(\tau\cup\{{\sf R}_1,\ldots,{\sf R}_r\})$-structure $(\mathfrak{F},\bar{R}^\star)$, every compatibility function $\xi$ of $(\mathfrak{G},R_1,\ldots,R_r)$ and $(\mathfrak{F},\bar{R}^\star)$,
and every $Y\subseteq V({\sf Compass}(W'))$, it holds that
$$s\text{-}\mathsf{sig}^r((\mathfrak{F},\bar{R}^\star)\oplus_\xi (\mathfrak{G},R_1,\ldots,R_r)) = s\text{-}\mathsf{sig}^r((\mathfrak{F},\bar{R}^\star)\oplus_\xi (\mathfrak{G}\setminus Y,R_1,\ldots,R_r)).$$
\end{lemma}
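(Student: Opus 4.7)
The plan is to mimic the proof of \autoref{lem_removingirr}, replacing the Unique Linkage Theorem (\autoref{label_pretendientes}) with its $s$-scattered analogue on bounded-genus graphs, which can be derived from the Scattered Linkage Combing Lemma (\autoref{proposition_combinginducedlinkages}) combined with the standard packing of a railed annulus inside a wall (\autoref{label_simultaneously}). Set $r=\funref{@transversales}(t,l)$ (an upper bound on the relevant quantifier rank after apex-projection, which in the surface setting reduces to $t$ itself since no apices are needed), set $k=\binom{r}{2}$ as an upper bound on the number of pairwise pairs of first-order variables, and choose $\funref{fun_sca_ril}(s,t,g)$ large enough so that, by \autoref{label_simultaneously}, the central part of $W$ contains a $\Delta'$-embedded $(p,q')$-railed annulus $\mathcal{A}=(\mathcal{C},\mathcal{P})$ with $\Delta'\subseteq\Delta$, where the compass of every $\funref{fun_sca_ril}(s,t,g)$-internal $q$-subwall $W'$ of $W$ sits inside the inner disk cropped by the innermost cycle of $\mathcal{A}$, and where $p$ and $q'$ satisfy the hypotheses of \autoref{proposition_combinginducedlinkages} for $\ell$-scattered linkages of size at most $k$, simultaneously for every $\ell\le s$.

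Fix $W'$, $(\mathfrak{F},R^\star)$, $\xi$, $\varphi$, and $Y\subseteq V({\sf Compass}(W'))$ as in the statement, and let $\mathfrak{H}:=(\mathfrak{F},R^\star)\oplus_\xi(\mathfrak{G},R)$. Because $\xi$ maps into $V(D(W))$ only, the closed disk $\Delta'$ and its interior in $\mathfrak{G}$ remain untouched in $\mathfrak{H}$; in particular, $\mathcal{A}$ is a $\Delta'$-embedded railed annulus of the underlying graph of $\mathfrak{H}$ with $\mathcal{A}\cap V(\mathfrak{F})=\emptyset$. By the scattered version of \autoref{obs_grleafs} discussed in \autoref{subsec_sketchscattered}, it is enough to show that every quantifier-free $\FOL[\tau+s\text{-}{\sf sdp}]$ atomic predicate evaluates identically on $\mathfrak{H}$ and on $\mathfrak{H}\setminus Y$, under any interpretation of the first-order variables as elements of $R\cup R^\star$. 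For the atomic predicates ${\sf E}$ and the color predicates this is immediate, since the interpretations lie outside $V({\sf Compass}(W'))\supseteq Y$, and removing $Y$ preserves all edges and colors on the remaining vertices. For an atomic predicate $\ell\text{-}{\sf sdp}_k$ with $\ell\le s$, the direction $\Leftarrow$ is trivial. For $\Rightarrow$, given an $\ell$-scattered linkage $L$ of $\mathfrak{H}$ witnessing the predicate, its terminals are among the interpretations of the variables and therefore lie outside $V({\sf Compass}(W))\supseteq\ann(\mathcal{A})$, so $L$ is $\Delta'$-avoiding. Applying \autoref{proposition_combinginducedlinkages} to $\mathcal{A}$ and $L$ yields an equivalent $\ell$-scattered linkage $\tilde L\equiv L$ of $\mathfrak{H}$ that is $(s',I)$-confined in $\mathcal{A}$ for some $s',I$ chosen so that the confining sub-annulus strictly surrounds $V({\sf Compass}(W'))$; in particular $\tilde L$ avoids $V({\sf Compass}(W'))$ and is therefore a linkage of $\mathfrak{H}\setminus Y$ witnessing the predicate there.

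The main obstacle is invoking the scattered combing lemma inside $\mathfrak{H}$ rather than inside $\mathfrak{G}$ alone: \autoref{proposition_combinginducedlinkages} is stated for graphs embedded in a surface of bounded Euler genus, while $\mathfrak{H}$ need not be surface-embedded in the same way. This is handled by observing that the combing procedure only modifies $L$ inside $\Delta'$, and on $\Delta'$ the graphs $\mathfrak{G}$ and $\mathfrak{H}$ coincide; so we may apply the lemma to the subgraph of $\mathfrak{H}$ induced by $V(\mathfrak{G})$ (which is embedded in $\Sigma$) and then attach the (unchanged) portion of $L$ lying in $\mathfrak{F}$. With this, one concludes the equivalence of satisfaction of $\varphi_{{\sf R},{\bf c}}$ by an induction on the quantifier structure, along the lines of \autoref{lemma_reducing}, using that all quantified variables are restricted to ${\sf R}$, which is interpreted disjointly from $V({\sf Compass}(W))$.
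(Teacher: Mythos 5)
Your overall route matches the paper's: rerun the proof of \autoref{lem_removingirr}, with the scattered combing machinery (\autoref{proposition_combinginducedlinkages}, extracted via \autoref{label_simultaneously} from the wall) playing the role of \autoref{label_pretendientes}. However, there is a genuine gap in how you treat the atomic predicates: you declare the direction from $\mathfrak{H}\setminus Y$ to $\mathfrak{H}$ of the $\ell\text{-}{\sf sdp}$ equivalence ``trivial''. That is true for ${\sf dp}$, but false for scattered predicates: being $\ell$-scattered is evaluated with respect to the ambient graph, and putting the vertices of $Y$ back can create paths of length at most $\ell$ (through $Y$) between distinct paths of the witness, so an $\ell$-scattered linkage of $\mathfrak{H}\setminus Y$ need not be $\ell$-scattered in $\mathfrak{H}$ when $\ell\geq 2$. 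Hence \emph{both} directions require the rerouting/confinement argument: for the backward direction you must also comb the witness of $\mathfrak{H}\setminus Y$ so that it stays at graph distance greater than $s$ from $Y$ (which the disk-embedded wall guarantees once the internality buffer is large compared with $s$ --- this is precisely why the function $\funref{fun_sca_ril}$ must depend on $s$), and only for such a rerouted witness does its scatteredness status coincide in $\mathfrak{H}$ and $\mathfrak{H}\setminus Y$. Your proposal never uses a distance-$s$ buffer around $Y$ at all, so as written the backward implication is unproved.

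Two further points need repair, and they are exactly where the paper invokes the $s$-scattered analogue of \autoref{lem_levelingpaths} in place of the Unique Linkage Theorem. First, being $(s',I)$-confined only constrains the linkage inside the \emph{middle} sub-annulus to the rails; it does not forbid excursions of the combed linkage into the inner disk containing ${\sf Compass}(W')$, so your inference ``in particular $\tilde L$ avoids $V({\sf Compass}(W'))$'' needs an additional splicing step turning confinement into avoidance of the central area. Second, your fix for the non-embedded part $\mathfrak{F}$ (apply the combing lemma to the subgraph induced by $V(\mathfrak{G})$ and re-attach the $\mathfrak{F}$-portion) implicitly assumes that $L\cap \mathfrak{G}$ is a linkage of bounded size, but a single path of $L$ may cross the gluing boundary $D(W)$ arbitrarily often, producing unboundedly many components and breaking the quantitative hypotheses of \autoref{proposition_combinginducedlinkages}; this is why one has to pass to a partially-embedded/leveling formulation rather than applying the surface-embedded statement to an induced subgraph.
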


The proof of~\autoref{lem_scat_ire} is obtained by the proof of \autoref{lem_removingirr} (see~\autoref{subsec_reduce}) by replacing the application of ~\autoref{label_pretendientes} by~\autoref{proposition_combinginducedlinkages} combined with an ``$s$-scattered'' version of \autoref{lem_levelingpaths} (for the proof of the latter, it is easy to observe that it can be directly generalized to $s$-scattered linkages).

Then, using the ``$s$-scattered'' version of~\autoref{lem_new_ext_rep} and~\autoref{lem_scat_ire}, we obtain the analogue of \autoref{corol_redu}, which we state below.

\begin{lemma}
\label{lem_gen_irr_scatt}
There are two functions $\newfun{@_scattindividuated}:\mathbb{N}^3\to\mathbb{N}$ and $\newfun{fun_scattinte}:\mathbb{N}^4\to\mathbb{N}$
and an algorithm that, given
\begin{itemize}
\item $s,g,r,q\in\mathbb{N}$,
\item a $\tau$-structure $\mathfrak{G}$ of Euler genus at most $g$,
\item a closed disk $\Delta$,
\item a $\Delta$-embedded wall $W$ of $\mathfrak{G}$
of height at least $\funref{@_scattindividuated}(s,r,q)$ whose compass has treewidth at most ${\sf tw}$,
and
\item sets  $R_1,\ldots,R_r\subseteq V(\mathfrak{G})$,
\end{itemize}
outputs, in time $\mathcal{O}_{s,g,t,q,{\sf tw}}(n)$,
sets $R_1',\ldots,R_r'\subseteq V(\mathfrak{G})$ and a $q$-subwall $W'$ of $W$ such that for every $Y\subseteq V({\sf Compass}(W'))$, 
every $(\tau\cup\{{\sf R}_1,\ldots,{\sf R}_r\})$-structure $(\mathfrak{F},\bar{R}^\star)$, every compatibility function $\xi$ of $(\mathfrak{G},R_1,\ldots,R_r)$ and $(\mathfrak{F},\bar{R}^\star)$,
\[s\text{-}\mathsf{sig}^r((\mathfrak{F},\bar{R}^\star)\oplus_\xi (\mathfrak{G},R_1,\ldots,R_r))= s\text{-}\mathsf{sig}^r((\mathfrak{F},\bar{R}^\star)\oplus_\xi (\mathfrak{G}\setminus Y,R_1',\ldots,R_r')).\]
\end{lemma}
The proof of~\autoref{thm_induced} is obtained from the one of \autoref{thm_main} (see~\autoref{subsec_mainproof}),
by plugging \autoref{find_wall_eulergen} and  \autoref{lem_gen_irr_scatt}
instead of \autoref{prop_flatwallbdtw} and \autoref{corol_redu}, respectively.

\section{Conclusions and open problems}
\label{sec_conclusion}

In this paper we proved two AMT's for the logic  \FOL{\sf +}{\sf DP} and its newly introduced extension  \FOL{\sf +}{\sf SDP} on graphs of bounded Hadwiger number 
and Euler genus respectively. These two logics can be seen as non-trivial extensions of \FOL, as they may express a wide range of problems (and meta-problems) that are not \FOL-expressible. (See \autoref{sec_problemswesolve} for an exposition of the expressivity potential of \FOL{\sf +}{\sf DP} and \FOL{\sf +}{\sf SDP}.)

\subsection{Open problems}

Recall that \FOL{\sf +}{\sf DP} is an extension of the {\sl separator logic \FOL{\sf +}{\sf conn}}, 
introduced in \cite{Bojanczyk21separ,SchirrmacherSV22first}.
The combinatorial condition given in \cite{PilipczukSSTV22algo} for this logic is having bounded Hajós number. 
As minor excluding graphs classes are also topological-minor excluding classes, the combinatorial condition of  \cite{PilipczukSSTV22algo}
is more general that the one that we give for  \FOL{\sf +}{\sf DP} in this paper.
This makes the AMT of  \cite{PilipczukSSTV22algo} non-comparable 
to ours. Moreover it is shown in \cite{PilipczukSSTV22algo} 
that, under certain complexity assumptions, the bounded Hajós number 
demand is actually demarking the combinatorial horizon of \FOL{\sf +}{\sf conn}. The open question is whether  having bounded Hajós number 
is also the combinatorial horizon of  the more expressive  \FOL{\sf +}{\sf DP}.  We are not in position to make a positive or negative conjecture on this. We wish only to comment that the algorithmic/combinatorial tools that where used in  \cite{PilipczukSSTV22algo} are quite different than the ones used 
in this paper.
\medskip

Another open question is to what extend one may further strengthen the expressibility {\sf  dp}$_k(\cdot)$ (resp.  {\sf $s$\text{-}sdp}$_k(\cdot)$) predicate, while maintaining the 
combinatorial condition of bounded Hadwiger number (resp. bounded Euler genus).
A possible candidate might be to ask for  paths of 
\emph{guided disjointness}, that is to consider the predicate 
{\sf gdp}$_{H}(x_1,y_1,\ldots,x_{k},y_{k})$
where $H$ is a graph where $V(H)=\{1,\ldots,k\}$
and where we ask that, for every edge $(i,j)\in E(H)$,
the $(x_i,y_i)$-path and the 
 $(x_j,y_j)$-path are disjoint. Clearly, ${\sf dp}_{k}(x_1,y_1,\ldots,x_{k},y_{k})={\sf gdp}_{K_{k}}(x_1,y_1,\ldots,x_{k},y_{k})$,
 therefore this would provide a more general logic than  \FOL{\sf +}{\sf DP}. To our knowledge, even the parameterized complexity of the evaluation of  ${\sf dp}_{H}(x_1,y_1,\ldots,x_{k},y_{k})$, when parameterized by $k=|H|$, is
 an interesting open problem.
 
\subsection{Limitations} 
  In the beginning of  \autoref{sec_problemswesolve}
 we comment that \FOL{\sf +}{\sf DP},  on multicolored (by $z$ colors)  graphs,
 may express 
 the predicate $\text{\sf dp}_{k,\lambda}( {\sf s}_{1}, {\sf t}_{1},\ldots, {\sf s}_{k}, {\sf t}_{k})$ equipped with a list function $\lambda: [k]\to 2^{[z]}$, where we demand that, for every $i\in[k]$, $\lambda(i)$ is a subset of the set of all colors assigned to the vertices of the path between the (valuations)
of ${\sf s}_{1}$ and ${\sf t}_{1}$. 
Interestingly this permits us to {\sl demand}
certain colors to be traversed by the disjoint paths. However, on the negative side, we may not expect that \FOL{\sf +}{\sf DP} 
may {\sl exclude} colors. From the empirical point of view, such a demand
obstructs the application of the irrelevant vertex technique.  
Moreover, we may have more solid evidence of this by picking a 
typical example of such a problem. In \autoref{@thesprotians}, we present a colored variant of the {\sc Topological Minor Containment} problem, namely the {\sc  Monochromatic Path Topological Minor} problem that 
we prove (\autoref{@specialisation}) that is {\sf W[1]}-hard on planar 
graphs.
%
%Consider the {\sc Monochromatic Path Topological Minor} problem where  the input is a multicolored graph $(G,X_{1},\ldots,X_{z})$ a graph $H$ and 
%a coloring function $\lambda: V(H)\to [z]$ and the question is whether $G$ contains a topological minor model of $H$ such that the color

%
%$k$ {\sl monochromatic} $(x_{2i-1},x_{2i})$-disjoint paths, for $i\in[k],$ where  $x_{i}\in X_{\lambda(i)},$ for  $i\in[2k]$ (a path $P$ is \emph{monochromatic}
%if there is a color $i\in[z]$ such that $V(P)\subseteq X_{i}$).
%By a reduction from the {\sc Grid Tiling} problem (that is  {\sf W[1]}-hard, see~\cite{CyganFKLMPPS15para}) 
%we can prove that this problem is {\sf W[1]}-hard on planar 
%graphs, even for four colours (i.e., $z=4$). This last reduction has been moved to \autoref{@thesprotians} (\autoref{@specialisation}).
%\medskip

\paragraph{Acknowledgements.} We would like to thank Anuj Dawar and the anonymous reviewers of previous versions of this paper. Their comments led to an improvement of the presentation of our results.

\newpage

\bibliographystyle{plainurl}

\newpage

\appendix

\section{Problems expressible in  \FOL{\sf +}{\sf DP} and in  \FOL{\sf +}{\sf SDP}}
\label{sec_problemswesolve}

In this section we present some (families) of parameterized problems where \autoref{@definitionen} and \autoref{@geographical} are applicable. 
%While this list is not exhaustive, it is indicative of the variety of problems 
%that may fit in our framework. Given some $\varphi\in \FOL{\sf +}{\sf SDP}$, we assume that $\varphi$ is in prenex-normal form and that $$\varphi = Q_1 \overline{\sf x}_1\ldots Q_r \overline{\sf x}_r \psi(\overline{\sf x}_1,\ldots, \overline{\sf x}_r),$$
%where $Q_1,\ldots,Q_r\in\{\forall,\exists\},$
%$\overline{\sf x}_1,\ldots,\overline{\sf x}_r$ are tuples of first-order variables where $\overline{\sf x}_i=({\sf x}_i^i,\ldots,{\sf x}_{r_{i}}^i)$,
%and $\psi(\overline{\sf x}_1,\ldots,\overline{\sf x}_r)$ is a quantifier-free formula
%with the variables in the tuples $\overline{\sf x}_1,\ldots, \overline{\sf x}_r$ as free variables.
%We classify our description of parameterized  problems based on the quantification pattern $Q_1\ldots Q_r$
%of the formula expressing them. 
In all cases, we consider the standard parameterization (by the integer $k$ of the input). Also for each (meta)-problem we comment on their general parameterized complexity status,  their possible classification on the 
families A, B, and C, given in the introduction.
%, and, then applicable, \red{we give open problems 
%on their parameterized complexity.\medskip}

First of all observe that {\sc Disjoint Paths} is the prototypical problem of this category as it is expressed by the ``trivial'' sentence 
$\varphi_k=\text{\sf dp}_{k}( {\sf s}_{1}, {\sf t}_{1},\ldots, {\sf s}_{k}, {\sf t}_{k})$. This problem is  \FPT because of \cite{RobertsonS95GMXIII}. If instead we consider the {\sc Induced Disjoint Paths}, that is $\varphi_k=\text{\sf sdp}_{k}( {\sf s}_{1}, {\sf t}_{1},\ldots, {\sf s}_{k}, {\sf t}_{k})$, then the corresponding parameterized problem becomes {\sf para-NP}-hard because checking whether $(G,{\sf s}_{1}, {\sf t}_{1},{\sf s}_{2}, {\sf t}_{2})\models \varphi_{2}$ is already \NP-complete \cite{KawarabayashiK08thei}.
% This last result conditions the parameterized intractability of all problems expressible in \FOL{\sf +}{\sf SDP}.
More generaly, we may also consider problems where the input graph is accompanied 
with a fixed number of colors $X_{1},\ldots,X_{z}$, i.e., we consider structures of the form $(G,X_1,\ldots,X_{z})$, and we may consider the more general predicates $\text{\sf {\rm (}s{\rm )}dp}_{k,\lambda}( {\sf s}_{1}, {\sf t}_{1},\ldots, {\sf s}_{k}, {\sf t}_{k})$ equiped with a list function $\lambda: [k]\to 2^{[z]}$, where we demand that, for every $i\in[k]$, $\lambda(i)$ is a subset of the set of colors of the vertices of the path between the (valuations)
of ${\sf s}_{1}$ and ${\sf t}_{1}$. For instance we may demand that all disjoint paths are {\sl colorful}, i.e., they contain vertices of all available colors.

%
%
%\subsection{Problems expressible using $\exists$}
%%
%

\subsection{Graph containment problems}
\label{@stubbornness}

We first consider  families of problems expressible by some $\varphi_{k}=\exists \overline{\sf x}\ \psi(\overline{\sf x})\in \FOL{\sf +}{\sf SDP}$.
Such problems are defined by some partial ordering relation $\preceq$ on graphs. We say that $H$ is a \emph{contraction} of $G$ if $H$ can be obtained from $G$ after contracting edges, $H$ is an (\emph{induced}) \emph{minor} of $G$ 
if $H$ is the contraction of an (induced) subgraph of $G$. Finally we say that $H$ is an (\emph{induced}) \emph{topological minor} of $G$ if $G$ contains  a subdivision of $H$ is an (induced) subgaph.

The general setting is the following.

\medskip
\fbox{
\begin{minipage}{15cm}
\noindent{\sc $\preceq$-Contrainment}\\
\noindent{\sl Input}: two graphs $G$ and $H$ where $k=|H|$.\\
\noindent{\sl Question}: $H\preceq G$ ?
\end{minipage}
}
\medskip

Notice that if $\preceq$ is the minor or the topological minor relation, {\sc $\preceq$-Contrainment}
is definable in \FOL{\sf +}{\sf DP} and yields  {\sc Minor Containment} and {\sc Topological Minor Containment}
respectively.  {\sc Minor Containment} belongs in the Category A  as it is \FPT in general graphs because \no-instances are trivially excluding a $K_{|H|}$ minor.   {\sc  Topological Minor Containment} belongs in the  category C as it its \FPT in general graphs however to deal with the question ``{\sl what to do with a clique}'' it needs extra arguments \cite{GroheKMW11find}.  In the case where  $\preceq$ is the induced minor or the induced topological minor relation, {\sc $\preceq$-Contrainment}
is definable in \FOL{\sf +}{\sf SDP} and \autoref{@geographical}  yields that {\sc Induced Minor} and  {\sc Induced Topological Minor Containment}  are \FPT on bounded genus graphs. Moreover as observed in  \cite{KaminskiT12cont}, it is possible ro reduce the {\sc Contraction Containment} problem on bounded genus graphs to {\sc Topological Minor Containment}. These last three problems belong in category C because
they are all \NP-hard for particular instantiations of $H$  due to  the results of \cite{LevequeLMT09dete,FellowsKMP95thec,BrouwerV87contra,LevinPW08,LevinPW08a} (using the parameterized complexity terminology, their standard parmeterizatins are {\sf para-NP}-hard).
Certain rooted variants of all these problems can be also expressed by the corresponding logics
if we ask that the ``models'' certifying each of the aforementioned relations meet certain vertices or sets of vertices (colors) of the input graph.

%
%\subsection{Problems expressible using $\forall$}
%

\subsection{Linkability problems} 
We now consider families of problems expressible by some $\varphi_{k}=\forall \overline{\sf x}\ \psi(\overline{\sf x})\in \FOL{\sf +}{\sf SDP}$. Such problems involve disjoint path queries  for every choice of terminals in the graph.

\paragraph{Unordered Linkability problems.} 
Given a graph $G$, a set $R\subseteq V(G)$, and a $k\in \mathbb{N}$, we say that $R$ is \emph{$k$-cyclable} in $G$
if every $k$ vertices of $R$ belong in some cycle of $G$ (see \cite{Dirac60inab,WatkinsM67cycl,FlandrinLMW07agen,PlummerG01anin,AldredBHM99cycl
} for the combinatorial properties of $k$-cyclable sets).
The {\sc Cyclability} problem asks, given a triple $G$, $R$, and $k$ as above,
whether $R$ is {$k$-cyclable} in $G$. The algorithmic properties of  {\sc Cyclability} have been 
studied in \cite{GolovachKMT17thep} where it was proven that the standard parametrization of {\sc Cyclability} is \FPT for planar graphs,
while the general problem is {\sf co-W[1]}-hard. 

For this, given a graph $H$, we say that $R$ is \emph{$H$-linkable}
if for every subset $S\in \binom{R}{k}$  there is a $\mathcal{P}\in \binom{S}{2}$
and a collection of internally vertex disjoint 
paths in $G$ between the pairs in  $\mathcal{P}$ that, when contracted to single edges, give a graph that is  isomorphic to $H$. We now consider the following problem.

\medskip
\fbox{
\begin{minipage}{15cm}
\noindent{\sc Unordered Linkability}\\
\noindent{\sl Input}: a graph $G$, $R\subseteq V(G)$, and a graph $H$ where $k=|H|$.\\
\noindent{\sl Question}: is $R$   $H$-linkable in $G$?
\end{minipage}}
\medskip

 Notice that above problem is expressible by a sentence  $\varphi_{H}=\forall \overline{\sf x}\ \psi(\overline{\sf x})$, 
where  $\psi$ consists of $k!$ disjunctions of the {\sf dp}$_{|E(H)|}$ predicate. Moreover,  when $H$ is a cycle of $k$ vertices the above problem yields the {\sc Cyclability} problem, that is already {\sf co-W[1]}-hard.
\autoref{@definitionen} automatically implies that the standard parameterization of {\sc Unordered Linkability}
is \FPT, when restricted to graphs of bounded Hadwiger number.

\paragraph{Ordered Linkability problems.} Let $G$ be a graph and let $R\subseteq V(G)$.
Given a $k\in\mathbb{N}$ we say that  \emph{$R$ is $k$-linked in $G$}  
if for every (ordered) set $\{s_{1},\ldots,s_{k},t_{1},\ldots,t_{k}\}$ of $2k$ distinct vertices in $R$
there are $k$ vertex disjoint paths in $G$ joining the pairs $(s_{i},t_{i}), i\in[k]$.
This notion has introduced by \cite{Watkins68onth} in the late 60s and its graph-theoretical properties 
have been extensively studied in   \cite{LarmanM70onth,Jung70eine,RobertsonS95GMXIII,BollobasT96high,EgawaFGISW00vert,KawarabayashiKY06onsu
}. For instance, Thomas and Wollan proved in \cite{ThomasW05anim} that 
if a graph $G$ is $10k$-connected, then $V(G)$ is $k$-linked in $G$. This notion has been extended 
to the one of a \emph{$H$-linked set} as follows: given some  graph $H$ with  $V(H)=\{x_{1},\ldots,x_{k}\}$,
we say that $G$ is \emph{$H$-linked in $R$} if, {\sl for every  sequence}   $v_{1},\ldots,v_{k}$
of vertices in $R$ there is a $\mathcal{P}\in \binom{\{v_{1},\ldots,v_{k}\}}{2}$
and a collection of internally vertex disjoint 
paths in $G$ between the pairs in  $\mathcal{P}$ that, when contracted to single edges, give a graph 
with vertex set $S$ that is  isomorphic to $H$ via the isomorphism that maps $x_{i}$ to $v_{i}$, $i\in[k]$. 
The combinatorics of $H$-linked sets
has been studied in \cite{GouldW07subd,KostochkaY05anex,EllinghamPY12link,GouldKY06onmi,FerraraGTW06onhl
}. However, to our knowledge, nothing is known about the algorithmic properties of  $k$-linked sets or the more general 
concept of $H$-linked sets. For this, we consider the following general problem.

\medskip
\fbox{
\begin{minipage}{15cm}
\noindent{\sc Ordered Linkability}\\
\noindent{\sl Input}: a graph $G$, $R\subseteq V(G)$, and a graph $H$ where $k=|H|$.\\
\noindent{\sl Question}: is    $R$ $H$-linked in $G$?
\end{minipage}
}
\medskip
 
 It is easy to verify  that the above problem is expressible by a sentence  $\varphi_{H}=\forall \overline{\sf x}\ \psi(\overline{\sf x})$, 
 for some suitable choice of the quantification-free formula $\psi$.
 In \autoref{@imperceptible} we prove that the  {\sc Ordered Linkability} problem, even for the case where 
 $H$ is the disjoint union of $k$ edges, is not \FPT, unless   ${\sf FPT}={\sf W[1]}$ (\autoref{thm_link_hard}).
%
%By using a reduction similar to the one used in  \cite{GolovachKMT17thep} for proving the {\sf co-W[1]}-hardness of the standard parameterization of {\sc Cyclability},
%it is possible to prove  that  the standard parameterization of {\sc Ordered Linkability} is   {\sf co-W[1]}-hard, even in the special case where 
%we ask that $R$ is $k$-linked (that is $H$ is the disjoint union of $k$ edges); see \autoref{thm_link_hard}. We call this problem {\sc Linkability}.
%It is possible to prove that the above problem is also  {\sf co-W[1]}-hard even in the special case where 
%we ask that $R$ is $k$-linked (that is $H$ is the disjoint union of $k$ edges).
\autoref{@definitionen} automatically implies that the standard parameterization of {\sc Ordered Linkability}
is \FPT, when restricted to graphs of bounded Hadwiger number. \medskip

Clearly both {\sc Unordered Linkability} and {\sc Ordered Linkability}   belong category B and it is an open question whether it is \FPT for other graph classes more general (or different) than those of bounded Hadwiger number. Notice that we may further consider that the disjoint paths in the definition of being $H$-linked 
and $H$-linkable are {\sl induced} paths. This would define the {\sc Induced Unordered Linkability} and {\sc Induced Ordered Linkability} problems whose standard parameterizations are not expected to be \FPT (by easy reductions from the problems 
 {\sc Cyclability} and {\sc Ordered Linkability}) and they are \FPT in  bounded genus graphs, because of \autoref{@geographical}.
%
%
%\subsection{Problems expressible using $\exists\ \forall$}
%

\subsection{Vertex deletion problems} 

We give now a wide variety of \FOL{\sf +}{\sf SDP}-expressible problems, typically correspond to formulas of the type $\varphi_{k}=\exists\overline{\sf x}_1\,\forall \overline{\sf x}_2\ \psi(\overline{\sf x}_1,\overline{\sf x}_2)$. We present below those that we consider more relevant.\medskip

\paragraph{Vertex deletion to exclusion.}
This family of problems
can be seen as a natural extension of those mentioned in \autoref{@stubbornness}.
Let $\preceq$ be a  partial ordering relation on graphs and let  $\mathcal{F}$ be a finite set of graphs. 
Given a graph $G$, we say that $\mathcal{F}\preceq G$ if for some $F\in\mathcal{F}$ it holds that $F\preceq G$. We define the following meta-problem.

\medskip
\fbox{
\begin{minipage}{15cm}
\noindent{\sc $\mathcal{F}$-$\preceq$-Deletion}\\
\noindent{\sl Input}: a graph $G$ and  a $k\in \mathbb{N}$.\\
\noindent{\sl Question}: is there an $S\in\binom{V(G)}{k}$ such that  $\mathcal{F}\not\preceq G\setminus S$?
\end{minipage}
}
\medskip

Notice that the above yields the problems {\sc $\mathcal{F}$-Minor-Deletion} and {\sc $\mathcal{F}$-Topological Minor-Deletion} are expressible  by some $\varphi_{k}=\exists{\sf x}_1,\ldots,{\sf x}_k\ \neg \xi({\sf x}_1,\ldots,{\sf x}_k)$
where $\xi({\sf x}_1,\ldots,{\sf x}_k)$ is obtained if, in the corresponding sentences of \autoref{@stubbornness}, we replace every $\text{\sf dp}_{k}( {\sf s}_{1}, {\sf t}_{1},\ldots, {\sf s}_{k}, {\sf t}_{k})$ by  $\text{\sf dp}_{k}( {\sf s}_{1}, {\sf t}_{1},\ldots, {\sf s}_{k}, {\sf t}_{k},{\sf x}_1,{\sf x}_1,\cdots,{\sf x}_k,{\sf x}_k)$. 
It is easy to see that 
\yes-instance of {\sc $\mathcal{F}$-Minor-Deletion} have bounded Hadwiger number, therefore  it belongs in Category A. The standard parameterization of {\sc $\mathcal{F}$-Minor-Deletion} is known to be \FPT in general because of \cite{SauST21kapiII,FominLMS12plan,MarxS07obta,KociumakaP19,JansenLS14anea
}. On the other hand,  the standard parameterization of 
{\sc $\mathcal{F}$-Topological Minor Deletion} was proved to be \FPT in general  \cite{FominLP0Z20hitti
} and belongs to Category C, as extra machinery was used for 1st phase of the irrelevant vertex technique.

By applying \autoref{@definitionen} on bounded genus graphs and combining it with the duality trick of \cite{KaminskiT12cont} it is possible to give an \FPT-reduction of the standard parameterization of  {\sc $\mathcal{F}$-Contraction Deletion} on bounded genus graphs to the standard parameterization of  {\sc $\mathcal{F}$-Topological Minor Deletion}.
Furthermore, {\sc $\mathcal{F}$-Induced Minor Deletion} and {\sc $\mathcal{F}$-Induced Topological Minor Deletion} are expressible in  \FOL{\sf +}{\sf SDP} using $\text{\sf sdp}_{k}$ instead of $\text{\sf dp}_{k}$ in the above expressibility argument. Therefore both parameterized problems are in  \FPT on graphs of bounded genus because of \autoref{@geographical}. It is easy to see that
these last three parameterized problems 
are {\sf para-NP}-hard in general, therefore they are classified in Category B.\medskip

\paragraph{Annotation and subset variants.}~\!\!\!Another direction is to consider {\sc Annotated $\mathcal{F}$-$\preceq$-Deletion}
where the input comes with an annotated set of vertices $R\subseteq V(G)$
and we further demand that $S\subseteq R$. Another variant is the {\sc Subset $\mathcal{F}$-$\preceq$-Deletion} where again the input comes an annotated set $R$ of vertices but now we ask 
that for every $Z\subseteq V(G)$ where $\mathcal{F}\preceq G[Z]$ and $Z\cap R\neq\emptyset$, it holds that 
$Z\cap S\neq \emptyset$. Intuitively, we demand that \emph{only} the certificates of the containment of a graph in $\mathcal{F}$ that intersect $R$ are required to be intersected by the solution $S$.

All results mentioned above for the five aforementioned partial relations on graphs
hold also  for the corresponding {\sc Annotated $\mathcal{F}$-$\preceq$-Deletion} and the  {\sc Subset $\mathcal{F}$-$\preceq$-Deletion} meta-problems
when restricted either to graphs of bounded Hadwiger number or of bounded genus. For general graphs, 
the {\sc Subset $\mathcal{F}$-Minor-Deletion} problem
has been treated   in~\cite{CyganPPW13subs,KawarabayashiK12feed} for $\mathcal{F}=\{K_{3}\}$.

\medskip

\paragraph{General vertex deletion.}
Deviating from the $\exists\ \forall$ scheme of this subsection we wish to mention that 
all above problems can be seen as \emph{modification} problems where we want to achieve some particular target property by removing vertices. The target property is  the exclusion of some pattern graphs (the graphs in $\mathcal{F}$) under some partial relation (the relation $\preceq$).  The application of 
\autoref{@definitionen} and \autoref{@geographical} was made possible because  because the $\preceq$-exclusion is, depending on the choice of $\preceq$, either \FOL{\sf +}{\sf DP} or \FOL{\sf +}{\sf SDP}-expressible. In fact we can see all above results as special cases of the following meta-problem defined given a $\varphi\in \FOL{\sf +}{\sf SDP}$.

\medskip\fbox{\begin{minipage}{15cm}
\noindent{\sc $\varphi$-Deletion}\\
\noindent{\sl Input}: a graph $G$ and  a $k\in \mathbb{N}$.\\
\noindent{\sl Question}: is there an $S\in\binom{V(G)}{k}$ such that  $G\setminus S\models \varphi$?
\end{minipage}}\medskip

According to \autoref{@definitionen} and \autoref{@geographical} 
if $\varphi\in \FOL{\sf +}{\sf DP}$  (resp. $\varphi\in \FOL{\sf +}{\sf SDP}$), then {\sc $\varphi$-Deletion} is \FPT on graphs of bounded Hadwiger number  (resp. bounded genus). The special case where 
$\varphi$ corresponds to the property of being planar and satisfying some \FOL\ property is treated in~\cite{FominGST20analgo}.

\subsection{Amalgamation problems}\label{subsec_amalgamation}

The input of an amalgamation problem consists of two graphs 
and asks for a way to identify their vertices so that the new graph satisfies some particular property.
The notion of amalgamation dates back to \cite{Nesetril79amal} and its combinatorial study includes \cite{Gross11genu,HiltonJRW03amal,LeachR03hami,YangC17thet}. 

Given two graphs $G_{1}$ and $G_{2}$ we define $G_{1}\otimes_{k}G_{2}$
as the set containing every graph obtained if for some  $S_i\in\binom{V(G_i)}{k}, i\in[2]$ and a bijection $\sigma: S_{1}\to S_{2}$ we take the disjoint union of $G_{1}$ and $G_{2}$  and then identify each vertex $v\in S_{1}$ with $\sigma(v)$.
Consider the following problem, defined given a $\varphi\in \FOL{\sf +}{\sf DP}$ ($\varphi\in \FOL{\sf +}{\sf SDP}$).

\medskip\fbox{\begin{minipage}{15cm}
\noindent{\sc $\varphi$-Amalgamation}\\
\noindent{\sl Input}: two graphs $G_1,G_{2}$ and  a $k\in \mathbb{N}$.\\
\noindent{\sl Question}: is there   a graph in  $J\in G_{1}\otimes_{k}G_{2}$ 
where $J\models \varphi$?
\end{minipage}}\medskip

\paragraph{Expressing {\sc $\varphi$-Amalgamation}.}
Let $\varphi\in \FOL{\sf +}{\sf DP}$  (resp. $\varphi\in \FOL{\sf +}{\sf SDP}$).
We claim that there is a sentence $\chi_{k}\in \FOL{\sf +}{\sf DP}$ ($\chi_{k}\in \FOL{\sf +}{\sf SDP}$)
such that $(G_{1},G_{2},k)$ is a \yes-instance of {\sc $\varphi$-Amalgamation} iff 
the tuple $(G,V_1,V_2)\models \chi_{k}$  where $G$ is the disjoint union of $G_{1}$ and $G_{2}$ and $V_{i}=V(G_{i}), i\in[2]$. To see this, consider the sentence  
$$\chi_{k}=\exists {\sf v}_{1}^{1},\ldots,{\sf v}_{k}^{1},{\sf v}_{1}^{2},\ldots,{\sf v}_{k}^{2}\  (\bigwedge_{\substack{i\in[k],j\in[2]}}
{\sf v}_{i}^{j}\in V_{j}\wedge \varphi^\star({\sf v}_{1}^{1},\ldots,{\sf v}_{k}^{1},{\sf v}_{1}^{2},\ldots,{\sf v}_{k}^{2})),$$
where $\varphi^\star$ is a formula with ${\sf v}_{1}^{1},\ldots,{\sf v}_{k}^{1},{\sf v}_{1}^{2},\ldots,{\sf v}_{k}^{2}$ as free variables
% that is satisfied in structures of the form $({\sf ap}_{\bf v}(G,{\bf a}), {\bf a})$, where ${\bf a}$ interprets the variables ${\bf v}$ (we refer the reader to~\autoref{subsec_apices} for definitions of the aforementioned notations).
%To describe the formula $\varphi^\star$,
%we first extend the vocabulary of $\varphi$ by adding a collection of $2k$
%relation symbols ${\sf C}_1^1,{\sf C}_1^2,\ldots, {\sf C}_k^1,{\sf C}_k^2$.
%In the structure ${\sf ap}_{\bf v}(G,{\bf a})$, these symbols are interpreted as the neighborhoods of the interpretations of ${\sf v}_{1}^{1},\ldots,{\sf v}_{k}^{1},{\sf v}_{1}^{2},\ldots,{\sf v}_{k}^{2}$.
%Then, we define $\varphi^\star ({\sf v}_{1}^{1},\ldots,{\sf v}_{k}^{1},{\sf v}_{1}^{2},\ldots,{\sf v}_{k}^{2})$ to be the formula 
obtained from $\varphi$ after replacing each of its atomic formulas as follows:
\begin{itemize}
\item Each atomic formula ${\sf x}={\sf y}$, is replaced by the formula $\zeta_{=}({\sf x},{\sf y})$, defined as $$({\sf x}={\sf y})\vee \bigvee_{i\in[k]}\big(({\sf x}={\sf v}_i^1 \wedge {\sf y}={\sf v}_i^2)
\vee
({\sf y}={\sf v}_i^1 \wedge {\sf x}={\sf v}_i^2)\big).$$
\item Each atomic formula ${\sf E}({\sf x},{\sf y})$ is replaced by the formula $\zeta_{\sf E}({\sf x},{\sf y})$,
defined as
$${\sf E}({\sf x},{\sf y})\vee \bigvee_{i\in[k]} \Big(\big({\sf x}=^\star{\sf v}_i^{1} \wedge {\sf E}({\sf v}_i^{2},{\sf y})\big)\vee \big({\sf y}=^\star {\sf v}_i^{1} \wedge 
 {\sf E}({\sf x},{\sf v}_i^{2})\big)\Big),$$
\item Each atomic formula ${\sf {\rm (}s{\rm )}dp}({\sf s}_1,{\sf t}_1,\ldots,{\sf s}_t,{\sf t}_t)$ is replaced by the formula $\zeta_{\sf {\rm (}s{\rm )}dp}({\sf s}_1,{\sf t}_1,\ldots,{\sf s}_t,{\sf t}_t)$ defined in~\autoref{subsec_apices}, after doing some local replacements in $\zeta_{\sf {\rm (}s{\rm )}dp}({\sf s}_1,{\sf t}_1,\ldots,{\sf s}_t,{\sf t}_t)$ as follows.
\begin{itemize}
\item replace every atomic formula of the form ``${\sf y}_j\in {\sf C}_i$'' by $\zeta_{\sf E}({\sf x}_i,{\sf y}_j)$, and
\item replace all atomic formulas ${\sf x}={\sf y}$ and ${\sf E}({\sf x},{\sf y})$ by $\zeta_{=}({\sf x},{\sf y})$ and $\zeta_{\sf E}({\sf x},{\sf y})$, respectively.
\end{itemize}
\end{itemize}
We make clear that we assume that the collection ${\bf c}$ in the definitions in~\autoref{subsec_apices} is replaced by $({\sf v}_{1}^{1},\ldots,{\sf v}_{k}^{1},{\sf v}_{1}^{2},\ldots,{\sf v}_{k}^{2})$.

Intuitively, in the
sentence $\chi_k$, ``$\exists {\sf v}_{1}^{1},\ldots,{\sf v}_{k}^{1},{\sf v}_{1}^{2},\ldots,{\sf v}_{k}^{2}$''  asks 
for the existence of sets $S_1$ (corresponding to the interpretations of $ {\sf v}_{1}^{1},\ldots,{\sf v}_{k}^{1}$)
and $S_2$  (corresponding to the interpretations of ${\sf v}_{1}^{2},\ldots,{\sf v}_{k}^{2}$) each of size $k$,
and a bijection $\sigma:S_1\to S_2$, given by the ordering of the variables (i.e., mapping the interpretation of ${\sf v}_i^1$ to the interpretation of ${\sf v}_i^2$ for each $i\in[k]$).
Also, for each $i\in[2]$, we demand $S_i$ to be a subset of $V_i$ (``$\bigwedge_{\substack{i\in[k],j\in[2]}}
{\sf v}_{i}^{j}\in V_{j}$'').
Then, to express the identification of each $v\in S_1$ to $\sigma(v)$ and the satisfaction of $\varphi$ from the graph $G_{1}\otimes_{k}G_{2}$, we define the formula $\varphi^\star$ with $ {\sf v}_{1}^{1},\ldots,{\sf v}_{k}^{1},{\sf v}_{1}^{2},\ldots,{\sf v}_{k}^{2}$ as free variables.
For this formula, we use the idea from~\autoref{subsec_apices} for the definition of the apex-projection of a formula,
in order to deal with $S_1$ and $S_2$ separately and ask a modified version of $\varphi$ in the resulting graph.
First feature of $\varphi^\star$ is to consider
the interpretations of ${\sf v}_i^1$ and ${\sf v}_i^2$ as the same vertex.
To incorporate this, we ``re-define'' equality as $\zeta_{=}({\sf x},{\sf y})$.
The respective configuration has to be done to the adjacency predicate ${\sf E}$ and all atomic formulas ${\sf {\rm (}s{\rm )}dp}({\sf s}_1,{\sf t}_1,\ldots,{\sf s}_t,{\sf t}_t)$ in $\varphi$.
For this reason, we define $\zeta_{\sf E}({\sf x},{\sf y})$ in a way that, for example, $v_i^1$ is adjacent to $y$ adjacent if and only if $v_i^2$ is adjacent to $y$.
Then, each atomic formula ${\sf {\rm (}s{\rm )}dp}({\sf s}_1,{\sf t}_1,\ldots,{\sf s}_t,{\sf t}_t)$ is ``splitted'' to the part that concerns ${\sf v}_{1}^{1},\ldots,{\sf v}_{k}^{1},{\sf v}_{1}^{2},\ldots,{\sf v}_{k}^{2}$ and the rest of
the graph, and ``guessing'' where the supposed paths should enter or exit the sets $S_1$ and $S_2$.
This idea is the same as the one in~\autoref{subsec_apices} and for this reason we use
the formula $\zeta_{\sf {\rm (}s{\rm )}dp}({\sf s}_1,{\sf t}_1,\ldots,{\sf s}_t,{\sf t}_t)$.
However, here we {\sl do not} need to remove
the edges between the apex-tuple and the rest of the graph, so we may just ask for adjacencies between the ``apex set'' and the rest of the graph.
This is why we can write $\zeta_{\sf E}({\sf x}_i,{\sf y}_j)$ instead of ``${\sf y}_j\in {\sf C}_i$'' and avoid using the toolbox of ``apex-projection'' and ``backwards-translation''. Also, to be consistent to the identification given by $\sigma$,
we have to replace all atomic formulas ${\sf x}={\sf y}$ and ${\sf E}({\sf x},{\sf y})$ by $\zeta_{=}({\sf x},{\sf y})$ and $\zeta_{\sf E}({\sf x},{\sf y})$, respectively.
Observe that if 
$\varphi\in \FOL{\sf +}{\sf DP}$ (resp. $\varphi\in \FOL{\sf +}{\sf DP}$), then $\chi_{k}\in \FOL{\sf +}{\sf DP}$ (resp. $\chi_{k}\in \FOL{\sf +}{\sf DP}$)

\medskip

According to the above, if $\varphi\in \FOL{\sf +}{\sf DP}$ (resp. $\varphi\in \FOL{\sf +}{\sf DP}$), then  {\sc $\varphi$-Amalgamation} is \FPT on graphs of bounded Hadwiger number (resp. bounded genus).
In \cite{Oliveira18graph}, de Oliveira Oliveira considered, given a pattern graph $H$ on $k$ vertices, the alternative amalgamation operation $G_{1}\otimes_{H}G_{2}$
where the 
subgraphs of $G_{1}$ and $G_{2}$ induced by $S_{1}$ and $S_{2}$ are also asked to be isomorphic to $H$.
Clearly, this operation can also be treated by above machinery by introducing the isomorphism of $G_1[S_{1}]$
and $G_{2}[S_{2}]$ in the formula $\chi_{k}$. 
For this operation,
de Oliveira Oliveira~\cite[Theorem 4.3]{Oliveira18graph} proves that there is an algorithm that, given
a sentence $\varphi\in {\sf CMSOL}$, three (connected) graphs $G_1,G_2,H$ of treewidth at most $t$ and maximum degree at most $\Delta$,
reports whether $G_{1}\otimes_{H}G_{2}\models \varphi$ in time $\mathcal{O}_{|\varphi|,t,\Delta}(n^{\mathcal{O}(t)})$, where $n=|G_1|+|G_2|$.
This result is incomparable with our results.

Note that if the models of $\varphi$ have bounded Hadwiger number, then 
we define problems of Category A that are in \FPT in general. As an example of such a 
problem we mention {\sc Planar Amalgamation}. 

\subsection{Actions and replacements}

In \cite{FominGT19modif} a general local graph modification framework was defined where 
we consider a set of ways, called {\sl actions}, that locally replace small size subgraph patterns 
of a graph. The problem treated in  \cite{FominGT19modif}  is whether such a replacement (or a sequence of such replacements)
may modify the graph so to satisfy some graph property. In \cite{FominGT19modif} this property 
was planarity (and some modifications of it). Here we will consider a way more general setting.

\paragraph{Replacement actions.}
We start with some necessary definitions.  We use the notation $\injection([k],G)$ for all the injections of $[k]$ to the set of vertices of $G$.
A \emph{$k$-numbered-graph} is any graph $H$ where $V(H)=[k],$ i.e., the vertices of $H$ are the numbers $\{1,\ldots,k\}.$
We  denote the set of all $k$-numbered graphs by $\mathcal{H}_{k}$
and we set $\mathcal{H}=\bigcup_{k\in\mathbb{N}}\mathcal{H}_{k}.$
A \emph{replacement action}   is any function 
$\repact:\mathcal{H}\to\mathcal{H},$ where for every $H\in \mathcal{H},$ $|\repact(H)|=|H|,$ i.e.,
graphs in $\mathcal{H}$ are mapped to same-size graphs.

Let $G$ be a graph and let $\injec\in\injection([k],G).$ 
We  set
$\injec^{-1}(G)=([k], \{\injec^{-1}(e)\mid e\in E(G[\injec([k])])\}),$ i.e.,
we see $\injec^{-1}(G)$ is the graph in $\mathcal{H}_{k}$ that is isomorphic, via $\injec,$
to the subgraph of $G$ where $\injec$ applies.

%Notice that  $\injec^{-1}(G)\in \mathcal{H}_{k}.$
Let $G$ be a graph and let $G'$ be an other  graph where $V(G')\subseteq V(G).$ We  denote $G \sqcup G'=(G\setminus \binom{V(G')}{2})\cup G',$ i.e., $G \sqcup G'$ occurs if we remove from $G$ the edges between vertices in $G'$ and then 
add all edges of $G'.$
Given a graph $G,$ a $\injec\in\injection([k],V(G)),$ and  a $H\in\mathcal{H}_{k},$ we define $\injec(H)=\{\injec([k]),\{\injec(e)\mid e\in E(H)\}\}.$ 
Given a replacement action $\repact:\mathcal{H}\to\mathcal{H},$ we set $\repact_{\injec}(G)=G\sqcup\injec(\repact(\injec^{-1}(G))),$ in other words,
we consider  the part of $G$ that is delimited by $\injec$ and then we replace this part by its image via $\repact$. 

An action $\mathcal{L}$ may be seen as prescribed way to locally change a graph.
It might be the complementation of the edges of a subgraph $G'$ of $G$, their removal, or the addition of a clique on the vertices of $G'$, or the removal of a matching from $G'$.

%(see Figure~\ref{asfdsfasddxxxsfasdfsdfdsf} for an example).%\sed{Multigraphs?}

We now have all ingredients we need for defining a general local replacement problem. Let $\mathcal{L}$ be an action and let  $\varphi\in \FOL{\sf +}{\sf DP}$ ($\varphi\in \FOL{\sf +}{\sf SDP}$). We define the following problem.

\medskip\fbox{\begin{minipage}{15cm}
\noindent{\sc $\mathcal{L}$-$\varphi$-Replacement}\\
\noindent{\sl Input}: a graph $G$ and  a $k\in \mathbb{N}$.\\
\noindent{\sl Question}:  Is there a  $\injec\in\injection([k],V(G))$ such that 
$\repact_{\injec}(G)\models \varphi$?
\end{minipage}}\medskip

\paragraph{Expressing {\sc $\mathcal{L}$-$\varphi$-Replacement}.}
We claim that, for every action $\mathcal{L}$ and every sentence  $\varphi\in \FOL{\sf +}{\sf DP}$ ($\varphi\in \FOL{\sf +}{\sf SDP}$),
 there is a sentence $\xi_k\in  \FOL{\sf +}{\sf DP}$  ($\xi_k\in \FOL{\sf +}{\sf SDP}$) such that
$(G,k)$ is a \yes-instance of {\sc $\mathcal{L}$-$\varphi$-Replacement} if and only if
$G\models \xi_k$
To see this, consider the sentence
$$\xi_k = \exists {\sf v}_1,\ldots, {\sf v}_k\ \hat{\varphi}({\sf v}_1,\ldots, {\sf v}_k),$$
where $\hat{\varphi}$ is the formula obtained from $\varphi^\star$ of~\autoref{subsec_amalgamation}, after replacing
each atomic formula $\zeta_{=}({\sf x},{\sf y})$ by ${\sf x}={\sf y}$,
each atomic formula $\zeta_{\sf E}({\sf x},{\sf y})$ by the formula
\begin{align*}
 \xi_{\sf E}({\sf x},{\sf y}) =&\Bigg(\Big({\sf E}({\sf x},{\sf y})\wedge \bigwedge_{(i,j)\in \binom{[k]}{2}}({\sf x}\neq {\sf v}_i \vee {\sf y}\neq {\sf v}_j)\Big)\\
&~~~~~~~~ \vee  \bigwedge_{H\in \mathcal{H}_k}
\Big(
\bigwedge_{(i,j)\in E(H)}({\sf x}= {\sf v}_i \wedge {\sf y}= {\sf v}_j)
\implies
\bigwedge_{(i,j)\in E(\mathcal{L}(H))}({\sf x}= {\sf v}_i \wedge {\sf y}= {\sf v}_j)
\Big)
\Bigg).
\end{align*}
Intuitively, the sentence $\xi_k$ asks the existence of an $\injec\in\injection([k],V(G))$ (``$ \exists {\sf v}_1,\ldots, {\sf v}_k$'') such that the modified graph $\repact_{\injec}(G)$ satisfies $\varphi$.
We use $v_1,\ldots, v_k$ to denote the interpretation of ${\sf v}_1,\ldots, {\sf v}_k$.
To express the replacement action, we define $\hat{\varphi}$ to be the formula obtained from
$\varphi$ after ``forgetting'' all edges between  $v_1,\ldots, v_k$ (``$\bigwedge_{(i,j)\in \binom{[k]}{2}}({\sf x}\neq {\sf v}_i \vee {\sf y}\neq {\sf v}_j)$'')
and after ``adding'' all edges given by the replacement action $\mathcal{L}$, i.e., for each graph $H\in \mathcal{H}_k$,
if $H= \injec^{-1}(G)$, then ``add'' the edges of $\injec(\repact(\injec^{-1}(G)))$ in $G$.
This is expressed using the formula ``$\bigwedge_{H\in \mathcal{H}_k}
\Big(
\bigwedge_{(i,j)\in E(H)}({\sf x}= {\sf v}_i \wedge {\sf y}= {\sf v}_j)
\implies
\bigwedge_{(i,j)\in E(\mathcal{L}(H))}({\sf x}= {\sf v}_i \wedge {\sf y}= {\sf v}_j)
\Big)
$''.
Additionally, in order to deal with disjoint paths that pass through $v_1,\ldots, v_k$,
we use the trick in~\autoref{subsec_amalgamation} to replace the atomic formulas ${\sf {\rm (}s{\rm )}dp}({\sf s}_1,{\sf t}_1,\ldots,{\sf s}_t,{\sf t}_t)$.
For this reason, we define the formula  $\hat{\varphi}$ as a modified version of $\varphi^\star$ from~\autoref{subsec_amalgamation} by only replacing each atomic formula $\zeta_{=}({\sf x},{\sf y})$ by ${\sf x}={\sf y}$ and each atomic formula  $\zeta_{\sf E}({\sf x},{\sf y})$ (even the ones that appear in the ``translation'' of the atomic formulas ${\sf {\rm (}s{\rm )}dp}$) by $\xi_{\sf E}({\sf x},{\sf y})$. Observe that if $\varphi\in \FOL{\sf +}{\sf DP}$ (resp. $\varphi\in \FOL{\sf +}{\sf SDP}$), then $\xi_k\in \FOL{\sf +}{\sf DP}$ (resp. $\xi_k\in \FOL{\sf +}{\sf SDP}$).
\medskip

According to the above, if $\varphi\in \FOL{\sf +}{\sf DP}$ (resp. $\varphi\in \FOL{\sf +}{\sf SDP}$), then  {\sc $\mathcal{L}$-$\varphi$-Replacement} is \FPT on graphs of bounded Hadwiger number (resp. bounded genus).  Again in case where the models of $\varphi$ have bounded Hadwiger number then 
we obtain problems of Category A that belong \FPT in general and this already includes the results of \cite{FominGT19modif} where the target property was planarity. 
In fact using the above setting we may extend the definition of replacement actions 
so to permit the substitution of subgraphs of the input graph with new graphs of different (but still bounded) sizes.
This would make it possible to define more flexible types of modifications such as edge contractions or \Delta-Y transformations.%\sed{This would imply completely ESA paper}

\subsection{Elimination distance problems}
Given a graph $G$, we use ${\sf cc}(G)$ to denote the connected components of $G$.
We say that a graph class is \emph{non-trivial} if it contains at least one non-empty graph and does not contain all graphs.
%Also, we say that a graph class $\mathcal{G}$ is \emph{hereditary} if every induced subgraph of a graph in $\mathcal{G}$ belongs also to $\mathcal{G}$.
Given a graph class $\mathcal{G}$,
we define the \emph{connected closure} of $\mathcal{G}$ as $\mathcal{C}(\mathcal{G}) = \{G\mid \forall C\in{\sf cc}(G), C\in \mathcal{G}\}$.
Also, we use  $\mathcal{A}(\mathcal{G})$ to denote the set $\{G\mid \exists v\in V(G) G\setminus v\in \mathcal{G}\}$.

Let $\mathcal{G}$ be a non-trivial  graph class.
We say that a graph $G$ has \emph{elimination distance at most $k$} to $\mathcal{G}$ if
%and we denote it by ${\sf ed}_{\mathcal{G}}(G)\leq k$, if
\[G\in \overbrace{\mathcal{C}(\mathcal{A}(\cdots \mathcal{C}(\mathcal{A}}^{\text{$k$ times}}(\mathcal{C}(\mathcal{G}))))).\]
%Observe that for every non-trivial and hereditary graph class $\mathcal{G}$,
%$\mathcal{C}(\mathcal{G}) = \mathcal{C}(\mathcal{C}(\mathcal{G}))$.
%This implies that ${\sf ed}_{\mathcal{G}}$ and ${\sf ed}_{\mathcal{C}(\mathcal{G})}$ are the same parameter.
The elimination distance from a graph class was defined by Bulian and Dawar in \cite{BulianD16graph}
as an alternative graph modification measure (see~\cite{JansenK021verte,DLindermayrSV20elimi,AgrawalKPRS21anfp,AgrawalR20onthe,HolsKP20elimi,AgrawalKLPRSZ22dele,AgrawalKLPRS21elim,JansenK21fpta} for algorithmic results concerning elimination distance).
Given a $\varphi\in \FOL{\sf +}{\sf DP}$  (resp. $\varphi\in \FOL{\sf +}{\sf SDP}$) we define the following problem: 

\medskip\fbox{\begin{minipage}{15cm}
\noindent{\sc $\varphi$-Elimination distance}\\
\noindent{\sl Input}: a graph $G$ and  a $k\in \mathbb{N}$.\\
\noindent{\sl Question}:  is the elimination distance of $G$ from $\Mod(\varphi)$ at most $k$?
\end{minipage}}\medskip

Bulian and Dawar in \cite{BulianD17fixe} considered the above problem for the case where $\varphi$ expresses the minor-exclusion of some finite set of graphs and they proved that this problem is (constructively) \FPT.
In \cite{FominGT21param} considered {\sc $\varphi$-Elimination distance} when $\varphi\in \FOL$ and proved that 
for particular instantiations of $\varphi$ the problem, parameterized by $k$, is {\sf W[2]}-hard.
According to the recent meta-algorithmic results in \cite{PilipczukSSTV22algo}, when $\varphi\in $\FOL+{\sf conn}, {\sc $\varphi$-Elimination distance} is \FPT for graphs of bounded Hajós number (see~\cite{SchirrmacherSV22first}).
According to our results if $\varphi\in \FOL{\sf +}{\sf SDP}$ (resp. $\varphi\in \FOL{\sf +}{\sf DP}$), then  {\sc $\varphi$-Elimination distance} is \FPT on graphs of bounded Hadwiger number (resp. bounded genus).

\paragraph{The problem {\sc $\varphi$-Block Elimination Distance}.}
Another parameter similar to elimination distance is the \emph{block elimination distance}, introduced in~\cite{DinerGT22block}, that is obtained if we replace the connected closure operator $\mathcal{C}$ in the above definition by the operator $\mathcal{B}$,
defined as $$\mathcal{B}(\mathcal{G})=\{G\mid \text{$\forall B\in{\sf bc}(G)$, } B\in\mathcal{B}\},$$
where ${\sf bc}(G)$ is the set of all blocks of $G$.
Similar to {\sc $\varphi$-Elimination distance}, we can define the problem {\sc $\varphi$-Block Elimination Distance}.
According to our results, if $\varphi\in \FOL{\sf +}{\sf DP}$ (resp. $\varphi\in \FOL{\sf +}{\sf SDP}$), then  {\sc $\varphi$-Block Elimination Distance} is \FPT on graphs of bounded Hadwiger number (resp. bounded genus).
We remark, that when  $\varphi\in $\FOL+{\sf conn}, {\sc $\varphi$-Elimination distance} is \FPT for graphs of bounded Hajós number because of the results   in \cite{PilipczukSSTV22algo}.

\subsection{Reconfiguration  problems}
Intuitively, reconfiguration problems ask, given two feasible solutions $S$ and $T$ of a problem, whether there is a step-by-step transformation between $S$ and $T$
where all intermediate sets are also feasible solutions.

\paragraph{Reconfiguration sequences.}
Let $\varphi$ be a sentence that is satisfied in structures of the form $(G,S)$, i.e., structures of the colored-graph vocabulary $\{{\sf E},{\sf S}\}$.
A sequence $S_1,\ldots, S_\ell,
\ell\in\mathbb{N}$ of subsets of $V(G)$ is called a  \emph{$\varphi$-reconfiguration sequence} if
\begin{itemize}
\item for every $i\in[\ell]$, $(G,S_i)\models \varphi$ and 
\item for every $i\in[\ell]$, there is a $v\in S_{i}$ and a $u\in V(G)\setminus S_i$ such that $S_{i+1} = (S_i\setminus \{v\})\cup \{u\}$.
\end{itemize}
We call $\ell$ the \emph{lenght} of the  $\varphi$-reconfiguration sequence.

Given a $\varphi\in \FOL{\sf +}{\sf DP}$  (resp. $\varphi\in \FOL{\sf +}{\sf SDP}$), whose models are of the form $(G,S)$,
we define the following problem: 

\medskip\fbox{\begin{minipage}{15cm}
\noindent{\sc $\varphi$-Reconfiguration}\\
\noindent{\sl Input}: a graph $G$, two sets $S,T\subseteq V(G)$, and an $\ell\in\mathbb{N}$.\\
\noindent{\sl Question}:  is there a $\varphi$-reconfiguration sequence $S,\ldots, T$ of length at most $\ell$?
\end{minipage}}\medskip

%Reconfiguration problems have received attention in literature~\cite{ItoDHPSUU11onth,BonsmaC09find,GopalanKMP09thec,HearnD05pspa,ItoKC09reco} (see also~\cite{Mouawad15onre,Nishimura18intr,Heuvel13thec,MynhardtN20reco,BousquetMNS22asur})
Reconfiguration problems have received a lot of attention in the literature~\cite{ItoDHPSUU11onth,BonsmaC09find,GopalanKMP09thec,HearnD05pspa,ItoKC09reco,Mouawad15onre,Nishimura18intr,Heuvel13thec,MynhardtN20reco,BousquetMNS22asur,MouawadNRSS17onth,LokshtanovMPRS18reco,Siebertz18reco,MouawadNRSS17onth,LokshtanovMPS22onth}).
%Parameterized by $k$~\cite{MouawadNRSS17onth,LokshtanovMPRS18reco,Siebertz18reco}.
%Parameterized by $k+\ell$~\cite{MouawadNRSS17onth,LokshtanovMPS22onth}.
A vibrant branch of research on reconfiguration problems deals with other reconfiguration models (apart from removing/adding vertices) like token sliding, (perfect) matching flipping, spanning tree flipping~\cite{BonamyDO21domi,BousquetJ21tsre,BelmonteKLMOS21toke,DemaineDFHIOOUY15line,BartierBDLM21ongi,YamadaU21shor,BartierBM22gala,BonamyBHIKMMW19thep,BousquetHIM19shor,BousquetIKMOSW22reco,BousquetIKMOSW20reco,EtoIKOW22reco}.
Also, the tractability of reconfiguration problems has been studied under different structural parameterizations of the input graph~\cite{MouawadNRS18vert,Wrochna18reco,BelmonteHLOO20inde,BodlaenderGS21para,BodlaenderGNS21para}.
%For example, in the case where $\varphi$ expresses the existence of an independent set of size $k$,
%if $\ell$ is not part of the parameter, then {\sc $\varphi$-Reconfiguration} is {\sf PSPACE}-complete even on graphs of bounded bandwidth~\cite{Wrochna18reco}.
%When parameterized by $\ell$ alone the DOMINATING SET RECONFIGURATION is fixed-parameter
%tractable on any class where first-order model-checking is fixed-parameter tractable,
%the most general known cases being nowhere dense classes and classes of
%bounded twinwidth (assuming a contraction sequence is also given as part of the
%input)~\cite{BousquetMNS22asur} 

\paragraph{Known AMTs for reconfiguration problems.}
Also, there are some konwn algorithmic meta-theorems for reconfiguration problems.
In fact,~\cite{MouawadNRW14reco}  proved that for every $\varphi\in{\sf MSO}_2$, the problem {\sc $\varphi$-Reconfiguration} is \FPT parameterized by $\tw+\ell+|\varphi|$, where $\tw$ is the treewidth of the input graph.
Also, their framework can be used to derive an \FPT algorithm for the parameterization of the problem  {\sc $\varphi$-Reconfiguration} for $\varphi\in{\sf MSO}_1$ by $\ell+{\sf cw}+|\varphi|$, where ${\sf cw}$ is the cliquewidth of the input graph.
For formulas $\varphi\in\MSOL$, in~\cite{GimaIKO22algo},
 they considered two parameterizations of {\sc $\varphi$-Reconfiguration}:
the first is by the neighborhood diversity of the input graph and the second is, when restricted to feasible solutions of size $k$, by the treewidth of the input graph and $k$. For these two parameterized (meta)problems, they give an \FPT algorithm.

For $\varphi\in\FOL$, the variant of {\sc $\varphi$-Reconfiguration} where the size of all sets in the reconfiguration sequence is $k$, is \FPT parameterized by $\ell+k+|\varphi|$ on nowhere dense classes~\cite{LokshtanovMPRS18reco}.
This result also holds for any reconfiguration model (for example, token sliding) that can be expressed by a formula in \FOL.

\paragraph{Expressing {\sc $\varphi$-Reconfiguration}.}
Let $\varphi\in \FOL{\sf +}{\sf DP}$  (resp. $\varphi\in \FOL{\sf +}{\sf SDP}$).
We claim that there is a sentence $\tilde{\varphi}_\ell\in  \FOL{\sf +}{\sf DP}$  ($\tilde{\varphi}_\ell\in \FOL{\sf +}{\sf SDP}$) such that
$(G,S,T,\ell)$ is a \yes-instance of  {\sc $\varphi$-Reconfiguration} if and only if
$(G,S,T)\models \tilde{\varphi}_\ell$.
To see this,
first, we consider the formula $\psi_{{\sf S}}^{(0)}({\sf z}) = ({\sf z}\in {\sf S})$ and, for every $i\in[\ell-1]$, we consider the formula
$$\psi_{{\sf S}}^{(i)}({\sf x}_1,{\sf y}_1,\ldots, {\sf x}_{i},{\sf y}_{i},{\sf z}) = \Big( \psi_{{\sf S}}^{(i-1)}({\sf x}_1,{\sf y}_1,\ldots, {\sf x}_{i-1},{\sf y}_{i-1},{\sf z})\wedge {\sf z}\neq {\sf x}_i\Big)\vee {\sf z} = {\sf y}_i.$$
Intuitively, for every $i\in\mathbb{N}$,
given a graph $G$, a set $S\subseteq V(G)$ that interprets ${\sf S}$, and some vertices $x_1,y_1,\ldots, x_{i},y_{i},z\in V(G)$ that interpret the variables ${\sf x}_1,{\sf y}_1,\ldots, {\sf x}_{i},{\sf y}_{i},{\sf z}$,
$\psi_{{\sf S}}^{(i)}({\sf x}_1,{\sf y}_1,\ldots, {\sf x}_{i},{\sf y}_{i},{\sf z})$
expresses that $z\in S_i$, where $S_0 = S$ and for every $j\in[i]$, $S_{j} = (S_{j-1}\setminus \{x_j\})\cup\{y_j\}$.

Also,
for every $i\in\mathbb{N}$, 
we define $\varphi^{(i)}_{\sf S}({\sf x}_1,{\sf y}_1,\ldots, {\sf x}_i,{\sf y}_i)$
 to be the formula obtained from $\varphi$ after replacing each atomic term ${\sf x}\in {\sf S}$ with the formula $\psi_{{\sf S}}^{(i)}({\sf x}_1,{\sf y}_1,\ldots, {\sf x}_{i},{\sf y}_{i},{\sf x})$.
 Intuitively, the formula $\varphi^{(i)}_{\sf S}$ demands that the variables of $\varphi$ are picked inside the set $S_i$, instead of the set $S$, where $S_i$ is defined as above.
We define $ \tilde{\varphi}_\ell $ as follows.
\begin{align*}
& \tilde{\varphi}_\ell = \exists {\sf v}_1,{\sf u}_1,\ldots, {\sf v}_\ell, {\sf u}_\ell\ \\
&~~\bigwedge_{i\in[\ell]}\Big( \psi_{{\sf S}}^{(i-1)}({\sf v}_1,{\sf u}_1,\ldots, {\sf v}_{i-1},{\sf u}_{i-1},{\sf v}_i)\wedge \neg
\psi_{\sf S}^{(i-1)}({\sf v}_1,{\sf u}_1,\ldots, {\sf v}_{i-1},{\sf u}_{i-1},{\sf u}_i)\Big)
\wedge\varphi^{(i)}_{\sf S}({\sf v}_1,{\sf u}_1,\ldots, {\sf v}_i,{\sf u}_i).
\end{align*}
Observe that if $\varphi\in \FOL{\sf +}{\sf DP}$ (resp. $\varphi\in \FOL{\sf +}{\sf SDP}$), then $\tilde{\varphi}_\ell\in \FOL{\sf +}{\sf DP}$ (resp. $\tilde{\varphi}_\ell\in \FOL{\sf +}{\sf SDP}$).
%This is an idea originates to~\cite{BousquetMNS22asur}.
According to our results, if $\varphi\in \FOL{\sf +}{\sf DP}$ (resp. $\varphi\in \FOL{\sf +}{\sf SDP}$), then  {\sc $\varphi$-Reconfiguration} is \FPT, parameterized by $\ell$ and $|\varphi|$, on graphs of bounded Hadwiger number (resp. bounded genus).

\subsection{Planarizer game}

We would like to finish this section with a new problem that we consider worth mentioning here.
We call it {\sc Planarizer Game} and it is played by two players, the {\sl blocker} and the {\sl planarizer}.
The two players play in rounds. In each round each player places tokens on the vertices of the graph and the  blocker plays first. None of the players can move his/her token on a vertex that is already occupied by a token. The planarizer wins if the removal  from $G$ of the vertices 
that are occupied by his/her tokens yields a planar graph.  The {\sc Planarizer Game}  problem 
asks, given a graph $G$ and a non-negative integer $k$, whether the planarizer has a victory strategy 
against the blocker that uses at most $k$ rounds. 
To prove that {\sc Planarizer Game} is {\sf NP}-hard we conside the {\sc Edge Planarizer} problem 
asking, give a graph $G$ and a non-negative integer $k$, whether there is set of at most $k$ edges of $G$ 
whose removal produces a planar graph.  In \cite{FarciaHM04onth},  Fariaa, Herrera de Figueiredo, Mendonça, proved that this 
problem is \NP-hard for graphs of maximum degree three. We reduce this restricted version of  {\sc Edge Planarizer}  to  {\sc Planarizer Game} as follows.

Let $(G,k)$ be an input of  {\sc Edge Planarizer}.
We transform  $(G,k)$  to an input $(G',k)$ of  {\sc Planarizer Game} by replacing each edge by a path of length $k$
and by removing every vertex $v$ of degree three and making $N_G (v)$ a clique (i.e., a triangle).
%
%by the gadget $\Gamma_v$ of 
%\autoref{fig_reduction_planarizer_game} (based on a $(k+1\times k+1)$-grid) containing three roots $r_{e_1}^{(v)},r_{e_{2}}^{(v)},r_{e_{3}}^{(v)}$ (one for each possible edge incident to $v$) and for each edge $e=\{v,u\}$ we join the corresponding roots $r_{e}^{(v)}$ and $r_e^{(u)}$ of $\Gamma_{v}$ and $\Gamma_{u}$ with a path of length $k$.
 Then  $(G,k)$ is a \yes\ instance of {\sc Edge Planarizer} iff  $(G',k)$ is a \yes\ instance of {\sc Planarizer Game} because each edge-choice of  {\sc Edge Planarizer} corresponds to a choice of a vertex of a joining path and the length of joining paths is big enough to ``neutralize'' the potential of the blocker during the game.

According to \autoref{@definitionen} and given that 
\yes-instances of {\sc Planarizer Game}   have bounded Hadwiger number, the problem belongs in Category A and, when parameterized by the number of rounds,  is \FPT. Of course the same result can be generalized if, instead of planarity, the planarizer pursues 
some other graph property whose graphs have bounded Hadwiger number (resp. genus) and is expressible in \FOL{\sf +}{\sf DP} (resp. \FOL{\sf +}{\sf SDP}).

\newpage

\section{Flat walls and flat annuli framework}\label{sec_flatwalls}

Here we present the framework on flat walls that was introduced in~\cite{SauST21amor}.
In~\autoref{label_domesticated} we define walls, subwalls, and other notions related to walls.
Next, in~\autoref{label_inappropriate}, we give the definitions of renditions and paintings, that are used in \autoref{label_exceptionalness} to define flatness pairs.
In~\autoref{label_exceptionalness}, apart from the definition of flatness pairs, we present notions like {\sl influence}, {\sl regularity}, and {\sl tilts}.

%
%\paragraph{Contractions and minors.}
%%The \emph{contraction} of an edge $e = \{u,v\}$ of a simple graph $G$ results in a simple graph $G'$
%%obtained from $G \setminus \{u,v\}$ by adding a new vertex $uv$ adjacent to all the vertices
%%in the set $N_G(u) \cup N_G(v)\setminus \{u,v\}.$
%A graph $G'$ is a \emph{contraction} of a graph $G,$
%if $G'$ can be obtained from $G$ by a sequence of edge contractions.
%Given two graphs $H,G,$ if $H$ is a minor of $G$ then for every vertex $v\in V(H)$ there is
%a set of vertices in $G$ that are the endpoints of the edges of $G$ contracted towards creating $v.$
%We call this set \emph{model} of $v$ in $G.$
%%Given a finite collection of graphs $\mathcal{F}$ and a graph $G,$ we use the notation $\mathcal{F}\prem G$ to denote that some graph in $\mathcal{F}$ is a minor of $G.$

\subsection{Walls and subwalls}\label{label_domesticated}

\paragraph{Dissolutions and subdivisions.}
Given a vertex $v\in V(G)$ of degree two with neighbors $u$ and $w,$ we define the \emph{dissolution} of $v$
to be the operation of deleting $v$ and, if $u$ and $w$ are not adjacent, adding the edge $\{u,w\}.$
Given an edge $e=\{u,v\}\in E(G),$ we define the \emph{subdivision} of $e$
to be the operation of deleting $e,$ adding a new vertex $w$ and making it adjacent to $u$ and $v.$
Given two graphs $H,G,$ we say that $H$ is a \emph{subdivision} of $G$
if $H$ can be obtained from $G$ after subdividing edges of $G.$

\paragraph{Walls.}\label{label_mistreatment}
Let  $k,r\in\mathbb{N}.$ The
\emph{$(k\times r)$-grid} is the
graph whose vertex set is $[k]\times[r]$ and two vertices $(i,j)$ and $(i',j')$ are adjacent if and only if $|i-i'|+|j-j'|=1.$
An  \emph{elementary $r$-wall}, for some odd integer $r\geq 3,$ is the graph obtained from a
$(2 r\times r)$-grid
with vertices $(x,y)
	\in[2r]\times[r],$
after the removal of the
``vertical'' edges $\{(x,y),(x,y+1)\}$ for odd $x+y,$ and then the removal of
all vertices of degree one.
Notice that, as $r\geq 3,$  an elementary $r$-wall is a planar graph
that has a unique (up to topological isomorphism) embedding in the plane
such that all its finite faces are incident to exactly six
edges.
The \emph{perimeter} of an elementary $r$-wall is the cycle bounding its infinite face,
while the cycles bounding its finite faces are called \emph{bricks}.
Also, the vertices
in the perimeter of an elementary $r$-wall that have degree two are called \emph{pegs},
while the vertices $(1,1), (2,r), (2r-1,1), (2r,r)$ are called \emph{corners} (notice that the corners are also pegs).

An \emph{$r$-wall} is any graph $W$ obtained from an elementary $r$-wall $\bar{W}$
after subdividing edges. A graph $W$ is a \emph{wall} if it is an $r$-wall for some odd $r\geq 3$
and we refer to $r$ as the \emph{height} of $W.$ Given a graph $G,$
a \emph{wall of} $G$ is a subgraph of $G$ that is a wall.
We insist that, for every $r$-wall, the number $r$ is always odd.

We call the vertices of degree three of a wall $W$ \emph{3-branch vertices}.
A cycle of $W$ is a \emph{brick} (resp. the \emph{perimeter}) of $W$
if its 3-branch vertices are the vertices of a brick (resp. the perimeter) of $\bar{W}.$
We denote by $\mathcal{C}(W)$ the set of all cycles of $W.$
We  use $D(W)$ in order to denote the perimeter of the  wall $W.$
A brick of $W$ is \emph{internal} if it is disjoint from $D(W).$

\paragraph{Subwalls.}
Given an elementary $r$-wall $\bar{W},$ some odd $i\in \{1,3,\ldots,2r-1\},$ and $i'=(i+1)/2,$
the \emph{$i'$-th  vertical path} of $\bar{W}$  is the one whose
vertices, in order of appearance, are $(i,1),(i,2),(i+1,2),(i+1,3),
	(i,3),(i,4),(i+1,4),(i+1,5),
	(i,5),\ldots,(i,r-2),(i,r-1),(i+1,r-1),(i+1,r).$
Also, given some $j\in[2,r-1]$ the \emph{$j$-th horizontal path} of $\bar{W}$
is the one whose
vertices, in order of appearance, are $(1,j),(2,j),\ldots,(2r,j).$

A \emph{vertical} (resp. \emph{horizontal}) path of $W$ is one
that is a subdivision of a  vertical (resp. horizontal) path of $\bar{W}.$
Notice that the perimeter of an $r$-wall $W$
is uniquely defined regardless of the choice of the elementary $r$-wall $\bar{W}.$
An \emph{$r'$-subwall} (or simply \emph{subwall}) of $W$ is any subgraph $W'$ of  $W$
that is an $r'$-wall, with $r' \leq r,$ and such the vertical (resp. horizontal) paths of $W'$ are subpaths of the
{vertical} (resp. {horizontal}) paths of $W.$

\paragraph{Layers.}
The \emph{layers} of an $r$-wall $W$  are recursively defined as follows.
The first layer of $W$ is its perimeter.
For $i=2,\ldots,(r-1)/2,$ the $i$-th layer of $W$ is the $(i-1)$-th layer of the subwall $W'$
obtained from $W$ after removing from $W$ its perimeter and
removing recursively all occurring vertices of degree one.
We refer to the $(r-1)/2$-th layer as the \emph{inner layer} of $W.$
The \emph{central vertices} of an $r$-wall $W$ are its two branch vertices that do not belong to any of its layers
and are connected by a path of $W$  that does not intersect any layers of $W$.

\paragraph{Central walls.}
Given an $r$-wall $W$ and an odd $q\in\mathbb{N}_{\geq 3}$ where $q\leq r,$
we define the \emph{central $q$-subwall} of $W,$ denoted by $W^{(q)},$
to be the $q$-wall obtained from $W$ after removing
its first $(r-q)/2$ layers and all occurring vertices of degree one.
Given an $h\in [(r-1)/2]$, a subwall $W'$ of $W$ is called \emph{$h$-internal} if it is a subwall of $W^{(r-2h)}$.

\paragraph{Tilts.}
The \emph{interior} of a wall $W$ is the graph obtained
from $W$ if we remove from it all edges of $D(W)$ and all vertices
of $D(W)$ that have degree two in $W.$
Given two walls $W$ and $\tilde{W}$ of a graph $G,$
we say that $\tilde{W}$ is a \emph{tilt} of $W$ if $\tilde{W}$ and $W$ have identical interiors.

\subsection{Paintings and renditions}
\label{label_inappropriate}
In this subsection we present the notions of renditions and paintings, originating in the work of Robertson and Seymour \cite{RobertsonS95GMXIII}.
The definitions presented here were introduced by Kawarabayashi, Thomas, and Wollan \cite{KawarabayashiTW18anew} (see also~\cite{SauST21amor}).
\paragraph{Paintings.}
Let $\Delta$ be a closed annulus or a closed disk.
A \emph{{$\Delta$}-painting} is a pair $\Gamma=(U,N)$
where
\begin{itemize}
	\item  $N$ is a finite set of points of $\Delta,$
	\item $N \subseteq U \subseteq \Delta,$ and
	\item $U \setminus  N$ has finitely many arcwise-connected  components, called \emph{cells}, where, for every cell $c,$
	      \begin{itemize}
		      \item[$\circ$] the closure $\bar{c}$ of $c$
		            is a closed disk
		            and
		      \item[$\circ$]  $|\tilde{c}|\leq 3,$ where $\tilde{c}:=\bd(c)\cap N.$
	      \end{itemize}
\end{itemize}
We use the  notation $U(\Gamma) := U,$
$N(\Gamma) := N$  and denote the set of cells of $\Gamma$
by $C(\Gamma).$
For convenience, we may assume that each cell  of $\Gamma$ is an open disk of $\Delta.$
Notice that, given a $\Delta$-painting $\Gamma,$
the pair $(N(\Gamma),\{\tilde{c}\mid c\in C(\Gamma)\})$  is a hypergraph whose hyperedges have cardinality at most three and  $\Gamma$ can be seen as a plane embedding of this hypergraph in $\Delta.$

\paragraph{Disk and annulus renditions.}
Let $G$ be a graph and let $\Omega$ be a cyclic permutation of a subset of $V(G)$ that we denote by $V(\Omega).$ By a \emph{disk $\Omega$-rendition} of $G$ we mean a triple $(\Gamma, \sigma, \pi),$ where
\begin{itemize}
	\item[(a)] $\Gamma$ is a $\Delta$-painting for some closed disk $\Delta,$
	\item[(b)] $\pi: N(\Gamma)\to V(G)$ is an injection, and
	\item[(c)] $\sigma$ assigns to each cell $c \in  C(\Gamma)$ a subgraph $\sigma(c)$ of $G,$ such that
	      \begin{enumerate}
		      \item[(1)] $G=\bigcup_{c\in C(\Gamma)}\sigma(c),$
		      \item[(2)]  for distinct $c, c' \in  C(\Gamma),$  $\sigma(c)$ and $\sigma(c')$  are edge-disjoint,
		      \item[(3)] for every cell $c \in  C(\Gamma),$ $\pi(\tilde{c}) \subseteq V (\sigma(c)),$
		      \item[(4)]  for every cell $c \in  C(\Gamma),$
		            $V(\sigma(c)) \cap \bigcup_{c' \in  C(\Gamma) \setminus  \{c\}}V(\sigma(c')) \subseteq \pi(\tilde{c}),$ and
		      \item[(5)]  $\pi(N(\Gamma)\cap \bd(\Delta))=V(\Omega),$ such that the points
		            in $N(\Gamma)\cap \bd(\Delta)$ appear in $\bd(\Delta)$ in the same ordering
		            as their images, via $\pi,$ in $\Omega.$
	      \end{enumerate}
\end{itemize}
%Given an $\Omega$-rendition $(\Gamma, \sigma, \pi)$ of a graph $G,$
%we call a cell $c$ of $\Gamma$ \emph{trivial} if $\pi(\tilde{c})=V(\sigma(c)).$

Similarly, we define {\sl annulus $(\Omega_1, \Omega_2)$-renditions} as follows.
Let $G$ be a graph and let $X_1,X_2$ be two subsets of $V(G)$,
and let $\Omega_1$ (resp. $\Omega_2$) be a cyclic permutation of $X_1$ (resp. $X_2$). By an \emph{$(\Omega_1, \Omega_2)$-rendition} of $G$ we mean a triple $(\Gamma, \sigma, \pi),$ where $(\Gamma, \sigma, \pi),$ is defined as for disk $\Omega$-renditions but  $\Gamma$ is a $\Delta$-painting for some closed annulus $\Delta$ (instead of a closed disk) and as for item (5), $\pi(N(\Gamma)\cap \bd(\Delta))=X_1\cup X_2,$ such that, if $\bd(\Delta)=B_1\cup B_2$, then the points in $N(\Gamma)\cap B_i$ appear in $B_i$ in the same ordering as their images, via $\pi,$ in $\Omega_i,$ for $i\in\{1,2\}$.

\subsection{Flatness pairs}
\label{label_exceptionalness}
In this subsection we define the notion of a flat wall, originating in the work of Robertson and Seymour \cite{RobertsonS95GMXIII} and later used in \cite{KawarabayashiTW18anew}.
Here, we define flat walls as in \cite{SauST21amor}.

\paragraph{Flat walls.}
Let $G$ be a graph and let $W$ be an $r$-wall  of $G,$ for some odd integer $r\geq 3.$
We say that a pair $(P,C)\subseteq D(W)\times D(W)$ is a \emph{choice
		of pegs and corners for $W$} if $W$ is the subdivision of an  elementary $r$-wall $\bar{W}$
where $P$ and
$C$ are the pegs and the corners of $\bar{W},$ respectively (clearly, $C\subseteq P$).
To get more intuition, notice that a wall $W$ can occur in several ways from the elementary wall $\bar{W},$
depending on the way the vertices in the perimeter of $\bar{W}$ are subdivided.
Each of them gives a different selection $(P,C)$ of pegs and corners of $W.$

We say that $W$ is a \emph{flat $r$-wall}
of $G$ if there is a separation $(X,Y)$ of $G$ and a choice  $(P,C)$
of pegs and corners for $W$ such that:
\begin{itemize}
	\item $V(W)\subseteq Y,$
	\item  $P\subseteq X\cap Y\subseteq V(D(W)),$ and
	\item  if $\Omega$ is the cyclic ordering of the vertices $X\cap Y$ as they appear in $D(W),$
	      then there exists an $\Omega$-rendition $(\Gamma,\sigma,\pi)$ of  $G[Y].$
\end{itemize}
We say that $W$ is a \emph{flat wall}
of $G$ if it is a flat $r$-wall for some odd integer $r \geq 3.$
%
%\ig{formally speaking, we have not defined flat walls, only flat $r$-walls. We should say ``We say that $W$ is a \emph{flat wall}
%of $G$ if it is a flat $r$-wall for some odd integer $r \geq 3$''}

\paragraph{Flatness pairs.}
Given the above, we  say that  the choice of the 7-tuple $\mathfrak{R}=(X,Y,P,C,\Gamma,\sigma,\pi)$
\emph{certifies that $W$ is a flat wall of $G$}.
We call the pair $(W,\mathfrak{R})$ a \emph{flatness pair} of $G$ and define
the \emph{height} of the pair $(W,\mathfrak{R})$ to be the height of $W.$
We use the term \emph{cell of} $\mathfrak{R}$ in order to refer to the cells of $\Gamma.$

We call the graph $G[Y]$ the \emph{$\mathfrak{R}$-compass} of $W$ in $G,$
denoted by ${\sf Compass}_{\mathfrak{R}}(W).$
It is easy to see that there is a connected component of ${\sf Compass}_{\mathfrak{R}}(W)$ that contains the wall $W$ as a subgraph.
We can assume that ${\sf Compass}_{\frR} (W)$ is connected, updating $\frR$ by removing from $Y$ the vertices of all the connected components of ${\sf Compass}_\frR (W)$
except of the one that contains $W$ and including them in $X$ ($\Gamma, \sigma, \pi$ can also be easily modified according to the removal of the aforementioned vertices from $Y$).
We define the  \emph{flaps} of the wall $W$ in $\mathfrak{R}$ as
${\sf flaps}_{\mathfrak{R}}(W):=\{\sigma(c)\mid c\in C(\Gamma)\}.$
Given a flap $F\in {\sf flaps}_{\mathfrak{R}}(W),$ we define its \emph{base}
as $\partial F:=V(F)\cap \pi(N(\Gamma)).$
%A flap $F\in {\sf flaps}_{\mathfrak{R}}(W)$ is \emph{trivial}   if  $|\partial F|=2$ and $F$ consists of one edge between the two vertices in $\partial F.$
%We call the edges of the trivial flaps \emph{short edges of ${\sf Compass}_{\mathfrak{R}}(W)$}.

\paragraph{Flat railed annuli.}
Let $G$ be a graph and let $\mathcal{A}$ be an $(r,q)$-railed annulus of $G,$ for some odd integer $r\geq 3$ and $q\in \mathbb{N}_{\geq 3}$.
We say that $\mathcal{A}$ is a \emph{flat  $(r,q)$-railed annulus}
of $G$ if there are two laminar separations $(X_1,Y_1), (X_2,Y_2)$ of $G$, a set $Z_1$ of degree-two vertices in $V(C_1)$, and a set $Z_2$ of degree-two vertices in $V(C_2)$ such that:
\begin{itemize}
	\item $V(\mathcal{A})\subseteq Y_1\cap X_2,$
	\item  $Z_1\subseteq X_1\cap Y_1\subseteq V(C_1),$
	\item  $Z_2\subseteq X_2\cap Y_2\subseteq V(C_r),$ and
	\item if $\Omega_1$ (resp. $\Omega_2$) is the cyclic ordering of the vertices $X_1\cap Y_1$ (resp. $X_2\cap Y_2$) as they appear in $C_1$ (resp. $C_r$),
then there is an $(\Omega_1,\Omega_2)$-rendition of $G[X_1\cap Y_2]$.
\end{itemize}
We say that $\mathcal{A}$ is a \emph{flat railed annulus}
of $G$ if it is a flat $(r,q)$-railed annulus for some odd integer $r \geq 3$ and some $q\in\mathbb{N}_{\geq 3}$.

\paragraph{Railed annuli flatness pairs.}
Given the above, we  say that  the choice of the 9-tuple $\mathfrak{R}=(X_1,Y_1,X_2,Y_2,Z_1,Z_2,\Gamma,\sigma,\pi)$
\emph{certifies that $\mathcal{A}$ is a flat railed annulus of $G$}.
We call the pair $(\mathcal{A},\mathfrak{R})$ an \emph{railed annulus flatness pair} of $G$.
We use the term \emph{cell of} $\mathfrak{R}$ in order to refer to the cells of $\Gamma.$

We call the graph $G[X_1\cap Y_2]$ the \emph{$\mathfrak{R}$-compass} of $\mathcal{A}$ in $G,$
denoted by ${\sf Compass}_{\mathfrak{R}}(\mathcal{A}).$
It is easy to see that there is a connected component of ${\sf Compass}_{\mathfrak{R}}(\mathcal{A})$ that contains the cycles and the paths of $\mathcal{A}$ as subgraphs.
We can assume that ${\sf Compass}_{\frR} (\mathcal{A})$ is connected, updating $\frR$ by removing from $X_1\cap Y_2$ the vertices of all the connected components of ${\sf Compass}_\frR (\mathcal{A})$
except of the one that contains $\mathcal{A}$ and including them in $X_2$ ($\Gamma, \sigma, \pi$ can also be easily modified according to the removal of the aforementioned vertices from $X_1\cap Y_2$).
We define the  \emph{flaps} of the railed annulus $\mathcal{A}$ in $\mathfrak{R}$ as
${\sf flaps}_{\mathfrak{R}}(\mathcal{A}):=\{\sigma(c)\mid c\in C(\Gamma)\}.$
Given a flap $F\in {\sf flaps}_{\mathfrak{R}}(\mathcal{A}),$ we define its \emph{base}
as $\partial F:=V(F)\cap \pi(N(\Gamma)).$

\subsection{Influence of cycles in flat walls and flat railed annuli}

Let $G$ be a graph and let $(W,\frR)$ be either a flatness pair or a railed annulus flatness pair of $G$.

A  cell $c$ of ${\frR}$ is \emph{untidy} if  $\pi(\tilde{c})$ contains a vertex
$x\in V(W)$ such that two of the edges in $E(W)$ that are incident to $x$ are edges of $\sigma(c).$ Notice that if $c$ is untidy then  $|\tilde{c}|=3.$
A cell $c$ of $\frR$ is \emph{tidy} if it is not untidy.
The notion of tidy/untidy cell as well as the notions that we present in the rest of this subsection have been introduced in~\cite{SauST21amor}.

\paragraph{Cell classification.}
Given a graph $G$ and a set $X\subseteq V(G),$ we denote by $\partial_G (X)$ the set of vertices in $X$ that are adjacent to vertices of $G\setminus X.$

Given a cycle $C$ of ${\sf Compass}_{\mathfrak{R}}(W),$ we say that
$C$ is \emph{$\mathfrak{R}$-normal} if it is {\sl not} a subgraph of a flap $F\in {\sf flaps}_{\mathfrak{R}}(W).$
Given an $\mathfrak{R}$-normal cycle $C$ of ${\sf Compass}_{\mathfrak{R}}(W),$
we call a cell $c$ of $\mathfrak{R}$ \emph{$C$-perimetric} if
$\sigma(c)$ contains some edge of $C.$
Since every $C$-perimetric cell $c$ contains some edge of $C$ and $|\partial\sigma(c)|\leq 3,$ we observe the following.
\begin{observation}\label{label_grundfalscher}
	%\giannos{We will use this in the proof of \autoref{label_overestimation}.}
	For every pair $(C,C')$ of $\frR$-normal cycles of ${\sf Compass}_{\frR} (W)$ such that $V(C)\cap V(C')=\emptyset,$ there is no cell of $\frR$ that is both $C$-perimetric and $C'$-perimetric.
\end{observation}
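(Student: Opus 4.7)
The plan is to derive a contradiction from the bound $|\tilde c|\leq 3$ that is built into the definition of a $\Delta$-painting, exploiting the fact that any $\frR$-normal cycle must cross the boundary of every cell whose associated flap it visits. The starting point will be the observation that if a cell $c$ is $C$-perimetric, then $\sigma(c)$ contains some edge of $C$, so $C\cap \sigma(c)$ is a non-empty union of subpaths of $C$. I would then recall that $\frR$-normality of $C$ precisely means $C$ is \emph{not} contained in any single flap; in particular $C\not\subseteq \sigma(c)$, which forces each connected component of $C\cap\sigma(c)$ to be a proper subpath whose two endpoints lie in the intersection of $V(\sigma(c))$ with the rest of the graph, i.e., in $\partial\sigma(c)=V(\sigma(c))\cap \pi(\tilde c)$.

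From here the bookkeeping is short. First, I would conclude that $C$ contributes at least two distinct vertices of $\pi(\tilde c)$ (the two endpoints of any component of $C\cap \sigma(c)$). By the symmetric argument applied to $C'$, the cell contributes at least two distinct vertices of $\pi(\tilde c)$ that lie on $C'$. The hypothesis $V(C)\cap V(C')=\emptyset$ then guarantees that these two pairs are disjoint, so $|\pi(\tilde c)|\geq 4$. This contradicts $|\tilde c|\leq 3$ (hence $|\pi(\tilde c)|\leq 3$, since $\pi$ is injective on $N(\Gamma)$), completing the argument.

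The only subtle point, and the place where I would spend the most care, is the claim that each component of $C\cap \sigma(c)$ really has two \emph{distinct} endpoints in $\pi(\tilde c)$ rather than degenerating (for example, by being a closed subwalk, or by having an endpoint that is an interior vertex of $\sigma(c)$). I would justify this by invoking condition~(4) in the definition of an $\Omega$-rendition: any vertex of $\sigma(c)$ that is shared with another part of $G$ must lie in $\pi(\tilde c)$, so as $C$ enters and leaves $\sigma(c)$ it can only do so through vertices of $\pi(\tilde c)$; combined with $\frR$-normality this rules out the degenerate cases and delivers the two distinct endpoints per component. Once this is in place, the counting argument above is immediate and the observation follows.
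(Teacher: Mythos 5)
Your argument is correct and matches the paper's intended justification: the statement is presented there as an immediate observation from the facts that a $C$-perimetric cell contains an edge of the $\frR$-normal cycle $C$, hence at least two of the at most three vertices of $\pi(\tilde{c})$ lie on $C$ (and likewise two on $C'$), so disjointness of $V(C)$ and $V(C')$ would force at least four boundary vertices. Your careful justification of the entry/exit points through $\pi(\tilde{c})$ via condition~(4) of the rendition is precisely the detail the paper leaves implicit, so the two proofs coincide.
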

Notice that if $c$ is $C$-perimetric, then $\pi(\tilde{c})$ contains two points $p,q\in N(\Gamma)$
such that  $\pi(p)$ and $\pi(q)$ are vertices of $C$ where one,
say $P_{c}^{\rm in},$ of the two $(\pi(p),\pi(q))$-subpaths of $C$ is a subgraph of $\sigma(c)$ and the other,
denoted by $P_{c}^{\rm out},$  $(\pi(p),\pi(q))$-subpath contains at most one internal vertex of $\sigma(c),$
which should be the (unique) vertex $z$ in $\partial\sigma(c)\setminus\{\pi(p),\pi(q)\}.$
We pick a $(p,q)$-arc $A_{c}$ in $\hat{c}:={c}\cup\tilde{c}$ such that  $\pi^{-1}(z)\in A_{c}$ if and only if $P_{c}^{\rm in}$ contains
the vertex $z$ as an internal vertex.

We consider the circle  $K_{C}=\cupall\{A_{c}\mid \mbox{$c$ is a $C$-perimetric cell of $\mathfrak{R}$}\}$
and we denote by $\Delta_{C}$ the closed disk bounded by $K_{C}$ that is contained in  $\Delta$
 (in the case of a railed annulus flatness pair, $\Delta_{C}$ the closed annulus bounded by $K_{C}$ and $B_1$ that is contained in $\Delta$).
 A cell $c$ of $\mathfrak{R}$ is called \emph{$C$-internal} if $c\subseteq \Delta_{C}$
and is called \emph{$C$-external} if $\Delta_{C}\cap c=\emptyset.$
Notice that  the cells of $\mathfrak{R}$ are partitioned into  $C$-internal,  $C$-perimetric, and  $C$-external cells.

Let $c$ be a tidy $C$-perimetric cell of $\mathfrak{R}$ where $|\tilde{c}|=3.$ Notice that $c\setminus A_{c}$ has two arcwise-connected components and one of them is an open disk $D_{c}$ that is a subset of $\Delta_{C}.$
If the closure $\overline{D}_{c}$  of $D_{c}$ contains only two points of $\tilde{c}$ then we call the cell $c$ \emph{$C$-marginal}.
We refer the reader to \cite{SauST21amor} for figures illustrating the above notions.

%
%\begin{figure}[ht]
%	\hspace{-4cm}\scalebox{1.42}{
%	\input{figures/oplok_more_new_bis}}
%	\caption{This picture is taken from~\cite{SauST21amor}.
%	It depicts a flat wall $W$ in a graph $G,$ the painting of a rendition $\mathfrak{R}$ certifying its flatness, a subwall $W'$ of $W,$ of height three, which is
%		$\mathfrak{R}$-normal, and the $\mathfrak{R}$-flaps of $W,$ that correspond to  either $W'$-perimetric (depicted in grey) or $W'$-internal cells (depicted in green).
%		The circle $K_{W'}$ is the fat orange cycle. The $W'$-marginal cells are depicted in light grey and the untidy cells are those with dashed boundary.
%	}
%	\label{label_exhalaciones}
%\end{figure}

\paragraph{Influence.}
For every $\mathfrak{R}$-normal cycle $C$ of ${\sf Compass}_{\mathfrak{R}}(W)$ we define the set
$${\sf influence}_{\mathfrak{R}}(C)=\{\sigma(c)\mid \mbox{$c$ is a cell of $\mathfrak{R}$ that is not $C$-external}\}.$$

We conclude this subsection by presenting a corollary of~\cite[Lemma 12]{SauST21kapiI}.

\begin{proposition}\label{label_stereotypical}
There exists a function $\newfun{label_internalization}: \mathbb{N}^3 \to \mathbb{N}$ such that if $z\in\mathbb{N}_{\geq 1},$ $x\in\mathbb{N}_{\geq 3}$ is an odd integer, $G$ is a graph, and $(W,\frR)$ is a flatness pair of $G$ of height at least $\funref{label_internalization}(z,x),$ then
there is a collection ${\cal W}=\{W_1, \ldots, W_z\}$ of $x$-subwalls of $W$ such that
\begin{itemize}
\item for every $i \in [z],$ $\cupall{\sf influence}_{\frR}(D(W_i))$ is a subgraph of ${\sf Compass}_{\frR}(W)$ and
\item for every $i,j\in[z],$ with $i\neq j,$ $V(\cupall{\sf influence}_\frR (D(W_i)))$ and $V(\cupall{\sf influence}_\frR (D(W_j)))$ are pairwise disjoint.
\end{itemize}
Moreover, $\funref{label_internalization}(z,x,p)= {\cal O}(\sqrt{z}\cdot x)$ and ${\cal W}$ can be constructed in linear time.
\end{proposition}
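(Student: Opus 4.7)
The plan is to obtain \autoref{label_stereotypical} as a direct corollary of \cite[Lemma 12]{SauST21kapiI}, so the proof amounts to translating the statement of that lemma into the current notation and checking that the two bulleted conclusions follow. My first step would be to recall that Lemma 12 of \cite{SauST21kapiI} takes as input a flatness pair $(W,\frR)$ whose height is at least some function $g(z,x,p)$ of three parameters: the number $z$ of desired subwalls, their common height $x$, and a ``depth'' parameter $p$ measuring how far from $D(W)$ the chosen subwalls must sit. Its output is a collection $\{W_1,\dots,W_z\}$ of $x$-subwalls carved out of disjoint ``tiles'' of $W$, separated from each other (and from $D(W)$) by at least $p$ layers of $W$. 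I would invoke it with $p$ equal to a small absolute constant, e.g.\ $p:=2$, which is all one needs for the bookkeeping below, and define $\funref{label_internalization}(z,x):=g(z,x,2)$. The asymptotic bound $\mathcal{O}(\sqrt{z}\cdot x)$ is already built into Lemma 12 and reflects the fact that one can pack $\lceil\sqrt{z}\rceil^{2}$ equal-height subwalls in a roughly $\sqrt{z}\times\sqrt{z}$ grid pattern, with constant-thickness buffers, inside a wall of height $\mathcal{O}(\sqrt{z}\cdot x)$.

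Next, I would verify the first bullet: that $\cupall{\sf influence}_{\frR}(D(W_{i}))$ sits inside ${\sf Compass}_{\frR}(W)$ for every $i\in[z]$. Because $W_{i}$ is separated from $D(W)$ by at least one full layer, the perimeter $D(W_{i})$ is an $\frR$-normal cycle of ${\sf Compass}_{\frR}(W)$, and the closed disk $\Delta_{D(W_{i})}$ bounded by $K_{D(W_{i})}$ in $\Delta$ lies strictly inside the region painted by $\frR$; every cell that is not $D(W_{i})$-external is therefore a cell of $\frR$, and its $\sigma$-image is a subgraph of ${\sf Compass}_{\frR}(W)$.

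The key step, and the one I expect to be the main obstacle, is the second bullet: pairwise vertex-disjointness of $V(\cupall{\sf influence}_{\frR}(D(W_{i})))$ and $V(\cupall{\sf influence}_{\frR}(D(W_{j})))$ for distinct $i,j$. Disjointness of the tiles only guarantees that the two families of $D(W_{i})$- and $D(W_{j})$-non-external cells are disjoint as subsets of $C(\Gamma)$, which is an immediate consequence of \autoref{label_grundfalscher} once the perimeters $D(W_{i})$ and $D(W_{j})$ are vertex-disjoint. However, two distinct cells $c,c'$ of $\frR$ may still share a vertex of $G$, namely a vertex in $\pi(\tilde c)\cap\pi(\tilde{c'})$ sitting on their common boundary in $\Gamma$. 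To rule this out, I would use the buffer-layer separation enforced by Lemma 12: any shared vertex would force a cell to have its base $\partial\sigma(c)$ spread across two buffer layers of $W$, which is impossible since $|\partial\sigma(c)|\leq 3$ and bases of cells are confined to the $\pi$-image of the nails $N(\Gamma)$, the latter respecting the planar layout of the layers of $W$. A short argument, spelled out in the proof of \cite[Lemma 12]{SauST21kapiI}, then yields that no vertex of $G$ is shared between the two influences.

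Finally, for the linear-time construction: the tiles and their perimeters can be read off from the explicit combinatorial description of $W$ (the $3$-branch vertices and the horizontal/vertical paths), and for each chosen $W_{i}$ the influence $\cupall{\sf influence}_{\frR}(D(W_{i}))$ is computed by sweeping through the cells of $\frR$ that meet $D(W_{i})$, which takes time linear in $|{\sf Compass}_{\frR}(W)|$. Hence the whole procedure runs in linear time, concluding the proof.
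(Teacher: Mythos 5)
Your proposal is correct and follows essentially the same route as the paper: \autoref{label_stereotypical} is presented there as a direct corollary of \cite[Lemma 12]{SauST21kapiI}, whose statement already contains the two bulleted conclusions, the $\mathcal{O}(\sqrt{z}\cdot x)$ bound (with the third parameter instantiated as you do), and the linear-time constructibility. Your additional paragraphs re-verifying the two bullets are therefore superfluous rather than wrong, since those properties are asserted verbatim by the cited lemma and need not be re-derived.
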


\subsection{Regular flatness pairs and tilts}
Let $(W,\mathfrak{R})$ be a flatness pairs of a graph $G$.
A wall $W'$  of ${\sf Compass}_{\mathfrak{R}}(W)$  is \emph{$\mathfrak{R}$-normal} if $D(W')$ is $\mathfrak{R}$-normal.
Notice that every wall of $W$ (and hence every subwall of $W$) is an $\mathfrak{R}$-normal wall of ${\sf Compass}_{\mathfrak{R}}(W).$ We denote by $\mathcal{S}_{\mathfrak{R}}(W)$ the set of all $\mathfrak{R}$-normal walls of ${\sf Compass}_{\mathfrak{R}}(W).$ Given a wall $W'\in \mathcal{S}_{\mathfrak{R}}(W)$ and a cell $c$ of $\mathfrak{R},$
we say that $c$ is \emph{$W'$-perimetric/internal/external/marginal} if $c$ is  $D(W')$-perimetric/internal/external/marginal, respectively.
% (see  \autoref{label_impersonates} for an example).
We also use $K_{W'},$ $\Delta_{W'},$ ${\sf influence}_{\mathfrak{R}}(W')$ as shortcuts
for $K_{D(W')},$ $\Delta_{D(W')},$ ${\sf influence}_{\mathfrak{R}}(D(W')),$ respectively.

\paragraph{Regular flatness pairs.}
%\ig{remove the first sentence} Let $(W,\mathfrak{R})$ be a flatness pair of a graph $G.$
We call a  flatness pair $(W,\mathfrak{R})$ of a graph $G$ \emph{regular}
if none of its cells is $W$-external, $W$-marginal, or untidy.

\paragraph{Tilts of flatness pairs.}
Let $(W,\mathfrak{R})$ and $(\tilde{W}',\tilde{\mathfrak{R}}')$  be two flatness pairs of a graph $G$
and let $W'\in \mathcal{S}_{\mathfrak{R}}(W).$
We assume that ${\mathfrak{R}}=(X,Y,P,C,\Gamma,\sigma,\pi)$
and $\tilde{\mathfrak{R}}'=(X',Y',P',C',\Gamma',\sigma',\pi').$
We say that   $(\tilde{W}',\tilde{\mathfrak{R}}')$   is a \emph{$W'$-tilt}
of $(W,\mathfrak{R})$ if
\begin{itemize}
	\item $\tilde{\mathfrak{R}}'$ does not have $\tilde{W}'$-external cells,
	\item  $\tilde{W}'$ is a tilt of $W',$
	\item  the set of $\tilde{W}'$-internal  cells of  $\tilde{\mathfrak{R}}'$ is the same as the set of $W'$-internal
	      cells of ${\mathfrak{R}}$ and their images via $\sigma'$ and ${\sigma}$ are also the same,
	\item ${\sf Compass}_{\tilde{\mathfrak{R}}'}(\tilde{W}')$ is a subgraph of $\cupall{\sf influence}_{{\mathfrak{R}}}(W'),$ and
	\item if $c$ is a cell in $C(\Gamma') \setminus C(\Gamma),$ then $|\tilde{c}| \leq 2.$
\end{itemize}

The next observation follows from the third item above and the fact that the cells corresponding to flaps
containing a central vertex of $W'$ are all internal (recall that the height of a wall is always at least three).

\begin{observation}\label{label_surreptitiously}
	Let $(W,\frR)$ be a flatness pair of a graph $G$ and $W'\in\mathcal{S}_{\frR}(W).$
	For every $W'$-tilt $(\tilde{W}',\tilde{\frR}')$ of $(W,\frR),$ the central vertices of $W'$ belong to the vertex set of ${\sf Compass}_{\tilde{\frR}'}(\tilde{W}').$
\end{observation}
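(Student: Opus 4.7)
The plan is to unpack the definitions of ``central vertex'' and ``$W'$-tilt'' and trace through what the third bullet of the tilt definition forces, using the hint the authors themselves give (cells of flaps containing central vertices are $W'$-internal).

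First I would fix a central vertex $v$ of $W'$ and locate a cell of $\mathfrak{R}$ whose image contains $v$. Since $G[Y] = {\sf Compass}_\mathfrak{R}(W) = \bigcup_{c \in C(\Gamma)} \sigma(c)$ and $v \in V(W') \subseteq V({\sf Compass}_\mathfrak{R}(W))$, there is at least one cell $c$ of $\mathfrak{R}$ with $v \in V(\sigma(c))$, i.e., $v$ lies in some flap $F = \sigma(c) \in {\sf flaps}_\mathfrak{R}(W)$.

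Next I would argue that every such cell $c$ is $W'$-internal (that is, not $W'$-external, not $W'$-perimetric, and in fact not even $W'$-marginal is needed, only internal in the sense of the tilt definition). This is the heart of the proof and also what the parenthetical remark above the statement asserts. Because $v$ is a central vertex of $W'$, it is a $3$-branch vertex that avoids every layer of $W'$; since $W'$ has height at least $3$, it has a perimeter $D(W')$ and at least one inner layer separating $v$ from $D(W')$ in $W'$. Consequently no flap that meets $v$ can also meet $D(W')$ in an edge (otherwise, through the flap, we could reach $v$ without crossing the disk bounded by $K_{W'}$), so $c$ is not $W'$-perimetric; and $v$ lies strictly inside the disk $\Delta_{W'}$ since $v$ is interior to $W'$, so $c$ cannot be $W'$-external either. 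Hence $c$ must be a $W'$-internal cell of $\mathfrak{R}$.

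Finally I would appeal to the third bullet of the definition of a $W'$-tilt: the set of $\tilde{W}'$-internal cells of $\tilde{\mathfrak{R}}'$ coincides with the set of $W'$-internal cells of $\mathfrak{R}$, and their images via $\sigma'$ and $\sigma$ agree. Therefore $c$ (or a cell of $\tilde{\mathfrak{R}}'$ identified with it) also belongs to $C(\Gamma')$ and $\sigma'(c) = \sigma(c) \ni v$. Since $\sigma'(c)$ is a subgraph of ${\sf Compass}_{\tilde{\mathfrak{R}}'}(\tilde{W}')$ by definition of the compass, we conclude $v \in V({\sf Compass}_{\tilde{\mathfrak{R}}'}(\tilde{W}'))$, which is exactly the claim.

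The only non-routine step will be the middle paragraph, namely verifying cleanly that no cell meeting a central vertex of $W'$ can be $W'$-perimetric or $W'$-external. The ``external'' part is essentially immediate once one sets up the disk $\Delta_{W'}$ correctly, but the ``perimetric'' part needs the height $\geq 3$ assumption so that the inner layer of $W'$ insulates $v$ from $D(W')$ within $W'$; this is the geometric content that makes the authors' parenthetical remark work and is where I would spend most of the care.
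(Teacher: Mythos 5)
Your skeleton matches the paper's own one-sentence justification (the third item of the tilt definition plus the assertion that every cell whose flap contains a central vertex of $W'$ is $W'$-internal), but the mechanism you propose for the step you yourself call the heart of the proof does not work. For a wall of height exactly $3$ --- the minimum allowed, and precisely the case the parenthetical ``height at least three'' must cover --- the layers of an $r$-wall are defined only for $i=1,\ldots,(r-1)/2$, so the unique layer of $W'$ is its perimeter: the ``inner layer'' \emph{is} $D(W')$, there is no layer strictly between a central vertex $v$ and $D(W')$, and in fact each central vertex of a $3$-wall is joined to vertices of $D(W')$ by (possibly subdivided) wall edges. Hence your premise ``at least one inner layer separating $v$ from $D(W')$'' is false in the base case, and the inference ``no flap meeting $v$ contains an edge of $D(W')$'' is left unsupported. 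Even for larger height, ruling out a $W'$-perimetric flap through $v$ is not a separation statement inside $W'$: a flap is an arbitrary subgraph of the compass attached to the rest of $G[Y]$ through at most three base vertices and drawn inside a single cell, so one must argue about how $D(W')$ (and the layers) can cross those $\le 3$ base vertices in the rendition, not merely exhibit an insulating cycle of the wall; your ``external'' half likewise leans on an unformalized picture of where $v$ sits relative to $\Delta_{W'}$. Since the paper states the internal-cells fact without proof, your write-up would have to supply exactly this missing rendition argument, and the one you sketch would fail at $r=3$.

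If you want a complete proof, there is a short route that bypasses the cell classification entirely: the central vertices of $W'$ are branch vertices not on $D(W')$, hence they are vertices of the interior of $W'$; by the definition of a tilt, $\tilde{W}'$ has the same interior as $W'$, so the central vertices are vertices of $\tilde{W}'$; and since $(\tilde{W}',\tilde{\frR}')$ is a flatness pair, $V(\tilde{W}')$ is contained in the $Y$-side of its separation, i.e.\ in $V({\sf Compass}_{\tilde{\frR}'}(\tilde{W}'))$. This yields the observation directly, without having to show that the cells meeting the central vertices are $W'$-internal.
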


Also, given a regular flatness pair $(W,\frR)$ of a graph $G$ and a $W'\in \mathcal{S}_{\mathfrak{R}}(W),$
for every $W'$-tilt $(\tilde{W}', \tilde{\frR}')$ of $(W,\mathfrak{R}),$ by definition, none of its cells is $\tilde{W}'$-external, $\tilde{W}'$-marginal, or untidy -- thus, $(\tilde{W}', \tilde{\frR}')$ is regular.
Therefore, regularity of a flatness pair is a property that its tilts ``inherit''.

\begin{observation}\label{label_expressionism}
	If $(W,\mathfrak{R})$ is a regular flatness pair of a graph $G,$ then for every $W'\in \mathcal{S}_{\mathfrak{R}}(W),$ every $W'$-tilt of $(W,\mathfrak{R})$ is also regular.
\end{observation}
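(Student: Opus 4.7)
The plan is to verify directly the three conditions that define regularity for an arbitrary $W'$-tilt $(\tilde{W}',\tilde{\mathfrak{R}}')$ of $(W,\mathfrak{R})$: namely, that no cell of $\tilde{\mathfrak{R}}'$ is $\tilde{W}'$-external, that no cell is $\tilde{W}'$-marginal, and that no cell is untidy. The first condition is completely free, since it is stated verbatim as the first bullet in the definition of a $W'$-tilt. So the real work is to rule out untidy and $\tilde{W}'$-marginal cells.

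To do this, I would split the cells of $\tilde{\mathfrak{R}}'$ into ``new'' ones (those in $C(\Gamma')\setminus C(\Gamma)$) and ``old'' ones (those in $C(\Gamma')\cap C(\Gamma)$), and treat the two families separately. For the new cells, the last bullet in the definition of a $W'$-tilt forces $|\tilde{c}|\le 2$; since both untidiness and $\tilde{W}'$-marginality are explicitly notions that require $|\tilde{c}|=3$ (this is noted in the definitions of untidy and marginal cells), no new cell can be untidy or marginal, and this case collapses.

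For the old cells, I would lean on the second and fourth bullets of the tilt definition. The fourth bullet places ${\sf Compass}_{\tilde{\mathfrak{R}}'}(\tilde{W}')$ inside $\cupall{\sf influence}_{\mathfrak{R}}(W')$, and the second bullet states that the $\tilde{W}'$-internal cells of $\tilde{\mathfrak{R}}'$ coincide, together with their $\sigma'$-images, with the $W'$-internal cells of $\mathfrak{R}$ (and hence with their $\sigma$-images). Consequently, every old cell of $\tilde{\mathfrak{R}}'$ is either a $W'$-internal cell of $\mathfrak{R}$ (with identical image) or a $W'$-perimetric cell of $\mathfrak{R}$. Using that $\tilde{W}'$ is a tilt of $W'$ and therefore shares its interior, I would then translate the would-be untidiness or marginality of such a cell in $(\tilde{W}',\tilde{\mathfrak{R}}')$ into the same property of a cell of $\mathfrak{R}$ with respect to $W$ (or more precisely $W'$); this is ruled out because $(W,\mathfrak{R})$ is regular by hypothesis, so no cell of $\mathfrak{R}$ is untidy, $W$-marginal or $W$-external, and the matching supplied by the second bullet of the tilt definition transports these facts to $(\tilde{W}',\tilde{\mathfrak{R}}')$.

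The only genuinely delicate point, and the main obstacle, is the bookkeeping in the last step: one must verify carefully that the combinatorial notions of untidiness and marginality, which are phrased in terms of the vertex sets $\pi'(\tilde{c})$ and the edge sets of $\sigma'(c)$ against the wall $\tilde{W}'$, behave as expected under the identification of old cells supplied by the tilt definition. I expect this to reduce to a routine unfolding of the definitions once one writes down explicitly the correspondence $c \mapsto c$ (with $\sigma'(c)=\sigma(c)$) for $\tilde{W}'$-internal cells and keeps track of how the perimeters of $W'$ and $\tilde{W}'$ relate. No combinatorial inputs beyond the definitions just rehearsed are needed, which is exactly why the paragraph preceding the observation claims that the statement holds ``by definition''.
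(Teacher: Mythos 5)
Your proof is correct and follows essentially the same route as the paper: the paper's entire justification is the sentence immediately preceding the observation, which asserts that, by the definition of a $W'$-tilt, no cell of $(\tilde{W}',\tilde{\mathfrak{R}}')$ is $\tilde{W}'$-external, $\tilde{W}'$-marginal, or untidy, and your case split (external cells excluded by the first tilt condition, new cells by the $|\tilde{c}|\le 2$ condition, old cells inherited from the regularity of $(W,\mathfrak{R})$) is exactly the intended unfolding of that remark. If anything, you supply more detail than the paper, which treats the statement as immediate from the definitions.
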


We next present one of the two main results of~\cite{SauST21amor} (see~\cite[Theorem 5]{SauST21amor}).

\begin{proposition}
\label{label_proporcionada}
There exists an algorithm that given a graph $G,$ a flatness pair $({W},{\mathfrak{R}})$ of $G,$ and a wall $W'\in \mathcal{S}_{\mathfrak{R}}(W),$ outputs  a  $W'$-tilt of $({W},{\mathfrak{R}})$ in  time $\mathcal{O}(n+m).$
\end{proposition}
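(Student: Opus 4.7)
The plan is to exploit the local structure of the painting $\Gamma$ of $\mathfrak{R}$ around $W'$ and perform a linear-time surgery that keeps the interior of $W'$ (together with all its $W'$-internal cells) intact, modifies the collar of $W'$-perimetric cells so that the ``outside'' is discarded, and finally reads off the tilt wall $\tilde{W}'$ from the choice of arcs $A_c$ made in the preceding subsection.

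First, in a single pass over $C(\Gamma)$ I would classify each cell as $W'$-internal, $W'$-perimetric, or $W'$-external using the data of $\sigma$ restricted to the edges of $D(W')$; this takes $\mathcal{O}(n+m)$ time since the painting and its labels are encoded combinatorially (a planar hypergraph of total size $\mathcal{O}(|V(G)|)$) and $D(W')\subseteq V(W)$ can be touched only a constant number of times per cell. In the same pass I would compute, for every $W'$-perimetric cell $c$, the two distinguished points $p,q\in\tilde c$ together with the in-arc subpath $P_c^{\mathrm{in}}\subseteq\sigma(c)$ and the out-arc subpath $P_c^{\mathrm{out}}$, plus the selected arc $A_c\subseteq\hat c$ as prescribed in \autoref{label_exceptionalness}. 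The concatenation of these arcs is precisely the circle $K_{W'}$ that bounds $\Delta_{W'}$.

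The new painting $\Gamma'$ is then obtained by keeping every $W'$-internal cell unchanged, discarding every $W'$-external cell, and replacing each perimetric cell $c$ by the appropriate ``collar piece'' of $\Delta_{W'}\cap\hat c$: when $|\tilde c|\le 2$ the cell $c$ is simply restricted to the part lying in $\Delta_{W'}$, and when $|\tilde c|=3$ we split it along $A_c$ into either one or two new cells, each of which has at most $2$ points on its boundary in $N(\Gamma')$ (the third boundary point $z\in\partial\sigma(c)\setminus\{\pi(p),\pi(q)\}$ is \emph{not} placed in $N(\Gamma')$, which is legitimate precisely because the new cells contain at most two vertices on their frontier). The maps $\sigma'$ and $\pi'$ are defined on the new cells by restricting $\sigma$ and $\pi$ to the edges and vertices that physically lie in $\Delta_{W'}$; the separations $(X',Y',P',C')$ are then read off directly by taking $Y'$ to be the union of $V(\sigma'(c'))$ over all new cells and $X'$ to be its complement in $V(G)$ together with the new pegs. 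The wall $\tilde{W}'$ is defined as $W'$ with each perimeter subpath going through a perimetric cell $c$ replaced by the path $P_c^{\mathrm{in}}$; by construction $\tilde{W}'$ and $W'$ have identical interiors (only the perimeter may be rerouted), so $\tilde{W}'$ is a tilt of $W'$.

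Verifying the five defining conditions of a $W'$-tilt is then a direct bookkeeping check on the construction: there are no $\tilde{W}'$-external cells because every discarded cell was $W'$-external and $\Delta_{\tilde W'}=\Delta_{W'}$ in $\Gamma'$; the $W'$-internal cells of $\mathfrak{R}$ and the $\tilde W'$-internal cells of $\tilde{\mathfrak{R}}'$ coincide with the same images under $\sigma$; every vertex of ${\sf Compass}_{\tilde{\mathfrak{R}}'}(\tilde W')$ belongs to some $\sigma(c)$ with $c$ not $W'$-external, hence to $\cupall{\sf influence}_{\mathfrak{R}}(W')$; and new cells obtained by splitting perimetric cells carry at most two points of $N(\Gamma')$ by design. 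All steps (classification, arc-selection, cell-splitting, rewiring the perimeter) touch each cell and each incident edge/vertex of $G$ a bounded number of times, so the total running time is $\mathcal{O}(n+m)$, as required.

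The main obstacle I anticipate is \emph{not} the high-level idea but the careful handling of the cells that are perimetric \emph{and} untidy (i.e.\ $|\tilde c|=3$ with a $W$-branch vertex in $\pi(\tilde c)$ having two incident wall edges inside $\sigma(c)$), where the split along $A_c$ must simultaneously place the right edges of $P_c^{\mathrm{in}}$ into the new cells and prevent spurious third boundary points from being created; keeping $|\tilde{c'}|\le 2$ for every new cell $c'$ while also maintaining that $\tilde{W}'$ is an honest subdivision of an elementary wall (so that the pegs and corners $P',C'$ can be exhibited) is where the bulk of the case analysis lives, and is the step most likely to need a careful, picture-driven argument to fully justify.
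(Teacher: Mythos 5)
First, note that the paper does not prove this proposition at all: it is imported verbatim from the flat-wall framework paper of Sau, Stamoulis, and Thilikos (\autoref{label_proporcionada} is stated as Theorem~5 of~\cite{SauST21amor}), so there is no in-paper proof to compare against; your attempt has to be judged as a reconstruction of that external argument.

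Your high-level strategy (classify cells as $W'$-internal/perimetric/external, keep the internal cells, crop along $K_{W'}$, and only allow new cells with at most two boundary points) is indeed the right spirit, but there is a genuine gap exactly where the real work lies. A flap $\sigma(c)$ is an abstract, possibly non-planar graph whose non-boundary vertices have no position in the disk; only the at most three vertices $\pi(\tilde c)$ are drawn. Hence ``restricting $\sigma$ and $\pi$ to the edges and vertices that physically lie in $\Delta_{W'}$'' is not defined for a $W'$-perimetric cell: you must \emph{decide}, combinatorially, which part of $\sigma(c)$ goes into the new compass and which is expelled, and this decision interacts with the new separation and the new perimeter. Concretely, if the third boundary vertex $z$ of a perimetric flap does not lie on $P_c^{\rm in}$ but parts of the flap reachable from $z$ attach to internal (non-ground) vertices of $P_c^{\rm in}$, then either those attachment vertices must become boundary vertices of the new separation (forcing them to be boundary nodes of $\Gamma'$ and hence to lie on $D(\tilde W')$, which constrains how the flap piece can be packaged into cells with $|\tilde c'|\le 2$ without assuming planarity of the flap), or the perimeter of $\tilde W'$ must be rerouted through $z$ or around the offending piece. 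This is precisely why the tilt $\tilde W'$ can genuinely differ from $W'$; your definition of $\tilde W'$ (``replace each perimeter subpath through a perimetric cell by $P_c^{\rm in}$'') is vacuous, since $P_c^{\rm in}$ \emph{is} by definition the subpath of $D(W')$ inside $\sigma(c)$, so your construction outputs $W'$ itself and then silently assumes that $W'$ can always be certified flat with boundary on its own perimeter by ``restriction'' --- which is exactly the statement the case analysis has to establish and which fails without the rerouting/splitting argument. You correctly flag the untidy perimetric cells as the risky step, but the gap is broader: the distribution of each perimetric flap between inside and outside, the choice of the new perimeter, and the verification that the result is still a wall with a valid $\Omega'$-rendition are the content of the proof in~\cite{SauST21amor}, and they are not supplied here. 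The linear running time claim is plausible once that construction is specified, but it cannot be assessed before then.
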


We conclude this subsection with the Flat Wall theorem and, in particular, the version proved by Chuzhoy \cite{Chuzhoy15impr}, restated in our framework (see \cite[Proposition 7]{SauST21amor}).

\begin{proposition}\label{label_aldobrandesco}
	There exist two functions  $\newfun{@endurcissement}:\mathbb{N}\to \mathbb{N}$  and
	$\newfun{@connaissions}:\mathbb{N}\to \mathbb{N},$ where the images of $\funref{@endurcissement}$ are odd numbers, such that if $r \in \mathbb{N}_{\geq 3}$ is an odd integer, $t\in\mathbb{N}_{\geq 1},$
	$G$ is a graph that does not contain $K_t$ as a minor,  and  $W$ is an $\funref{@endurcissement}(t)\cdot r$-wall of $G,$
	then there is a set $A\subseteq V(G)$ with $|A|\leq \funref{@connaissions}(t)$
	and a flatness pair $(\tilde{W}',\tilde{\mathfrak{R}}')$ of $G\setminus A$ of height $r.$
	% \red{such that $\tilde{W}'$ is a tilt of a subwall $W'$ of $W$} \giannos{In the proofs, we never use the fact that $\tilde{W}'$ is a tilt of a subwall $W'$ of $W.$ So we can omit it!}.
	Moreover, $\funref{@endurcissement}(t)=\mathcal{O}(t^{2})$ and $\funref{@connaissions}(t)=t-5.$
\end{proposition}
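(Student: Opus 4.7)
The plan is to derive this statement as a reformulation of Chuzhoy's quantitative version of the Flat Wall Theorem~\cite{Chuzhoy15impr} inside the flatness-pair framework of~\autoref{label_exceptionalness}. First, I would apply Chuzhoy's theorem verbatim: given that $G$ excludes $K_t$ as a minor and that $W$ is an $\funref{@endurcissement}(t)\cdot r$-wall, Chuzhoy guarantees the existence of a set $A\subseteq V(G)$ with $|A|\le t-5$, together with an $r$-subwall $W^\star$ of $W$ in $G\setminus A$, a separation $(X,Y)$ of $G\setminus A$ with $V(W^\star)\subseteq Y$, and a drawing of $G[Y]$ in a closed disk $\Delta$ witnessing that $W^\star$ is flat in the classical sense of Robertson--Seymour. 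The bounds $\funref{@endurcissement}(t)=\mathcal{O}(t^2)$ and $\funref{@connaissions}(t)=t-5$ are exactly those supplied by Chuzhoy.

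Next, I would convert Chuzhoy's raw flatness witness into a certificate of the specific shape $\mathfrak{R}'=(X,Y,P,C,\Gamma,\sigma,\pi)$ required by the definition of a flatness pair. Concretely, the pegs $P$ and corners $C$ of $W^\star$ are read off from its subdivision structure, the painting $\Gamma=(U,N)$ is obtained by taking $N$ to be the image, under the disk embedding, of $X\cap Y$ together with the three-vertex ``junction'' sets on the boundaries of the atomic planar pieces of $G[Y]$, and $U$ the union of these pieces; the map $\pi$ sends each point of $N$ back to its preimage in $Y$, and $\sigma$ assigns to each cell the corresponding planar piece of $G[Y]$. I would verify conditions (1)--(5) of the definition of a disk rendition: (1), (2) and (4) follow from the planar decomposition being edge-disjoint and meeting only on the three-vertex junctions; (3) is immediate; and (5) is the statement that the points of $N(\Gamma)\cap\bd(\Delta)$ respect the cyclic ordering $\Omega$ induced by $D(W^\star)$.

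The only delicate points, and thus the main obstacles, are two bookkeeping issues. The first is ensuring that the cell sizes are at most three: Chuzhoy's decomposition yields precisely this bound, but one must be careful to split any larger ``atomic'' subgraphs that touch the boundary at more than three vertices into subcells of size at most three by adding auxiliary arcs in $\Delta$, without creating new adjacencies or violating the injectivity of $\pi$. The second is handling the connectivity cleanup of ${\sf Compass}_{\mathfrak{R}'}(W^\star)$: as already remarked after the definition of flatness pairs, one may absorb all components of $G[Y]$ not containing $W^\star$ into $X$, updating $\Gamma$, $\sigma$, $\pi$ accordingly, and this must be done in a way that preserves all five rendition axioms. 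Once these two points are settled, setting $\tilde{W}'=W^\star$ and $\tilde{\mathfrak{R}}'=\mathfrak{R}'$ gives the desired flatness pair of $G\setminus A$ of height $r$, completing the proof. The stated bounds on $\funref{@endurcissement}$ and $\funref{@connaissions}$ carry over unchanged since the reformulation affects only how the witness is packaged, not its quantitative parameters.
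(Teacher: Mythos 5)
Your proposal follows essentially the same route as the paper, which does not prove \autoref{label_aldobrandesco} at all but imports it as Chuzhoy's version of the Flat Wall Theorem~\cite{Chuzhoy15impr} restated in the flatness-pair framework of~\cite{SauST21amor} (Proposition~7 there); your sketch of applying Chuzhoy and then repackaging the witness as a $7$-tuple rendition with the stated bounds $\funref{@endurcissement}(t)=\mathcal{O}(t^2)$ and $\funref{@connaissions}(t)=t-5$ is exactly the content of that restatement. So the approach is correct and matches the paper's treatment; the bookkeeping you flag (cells of size at most three, compass cleanup) is precisely what~\cite{SauST21amor} carries out in detail.
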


\subsection{Flat walls with compasses of bounded treewidth}
\label{subsec_flatwalls_boundedtwcompass}

The following result was proved in~\cite[Theorem 8]{SauST21amor}.
It is a version of the Flat Wall theorem, originally proved in~\cite{RobertsonS95GMXIII}.
The proof in~\cite[Theorem 8]{SauST21amor} is strongly based on
the proof of an improved version of the Flat Wall theorem given by of Kawarabayashi, Thomas, and Wollan~\cite{KawarabayashiTW18anew} (see also~\cite{Chuzhoy15impr,GiannopoulouT13opti}).
\begin{proposition}\label{prop_flatwallbdtw}
	There is a function $\newfun{@resplandecientes}:\mathbb{N}\to \mathbb{N}$    and
	an algorithm that receives as  input  a graph $G,$ an odd integer $r\geq 3,$ and a  $t\in\mathbb{N}_{\geq 1},$ and  outputs, in time $2^{\mathcal{O}_{t}(r^2)}\cdot n,$ one of the following:
\begin{itemize}
\item a report  that $K_{t}$ is a minor of $G,$
\item a tree decomposition of $G$ of width at most $\funref{@resplandecientes}(t)\cdot r,$ or
\item a set $A\subseteq V(G),$  where $|A|\leq \funref{@connaissions}(t),$
a regular flatness pair $(W,\mathfrak{R})$ of $G\setminus A$ of height $r,$
and a tree decomposition of the $\mathfrak{R}$-compass of $W$
of width at most $\funref{@resplandecientes}(t)\cdot r.$
(Here $\funref{@connaissions}(t)$ is the function of \autoref{label_aldobrandesco} and
$\funref{@resplandecientes}(t)=2^{\mathcal{O}(t^2 \log t)}.$)
\end{itemize}
\end{proposition}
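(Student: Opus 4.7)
The plan is to combine an algorithmic version of the excluded grid theorem (which from a graph either produces a tree decomposition of small width or a large wall) with iterated applications of the Flat Wall Theorem (\autoref{label_aldobrandesco}) and the tilting algorithm (\autoref{label_proporcionada}). The compass-boundedness will be enforced by ``drilling down'' into the compass whenever its treewidth is still too large.

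First, on input $(G,r,t)$ I would invoke Chuzhoy's (or Perkovi\'c--Reed's) algorithmic excluded grid theorem with target wall-height $h := \funref{@endurcissement}(t)\cdot \Theta(r)$, running in time $2^{\mathcal{O}_t(r^2)}\cdot n$. If the output is a tree decomposition of $G$ of width $\mathcal{O}_t(r)$, I return the second item. Otherwise I obtain an $h$-wall $W$ of $G$ and apply \autoref{label_aldobrandesco} to $W$: either I certify $K_t \preceq_{\sf m} G$ (first item), or I obtain an apex set $A_0$ with $|A_0|\le \funref{@connaissions}(t)$ and a flatness pair $(W_0,\mathfrak{R}_0)$ of $G\setminus A_0$ of height at least $r$. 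Using \autoref{label_proporcionada}, I can pass to a tilt that absorbs any $W_0$-external or $W_0$-marginal cells and locally reroutes through any untidy cells, making the flatness pair \emph{regular} at the cost of losing only a constant factor in height; so without loss of generality $(W_0,\mathfrak{R}_0)$ is regular of height $r$.

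Second---the crux---I need to enforce that ${\sf Compass}_{\mathfrak{R}_0}(W_0)$ has treewidth at most $\funref{@resplandecientes}(t)\cdot r$. I re-run the algorithmic excluded grid theorem on the graph $G' := {\sf Compass}_{\mathfrak{R}_0}(W_0)$, with the same target height $h$. If it returns a tree decomposition of width $\funref{@resplandecientes}(t)\cdot r$, I output the third item with $A := A_0$, the flatness pair $(W_0,\mathfrak{R}_0)$, and this decomposition. Otherwise it returns an $h$-wall $W'$ strictly inside $G'$; I then apply \autoref{label_aldobrandesco} to $W'$ in $G\setminus A_0$, obtaining either a $K_t$-minor (done) or a new apex set $A_1$, $|A_1|\le \funref{@connaissions}(t)$, together with a flatness pair of $G\setminus(A_0\cup A_1)$ whose compass is strictly contained in the previous one. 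I then set $A_0 \leftarrow A_0\cup A_1$ and repeat.

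The main obstacle is bounding the number of such iterations and, consequently, the accumulated apex set. The termination argument---which is the technical heart of the proof and is the content of \cite{KawarabayashiTW18anew,SauST21amor}---asserts that in a $K_t$-minor-free graph, successive apex sets cannot all contribute genuinely disjoint ``new'' structure: after a bounded (in $t$) number of nested flat-wall extractions, either one of them reveals the $K_t$-minor by clique-sum/linked-structure arguments, or the current compass is forced to have treewidth $\mathcal{O}_t(r)$ because no further large wall fits inside the successively shrinking region. Hence the loop exits after $\mathcal{O}_t(1)$ rounds, the final apex set still has size $\mathcal{O}(\funref{@connaissions}(t))$ (re-absorbable into the bound $\funref{@connaissions}(t)$ by adjusting constants), and the total running time is $\mathcal{O}_t(1)\cdot 2^{\mathcal{O}_t(r^2)}\cdot n = 2^{\mathcal{O}_t(r^2)}\cdot n$, as required. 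Carrying out this nested bookkeeping, and ensuring that regularity is maintained across iterations via \autoref{label_expressionism} and \autoref{label_proporcionada}, is where most of the actual work lies.
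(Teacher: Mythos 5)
The paper does not actually prove this proposition: it is imported as \cite[Theorem 8]{SauST21amor} (whose proof follows Kawarabayashi--Thomas--Wollan \cite{KawarabayashiTW18anew}), so your sketch has to stand on its own, and as written it does not. Three concrete problems. (i) Apex bookkeeping: the statement demands $|A|\le \funref{@connaissions}(t)$, the \emph{same} function as in \autoref{label_aldobrandesco} (namely $t-5$); your loop accumulates $A_0\cup A_1\cup\cdots$, and ``re-absorbing into the bound by adjusting constants'' is not available, because the bound is fixed by the statement and each extra round can add another $\funref{@connaissions}(t)$ vertices. Any correct drill-down has to avoid accumulation (e.g.\ by discarding the previous apex set and re-deriving a single apex set at the end), which your scheme does not do. (ii) Regularity: \autoref{label_proporcionada} produces a $W'$-tilt, which by definition has no external cells, but tilting does not eliminate marginal or untidy cells of an arbitrary flatness pair; \autoref{label_expressionism} only says that tilts of a \emph{regular} pair remain regular. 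Obtaining a regular flatness pair in the first place is a separate, nontrivial step in \cite{SauST21amor} that your ``without loss of generality'' elides.

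(iii) Termination, containment, and time: the claim that the iteration stops after $\mathcal{O}_t(1)$ rounds is exactly the technical heart of the result, and you justify it only by citing the very works whose theorem you are re-proving; moreover the assertion that ``the new compass is strictly contained in the previous one'' is not automatic, since the flat wall extracted from a wall inside the old compass is flat in $G\setminus(A_0\cup A_1)$ and its compass is the $Y$-side of a \emph{new} separation, which can reach outside the old compass through the old boundary. Without (iii), neither the apex bound nor the $2^{\mathcal{O}_t(r^2)}\cdot n$ running time follows (even a linear number of rounds would already cost an extra factor of $n$). In short, the proposal is a reasonable outline of the strategy behind \cite{KawarabayashiTW18anew,SauST21amor}, but the steps it leaves to those references are precisely the ones that constitute the proof.
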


Given graphs $H$ and $G$,
we say that a subgraph $M$ of $G$ is a \emph{minor-model} of $H$ in $G$ if
there is a partition of the vertex set of $M$ to sets $V_1,\ldots, V_{|V(H)|}$ such that
for every $i\in[|V(H)|]$, $G[V_i]$ is connected and the graph obtained from $G$ after contracting the edges of each $G[V_i]$ is isomorphic to $H$.
Following the version of the Flat Wall theorem in~\cite{KawarabayashiTW18anew},
\autoref{prop_flatwallbdtw} can be modified so as when it reports that $K_t$ is a minor of $G$, it also outputs a minor-model of $K_t$ in $G$.

\subsection{Levelings and well-aligned flatness pairs}
\label{subsec_levelings}
Let $G$ be a graph and let $(W,\mathfrak{R})$ be either a flatness pair or a railed annulus flatness pair of $G.$
If  $(W,\mathfrak{R})$ is a flatness pair of $G$, let  $\mathfrak{R}=(X,Y,P,C,\Gamma,\sigma,\pi),$  where $(\Gamma,\sigma,\pi)$ is an $\Omega$-rendition of $G[Y]$ and $\Gamma=(U,N)$ is a $\Delta$-painting.
If  $(W,\mathfrak{R})$ is a railed annulus flatness pair of $G,$
let $\mathfrak{R}=(X_1,Y_1,X_2,Y_2,Z_1,Z_2,\Gamma,\sigma,\pi),$  where $(\Gamma,\sigma,\pi)$ is an $(\Omega_1,\Omega_2)$-rendition of $G[Y_1\cap X_2]$ and $\Gamma=(U,N)$ is a $\Delta$-painting, for some closed annulus $\Delta$.
In both cases, we define the \emph{ground set} of $W$ in ${\mathfrak{R}}$ to be the set ${\sf ground}_{\mathfrak{R}}(W):=\pi(N(\Gamma))$ and we refer to the vertices of this set as the \emph{ground vertices} of the $\mathfrak{R}$-compass of $W$ in $G.$
Notice  that ${\sf ground}_{\mathfrak{R}}(W)$ may  contain vertices
of ${\sf compass}_{\mathfrak{R}}(W)$ that are not necessarily vertices in $V(W).$

\paragraph{Levelings.}
We define  the $\mathfrak{R}$-\emph{leveling}  of $W$ in $G,$
denoted by ${W}_{\mathfrak{R}},$ as the bipartite graph
where  one part is the ground set of $W$ in $\mathfrak{R},$ the  other part is a set ${\sf vflaps}_{\mathfrak{R}}(W)=\{v_{F}\mid F\in {\sf flaps}_{\mathfrak{R}}(W)\}$ containing one new vertex $v_{F}$ for each flap  $F$ of $W$ in $\mathfrak{R},$
and, given  a pair $(x,F)\in {\sf ground}_{\mathfrak{R}}(W)\times {\sf flaps}_{\mathfrak{R}}(W),$   the set $\{x,v_F\}$ is an edge of ${W}_{\mathfrak{R}}$ if and only if
$x\in \partial F.$ We call the vertices of ${\sf ground}_{\mathfrak{R}}(W)$ (resp. ${\sf vflaps}_{\mathfrak{R}}(W)$) \emph{ground-vertices} (resp. \emph{flap-vertices}) of ${W}_{\mathfrak{R}}.$
%Again, keep in mind that ${W}_{\mathfrak{R}}$ may contain (many) vertices that are not in $W.$
Notice that the incidence graph of the plane hypergraph $(N(\Gamma),\{\tilde{c}\mid c\in C(\Gamma)\})$ is isomorphic to ${W}_{\mathfrak{R}}$
via an isomorphism that extends  $\pi$ and, moreover, bijectively corresponds cells to flap-vertices.
If  $(W,\mathfrak{R})$ is a flatness pair of $G$, this permits us to treat ${W}_{\mathfrak{R}}$ as a $D$-embedded graph, for some closed disk $D$, where  $\bd(D)\cap {W}_{\mathfrak{R}}$ is the set $X\cap Y.$
If  $(W,\mathfrak{R})$ is a railed annulus flatness pair of $G,$
we can treat $\mathcal{A}_{\mathfrak{R}}$ as a $\Delta$-embedded graph, for some closed annulus $\Delta$ with boundaries $B_1$ and $B_2$, where  $B_1\cap \mathcal{A}_{\mathfrak{R}}$ is the set $X_1\cap Y_1$ and $B_2\cap \mathcal{A}_{\mathfrak{R}}$ is the set $X_2\cap Y_2$.

\paragraph{Representations in flatness pairs and railed annulus flatness pairs.}
{We denote by $W^{\bullet}$ the  graph obtained from $W$ if we subdivide \emph{once} every
edge of $W$ that is {short} in ${\sf compass}_{\mathfrak{R}}(W).$}
The graph $W^\bullet$  is  a ``slightly richer variant'' of $W$  that is necessary for our definitions and  proofs, namely to be able to associate  every flap-vertex of  an appropriate subgraph of $W_{\mathfrak{R}}$ (that we will denote by $R_{W}$) with  a non-empty path of $W^\bullet,$ as we proceed to formalize.
We say that $(W,\mathfrak{R})$ is \emph{well-aligned} if the following holds:
\begin{quote}
	$W_{\mathfrak{R}}$ contains as a subgraph an $r$-wall $R_{W}$
	where ${D(R_{W})}=D({W}_{\mathfrak{R}})$ and $W^{\bullet}$  is isomorphic to some subdivision of  $R_{W}$
	via an isomorphism that maps each ground vertex to itself.
\end{quote}
Suppose now that the flatness pair $(W,\mathfrak{R})$ is well-aligned.
We call the wall  $R_{W}$ in the above condition  a \emph{representation} of $W$ in $W_{\mathfrak{R}}.$

\begin{proposition}[\cite{SauST21amor}]
	\label{prop_wellaligned}
	If a flatness pair $({W},\mathfrak{{R}})$  is regular, then it is also well-aligned.
	Moreover, there is an {$\mathcal{O}(n)$}-time algorithm that, given $G$ and such a $({W},\mathfrak{{R}}),$ outputs a representation $R_{W}$
	of $W$ in $W_{\mathfrak{R}}.$
\end{proposition}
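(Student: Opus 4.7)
The plan is to explicitly construct the representation $R_W$ by lifting the combinatorial structure of the elementary wall underlying $W$ inside the leveling $W_{\mathfrak{R}}$. I would first decompose $W$ into its $3$-branch vertices and the maximal $W$-subpaths joining consecutive branch vertices (each such subpath corresponding to a single edge of the elementary wall of which $W$ is a subdivision). Regularity of $(W,\mathfrak{R})$ -- that is, the absence of $W$-external, $W$-marginal, and untidy cells -- will be invoked at every step.

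First I would show that each $3$-branch vertex $v$ of $W$ is necessarily a ground vertex of $W_{\mathfrak{R}}$. Indeed, since $E(W)\subseteq E({\sf Compass}_{\mathfrak{R}}(W)) = \bigcup_{c}E(\sigma(c))$, each of the three edges of $W$ incident to $v$ lies in some $\sigma(c)$; if two of them were in the same $\sigma(c)$, then $c$ would be untidy, contradicting regularity. Hence the three incident edges lie in three distinct flaps, so $v$ belongs to the base of at least two flaps, which forces $v\in\pi(N(\Gamma))$.

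Next I would describe how to route each maximal subdivided edge. Let $P$ be a maximal subpath of $W$ whose endpoints $u,v$ are $3$-branch vertices and whose internal vertices have degree $2$ in $W$. Walk along $P$ from $u$ to $v$; each edge of $P$ sits in exactly one flap $F$ by the same untidiness argument applied at every internal vertex of $P$. Whenever $P$ crosses a flap $F$ from a ground vertex $x$ to a ground vertex $y$ (both in $\partial F$), I replace the portion of $P$ inside $F$ by the length-$2$ path $x,v_F,y$ of $W_{\mathfrak{R}}$. This produces, for each such $P$, a path in $W_{\mathfrak{R}}$ from $u$ to $v$ that alternates ground vertices and flap-vertices. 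The role of $W^{\bullet}$ is precisely to absorb the extra flap-vertex $v_F$ whenever the entire edge of $P$ traversing $F$ has no internal subdivision vertex (a ``short'' edge): subdividing each short edge of $W$ once makes $W^{\bullet}$ isomorphic, via an isomorphism fixing ground vertices, to a subdivision of the resulting subgraph $R_W$, while ``long'' edges of $W$ already contain enough degree-$2$ vertices along them to accommodate the inserted flap-vertices.

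It remains to verify that $R_W$ is an $r$-wall with $D(R_W)=D(W_{\mathfrak{R}})$. The fact that $R_W$ is a subdivision of the elementary $r$-wall follows from the previous paragraph together with the fact that $3$-branch vertices of $W$ map to vertices of degree $3$ in $R_W$ (the three routed paths at $v$ remain pairwise internally disjoint because they use three distinct flaps). For the perimeter identity I would argue, using that no cell of $\mathfrak{R}$ is $W$-external, that the curve traced by $D(W)$ inside the disk supporting $W_{\mathfrak{R}}$ encloses all cells, and that non-marginality of each $D(W)$-perimetric cell guarantees the routing of $D(W)$ through the corresponding flap-vertex stays on the boundary of the leveling. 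I expect this last verification to be the main technical obstacle: a careful case analysis of how $\sigma(c)\cap D(W)$ looks for a $D(W)$-perimetric cell $c$, ruling out -- thanks to non-marginality and tidiness -- the configurations in which the inserted length-$2$ detour through $v_F$ would either leave the boundary or create an unwanted chord. The algorithm is then immediate: enumerate the cells of $\mathfrak{R}$, for each flap $F$ identify the ground vertices in $\partial F\cap V(W)$ and the orientation of the piece of $W$ crossing $F$, emit the associated length-$2$ paths in $W_{\mathfrak{R}}$, and collect them into $R_W$ together with the natural map $W^{\bullet}\to R_W$. Since the total size of $\mathfrak{R}$ and of $W$ is $\mathcal{O}(n)$, this runs in $\mathcal{O}(n)$ time.
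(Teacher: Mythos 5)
Your overall plan—splitting $W$ into branch vertices and subdivided edges, replacing each flap-crossing segment by a length-two detour through the corresponding flap-vertex, and letting $W^{\bullet}$ absorb the inserted flap-vertices—is the right construction, and the tidiness arguments you use for degree-two vertices are sound. The genuine gap is your very first step: the claim that every $3$-branch vertex $v$ of $W$ is a ground vertex. Your justification is that if two $W$-edges at $v$ lay in the same $\sigma(c)$ then $c$ would be untidy; but untidiness is only triggered by a vertex of $W$ lying in $\pi(\tilde{c})$, i.e.\ in the base of the flap, and says nothing about vertices in the \emph{interior} of a flap. Worse, the claim itself is false under the stated hypotheses: take a cell $c$ whose flap $\sigma(c)$ is exactly the star formed by an internal branch vertex $v$ and its three $W$-neighbours, with $\partial\sigma(c)$ equal to those three neighbours, and place every other edge of the compass in its own $2$-boundaried cell. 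Each base vertex of $c$ carries a single $W$-edge of $\sigma(c)$, so no cell is untidy, and (choosing $v$ away from the perimeter) no cell is $W$-external or $W$-marginal; the flatness pair is regular, yet $v$ is not a ground vertex. Consequently your routing step breaks exactly there: a maximal subdivided edge ending at such a $v$ meets its first flap along a segment from a non-ground vertex to a base vertex, and the replacement rule ``ground vertex, $v_F$, ground vertex'' does not apply.

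The repair is already anticipated by the definition of a representation, which only requires ground vertices to be fixed by the isomorphism: a branch vertex hidden inside a flap $F$ must be represented by the flap-vertex $v_F$ itself. Tidiness then does the work you need, but for a different statement than the one you proved: the three branches leaving $v$ exit $F$ through three distinct base vertices (they are disjoint except at $v$), and no other piece of $W$ can enter $F$, since a second $W$-edge of $\sigma(c)$ at a base vertex would make $c$ untidy; hence $v_F$ gets degree exactly $3$ in $R_W$. The same tidiness/pigeonhole argument should also be stated explicitly to rule out two distinct segments of $W$ crossing one flap (otherwise an inserted flap-vertex could acquire degree $4$ and the result would not be a subdivision); you use it implicitly but never record it. Finally, the verification that $D(R_W)=D(W_{\mathfrak{R}})$, which you yourself flag as the main obstacle and only sketch via non-marginality, still needs to be carried out; as written, the proposal establishes neither that the constructed graph is a wall of the correct height nor the perimeter identity.
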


\paragraph{Well-aligned railed annulus flatness pairs.}
{We denote by $\mathcal{A}^{\bullet}$ the  graph obtained from $\mathcal{A}$ if we subdivide \emph{once} every
	edge in $E(\mathcal{A})$ that is {short} in ${\sf compass}_{\mathfrak{R}}(\mathcal{A}).$}
The graph $\mathcal{A}^\bullet$  is  a ``slightly richer variant'' of $\mathcal{A}$  that is necessary for our definitions and  proofs, namely to be able to associate  every flap-vertex of  an appropriate subgraph of $\mathcal{A}_{\mathfrak{R}}$ (that we will denote by $R_{\mathcal{A}}$) with  a non-empty path of $\mathcal{A}^\bullet,$ as we proceed to formalize.
We say that $(\mathcal{A},\mathfrak{R})$ is \emph{well-aligned} if the following holds:
\begin{quote}
	$\mathcal{A}_{\mathfrak{R}}$ contains as a subgraph an $(r,q)$-railed annulus $R_\mathcal{A}$
	where the first (resp. last) cycle of $\mathcal{A}_{W}$ is the same as the first (resp. last) cycle of $\mathcal{A}_{\mathfrak{R}}$ and $\mathcal{A}^{\bullet}$  is isomorphic to some subdivision of  $R_\mathcal{A}$
	via an isomorphism that maps each ground vertex to itself.
\end{quote}
Suppose now that the railed annulus flatness pair $(\mathcal{A},\mathfrak{R})$ is well-aligned.
We call the wall  $R_\mathcal{A}$ in the above condition  a \emph{representation} of $\mathcal{A}$ in $\mathcal{A}_{\mathfrak{R}}.$ Note that, as $R_\mathcal{A}$ is a subgraph of $\mathcal{A}_{\mathfrak{R}},$ it is bipartite as well.
The above property gives us a way to represent a flat wall by a wall of its leveling
in a way that ground vertices are not altered.

Notice that both $\mathcal{A}_{\mathfrak{R}}$ and its subgraph $R_\mathcal{A}$ can be seen as $\Delta$-embedded graphs where $B_1\cap \mathcal{A}_{\mathfrak{R}}=B_1\cap R_\mathcal{A}$ and these are both subsets of the vertex set of the first cycle of $\mathcal{A}_{\mathfrak{R}}$ (resp. of $R_\mathcal{A}$) and $B_2\cap \mathcal{A}_{\mathfrak{R}}=B_2\cap R_\mathcal{A}$
and these are both subsets of the vertex set of the last cycle of $\mathcal{A}_{\mathfrak{R}}$ (resp. of $R_\mathcal{A}$).
This establishes a bijection $\delta$ from the set of cycles of $\mathcal{A}$  to the set of cycles of $R_\mathcal{A}.$

 Let $G$ be a graph, let $(\mathcal{A},\mathfrak{R})$ be a well-aligned railed annulus flatness pair of $G$.
We set ${\sf Leveling}_{(\mathcal{A},\mathfrak{R})}(G)$ to be the graph obtained from $G$ after replacing ${\sf Compass}_{\mathfrak{R}}(\mathcal{A})$ with $\mathcal{A}_{\mathfrak{R}}$.
We now show the next result.

\begin{lemma}\label{lem_levelingpaths}
Let $G,H$ be graphs and let $(\mathcal{A},\mathfrak{R})$ be a well-aligned railed annulus flatness pair of $G$.
Also, let  $s_1,t_1,\ldots, s_k,t_k\in V(G)$ such that for every $i\in[k]$, $s_i,t_i\notin {\sf Compass}_{\mathfrak{R}}(\mathcal{A})$.
Then there is a linkage $L$ in $G$ such that $T(L)=\{s_1,t_1,\ldots, s_k,t_k\}$ if and only if there is a linkage $L'$ in ${\sf Leveling}_{(\mathcal{A},\mathfrak{R})}(G)$ such that $L\equiv L'$.
\end{lemma}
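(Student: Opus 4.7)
The plan is to establish both directions by a local surgery that either ``contracts'' each maximal excursion of a path inside a flap into a length-two traversal through the corresponding flap-vertex (forward direction), or ``expands'' each flap-vertex back into a genuine path inside the flap (reverse direction). Since the terminals lie outside ${\sf Compass}_{\mathfrak{R}}(\mathcal{A})$ in both $G$ and ${\sf Leveling}_{(\mathcal{A},\mathfrak{R})}(G)$ (in the latter, they lie in the shared part $G\setminus V({\sf Compass}_{\mathfrak{R}}(\mathcal{A}))$, which the leveling operation leaves untouched), the endpoints of the paths are unaffected, so equivalence of patterns is automatic once we produce the linkages.

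For the forward direction, suppose $L=P_1\cup\cdots\cup P_k$ is a linkage in $G$ with $T(L)=\{s_1,t_1,\ldots,s_k,t_k\}$. For each $i$, decompose $P_i$ into its maximal subpaths that lie inside ${\sf Compass}_{\mathfrak{R}}(\mathcal{A})$; each such subpath $Q$ has both endpoints in the ground set ${\sf ground}_{\mathfrak{R}}(\mathcal{A})$ because ground vertices separate the compass from the rest of the graph. Moreover, I would argue that every such $Q$ is entirely contained in one flap $F=\sigma(c)$ and its endpoints are in $\partial F$: a path in the compass that exits flap $F$ must do so through a vertex of $\partial F$, which is a ground vertex, contradicting maximality of $Q$. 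I would then replace each such $Q$ with the length-two path $u-v_F-v$ in $\mathcal{A}_{\mathfrak{R}}$, where $u,v\in\partial F$ are the endpoints of $Q$ and $v_F$ is the flap-vertex of $F$.

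The key disjointness claim to verify is that the resulting $L'$ is a linkage in ${\sf Leveling}_{(\mathcal{A},\mathfrak{R})}(G)$. The ground vertices and outside-compass vertices used by $L'$ coincide with those used by $L$, which are pairwise disjoint among paths. For the flap-vertices, I would argue that at most one path of $L$ traverses any given flap $F$: since $|\partial F|\le 3$, and each traversal of $F$ by a path of $L$ consumes two distinct vertices of $\partial F$ (one entry, one exit, both internal to the path since terminals lie outside the compass), the vertex-disjointness of $L$ forbids two paths from simultaneously having such traversals in $F$. Hence distinct paths of $L'$ use distinct flap-vertices, and $L'$ is a linkage with the same pattern as $L$.

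For the reverse direction, given a linkage $L'$ in ${\sf Leveling}_{(\mathcal{A},\mathfrak{R})}(G)$, I would replace each occurrence of a subpath $u-v_F-v$ along a path $P'_i$ (where $v_F$ is a flap-vertex, $u,v\in\partial F$) with a $(u,v)$-path inside the flap $F=\sigma(c)$. Such an internal $(u,v)$-path exists because the flap is a subgraph of $G$ containing both boundary vertices, and we may assume (without loss of generality, cleaning up the rendition if needed) that each flap is connected with all boundary vertices in the same connected component. Disjointness among the reconstructed subpaths follows from the uniqueness of each flap-vertex's use in $L'$: a flap-vertex appears in at most one path of $L'$, and when we lift it to an internal path of $F$, no other path of $L'$ uses any interior vertex of $F$ (interior vertices of $F$ are ``swallowed'' by $v_F$ in the leveling and thus are not part of $L'$ at all).

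The main obstacle I anticipate is the careful bookkeeping of the ``at most one path per flap'' argument in the forward direction, and in particular ensuring that paths of $L$ that merely \emph{touch} $\partial F$ (without crossing the flap interior) do not create spurious collisions with the replacement $u-v_F-v$. The resolution is to observe that such touching paths contribute ground vertices only, and remain disjoint from the flap-vertex $v_F$; the reindexing through the incidence bijection between cells and flap-vertices then makes the vertex-disjointness bookkeeping routine.
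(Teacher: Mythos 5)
There is a genuine gap in your forward direction: you decompose each $P_i$ into \emph{maximal subpaths inside the compass} and then claim each such subpath $Q$ lies in a single flap with endpoints in $\partial F$. That claim is false, and the justification confuses leaving a flap with leaving the compass: when $Q$ exits a flap $F$ through a vertex of $\partial F$, that ground vertex still belongs to ${\sf Compass}_{\mathfrak{R}}(\mathcal{A})$, so nothing contradicts maximality of $Q$, and $Q$ can continue into a neighbouring flap. Indeed a path of $L$ crossing the annulus typically threads many cells through shared ground vertices within one maximal compass-excursion, so your surgery ``replace $Q$ by $u\text{--}v_F\text{--}v$'' is not even well-defined (which $F$?), and your claim that $L'$ uses the same ground vertices as $L$ would fail, since the ground vertices interior to such a $Q$ get discarded.

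The repair is exactly the decomposition the paper uses: work at the granularity of flap \emph{interiors}. For each flap $F$ set $Q=P_i\cap(F\setminus\partial F)$; since interior vertices of $F$ have all their neighbours inside $F$ (flap interiors are pairwise disjoint by the rendition axioms), each nonempty $Q$ is a single subpath contained in one flap whose neighbours on $P_i$ lie in $\partial F$, and you replace it by the flap-vertex $v_F$ (adjacent in the leveling to all of $\partial F$). With this fix, your ``at most one path per flap'' counting via $|\partial F|\le 3$ and the fact that terminals avoid the compass is exactly the argument the paper leaves implicit, and your reverse direction (expanding each $v_F$ into a path of $F$ between the two $\partial F$-neighbours) coincides with the paper's; your explicit connectivity caveat there is an assumption the paper also makes tacitly, so it is not a point of divergence.
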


%\begin{lemma}\label{lem_levelingpaths}
%Let $G$ be a graph and let $(\mathcal{A},\mathfrak{R})$ be a well-aligned railed annulus flatness pair of $G$.
%Also, let $s_1,t_1,\ldots, s_k,t_k\in V(G)$ such that for every $i\in[k]$, $s_i,t_i\notin {\sf Compass}_{\mathfrak{R}}(\mathcal{A})$.
%Then $$G\models \DP(s_1,t_1,\ldots, s_k,t_k)\iff {\sf Leveling}_{(\mathcal{A},\mathfrak{R})}(G)\models\DP(s_1,t_1,\ldots, s_k,t_k).$$
%\end{lemma}

\begin{proof}
Let $P_1,\ldots, P_k$ be the paths of $L$ in $G$.
Since for every $i\in[k]$, $s_i,t_i\notin {\sf Compass}_{\mathfrak{R}}(\mathcal{A})$, no flap $F\in {\sf flaps}_{\mathfrak{R}}(\mathcal{A})$ does contain any endpoint of $P_1,\ldots, P_k$.
Therefore, for every flap $F\in {\sf flaps}_{\mathfrak{R}}(\mathcal{A})$ 
there is at most one $i\in[k]$ such that $P_i$ contains some vertex of $F\setminus \partial F$.
For every $i\in[k]$, let $\mathcal{Q}_i=\{P_i\cap (F\setminus \partial F)\mid F\in  {\sf flaps}_{\mathfrak{R}}(\mathcal{A})\}.$
Observe that each $Q\in \mathcal{Q}_i$ corresponds to a single vertex $v_Q$ of ${\sf Leveling}_{(\mathcal{A},\mathfrak{R})}(G)$.
Therefore, by replacing, for each $i\in[k]$, each $Q\in \mathcal{Q}_i$ by $v_Q$, we get a linkage $L'$
of paths $P_1',\ldots, P_k'$ in ${\sf Leveling}_{(\mathcal{A},\mathfrak{R})}(G)$ that is equivalent to $L$.
For the reverse implication, notice that every collection $P_1',\ldots, P_k'$ of disjoint paths in ${\sf Leveling}_{(\mathcal{A},\mathfrak{R})}(G)$ corresponds to a collection $P_1,\ldots, P_k$ of disjoint paths in $G$, where for every $i\in[k]$, $P_i$ is obtained by replacing each vertex $v_F$ that is contained in $P_i'$ by a path in $F$ that connects the two neighbors of $v_F$ in $P_i'$ (these neighbors belong to $\partial F$).
\end{proof}

\newpage

\section{Missing complexity proofs}\label{@ressemblerois}
\label{@zeicheneinheiten}

\subsection{Hardness of {\sc Ordered Linkability}}
\label{@imperceptible}

In this subsection we treat the parameterized complexity of {\sc Ordered Linkability} which we restate below.

\medskip
\fbox{
\begin{minipage}{15cm}
\noindent{\sc Ordered Linkability}\\
\noindent{\sl Input}: a graph $G$, $R\subseteq V(G)$, and a graph $H$ where $k=|H|$.\\
\noindent{\sl Question}: is    $R$ $H$-linked in $G$?
\end{minipage}
}
\medskip

We consider the above  problem in the case where the graph $H$ is the 1-regular graph on $2k$ vertices.
In this case,  the task is, given a graph $G$, a subset of vertices $R\subseteq V(G)$, and an integer $k\geq 1$, decide whether for every sequence $s_1,\ldots, s_k,t_1,\ldots,t_k$ of distinct  vertices of $R$, the graph $G$ has pairwise disjoint vertex-disjoint paths between $s_i$ and $t_i$, for $i\in[k]$.

\begin{theorem}\label{thm_link_hard}
{\sc Ordered Linkability} cannot be solved in time $\mathcal{O}_{k}(n^{\mathcal{O}(1)})$ unless ${\sf FPT}={\sf W[1]}$.
\end{theorem}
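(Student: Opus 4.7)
The plan is to establish the theorem by proving that the complementary problem --- deciding whether there \emph{exists} an ordered sequence $(s_1, t_1, \ldots, s_k, t_k)$ of distinct vertices of $R$ for which $G$ lacks vertex-disjoint $(s_i,t_i)$-paths --- is $\mathsf{W}[1]$-hard when parameterized by $k$. Since $\mathsf{FPT}$ is closed under complementation, this will imply the claimed conditional lower bound for {\sc Ordered Linkability} itself. The natural source problem is \textsc{Multicolored Clique}, which is $\mathsf{W}[1]$-complete when parameterized by the number $k$ of color classes, and the reduction mirrors in spirit the co-$\mathsf{W}[1]$-hardness construction for \textsc{Cyclability} in \cite{GolovachKMT17thep}, adapted from the cycle topology of $H$-linkedness with $H = C_k$ to the matching topology $H = M_k$.

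Given an instance $(H; V_1, \ldots, V_k)$ of \textsc{Multicolored Clique} with $|V_i| = n$, I would construct the target instance $(G, R, k')$ in three stages. First, a \emph{vertex-selection} block $\mathbb{V}_i$ is attached for each color class $V_i$: for every $v \in V_i$ a pair of terminal vertices $s^i_v, t^i_v$ is inserted into $R$, and the internal structure of $\mathbb{V}_i$ ensures that the only viable $(s^i_v, t^i_v)$-paths pass through a distinguished ``port'' of $\mathbb{V}_i$ dedicated to $v$. Second, for every unordered pair $\{i, j\} \in \binom{[k]}{2}$, a \emph{consistency bundle} is inserted between $\mathbb{V}_i$ and $\mathbb{V}_j$, calibrated so that simultaneous routing of the pairs selecting vertices $v \in V_i$ and $w \in V_j$ is \emph{obstructed} precisely when $\{v, w\} \notin E(H)$. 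Third, redundant padding outside the gadgets guarantees that any sequence of $2k$ terminals not of the form ``one pair per $\mathbb{V}_i$'' admits a trivial disjoint linkage that bypasses the gadgetry altogether.

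Setting $k' = k$, the correctness argument would split into two directions. Given a multicolored $k$-clique $\{v_1, \ldots, v_k\}$ in $H$, the sequence that picks the pair $(s^i_{v_i}, t^i_{v_i})$ from each $\mathbb{V}_i$ would be shown to be unlinkable: every pairwise consistency bundle is saturated by the ``edge-routings'' forced by the vertex choices, leaving no capacity for the $k$ disjoint $(s^i_{v_i}, t^i_{v_i})$-paths. Conversely, any unlinkable sequence must, by the third-stage padding, select exactly one pair per color class; a Menger-type cut analysis of the consistency bundles forces the underlying selected vertex set to induce a clique, as any missing edge of $H$ would open up a routing alternative contradicting unlinkability.

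The principal obstacle is the precise calibration of the consistency bundles, so that the edge-versus-non-edge dichotomy of $H$ translates into a sharp threshold between the existence and non-existence of disjoint routings in $G$. This reduces to engineering each bundle to be ``$1$-routable'' but not ``$2$-routable'' along exactly the paths encoded by the edges of $H$ between the corresponding color classes, a standard (if delicate) construction using Menger-type separators of carefully chosen sizes.
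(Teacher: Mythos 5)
There is a genuine gap: the entire technical content of the theorem is the construction you defer to at the end, and as sketched it is both incomplete and internally inconsistent. You calibrate each consistency bundle so that routing is ``obstructed precisely when $\{v,w\}\notin E(H)$,'' yet in the completeness direction you claim the tuple coming from a multicolored clique (where \emph{every} pair is an edge) is unlinkable because the bundles are saturated -- these two statements contradict each other, so the intended equivalence ``clique $\iff$ unlinkable tuple'' is not even specified coherently. Moreover, the soundness direction is harder than your padding remark suggests: in {\sc Ordered Linkability} the adversary chooses not only which $2k'$ vertices of $R$ to use but also how to pair them, so you must show that \emph{every} sequence -- including ones pairing terminals across different selection gadgets, or taking several pairs from the same $\mathbb{V}_i$ -- is linkable whenever $H$ has no multicolored clique. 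A sharp, edge-by-edge ``$1$-routable but not $2$-routable'' bundle that survives all such adversarial pairings is exactly the delicate object you do not construct, and it is not a routine Menger-type exercise.

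The paper avoids this difficulty by taking a different route: it does not attempt a sharp reduction at all, but a \emph{gap} reduction from the parameterized inapproximability of \textsc{Clique} \cite{Lin21cons} (no $\mathcal{O}_{k}(n^{\mathcal{O}(1)})$ algorithm distinguishes $\omega(G)\geq k$ from $\omega(G)<k/5$ unless ${\sf FPT}={\sf W[1]}$). The construction has no per-edge consistency gadgets: all ``hub'' vertices $S=\bigcup_x S_x$ with $|S_x|=(k-1)/2$ form one clique, $w_x$ and $e_{xy}$ are pendant-like vertices attached to $S_x$ resp.\ $S_x\cup S_y$, $R=W\cup L$, and $k'=\frac12\binom{k}{2}+1$. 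Unlinkability of the tuple derived from a $k$-clique is then a pure counting bottleneck via Hall's theorem (the $2k'$ chosen terminals have only $k(k-1)/2<2k'$ neighbours, using a pairing of $E(K_k)$ into disjoint-edge pairs), while linkability of \emph{every} tuple when $\omega(G)<k/5$ follows from a Tur\'an-type bound plus Hall's theorem, which is precisely why the gap hypothesis is needed and why no delicate gadget calibration arises. If you want to salvage your approach, you must either supply the exact bundles and prove robustness against all pairings and tuple shapes, or switch, as the paper does, to a global counting obstruction paid for by starting from the gap version of \textsc{Clique}.
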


\begin{proof}
Given a graph $G$, we use $\omega(G)$ to denote the maximum $r$ such that $G$ contains $K_r$ as a subgraph.
Recall that the \textsc{Clique} problem asks, given a graph $G$ and a positive integer $k$, whether $\omega(G)\geq k$. It is well-known that \textsc{Clique} is {\sf W[1]}-hard~\cite{DowneyF13fund} when parameterized by $k$. Furthermore, the optimization version is hard from the parameterized approximation viewpoint. In particular,  by the breakthrough result of Lin~\cite{Lin21cons}, for any positive constant $c\geq 1$, no algorithm running in time $\mathcal{O}_{k}(n^{\mathcal{O}(1)})$ for a computable function $f(k)$ can distinguish between the cases $\omega(G)\geq k$ and $\omega(G)<k/c$, unless  ${\sf FPT}={\sf W[1]}$. We use this result for our reduction.

%\begin{figure}[ht]
%\centering
%\scalebox{0.7}{
%\input{Hardness.pdf_t}
%}
%\caption{Construction of $G'$.}
%\label{@imposibilitada}
%\end{figure}
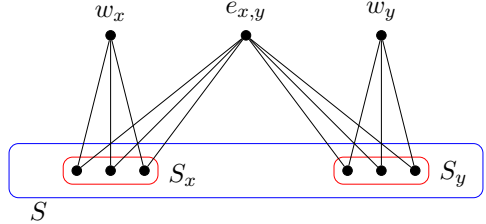
\begin{figure}[ht]
\centering
\scalebox{.9}{
\begin{tikzpicture}
\node[simple, label=$w_x$] (wx) at (0,2) {}; 
\node[simple, label=$e_{x,y}$] (exy) at (2,2) {}; 
\node[simple, label=$w_y$] (wy) at (4,2) {}; 
\draw[red,rounded corners] (-.7,-0.2) rectangle (0.7, 0.2) {};
\node[label=$S_x$, anchor=east] () at (1.2,-0.5) {};
\draw[red,rounded corners] (3.3,-0.2) rectangle (4.7, 0.2) {};
\node[label=$S_y$, anchor=east] () at (5.2,-0.5) {};
\draw[blue,rounded corners] (-1.5,-0.4) rectangle (5.5, 0.4) {};
\node[label=$S$, anchor=west] () at (-1.2,-1) {};

\node[simple] (wx1) at (-.5,0) {}; 
\node[simple] (wx2) at (0,0) {}; 
\node[simple] (wx3) at (0.5,0) {}; 
\node[simple] (wy1) at (3.5,0) {}; 
\node[simple] (wy2) at (4,0) {}; 
\node[simple] (wy3) at (4.5,0) {}; 
\foreach \i in {1,2,3} \draw[-] (wx) -- (wx\i) -- (exy) -- (wy\i) -- (wy);
\end{tikzpicture}
}
\caption{Construction of $G'$.}
\label{fig_gprimecliquereduction}
\end{figure}

Consider an instance $(G,k)$ of \textsc{Clique}. Without loss of generality,
we assume that $k=4r+1$ for an integer $r\geq 2$. We construct the graph $G'$ as follows (see Figure~\ref{fig_gprimecliquereduction}).
\begin{itemize}
\item For every vertex $x\in V(G)$, construct a set $S_x$ of $(k-1)/2$  vertices
(note that $k-1$ is even) and then form a clique from $S=\bigcup_{x\in V(G)}S_x$ by making the vertices pairwise adjacent.
\item For every vertex $x\in V(G)$, construct a vertex $w_x$ and make it adjacent to the vertices of $S_x$; denote $W=\{w_x\colon x\in V(G)\}$.
\item For every edge $\{x,y\}\in E(G)$, construct a vertex $e_{xy}$ and make it adjacent to the vertices of $S_x$ and $S_y$; denote $L=\{e_{xy}\colon \{x,y\}\in E(G)\}$.
\end{itemize}
We set $R=W\cup L$ and $k'=\frac{1}{2}\binom{k}{2}+1$; note that $k'$ is an integer because $k$ is odd and $k-1$ is divisible by 4.

First, we show that if $G$ contains a clique of size $k$ as a subgraph, then there are disjoint $k'$-tuples $(s_1,\ldots,s_{k'})$ and $(t_1,\ldots,t_{k'})$ of vertices of $R$ such that $G'$ has no vertex-disjoint paths between $s_i$ and $t_i$, for $i\in[k']$.

We use the following observation.

\begin{claim}\label{cl_pairs}
Let $H$ be a complete graph with $k=4r+1$ vertices for $r\geq 1$.
Then there is a partition of $E(H)$ into $\ell=(4r+1)r$ pairs $\{e_i,e_i'\}$ for $i\in[\ell]$ such that $e_i$ and $e_i'$ have no common endpoints.  
\end{claim}

\medskip

\noindent\emph{Proof of~\autoref{cl_pairs}.}
The proof is by induction on $r$. If $r=1$, then the partition is shown in Figure~\ref{@misinterpretations} (a). If $r\geq 2$, then we select four arbitrary vertices $v_1,v_2,v_3,v_4$ of $H$ and partition the edges incident to them using the pattern shown in Figure~\ref{@misinterpretations} (b). The remaining edges are partitioned into pairs using the inductive assumption.

%\begin{figure}[ht]
%\centering
%\scalebox{0.7}{
%\input{Pairing.pdf_t}
%}
%\caption{Pairing edges of $H$; the edges incident to $v_3$ and $v_4$ are not shown and are partitioned in pairs is the same pattern as the edges incident to $v_1$ and $v_2$.}
%\label{@misinterpretations}
%\end{figure}

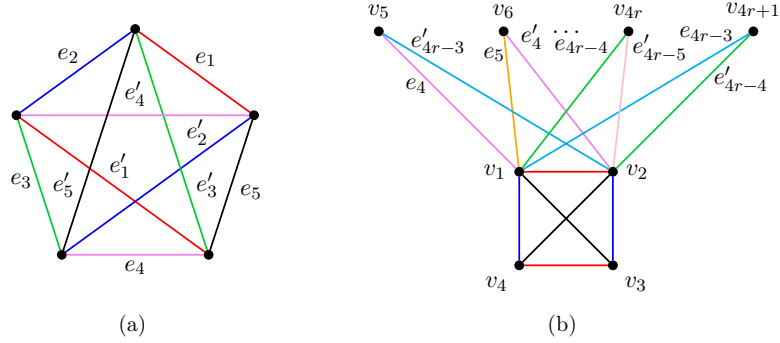
\begin{figure}[ht]
\centering
\scalebox{.83}{
\begin{subfigure}[b]{0.4\textwidth}
\centering
\begin{tikzpicture}
\node[simple] (v1) at (18:2) {};
\node[simple] (v2) at (90:2) {};
\node[simple] (v3) at (162:2) {};
\node[simple] (v4) at (234:2) {};
\node[simple] (v5) at (306:2) {};
\draw[thick,red] (v1) -- (v2) (v3) -- (v5);
\draw[thick,blue] (v2) -- (v3) (v4) -- (v1);
\draw[thick,green!80!blue] (v3) -- (v4)  (v5) -- (v2);
\draw[thick,violet] (v4) -- (v5) (v1) -- (v3);
\draw[thick, black] (v1) -- (v5) (v2) -- (v4);
\node[label=$e_1$] () at (45:1.6) {}; 
\node[label=$e_2$] () at (132:1.6) {}; 
\node[label=$e_3$] () at (204:2) {}; 
\node[label=$e_4$] () at (270:2.2) {}; 
\node[label=$e_5$] () at (-28:2.1) {}; 
\node[label=$e_1'$] () at (250:0.7) {};
\node[label=$e_2'$] () at (-7:1) {};
\node[label=$e_3'$] () at (-40:1.5) {};
\node[label=$e_4'$] () at (0,.5) {};
\node[label=$e_5'$] () at (220:1.5) {};
\end{tikzpicture}
\caption{\ }
\end{subfigure}}
\scalebox{.83}{
\begin{subfigure}[b]{0.4\textwidth}
\centering
\begin{tikzpicture}
\node[simple, label={south west:$v_4$}] (u4) at (0,0) {};
\node[simple, label={south east:$v_3$}] (u3) at (1.5,0) {};
\node[simple, label={east:$v_2$}] (u2) at (1.5,1.5) {};
\node[simple, label={west:$v_1$}] (u1) at (0,1.5) {};

\node[label={center:$\ldots$}] (c) at ($(0.75,0.75)+(90:3)$) {};
\node[simple, label={above:$v_5$}] (u5) at ($(c)+(-3,0)$) {};
\node[simple, label={above:$v_6$}] (u6) at ($(c)+(-1,0)$) {};
\node[simple, label={above:$v_{4r}$}] (u7) at ($(c)+(1,0)$) {};
\node[simple, label={above:$v_{4r+1}$}] (u8) at ($(c)+(3,0)$) {};

\draw[thick, red] (u1) -- (u2) (u3) -- (u4);
\draw[thick,blue] (u2) -- (u3) (u4) -- (u1);
\draw[thick,black] (u1) -- (u3) (u2) -- (u4);
\draw[thick, violet] (u1) -- (u5) (u2) -- (u6);
\draw[thick,orange]  (u1) -- (u6);
\draw[thick,cyan]  (u1) -- (u8) (u2) -- (u5) ;
\draw[thick,green!80!blue] (u2) -- (u8) (u1) -- (u7);
\draw[thick,pink] (u2) -- (u7);
% \node[label={east:$e_3$}] () at (-0.65,0.75){};
% \node[label={west:$e_3'$}] () at (2.15,0.75){};
% \node[label={north:$e_1$}] () at (0.75,0.95){};
% \node[label={north:$e_1'$}] () at (0.75,-.65){};
% \node[label={$e_2$}] () at (.25,0.65){};
% \node[label={$e_2'$}] () at (0.55,-.12){};
\node[label=$e_4$] () at (-1.65,2.5) {};
\node[label=$e_4'$] () at (0.2,3.2) {};
\node[label=$e_{4r-3}'$] () at (-1.3,3.1) {};
\node[label=$e_5$] () at (-0.4,3) {};
\node[label=$e_{4r-4}'$] () at (3.55,2.5) {};
\node[label=$e_{4r-4}$] () at (1,3.1) {};
\node[label=$e_{4r-3}$] () at (3,3.3) {};
\node[label=$e_{4r-5}'$] () at (2.25,3) {};
\end{tikzpicture}
\caption{\ }
\end{subfigure}
}
\caption{Pairing edges of $H$; the edges incident to $v_3$ and $v_4$ are not shown and are partitioned in pairs in the same pattern as the edges incident to $v_1$ and $v_2$.}
\label{@misinterpretations}
\end{figure}
\hfill$\diamond$

\bigskip

Suppose that $G$ contains a clique $H$ of size $k$ as a subgraph.
By~\autoref{cl_pairs}, the set of edges of $H$ can be partitioned into
$\ell=(4r+1)r=k'-1$ pairs $\{e_i,e_i'\}$ for $i\in[\ell]$ such that
$e_i$ and $e_i'$ have no common endpoints.
Let $e_i=\{x_i,y_i\}$ and $e_i'=\{x_i',y_i'\}$ for $i\in[\ell]$.
We consider the vertices $s_i=e_{x_iy_i}$ and $t_i=e_{x_i'y_i'}$ of $G'$ for $i\in[\ell]$.
We also define $s_{k'}=w_x$ and $t_{k'}=w_y$ for arbitrary distinct $x,y\in V(H)$.
The $k'$-tuples $(s_1,\ldots,s_{k'})$ and $(t_1,\ldots,t_{k'})$ are disjoint
and consist of distinct vertices of $R$.
Because $e_i$ and $e_i'$ have no common
endpoints for every $i\in[\ell]$,
any path between $s_i$ and $t_i$ in $G'$
contains two edges incident to $s_i$ and $t_i$,
respectively.
Because $x$ and $y$ are distinct,
the same holds for any path in $G'$ between $s_{k'}$ and $t_{k'}$.
Suppose that $G'$ has vertex-disjoint paths
between $s_i$ and $t_i$, for $i\in[k']$.
Then the edges of the paths incident
to the vertices $s_1,\ldots,s_{k'}$ and $t_1,\ldots,t_{k'}$ should form a matching.
However, $N_{G'}(\{s_1,\ldots,s_{k'}\}\cup\{t_1,\ldots,t_{k'}\})=\bigcup_{x\in V(H)}S_x$ and 
$|\bigcup_{x\in V(H)}S_x|=k(k-1)/2<2k'$.
Therefore, by Hall's theorem~\cite{Hall45anex},
$G'$ has no matching saturating the vertices of
$\{s_1,\ldots,s_{k'}\}\cup\{t_1,\ldots,t_{k'}\}$,
i.e., $G'$ has no matching $M$ such that the set of the endpoints of the edges in $M$ contains $\{s_1,\ldots,s_{k'}\}\cup\{t_1,\ldots,t_{k'}\}$).
Thus, there are no vertex-disjoint paths between $s_i$ and $t_i$ for $i\in[k']$ in $G'$.

Now we show that if $G$ has no clique with at least $k/5$ vertices, then 
for all pairs of disjoint $k'$-tuples of distinct vertices $(s_1,\ldots,s_{k'})$ and
$(t_1,\ldots,t_{k'})$, the graph $G$ has vertex-disjoint paths between $s_i$ and $t_i$,
for all $i\in[k']$.
Consider arbitrary $(s_1,\ldots,s_{k'})$ and $(t_1,\ldots,t_{k'})$
and set $X=\{s_1,\ldots,s_{k'}\}\cup\{t_1,\ldots,t_{k'}\}$. We prove the following claim.

\begin{claim}\label{@exchangeable}
The graph $G'$ has a matching $M$ saturating every vertex of $X$.
\end{claim}

\noindent\emph{Proof of~\autoref{@exchangeable}.}
%Because $X$ is an independent set,  we can apply Hall's theorem~\cite{Hall45} and observe that it is sufficient to show that for every $Y\subseteq X$, $|N_{G'}(Y)|\geq |Y|$. 
We set $Y=X\cap L$. For every $x\in V(G)$, let $S_x'\subset S_x$ be an arbitrary subset of $S_x$ of size $(k-1)/2-1$. We define $S'=\bigcup_{x\in V(G)}S_x'$.  We claim that 
$H=G[Y\cup S']$ has a matching $M'$ saturating every vertex of $Y$. Because $Y$ is an independent set,  we can apply Hall's theorem~\cite{Hall45anex} and observe that it is sufficient to show that for every $Z\subseteq Y$, $|N_H(Z)|\geq |Z|$. Consider a set  $Z\subseteq Y$.
 Let $F=\{\{x,y\}\in E(G)\colon e_{xy}\in Z\}$ and let $U$ be the set of vertices of $G$ incident to the edges of $F$. By Turan's theorem~\cite{Turan41eine}, $|F|\leq\big(1-\frac{5}{k}\big)\frac{|U|^2}{2}$, because $G$ has no clique with at least $k/5$ vertices. Therefore, $\sqrt{\frac{2|F|k}{k-5}}\leq |U|$. 
By the construction of $H$, we have that $|N_H(Z)|=|U|\frac{k-3}{2}$, because $|S_x'|=(k-3)/2$ for every $x\in X$. We obtain that  $|N_H(Z)|\geq \sqrt{\frac{2|F|k}{k-5}}\cdot\frac{k-3}{2}$. 
Because $|Z|=|F|$, it is sufficient to show that  $|Z|\leq \sqrt{\frac{2|Z|k}{k-5}}\cdot\frac{k-3}{2}$. This inequality is equivalent to $|Z|\leq\frac{k(k-3)^2}{2(k-5)}$ and holds because
$\frac{k(k-3)^2}{2(k-5)}\geq \binom{k}{2}+2=2k'\geq |Z|$. We conclude that $H$ has a matching $M'$ saturating every vertex of $Y$.

Note that for every $w_x\in X\cap W$, there is an adjacent vertex $v\in S_x\setminus S_x'$.  By adding $\{w_x,v\}$ to $M'$ for each $w_x\in X\cap W$, we obtain a matching $M$ in $G'$ saturating every vertex of $X$. This concludes the proof of the claim.
\hfill$\diamond$
\medskip

Using~\autoref{@exchangeable}, we construct the paths between $s_i$ and $t_i$ in $G'$ as follows. Let $i\in[k']$. The matching $M$ contains the edges $\{s_i,u\}$ and $\{t_i,v\}$, for some $u,v\in S$. Because $S$ is a clique, $(s_i,u,v,t_i)$ is an $(s_i,t_i)$-path. Since $M$ is a matching, all these paths are vertex-disjoint. 

To conclude the proof, note that the existence of an algorithm solving \textsc{Ordered Linkability} in time $\mathcal{O}_{k}(n^{\mathcal{O}(1)})$ would imply that there is an algorithm distinguishing the cases $\omega(G)\geq k$ and $\omega(G)<k/5$ in time $f(\frac{1}{2}\binom{k}{2}+1)\cdot n^{\mathcal{O}(1)}$, contradicting the result of Lin~\cite{Lin21cons}.
\end{proof}

\subsection{{\sc Monochromatic Path Topological Minor} is  {\sf W[1]}-hard on planar graphs}
\label{@thesprotians}

\paragraph{Topological minor models.}
Given a graph $G$ and a graph $H$, we say that $H$ is a \emph{topological minor}
of $G$ if there is an injection $\varphi:  V(H)\to V(G)$
and a function $\psi$ mapping the edges of $H$ to paths of $G$ such that 
\begin{itemize}
\item For every distinct $e_{1},e_{2}$, $\psi(e_1)$ and $\psi(e_{2})$ are internally vertex disjoint paths of $G$ and 
\item For every $e=\{x,y\}\in E(G)$, $\psi(e)$ is a path joining $\varphi(x)$ and $\varphi(y)$.
\end{itemize} 
Given the above, we say that $H$ is a \emph{topological minor} of $G$ \emph{via the pair} $(\varphi,\psi)$.

%
%
%We say that $(M,T)$   is a \emph{{\sf tm}-pair} if $M$ is  a graph, $T\subseteq V(M)$, and  all vertices in 
%$V(M)\setminus T$ have degree two. We denote by ${\sf diss}(M,T)$ the graph obtained 
%from  $M$ by dissolving all vertices  in $V(M)\setminus T$.
%A \emph{{\sf tm}-pair} of a graph $G$  is a  {\sf tm}-pair $(M,T)$ where 
%$M$ is a subgraph of $G$. 
%Given two graphs $H$ and $G,$ we say that a {\sf tm}-pair $(M,T)$ of $G$,  is a \emph{topological minor model of $H$ in $G$} if $H$ is isomorphic to ${\sf diss}(M,T)$.

We consider the following problem:

\medskip
\fbox{
\begin{minipage}{15cm}
\noindent{\sc  Monochromatic Path Topological Minor}\\
\noindent{\sl Input}:   multicolored graph $(G,X_{1},\ldots,X_{z})$ a graph $H$ and a coloring\\
\phantom{\noindent{\sl Input}: }function $\lambda: V(H)\to [z]$.\\
\noindent{\sl Question}: does $G$ contain $H$ as a topological  minor via a pair
$(\varphi,\psi)$ where
\medskip

~~~~~$\bullet$   for every $x\in V(H)$, $\varphi(x)\in X_{\lambda(x)}$ (i.e., the image of $x$ carries the color of $x$) and 

~~~~~$\bullet$ every $e\in E(H)$, there is some $i\in [z]$ such that $V(\psi(e))\subseteq V_{i}$ (i.e., the path $\psi(e)$ \\ \phantom{~~~~~$\bullet$ }is monochromatic)
\end{minipage}
}
\medskip

\begin{theorem}
\label{@specialisation}
{\sc  Monochromatic Path Topological Minor}, when parameterized by $k=|H|$,  is {\sf W[1]}-hard on planar graphs, even when $z=4$ and $H$ is the $(k\times k)$-grid.
\end{theorem}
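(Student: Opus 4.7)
My plan is to prove this by an FPT-reduction from {\sc Grid Tiling}, the classical W[1]-hard problem parameterized by the grid side length $k$: an instance is a $k\times k$ array of non-empty sets $S_{i,j}\subseteq [n]\times [n]$ and one asks for $(a_{i,j},b_{i,j})\in S_{i,j}$ satisfying $a_{i,j}=a_{i+1,j}$ (the column constraint) and $b_{i,j}=b_{i,j+1}$ (the row constraint). Since the target parameter will be $|H|=k^{2}$, this is an FPT-reduction and suffices to transfer W[1]-hardness.

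First I will build a planar ``grid gadget'' $C_{i,j}$ for every cell: an $n\times n$ mesh with lattice vertices $v_{i,j}^{(a,b)}$ for $(a,b)\in [n]^{2}$, horizontal mesh-edges $\{v_{i,j}^{(a,b)},v_{i,j}^{(a{+}1,b)}\}$, and vertical mesh-edges $\{v_{i,j}^{(a,b)},v_{i,j}^{(a,b{+}1)}\}$. The gadgets are arranged in the plane as a $k\times k$ block and are glued together by horizontal ``bridge'' edges $\{v_{i,j}^{(n,b)},v_{i,j+1}^{(1,b)}\}$ (one per $b\in [n]$) and vertical bridge edges $\{v_{i,j}^{(a,n)},v_{i+1,j}^{(a,1)}\}$ (one per $a\in [n]$); the resulting graph $G$ is planar. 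I use four colours: $X_{1}$ consists precisely of the ``valid'' candidates $v_{i,j}^{(a,b)}$ with $(a,b)\in S_{i,j}$; $X_{2}$ consists of all mesh vertices plus all horizontal bridges; $X_{3}$ consists of all mesh vertices plus all vertical bridges; and $X_{4}$ is an auxiliary class used to force a prescribed ``exit direction'' at each candidate, as explained below. On $V(H)$ I set $\lambda\equiv 1$.

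For the forward direction, given a tiling $(a_{i,j},b_{i,j})$ I set $\phi((i,j))=v_{i,j}^{(a_{i,j},b_{i,j})}\in X_{1}$ and route each horizontal $H$-edge between $(i,j)$ and $(i,j+1)$ as the $X_{2}$-path that follows row $b_{i,j}=b_{i,j+1}$ inside $C_{i,j}$, crosses the bridge at level $b_{i,j}$, and then follows row $b_{i,j+1}$ inside $C_{i,j+1}$; vertical $H$-edges are routed inside $X_{3}$ along columns $a_{i,j}=a_{i+1,j}$. Internal disjointness is immediate because every horizontal path stays in a single row of every cell it traverses and every vertical path stays in a single column. For the converse, any valid model satisfies $\phi((i,j))\in X_{1}$, so $\phi((i,j))=v_{i,j}^{(a_{i,j},b_{i,j})}$ with $(a_{i,j},b_{i,j})\in S_{i,j}$; a monochromatic path realising the $H$-edge between $(i,j)$ and $(i,j+1)$ must live inside a single colour class that contains both endpoints, and the only class whose induced subgraph connects the two cell gadgets is $X_{2}$, so the path leaves $C_{i,j}$ through a bridge at some level $b$ and re-enters $C_{i,j+1}$ at the same level $b$, forcing $b_{i,j}=b=b_{i,j+1}$. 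The vertical edges are handled symmetrically in $X_{3}$, giving $a_{i,j}=a_{i+1,j}$; the resulting choice is therefore a valid tiling.

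The main obstacle is to rule out ``cheating'' monochromatic paths that would witness the minor without respecting the row/column matching. Two kinds of shortcuts must be excluded: a horizontal $H$-edge being realised by an $X_{3}$-path (or vice versa), and an $X_{2}$-path between $C_{i,j}$ and $C_{i,j+1}$ that detours through other cells in order to change level. The former is prevented by arranging that a candidate vertex has horizontal mesh-neighbours only in $X_{2}$ and vertical mesh-neighbours only in $X_{3}$, which forces the path issued from a candidate to commit to a direction; this is precisely where the fourth colour $X_{4}$ is needed, as it decorates each candidate with a small rigid ``corner gadget'' that makes the two directions genuinely distinguishable inside a monochromatic path. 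The latter is prevented by the rigidity of planarity together with the internal-disjointness requirement, because any such detour would have to share bridge vertices with the path of another $H$-edge. Once these two points are verified, the planarity of $G$, the bound $|V(G)|=\mathcal{O}(k^{2}n^{2})$, and the fact that the new parameter $|H|=k^{2}$ depends only on $k$ complete the FPT-reduction.
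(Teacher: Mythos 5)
Your overall strategy (an FPT-reduction from {\sc Grid Tiling} into a planar grid of cell gadgets, with colour classes meant to confine the paths realizing $H$-edges to rows and columns) is the same as the paper's, but as written the reduction does not work, because your colour classes do not actually confine anything. You put \emph{all} mesh vertices into $X_2$ (and into $X_3$), so $G[X_2]$ contains every vertical mesh edge of every cell as well as the horizontal ones; consequently an $X_2$-path from the candidate of $C_{i,j}$ may cross a bridge at level $b$ and then wander vertically \emph{inside} $C_{i,j+1}$ to reach a candidate at a different level $b'$. No detour through other cells and no sharing of bridge vertices is needed, so your appeal to ``rigidity of planarity together with internal disjointness'' does not apply, and the key equalities $b_{i,j}=b_{i,j+1}$ and $a_{i,j}=a_{i+1,j}$ are simply not forced. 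The paper circumvents exactly this by subdividing the grid so that each row and each column becomes a path whose \emph{internal} vertices lie only in that line's colour class; then $G[O]$ and $G[T]$ are disjoint unions of vertical, respectively horizontal, paths, and a monochromatic path is automatically trapped in a single line. A second gap: with $\lambda\equiv 1$ nothing prevents the entire $(k\times k)$-grid $H$ from being realized inside a single cell whose set $S_{i,j}$ happens to contain many candidates (all branch vertices in $X_1$, all connecting paths in the cell's mesh, which lies in $X_2$), producing yes-instances with no tiling. The paper blocks this by colouring the candidates black or green according to the parity of the cell $(i,j)$ and taking $\lambda$ to be a proper $2$-colouring of $H$, so that adjacent branch vertices must land in cells of different parity and hence, combined with the line-confinement and internal disjointness, in consecutive distinct cells.

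Finally, the device you invoke to repair direction-forcing, the ``corner gadget'' coloured $X_4$, is never defined, and it is unclear how any vertex-colouring gadget could make ``directions distinguishable inside a monochromatic path'', since monochromaticity is only a membership condition on the path's vertex set. Note that the paper does not need such a gadget at all: once every monochromatic path is confined to a single row-path or column-path and the parity colouring forces consecutive branch vertices into distinct consecutive cells, the grid structure of $H$ does the rest. To fix your proof you should (i) subdivide the mesh edges (or otherwise introduce line-private vertices) so that the ``row'' class and the ``column'' class each induce a disjoint union of paths, and (ii) replace $\lambda\equiv 1$ by the alternating two-colouring of candidates and of $V(H)$; with those changes your argument essentially becomes the paper's.
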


\begin{proof}
We give a parameterized reduction from the following problem:
\medskip

\fbox{
\begin{minipage}{15cm}
\noindent{\sc Grid Tiling}\\
\noindent{\sl Input}: two integers $D,k\in\mathbb{N}$ and a function $M:[k]^2\to 2^{[D]^{2}}$.\\
\noindent{\sl Question}: is there a function $s:[k]^2\to {[D]^{2}}$  such that 
 \medskip

~~~~~$\bullet$ for every $(i,j)\in[k],$ $s(i,j)\in M(i,j)$,

~~~~~$\bullet$ for every $i\in[k],$ all first coordinates of the pairs in $\{s(i,j)\mid j\in[k]\}$ are equal,   \\\phantom{~~~~~$\bullet$} and

~~~~~$\bullet$ for every $j\in[k],$ all  second coordinates of the pairs in $\{s(i,j)\mid i\in[k]\}$ are equal %
\end{minipage}
}\medskip

Intuitively, one may see the input of the {\sc Grid Tiling} problem as the assignment of pairs in $[D]^2$
to the cells of a $k\times k$-matrix an the question is whether it is possible to choose one pair 
from each cell so that, in the occurring $k\times k$-matrix, vertical pairs  agree in the 1st coordinate
and horizontal pairs agree in the 2nd coordinate. It is known (see \cite{CyganFKLMPPS15para})
that {\sc Grid Tiling},when parameterized by $k$, is {\sf W[1]}-hard.
Given an instance $(D,k,M)$ of {\sc Grid Tiling}, we build an instance $(G,B,C,O,T,H,\lambda)$ of {\sc Monochromatic Path Topological Minor} as follows.

\begin{figure}[ht]
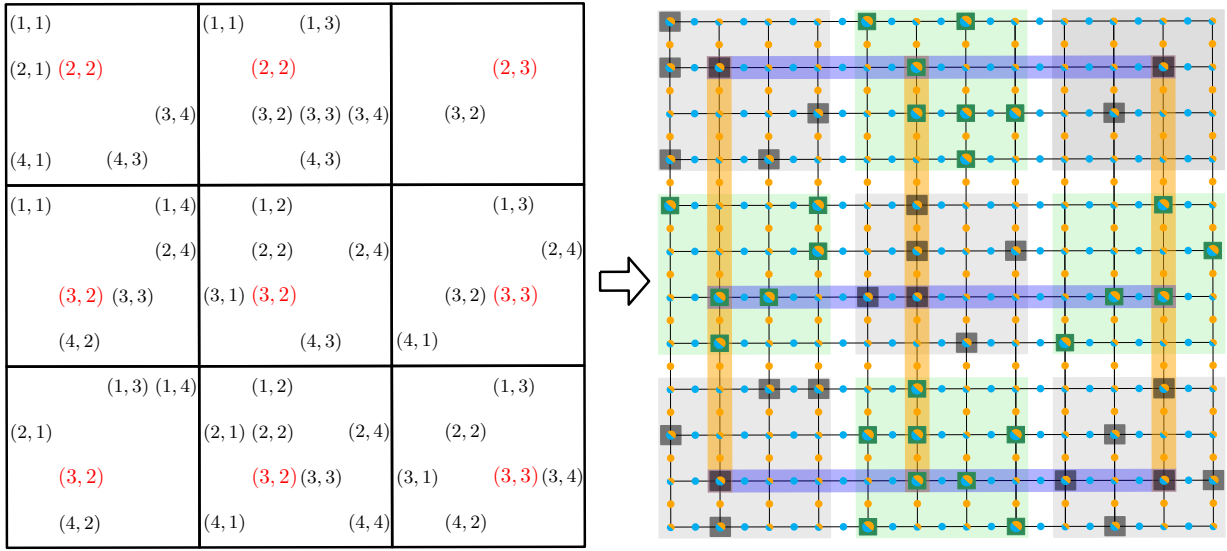

{\hspace{0.1cm}\scalebox{.79}{% [inline block 1: 1 envs, 116775 chars -> data_tex | \begin{tikzpicture}[ipe stylesheet]   \begin{scope}[shift={(0.1898, -16.7406)}, xscale=1.0642, yscale=1.0004, rotate=0.1...]
}}
\caption{An example of the reduction of \autoref{@specialisation}.
On the left an instance of  {\sc Grid Tiling} is depicted where $D=4$ and $k=3$.
On the right the corresponding instance $(G,B,\green{C},\darkorange{O},\blue{T},H,\lambda)$
of {\sc Monochromatic Path Topological Minor} is depicted, where $H$ is the $(k\times k)$-grid.
The solution for the {\sc Grid Tiling}, depicted by the red pairs, corresponds to a realization of  the $(k\times k)$-grid on the right, depicted by the (long)  \darkorange{orange} and \blue{blue} rectangles.
}
\label{fig_grid_red}
\end{figure}

Consider a $(kD\times kD)$-grid $\Gamma$. Also define $\{\Gamma_{i,j}\mid (i,j)\in [k]^2\}$ to be the (unique) collection of $k^2$ pairwise vertex disjoint $(D\times D)$-grids of $\Gamma$.
We define the black and green color
sets, namely $B$ and $C$ so that for every 
$(i,j)\in[k]^2$ we include, for every pair $(x,y)\in M(i,j)$, 
the intersection vertex of the $x$-th column and the $y$-th row of $\Gamma_{i,j}$ 
 in the set $B$ (if $i+j = 0\pmod 2$) or in the set $C$ (if $i+j = 1\pmod 2$).

The graph $G$ is obtained by subdividing every vertex of $\Gamma$ once.
Clearly, $G$ consists of $kD$ vertical paths and $kD$ horizontal paths. The orange and the turquoise color
sets, namely $O$ anf $T$ are defined so that $O$ contains  
every vertex of a vertical path and $T$ contains every vertex of an horizontal path.
We now consider the $(k\times k)$-grid $H$ and $\lambda$ is a proper two coloring of $H$ in black and green.

It now remains to see that $(D,k,M)$ is a \yes-instance of  {\sc Grid Tiling}
if and only if   the tuple $(G,B,C,O,T,H,\lambda)$  is a \yes-instance of  {\sc Monochromatic Path Topological Minor}.
Just observe that the orange (resp. turquoise) colors force all vertical (resp. horizontal) paths of $H$ to be mapped 
to vertical (resp. horizontal) paths and that the black and green colors force each edge of $H$ to be mapped to a path in some neighboring $\Gamma_{i,j}$'s joining a black vertex and a green vertex of $G$. 
\end{proof}

\end{document}